\newtheorem*{rep@theorem}{\rep@title}
\newcommand{\newreptheorem}[2]{
\newenvironment{rep#1}[1]{
 \def\rep@title{#2 \ref{##1}}
 \begin{rep@theorem}\itshape}
 {\end{rep@theorem}}}
\theoremstyle{plain}
\def\colorful{1}
\newcommand{\ignore}[1]{}
\newtheorem*{theorem*}{Theorem}
\newtheorem*{noclaim*}{Claim}
\newcommand{\Bin}{\mathrm{Bin}}
\def\bs{\mathbf{s}}
\def\br{\mathbf{r}}
\newcommand{\Dy}{\calD_{\text{yes}}}
\newcommand{\Dn}{\calD_{\text{no}}}
\def\rr{\boldsymbol{r}}
\def\sperp{\hspace{0.07cm}\perp}
\title{Beyond Talagrand Functions: New Lower Bounds for Testing \\Monotonicity and Unateness\vspace{0.6cm}}
\author{
Xi Chen\thanks{Columbia University, email: \texttt{xichen@cs.columbia.edu}.}
\and
Erik Waingarten\thanks{Columbia University, email: \texttt{eaw@cs.columbia.edu}.}
\and
Jinyu Xie\thanks{Columbia University, email: \texttt{jinyu@cs.columbia.edu}}}
\begin{document}
\maketitle
\thispagestyle{empty}
\begin{abstract}
We prove a lower bound of $\tilde{\Omega}(n^{1/3})$ for the query complexity of
  any two-sided and \mbox{adaptive} algorithm that tests whether an unknown  Boolean
  function $f:\{0,1\}^n\rightarrow \{0,1\}$ is monotone~or far from monotone.
This improves the recent bound of $\tilde{\Omega}(n^{1/4})$ for the same problem
  by Belovs~and Blais \cite{BB15}.
Our result builds on a new family of random Boolean functions
  that can be viewed as a two-level \vspace{0.02cm}extension of Talagrand's random DNFs.

Beyond monotonicity, we also prove a lower bound of $\tilde{\Omega}(n^{2/3})$ for any two-sided and
  adaptive algorithm, and a lower bound of $\tilde{\Omega}(n)$ for any one-sided and
  non-adaptive algorithm for testing unateness, a natural generalization of monotonicity.
The latter matches the recent linear upper bounds by Khot and Shinkar \cite{KS15} and by
  Chakrabarty and Seshadhri \cite{CS16}.
\end{abstract}\newpage

\hypersetup{linkcolor=magenta}
\hypersetup{linktocpage}
\setcounter{tocdepth}{2}
\setcounter{totalnumber}{1}

\tableofcontents \thispagestyle{empty}

\newpage

\setcounter{page}{1}


\section{Introduction }

Over the last few decades, property testing has emerged as an
  important line of research in sublinear time algorithms{.} 
{The} goal is to understand abilities and limitations of
  randomized algorithms that {determine whether an unknown object} 
  has a specific property or is far from having the property,~by examining randomly
  a small portion of the object.
Over the years many different types~of objects and properties
  have been studied from this property testing perspective
  (see \cite{Ron:08testlearn,PropertyTestingICS,Ron:10FNTTCS} for overviews of
  contemporary property testing research).


In this paper we study the monotonicity testing of Boolean functions,
  one of the most basic~and natural problems that have been studied in the area of property testing
  for many years
  \cite{DGLRRS, GGL+00, EKK+00, FLNRRS, Fis04, BKR:04long, Ailon04,HK08, RS09c, BBM12, BCGM12, RRSW12,CS13a, CS13b, CS13c, BRY:14,CST144,KMS15,CDST15,BB15} with many exciting developments during the past few years.
Introduced by Goldreich, Goldwasser, Lehman, and Ron~\cite{GGL+98},~the problem is concerned with the (randomized) query complexity of determining whether an unknown Boolean function $f:\{0,1\}^n\rightarrow \{0,1\}$ is \emph{monotone} or \emph{far from monotone}. Recall that $f$ is monotone if $f(x) \le f(y)$ for all $x \prec y$
  (i.e., $x_i\le y_i$ for every $i\in [n]=\{1,\ldots,n\}$). We say that $f$ is $\eps$-close~to monotone if $\Pr\hspace{0.03cm}[\hspace{0.03cm}f(\bx)\ne g(\bx)\hspace{0.03cm}]\hspace{0.03cm} \le \eps$ for some monotone   function $g$ where the probability is taken~over a uniform draw of $\bx$ from $\{0,1\}^n$, and that $f$ is $\eps$-far from monotone otherwise.

We are interested in query-efficient randomized algorithms for the following task:
\begin{flushleft}\begin{quote}
{\it Given as input a distance parameter $\eps > 0$ and oracle access to an unknown Boolean\\ function $f:\{0,1\}^n\rightarrow \{0,1\}$, accept with probability at least $2/3$ if $f$ is monotone and\\ reject with probability at least $2/3$ if $f$ is $\eps$-far from monotone.}
\end{quote}\end{flushleft}

Beyond monotonicity, we also work on the testing of \emph{unateness}, a generalization of monotonicity.
Here a Boolean function $f:\{0,1\}^n\rightarrow \{0,1\}$ is unate iff
  there exists a string $r\in \{0,1\}^n$ such that $g(x)=f(x\oplus r)$~is~monotone (i.e.,
  $f$ is either monotone increasing
  or monotone decreasing in each coordinate), where we use
  $\oplus$ to denote the bitwise XOR of two strings.
We are interested~in~query-efficient randomized algorithms that determine whether
  an unknown $f$ is unate or far from unate.

\subsection{Previous work on monotonicity testing and unateness testing}

The work of Goldreich et al.~\cite{GGL+98,GGL+00} proposed a simple ``edge tester.''
For each round, the ``edge tester'' picks an $x\in \{0,1\}^n$ and an $i\in [n]$ uniformly at random and
  queries $f(x)$ and~$f(y)$ with $y=x^{(i)}$, where $\smash{x^{(i)}}$ denotes
  $x$ with its $i$th bit flipped.
If $(x,y)$ is a \emph{violating edge}, i.e., either~1) $x\prec y$ and $f(x)>f(y)$
  or 2) $y\prec x$ and $f(y)>f(x)$, the tester rejects $f$; the tester accepts $f$~if~no violating
   edge is found after a certain number of rounds.
The ``edge tester'' is both one-sided~(i.e.~it always accept when $f$ is monotone)
  and non-adaptive (i.e.\hspace{-0.04cm} its queries do not depend on the oracle's responses to previous queries).
\cite{GGL+00} showed that $O(n/\eps)$ rounds suffice for the
  ``edge tester'' to find a violating edge with high probability when $f$ is $\eps$-far from monotone.

\ignore{$O(n^2\log(1/\eps)/\eps)$ upper bound on its query complexity. This was subsequently improved to $O(n/\eps)$ in~the journal version~.}
Later Fischer et al.~\cite{FLNRRS} obtained the first lower bounds, showing that there is a constant distance parameter $\eps_0 > 0$ such that $\Omega(\log n)$ queries are necessary for any \emph{non-adaptive} algorithm  
and $\Omega(\sqrt{n})$ queries are necessary for any \emph{non-adaptive} and \emph{one-sided} algorithm.
  \ignore{, since any $q$-query adaptive tester can be simulated by a non-adaptive one that simply carries out all $2^q$ possible executions.}

These were the best known results on this problem for more than a decade, until Chakrabarty and Seshadhri  improved  the linear upper bound of Goldreich et al. \hspace{-0.06cm}to $\tilde{O}(n^{7/8}\eps^{-3/2})$ \cite{CS13a} using~a ``pair tester'' which is one-sided and non-adaptive.
Such a tester looks for a so-called \emph{violating pair} $(x,y)$ of $f$  satisfying $x\prec y$ and $f(x)>f(y)$.
Their analysis was later slightly refined by~Chen et~al. in \cite{CST144} to $\tilde{O}(n^{5/6}\eps^{-4})$.
  \cite{CST144} also gave an $\tilde{\Omega}(n^{1/5})$ lower bound for non-adaptive algorithms.

\def\dyes{\mathcal{D}_{\text{yes}}}
\def\dno{\mathcal{D}_{\text{no}}}

Further progress has been made during the past two years.
Chen et al. \cite{CDST15} gave a lower bound of $\Omega(n^{1/2-c})$ for non-adaptive algorithms  
  for any positive constant $c$.
Later an upper bound of $\tilde{O}(n^{1/2} /\epsilon^{2})$
  was obtained by Khot et al. \hspace{-0.05cm}in \cite{KMS15} via a deep analysis of the
  ``pair tester'' based on a new isoperimetric-type theorem
  for far-from-monotone Boolean functions.
These~results (almost)~resolved~the query complexity of {non-adaptive} monotonicity testing
  over Boolean functions.
Very recently Belovs and Blais \cite{BB15} made a breakthrough and gave
  an $\tilde{\Omega}(n^{1/4})$ lower~bound for \emph{adaptive} algorithms.
This is the first polynomial lower~bound for adaptive monotonicity testing.~We discuss the lower bound
 construction of \cite{BB15} in more detail in Section \ref{reviewsec}.

The problem of testing unateness was introduced in the same paper  \cite{GGL+00} by Goldreich~et al.
  where they  obtained a one-sided and non-adaptive algorithm with $O(n^{3/2}/\eps)$ queries.~The~first improvement after \cite{GGL+00} was made by Khot and Shinkar \cite{KS15} with
  a one-sided and \mbox{adaptive} $O(n\log n/\eps)$-query algorithm.
Baleshzar et al. \hspace{-0.05cm}\cite{BMPR16} extended the algorithm of \cite{KS15} to~testing unateness of
  functions $f:\{0,1\}^n\rightarrow \mathbb{R}$ with the same query complexity.~They~also~gave~a~lower bound of $\Omega(\sqrt{n}/\eps)$ for one-sided, non-adaptive
  algorithms over Boolean functions.
 Chakrabarty~and Seshadhri \cite{CS16} recently gave a~{one-sided}, non-adaptive
  algorithm of $O((n/\eps)\log (n/\eps))$ queries.

\subsection{Our results}\label{ourresult}

Our main result is an $\tilde{\Omega}(n^{1/3})$ lower bound for
  adaptive monotonicity testing of Boolean functions,
  improving the $\tilde{\Omega}(n^{1/4})$ lower bound of Belovs and Blais \cite{BB15}.

\begin{theorem}[Monotonicity]\label{monomain}
\begin{flushleft}There exists a constant $\eps_0>0$ such that
  any two-sided and adaptive algorithm for testing whether an unknown Boolean function
  $f:\{0,1\}^n\rightarrow \{0,1\}$
  is monotone or $\eps_0$-far from monotone must make $\Omega\hspace{0.02cm}(n^{1/3}/\log^2n)$ queries.\end{flushleft}
\end{theorem}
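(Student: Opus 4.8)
The plan is to establish the lower bound via Yao's principle: I will construct two distributions $\Dy$ and $\Dn$ over Boolean functions $f:\{0,1\}^n\to\{0,1\}$, where every $f$ in the support of $\Dy$ is monotone, every $f$ in the support of $\Dn$ is $\eps_0$-far from monotone, and no deterministic adaptive algorithm making $o(n^{1/3}/\log^2 n)$ queries can distinguish a draw from $\Dy$ from a draw from $\Dn$ with constant advantage. The heart of the construction is the ``two-level Talagrand'' family advertised in the abstract: rather than a single layer of random width-$\sqrt n$ terms (as in Talagrand's DNF and in the non-adaptive lower bounds of \cite{CST144,CDST15}), I would use a random read-once-like composition of two Talagrand-type layers, so that the object an adaptive tester must discover is a conjunction of two ``addresses'' each of which is itself hidden inside a Talagrand DNF. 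The first layer localizes the query string to one of roughly $n^{1/3}$-ish ``blocks'' that behave like random terms; within a block, a second Talagrand layer on a disjoint set of coordinates encodes whether the function is monotone or anti-monotone on a distinguished direction. The $\Dn$ distribution plants an anti-monotone edge inside the block that the algorithm's query happens to land in, while $\Dy$ plants a monotone one; away from the planted structure both distributions agree and are monotone.

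The key technical steps, in order, are: (1) \emph{Define the two distributions precisely} and verify the ``yes'' side is genuinely monotone and the ``no'' side is $\eps_0$-far from monotone — the far-from-monotone claim should follow from a counting/matching argument showing a constant fraction of points sit on violating edges that cannot all be repaired cheaply, using standard properties of Talagrand DNFs (most points are covered by few terms, the function is balanced). (2) \emph{Set up an adaptive adversary / decision-tree argument}: fix a deterministic $q$-query adaptive algorithm, model its execution as a depth-$q$ decision tree, and show that along any root-to-leaf path the posterior distributions on the hidden ``addresses'' under $\Dy$ and under $\Dn$ remain close. The right abstraction here is likely a potential/martingale argument: maintain, after each query, the set of hidden random choices consistent with the answers seen so far, and show that a single query reveals only $O(\log n)$ bits of information about the planted block, so after $q$ queries the algorithm has effectively ``found'' the planted structure only with probability $\tilde O(q^3/n)$. (3) \emph{Bound the distinguishing advantage} by $\sum_{\text{leaves}} |\Dy(\text{leaf}) - \Dn(\text{leaf})|$ and show this is $o(1)$ unless $q = \tilde\Omega(n^{1/3})$, where the cube comes from the two-level structure: one level contributes $n^{1/3}$ from the ``birthday''-type cost of hitting a random block, and the nesting squares (roughly) an $n^{1/6}$-type barrier — the precise bookkeeping of how the two layers compound is exactly where the $n^{1/3}$ (rather than $n^{1/4}$) emerges, and it must be done carefully to beat \cite{BB15}.

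The main obstacle I anticipate is step (2): controlling the adaptive adversary. Unlike the non-adaptive setting, where one can fix the query set in advance and just compute statistical distance of the induced answer distributions, here the algorithm can use earlier answers to ``zoom in'' on the planted block, and one must rule out any clever adaptive search strategy that amplifies the $O(\log n)$-bits-per-query leakage. I expect to need a carefully designed invariant — something like: conditioned on the transcript so far, the planted block is still (nearly) uniform over a large surviving set, and each new query either lands outside the current candidate set (giving almost no information and keeping $\Dy,\Dn$ indistinguishable) or shrinks the candidate set by at most a $\mathrm{poly}(\log n)$ factor. Making this invariant robust against the compounding of the two Talagrand layers, and quantifying the failure probability tightly enough to get $n^{1/3}$ and not something weaker, is the crux of the argument; the monotonicity and far-from-monotonicity verifications (step 1) and the final summation (step 3) are comparatively routine once the combinatorial core of the construction is in place.
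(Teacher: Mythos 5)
You have the right skeleton — Yao's principle, a two-level Talagrand construction with dictators versus anti-dictators at the bottom, and a decision-tree analysis of the adaptive algorithm — but the core quantitative mechanism you propose for step (2)/(3) is not the one that makes the bound work, and I don't believe it can be made to work as stated. You frame the argument as bounding the probability that the algorithm ``finds'' the planted block (an $O(\log n)$-bits-per-query leakage plus a birthday-type count giving $\tilde O(q^3/n)$). That is not how the paper's proof goes, and the numerology is off: the construction uses $N=2^{\sqrt n}$ terms $T_i$ (conjunctions of $\sqrt n$ uniformly random variables) and, nested under each, $N$ clauses $C_{i,j}$ (disjunctions of $\sqrt n$ random variables) over the \emph{same} coordinate set, not disjoint coordinates and not $\approx n^{1/3}$ blocks. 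Crucially, the hard part is not preventing the algorithm from locating the block it is already in — every query that lands in some $H_{i,j}$ immediately ``knows'' its address in the strengthened oracle model the paper uses. The actual obstruction is a \emph{bias-accumulation} argument: the oracle is strengthened to return the full ``signature'' of each query (which terms are satisfied, which clauses falsified, and the relevant $h_{i,j}$ values), the tree is pruned of ``bad'' edges (where the common-$1$-set $A_{i,1}$ or common-$0$-set $A_{i,j,0}$ of the queries sharing a term/clause shrinks by more than $\Theta(\sqrt n\log n)$ in one step, or where an inconsistency is observed), and then for every surviving leaf one shows the yes/no reaching probabilities agree up to a factor $\prod_{i,j}\bigl(|A_{i,j,1-\rho_{i,j}}|/|A_{i,j,\rho_{i,j}}|\bigr)=1\pm O\bigl(\tfrac{\log n}{\sqrt n}\sum_{i,j}\min\{|P_{i,j}|^2,|P_i|\}\bigr)$. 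The exponent $1/3$ comes from the purely combinatorial fact that $\sum_{i,j}\min\{|P_{i,j}|^2,|P_i|\}=O(q^{3/2})$ when $\sum_i|P_i|\le 2q$, so one needs $q^{3/2}\log n/\sqrt n=o(1)$. Nothing resembling this $q^{3/2}$ bookkeeping, or the per-query $\pm\sqrt n\log n$ drift of $|A_{i,j,1}|-|A_{i,j,0}|$ that feeds into it, appears in your plan.

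Your proposed invariant (``the planted block stays nearly uniform over a large surviving set, and each query shrinks the candidate set by at most polylog'') also points in the wrong direction: the no-functions here are \emph{not} hard because the block is hard to find, and indeed the paper itself exhibits an $\tilde O(n^{1/3})$-query one-sided algorithm against its own distribution, so any argument that tries to show the planted structure is ``not found'' with probability $1-\tilde O(q^3/n)$ for $q\ll n^{1/2}$ would prove too much. Separately, a small but fixable inaccuracy: in the paper's construction a draw from $\Dn$ is only $\Omega(1)$-far from monotone with probability $\Omega(1)$ (not with probability $1$), which is still sufficient for Yao but changes how you must phrase the reduction. To repair the proposal you would need to (a) pin down the construction as above, (b) introduce the stronger signature oracle so that reaching a leaf becomes a product of independent conditions on the $h_{i,j}$'s, (c) prove the pruning lemma bounding the probability of large one-step shrinkage of $A_{i,1}$ or $A_{i,j,0}$ (this is where the bipartite double-counting argument over terms lives), and (d) carry out the good-leaf ratio computation that actually produces $q=n^{1/3}/\log^2 n$.
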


In \cite{BB15}, Belovs and Blais obtained their $\tilde{\Omega}(n^{1/4})$ lower bound
  using a family of random~functions known as \emph{Talagrand's random DNFs}
  (or simply as the Talagrand function) \cite{Talagrand}.
A function drawn from this family is the disjunction of $N\equiv 2^{\sqrt{n}}$ many monotone terms $T_i$
  with each $T_i$ being the conjunction of $\sqrt{n}$ variables sampled uniformly from $[n]$. So such a 
  function looks like
$$
f(x)=\bigvee_{i\in [N]} T_i(x)=\bigvee_{i\in [N]} \left(\bigwedge_{k\in S_i} x_k\right).
$$
However,
  it turns out that there is a matching $\tilde{O}(n^{1/4})$-query, one-sided
  algorithm for functions of \cite{BB15}. (See Section \ref{sec:alg} for a sketch of the algorithm.)
So the analysis of \cite{BB15} is  tight.

Our main contribution behind the lower bound of Theorem \ref{monomain}
  is a new and harder family~of~random functions~for monotonicity testing, which
  we call \emph{two-level Talagrand functions}.
This starts by reexamining the construction of \cite{BB15} from a slightly different angle,
  which leads to both natural generalizations and simpler analysis of such functions.
We review the construction of \cite{BB15} under this framework
  and describe our new  two-level Talagrand functions in Section \ref{reviewsec}.
We then~give~an overview of the proof of Theorem \ref{monomain} in Section \ref{sketch2}.
As far as we know, we are not aware of~the~two-level Talagrand functions in the literature
  and expect to see more interesting applications of them in the future.
On the other hand, the techniques developed in the proof of Theorem \ref{monomain} can be
  easily adapted~to~\mbox{prove} a tight $\tilde{\Omega}(n^{1/2})$ lower bound for
  non-adaptive monotonicity testing, removing~the $-c$ in the exponent of \cite{CDST15} (see Section~\ref{sec:non-mono}).


Next for testing unateness, we present an $\tilde{\Omega}(n^{1/2})$ lower bound against adaptive algorithms.

\begin{theorem}[Unateness]\label{unatemain}
\begin{flushleft}There exists a constant $\eps_0>0$ such that
  any two-sided and adaptive algori\-thm for testing whether an unknown Boolean function
  $f:\{0,1\}^n\rightarrow \{0,1\}$
  is unate versus $\eps_0$-far from unate must make $\Omega
  \hspace{0.02cm}(n^{2/3}/\log^{3}n)$ queries.\end{flushleft}
\end{theorem}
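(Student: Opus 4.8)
\medskip\noindent\textbf{Proof proposal for Theorem~\ref{unatemain}.}
The plan is to follow the same high-level template as the proof of Theorem~\ref{monomain}, but to replace ``monotone versus far from monotone'' by ``unate versus far from unate'', which forces both a modified hard distribution and a strengthened soundness analysis. Concretely, I would exhibit two distributions $\mathcal{D}_{\mathrm{yes}}$ and $\mathcal{D}_{\mathrm{no}}$ over Boolean functions $f\colon\{0,1\}^n\to\{0,1\}$ and establish: (i) every $f$ in the support of $\mathcal{D}_{\mathrm{yes}}$ is unate; (ii) with probability $1-o(1)$ a draw $f$ from $\mathcal{D}_{\mathrm{no}}$ is $\eps_0$-far from unate; and (iii) no deterministic adaptive decision tree of depth $q=o(n^{2/3}/\log^3 n)$ has acceptance probabilities under $\mathcal{D}_{\mathrm{yes}}$ and under $\mathcal{D}_{\mathrm{no}}$ that differ by more than $1/6$. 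By Yao's minimax principle, (i)--(iii) give the theorem. Both distributions are built on the two-level Talagrand function already developed for Theorem~\ref{monomain}: a first-level (``control'') Talagrand-type DNF routes each input to one of a large collection of second-level terms, and within each second-level term the value of $f$ is governed by reading a single distinguished ``action'' coordinate $i\in[n]$. The new ingredient for unateness is that each second-level term carries a secret \emph{orientation}, i.e.\ it reads its action coordinate either positively or negatively.

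In $\mathcal{D}_{\mathrm{yes}}$ I would draw a single global orientation vector $r\in\{0,1\}^n$ and force every second-level term with action coordinate $i$ to read that coordinate with sign $r_i$; then $f(\,\cdot\oplus r)$ is monotone by essentially the same computation that shows the bare two-level Talagrand function is monotone, so $f$ is unate and (i) holds. In $\mathcal{D}_{\mathrm{no}}$ I would instead sample the orientations independently across the second-level terms (for a constant fraction of the action coordinates). For (ii), I would first show by a second-moment / Chernoff argument over these independent orientations that, for an $\Omega(1)$ fraction of the coordinates $i\in[n]$, the resulting $f$ simultaneously has $\Omega(1)$ density of increasing violating edges in direction $i$ \emph{and} $\Omega(1)$ density of decreasing violating edges in direction $i$. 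Since choosing an orientation for such a coordinate forces $\Omega(2^{n-1})$ of its $i$-edges to be repaired, while a single change to a value of $f$ repairs at most $n$ edge violations in total, any unate $g$ must disagree with $f$ on an $\Omega(1)$ fraction of $\{0,1\}^n$; this gives $\eps_0$-farness and hence (ii).

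The heart of the argument is (iii), and this is where the exponent $2/3$ (rather than the $1/3$ of Theorem~\ref{monomain}) enters. I would couple the $\mathcal{D}_{\mathrm{yes}}$ and $\mathcal{D}_{\mathrm{no}}$ executions by revealing the randomness defining $f$ lazily as the algorithm queries points: the first-level DNF and the second-level structure are exposed exactly as in the analysis behind Theorem~\ref{monomain}, and a term's orientation is revealed only when a query forces it. The two executions then proceed identically until the algorithm produces an \emph{orientation-revealing collision} --- it has queried, in two distinct second-level terms that read a common action coordinate $i$, enough points to pin down the orientation of $i$ in each of the two terms; in the $\mathcal{D}_{\mathrm{no}}$ world those two orientations are independent, so with constant probability they disagree, and witnessing such a disagreement is exactly what a correct unateness tester must do. Thus detecting non-unateness requires catching \emph{two} of the rare ``violation-revealing'' events that already cost $\tilde\Omega(n^{1/3})$ queries apiece in the monotonicity argument, with the extra constraint that the two share a coordinate; a birthday-type estimate --- union-bounding over the $\binom q2$ pairs of queries and over the coordinates on which a pair could collide, against a filtration recording the full query--answer history --- bounds the collision probability by $\tilde O(q^2/n^{4/3})$. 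Choosing the parameters of the two-level construction (number of first-level and second-level terms, term widths, multiplicity of each action coordinate) so that this quantity, together with the $o(1)$ error from the monotonicity-style analysis of the collision-free part, is $o(1)$ precisely when $q=o(n^{2/3}/\log^3 n)$, completes (iii).

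The step I expect to be the main obstacle is (iii), and specifically controlling the collision probability \emph{adaptively}: because the algorithm may concentrate its queries toward a second-level term it has already partially explored, I cannot simply multiply independent per-query probabilities, and I must instead expose the defining randomness in an order and with enough residual entropy that, conditioned on the entire history, the probability that the next query reveals a fresh orientation or completes a collision stays $\tilde O(n^{-2/3})$. A secondary obstacle is the soundness step (ii): unlike far-from-monotone, far-from-unate must defeat \emph{every} orientation vector at once, so the randomized orientations of $\mathcal{D}_{\mathrm{no}}$ have to produce two-sided violations on genuinely many coordinates \emph{and} at constant density, which constrains how the action coordinates may be spread across the second-level terms --- a constraint that then has to be shown compatible with the parameter choices needed for (iii).
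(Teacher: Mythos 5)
Your proposal correctly identifies the central new idea of the paper's unateness argument: hide the orientation of each dictator behind a random sign (a single global sign vector in the yes-case, independent signs per term in the no-case), so that a tester learns nothing from pinning down one dictator and must instead produce a \emph{collision} --- two terms sharing a special coordinate whose revealed orientations disagree. The ``product of two $\tilde\Omega(n^{1/3})$ quantities'' heuristic is exactly the paper's stated intuition. One structural difference: the paper does \emph{not} build on the two-level Talagrand function. It deliberately reverts to a \emph{one-level} construction (terms over a random half $M$ of the coordinates, dictators over $\overline M$, randomized by strings $r\in\{0,1\}^M$ and $s\in\{0,1\}^{\overline M}$), precisely because the collision requirement already supplies the extra hardness that the second level supplied for monotonicity. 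Your farness argument (two-sided violating edges in $\Omega(n)$ directions, fed into Lemma~\ref{ufuf}) matches the paper's.

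The genuine gap is in step (iii), in two places. First, the yes- and no-executions do \emph{not} ``proceed identically until a collision,'' and the error from the collision-free part is the binding constraint, not the collision probability (which is only $\tilde O(q^2/n^{5/3})$, i.e.\ harmless up to $q\approx n^{5/6}$). Even on a collision-free transcript the two distributions assign different probabilities: conditioning on the special variables of the \emph{breached} terms (the set $D$, $|D|\le n^{1/3}/\log n$) biases the likelihood ratio of every \emph{safe} term by $\Theta(|D|/n)$, and summing over the $O(q)$ safe terms gives a deficit of $\tilde O(q\,|I_B|/n)=\tilde O(q^2/n^{4/3})$ --- this is where $q=o(n^{2/3})$ is forced. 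Crucially, this requires a \emph{new} likelihood-ratio computation (the paper's $\bY_i,\bW_i$ analysis exploiting the randomness of $s$); the ``monotonicity-style analysis of the collision-free part'' that you invoke compares $|A_{i,j,1}|$ with $|A_{i,j,0}|$ and yields an error of $\tilde O(q^{3/2}/\sqrt n)$, which caps $q$ at $\tilde O(n^{1/3})$ and would destroy your target bound. Second, the per-query shrinkage cap of $n^{2/3}\log n$ on the candidate sets $A_i$ (which controls both $|I_B|$ and the breach probability) is not free: it must survive a union bound over all $2^{O(q)}$ adaptive query histories via the ``balanced tree'' reduction, which itself forces the cap to be $\gtrsim q$ and is the second place the exponent $2/3$ enters. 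Your proposal's residual-entropy/filtration plan gestures at the adaptivity issue but does not supply either of these two mechanisms, and without them the parameters do not close at $n^{2/3}$.
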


The lower bound construction behind Theorem \ref{unatemain} follows
  a similar framework.
Some of the new ideas and techniques developed for the monotonicity lower bound are adapted to
  prove Theorem~\ref{unatemain}
  though with~a few twists that~are unique~to unateness.
%

Moreover,
  we obtain a linear lower bound for one-sided and non-adaptive unateness algorithms.
This improves the $\Omega(\sqrt{n})$ lower bound of Baleshzar et al. \cite{BMPR16} 
  and matches the upper bound of Chakrabarty and Seshadhri \cite{CS16} for
  such algorithms.

\begin{figure}
\begin{center}
\renewcommand{\arraystretch}{1.4}
\begin{tabular}{ |c|c|c|c| }
 \hline
 &Best Upper Bound & Best Lower Bound & This Work \\
 \hline \hline
 Non-adaptive & & &\\
 \hline \hline
 \ Monotonicity\ \ & $\tilde{O}( {\sqrt{n}}/{\eps^2})$ \cite{KMS15} & $\tilde{\Omega}(n^{1/2 - c})$ \cite{CDST15} & $\tilde{\Omega}(\sqrt{n})$\\[-0.4ex]
 Unateness & $\tilde{O}(n/\eps)$ \cite{CS16} & \  $\Omega(\sqrt{n})$ (one-sided) \cite{BMPR16}\ \  & $\ \tilde{\Omega}(n)$ (one-sided)\ \ \\[0.2ex]
 \hline \hline
 Adaptive & & & \\
 \hline\hline
 Monotonicity & \ $\tilde{O}( {\sqrt{n}}/{\eps^2})$ \cite{KMS15}\ \  & $\tilde{\Omega}(n^{1/4})$ \cite{BB15} & $\tilde{\Omega}(n^{1/3})$\\ [-0.4ex]
 Unateness & $\tilde{O}(n/ \eps)$ \cite{KS15, CS16} & & $\tilde{\Omega}(n^{2/3})$ \\ [0.2ex]
 \hline
\end{tabular}\vspace{-0.35cm}
\end{center}
\caption{Previous work and our results on monotonicity testing and unateness testing.\vspace{-0.25cm}} \label{figblablabla}
\end{figure}
\def\Tal{\mathsf{Tal}} \def\Talpm{\mathsf{Tal}_{\pm}} \def\ff{\boldsymbol{f}} \def\gg{\boldsymbol{g}}
\def\SS{\mathsf{S}}

\begin{theorem}[One-sided and non-adaptive unateness]\label{nonadaptive}
\begin{flushleft}There exists a constant $\eps_0>0$ such that
  any one-sided and non-adaptive algori\-thm for testing whether an unknown Boolean function
  is unate versus $\eps_0$-far from unate must make $\Omega\hspace{0.02cm}(n /\log^2 n)$ queries.\end{flushleft}
\end{theorem}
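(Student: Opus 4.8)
The plan is to use the standard distributional lower bound argument, specialized to one-sided testers. A one-sided tester never rejects a unate function, so on an input $f$ it can output ``reject'' only when the partial assignment $f|_Q$ it has observed (the query set $Q$ being fixed once the tester's coins are, by non-adaptivity) admits no unate extension. Hence it suffices to design a distribution $\Dn$, supported on functions that are $\eps_0$-far from unate with probability $1-o(1)$, such that for \emph{every} set $Q$ with $|Q| = O(n/\log^2 n)$ we have $\Pr_{\boldsymbol f\sim\Dn}[\,\boldsymbol f|_Q\text{ extends to a unate function}\,]\ge 0.9$; averaging over the tester's distribution on $Q$ then contradicts correctness. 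To certify unate-extendability of $\boldsymbol f|_Q$ it is enough to exhibit a single orientation $r\in\{0,1\}^n$ such that $\boldsymbol f|_Q$, re-oriented by $r$, contains no decreasing comparable pair, so the whole question becomes: under the ``natural'' orientation of $\boldsymbol f$, how likely is $Q$ to exhibit a unateness violation?

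The construction of $\Dn$ is the creative core, and I would build it as a \emph{signed} variant of the (two-level) Talagrand functions of Theorems~\ref{monomain}--\ref{unatemain}: a disjunction of many random terms in which a uniformly random special coordinate $\boldsymbol i\in[n]$ appears as a literal of a random sign in each term, while every other coordinate appears only positively. Because all coordinates $\ne\boldsymbol i$ occur with a fixed sign, $\boldsymbol f$ is monotone in each of them; because $x_{\boldsymbol i}$ occurs both positively and negatively, $\boldsymbol f$ is bi-oriented along coordinate $\boldsymbol i$. Two things must then be verified. First, $\boldsymbol f$ is $\eps_0$-far from unate: since $\boldsymbol f$ is monotone in all coordinates $\ne\boldsymbol i$, its distance to unate is, up to an absolute constant, the smaller of the fractions of direction-$\boldsymbol i$ edges that are ``contested'' in the two senses, and the Talagrand parameters should be tuned (as in the monotone case) so that at a typical middle-layer point a constant fraction of the almost-satisfied terms carry each sign of $\boldsymbol i$; this is a routine estimate based on anti-concentration of the Hamming weight together with the balancedness properties of Talagrand-type DNFs. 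Second — and this is what the Talagrand masking is for — the ``contested'' regions must be delocalized and pseudorandom, so that a query set which does not actually contain a direction-$\boldsymbol i$ edge almost never reveals non-unateness.

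With the construction in hand the proof reduces to a counting statement. Fix the natural orientation $r^\ast$ of $\boldsymbol f$ ($r^\ast_k$ making $\boldsymbol f$ increasing in each $k\ne\boldsymbol i$, and $r^\ast_{\boldsymbol i}$ the majority direction along $\boldsymbol i$). Since $\boldsymbol f$ re-oriented by $r^\ast$ is monotone in every coordinate $\ne\boldsymbol i$, any anti-monotone comparable pair inside $\boldsymbol f|_Q$ must straddle coordinate $\boldsymbol i$, and is either (i) an edge in direction $\boldsymbol i$, or (ii) a comparable pair differing in $\boldsymbol i$ and in at least one other coordinate; it therefore suffices to show that with probability $\ge 0.9$ the set $Q$ contains no direction-$\boldsymbol i$ edge and no anti-monotone pair of type (ii), in which case $r^\ast$ is a valid unate orientation for $\boldsymbol f|_Q$. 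Type (i) is immediate from edge-isoperimetry: the number of hypercube edges with both endpoints in $Q$ is at most $\tfrac12|Q|\log_2|Q|$, so the number lying in the uniformly random direction $\boldsymbol i$ has expectation $O(|Q|\log|Q|/n)=o(1)$ for $|Q|=O(n/\log^2 n)$, and Markov finishes. The main obstacle is type (ii): a naive union bound over the comparable pairs of $Q$, each weighted by its probability of becoming anti-monotone, degrades to only $\tilde\Omega(n^{1/3})$ — for instance it is already $\Omega(1)$ for a length-$n^{1/3}$ chain near the middle layer — because a single ``bad'' choice of the hidden data $(\boldsymbol i,\text{environment})$ can make many comparable pairs of $Q$ bad simultaneously. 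Beating this is exactly where the pseudorandomness of the Talagrand masking must be used quantitatively: one needs to show (via a bucketing / second-moment estimate) that the comparable pairs of any fixed $Q$ cannot collectively concentrate on a small family of hidden configurations — roughly, that each realization of the hidden data is ``blamed'' by at most $\mathrm{polylog}(n)$ comparable pairs of $Q$ — which upgrades the bound to $\Pr[\exists\text{ bad pair}] = O(|Q|\cdot\mathrm{polylog}(n)/n)$ and yields the claimed $\Omega(n/\log^2 n)$. I expect this last step, together with designing the masking so that such a property provably holds, to be the crux of the argument.
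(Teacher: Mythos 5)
Your high-level reduction is the right one and matches the paper's: a one-sided non-adaptive tester rejects only if the observed values on its (fixed) query set $Q$ admit no unate extension, so it suffices to exhibit, with probability $1-o(1)$ over the hidden function, a single orientation $r$ under which $Q$ contains no monotonicity violation. But the step you yourself flag as ``the crux'' --- ruling out anti-monotone comparable pairs that differ in the special coordinate \emph{and} other coordinates --- is not carried out, and the mechanism you propose for it does not work. Your hoped-for property, that each realization of the hidden data is ``blamed'' by at most $\mathrm{polylog}(n)$ comparable pairs of $Q$, is false: if $Q$ contains a chain and the special coordinate $\boldsymbol{i}$ gets flipped at step $t$ of that chain, then \emph{every} pair straddling step $t$ (up to $\Theta(q^2)$ of them) differs in coordinate $\boldsymbol{i}$ simultaneously. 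Moreover, even where such a blame bound holds it upper-bounds the expected number of bad pairs by a multiple of $\Pr[\exists\text{ bad pair}]$, which is the wrong direction for bounding the failure probability. So the gap is real and the proposed patch is not salvageable as stated.

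The paper closes exactly this gap with two devices you are missing, neither of which requires any Talagrand masking. First, the orientation $r$ is chosen as a function of the \emph{query set}, not of the function: a probabilistic argument (Lemma~\ref{lem:bad-dir}) shows there is an $r$ under which every $r$-comparable pair of queries has Hamming distance at most $2\log n$, since a random $r$ makes a far pair comparable with probability $<n^{-2}$ and there are at most $q^2\le n^2/\log^4 n$ pairs. Second, the union bound is taken not over the $\binom{q}{2}$ comparable pairs but over the $\le q$ edges of a spanning forest of the comparability graph (as in \cite{FLNRRS}): a violating comparable pair forces some forest edge to be bi-chromatic, and each forest edge, having endpoints within Hamming distance $2\log n$, contains the uniformly random coordinate $\boldsymbol{i}$ with probability at most $2\log n/n$. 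This gives failure probability $O(q\log n/n)=o(1)$ at $q=n/\log^2 n$. Correspondingly, the paper's hard distribution is far simpler than your signed-Talagrand construction: $n+2$ variables, with the four subcubes indexed by the first two bits carrying the constants $0$ and $1$ and the functions $x_{\boldsymbol i}$ and $\overline{x_{\boldsymbol i}}$, so that farness from unateness is immediate from Lemma~\ref{ufuf} and all the randomness is in the single coordinate $\boldsymbol{i}$. If you want to rescue your write-up, replace the ``natural orientation $r^*$ of $f$'' and the second-moment step with these two ingredients.
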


We summarize previous work and our new results in Figure~\ref{figblablabla}.

\subsection{An overview of our construction for Theorem \ref{monomain}}\label{reviewsec}

We start by reviewing the hard functions used in \cite{BB15} (i.e.,  Talagrand's random DNFs), but~this time interpret
  them under the new framework that we will follow throughout the paper.
Employing Yao's minimax principle as usual,
  the goal of \cite{BB15} is to (1) construct a pair of distributions $\dyes^*$ and
  $\dno^*$ over Boolean functions
  from $\{0,1\}^n$ to $ \{0,1\}$ such that $\ff\sim \dyes^*$ is always monotone
  while $\gg\sim \dno^*$ is $\Omega(1)$-far from monotone with probability $\Omega(1)$;
  (2) show that no deterministic algorithm
  with a small number of queries can distinguish them (see equation (\ref{useuse2}) later).

Let $\smash{N=2^{\sqrt{n}}}$. A function $f$ from $\dyes^*$ is drawn using the following procedure.
We first sample~a sequence of $N$ random sub-hypercubes $H_i$ in $\{0,1\}^n$.
Each $H_i$ is defined by a random~term $T_i$ with $x\in  H_i$ if $T_i(x) = 1$,
  where $T_i$ is
  the conjunction of $\sqrt{n}$ random variables sampled uniformly~from~$[n]$
  (so each $H_i$ has dimension $n-\sqrt{n}$).
By a simple calculation most likely~the $H_i$'s have~little~overlap between each other and
  they together cover an $\Omega(1)$-fraction of $\{0,1\}^n$.
Informally we consider~$H_i$'s together as a random \emph{partition} of $\{0,1\}^n$
  where each $x\in \{0,1\}^n$ belongs to a unique $H_i$ (for now~do not worry about
  cases when $x$ lies in none or multiple $H_i$'s).
Next we sample for each~$H_i$ a random dictatorship function
  $h_i(x)= {x_\ell}$ with $\ell$ drawn uniformly from~$[n]$.
The final function is $f(x)=h_i(x)$ for each $x\in H_i$ (again
  do not worry about  cases when $x$ lies in none or multiple $H_i$'s).
A function $g$ from $\dno^*$ is drawn using the same procedure
  except that~each~$h_i$ is now a random anti-dictatorship function $h_i(x)=\overline{x_\ell}$
  with $\ell$ sampled uniformly from $[n]$.

Note that the~distributions sketched here are slightly different from \cite{BB15} (see Section \ref{sec:alg}).
For $\dno^*$ in particular, instead of associating each $H_i$ with an independent,
  random anti-dictatorship~$h_i$, {\cite{BB15} draws
 $\sqrt{n}$ anti-dictatorship functions~\emph{in total} and associates each
 $H_i$ with one of~them~randomly}.\footnote{Note that this is very close but also not exactly the same
   as the distributions used in \cite{BB15}; see Section \ref{sec:alg}.} While this gives a connection to the noise sensitivity results of \cite{MosselOdonnell:03}
   on Talagrand~functions, it makes the functions harder to analyze
 and generalize due to the correlation between $h_i$'s.
 
By definition, $f$ is always monotone.~On the other hand, $g$ is far from monotone as (intuitively)  
  $H_i$'s are mostly disjoint and within each $H_i$, 
 $g$ is anti-monotone due to the anti-dictatorship $h_i$.

At a high level one can view the terms $T_i$ together as an \emph{addressing function}
  in the construction of $\dyes^*$ and $\dno^*$,
  which maps each $x$ to one of the $N$ independent anti-dictatorship functions $h_i$,~by randomly partitioning $\{0,1\}^n$ using a long sequence of small hypercubes $H_i$.
\mbox{Conceptually}, this is the picture that we will follow to define our two-level Talagrand functions.
They will~also~be~built using a random partition of $\{0,1\}^n$ into a sequence of small(er) hypercubes,
  with the property that (i) if one places a   dictatorship function in each
  hypercube independently at random, the resulting function is monotone,
  and (ii) if one places a random anti-dictatorship function
  in each of them, the resulting function is far from monotone with $\Omega(1)$ probability.
The main difference lies in the way how the partition is done and how the hypercubes  are sampled.

Before introducing the two-level Talagrand function, we explain at a high-level
  why the pair~of distributions $\dyes^*$ and $\dno^*$ are hard to distinguish (this will allow us to compare
  them with our~new functions and see why the latter are harder).
Consider the
 situation when an algorithm~is~given~an $x\in H_i$'s with $h_i(x)=0$
  and would like to find a violating pair in $H_i$, by flipping some $1$'s of $x$ to $0$
  and hoping to see $g(y)=1$ in the new $y$ obtained.
The algorithm faces the following  dilemma:
\begin{flushleft}\begin{enumerate}
\item on the one hand,  the algorithm wants to flip as many $1$'s of $x$ as possible
  in order to~flip\\ the hidden anti-dictator variable $\ell$ of the anti-dictatorship function $h_i$;\vspace{-0.08cm}
\item on the other hand, it is very unlikely for the algorithm to flip many (say $\omega(\sqrt{n}\log n)$) $1$'s\\ of $x$
  without moving $y$ outside of $H_i$  (which happens if one of the $1$-entries flipped lies in $T_i$), and when this happens,
  $g(y)$ provides essentially no information about $\ell$.
\end{enumerate}\end{flushleft}
So $g$ is very resilient against such attacks.
However, consider the case when $x\in H_i$ and $h_i(x)=1$; then,
  the algorithm may try to find a violating pair in $H_i$
  by flipping $0$'s of $x$ to $1$, and this time there is no limitation on how many $0$'s of $x$ one can flip!
 In fact flipping $0$'s to $1$'s can~never move $y$ outside~of $H_i$.\footnote{While we tried to keep
  the high-level description here simple, there is indeed a truncation that is always
  applied on $g$, where one set $g(x)=1$ for $|x|>(n/2)+\sqrt{n}$, $g(x)=0$ for $|x|<(n/2)-\sqrt{n}$,
  and keep $g(x)$ the same only when $x$ lies in the middle layers with $|x|$ between $(n/2)-\sqrt{n}$
  and $(n/2)+\sqrt{n}$.
But even with the truncation in place, one can take advantage of this observation and
  find a violation in $g$ using $\tilde{O}(n^{1/4})$ queries. See details in Section \ref{sec:alg}}
In Section \ref{sec:alg}, we leverage this observation to
  find a violation  with $\tilde{O}(n^{1/4})$ queries.

Now we describe the two-level Talagrand function.
The random partitions we employ below~are more complex; they  allow us to
  upperbound not only the number of $1$'s of $x$ that an algorithm can flip (without moving outside
  of the hypercube) but also the
  number of $0$'s as well.
We use $\dyes$ and $\dno$ to denote the two distributions.

To draw a function $f$ from $\dyes$,
  we partition $\{0,1\}^n$ into $N^2$ random sub-hypercubes as follows. First we sample as before
    $N$ random $\sqrt{n}$-terms $T_i$ to obtain $H_i$.
After that, we further partition~each $H_i$,
  by independently sampling $N$ random $\sqrt{n}$-clauses $C_{i,j}$, with each of them
  being the disjunction of $\sqrt{n}$ random variables sampled from $[n]$ uniformly.
The terms $T_i$ and clauses $C_{i,j}$ together define $N^2$ sub-hypercubes $H_{i,j}$:
   \hspace{-0.05cm}$x \in H_{i, j}$ if $T_i(x)=1$ and $C_{i,j}(x) = 0$.
The rest is very similar.~We~sample~a random dictatorship function $h_{i,j}$ for each $H_{i,j}$; the final function $f$ has $f(x)=h_{i,j}(x)$ for~$x\in H_{i,j}$.\footnote{Again, do not worry about cases when $x$ lies in none or multiple $H_{i,j}$'s.}
A function $g$ from $\dno$ is drawn using the same procedure except that $h_{i,j}$'s are
  independent random anti-dictatorship functions.
We call such functions two-level Talagrand functions, as
  one can view each of them as a two-level structure with the top being
  a Talagrand DNF and the bottom being $N$   Talagrand CNFs, one attached with each term of the top DNF. See Figure \ref{fig:function} for a visual depiction.

By a simple calculation, (most likely) the $H_{i,j}$'s have little overlap and
  cover an $\Omega(1)$-fraction~of $\{0,1\}^n$.
This is why $g$ is far from monotone.
It will become clear after the formal definition of~$\dyes$ that $f$ is monotone;
  this relies on how exactly we handle cases~when $x$ lies in none or multiple $H_i$'s.

Conceptually the construction of $\dyes$ and $\dno$ follows the same
  high-level picture: the terms~$T_i$ and
  clauses $C_{i,j}$ together serve as an addressing function, which we refer to as a \emph{multiplexer}
  in~the proof (see Figure \ref{fig:multiplexer} for a visual depiction).
It maps each string $x$ to one of the $N^2$ independent~and random dictatorship or anti-dictatorship   $h_{i^*,j^*}$, depending on whether the function is~from $\dyes$ or $\dno$. 
Terms $T_i$ in the first level of multiplexing determines $i^*$
  and clauses $C_{i^*,j}$ in the second~level~of multiplexing  determines $j^*$.
The new two-level Talagrand functions are harder than those  of \cite{BB15} since, starting with a string $x\in H_{i,j}$,
  not only flipping $\omega(\sqrt{n}\log n)$ many $1$'s would move it outside of $H_{i,j}$ with high probability
  (because the term $T_i$ is most likely no longer satisfied), the same holds when flipping
  $\omega(\sqrt{n}\log n)$ many $0$'s to $1$ (because the clause $C_{i,j}$ is most likely no longer falsified).

\subsection{An overview of the proof of Theorem \ref{monomain}}\label{sketch2}

Let $q=n^{1/3}/\log^2 n$ and let $B$ be a $q$-query deterministic algorithm, which we view equivalently
  as a binary decision tree of depth~$q$.
Our goal is to prove the following for $\dyes$ and $\dno$:
\begin{equation}\label{useuse3}
 \mathop{\Pr}_{\ff\sim \dyes}\big[\text{$B$ accepts $f$}\big]\le \mathop{\Pr}_{\gg\sim \dno} \big[\text{$B$ accepts $g$}\big]+o(1).
\end{equation}
To prove (\ref{useuse3}), it suffices to show for every leaf $\ell$ of $B$,
\begin{equation}\label{useuse2}
\mathop{\Pr}_{\ff\sim \dyes} \big[\hspace{0.01cm}\text{$\ff$ reaches $\ell$}\hspace{0.03cm}\big]
\le (1+o(1))\cdot \mathop{\Pr}_{\gg\sim \dno} \big[\hspace{0.01cm}\text{$\gg$ reaches $\ell$}\hspace{0.03cm}\big].
\end{equation}
However, this is challenging because both events above are highly complex.
Following the same~idea used in \cite{BB15}, we decompose such events into simpler ones
  by allowing the oracle to return more than just $f(x)$.
Upon each query $x\in \{0,1\}^n$, the oracle returns the so-called \emph{signature}
  of $x$.
When $x$ satisfies a unique term $T_{i^*}$, the signature reveals the index $i^*$.
The same happens to the~second level: when $x$ falsifies a unique clause $C_{i^*,j^*}$,
  the signature also reveals the index $j^*$.
(See the formal definition for what happens when $x$ satisfies, or falsifies, none or multiple terms, or clauses.)

{We consider deterministic $q$-query algorithms $B$ with access to this stronger oracle.}
We view $B$ as a decision tree in which each edge
  is labelled with a possible signature returned by the~oracle. 
  Hence the number of children of each
  internal node~is huge.
We refer to such a tree as a \emph{signature tree}.
Our new goal is then to prove that every leaf $\ell$~of~$B$ satisfies (\ref{useuse2}).
However, this is not true~in general. Instead we divide the leaves into
  \emph{good} ones and \emph{bad} ones, prove (\ref{useuse2}) for each good leaf  
  and show that $\ff\hspace{-0.02cm}\sim\hspace{-0.02cm}\dyes$ reaches a bad leaf with probability~$o(1)$.

The definition of bad leaves and the proof of $\ff\sim\dyes$ reaching
  one with $o(1)$ probability {poses the main technical challenge.} 
First, we characterize four types of edges where a \emph{bad} event
  occurs 
    and refer to them as bad~edges; a leaf $\ell$ then is bad if the root-to-$\ell$
  path has a bad edge. {These bad edges help us rule out certain attacks a possible algorithm may try. The first two events formalize the notion we highlighted earlier that given a string $y$ queried before, 
  flipping $\omega(\sqrt{n}\log n)$ many $1$'s of $y$ to $0$'s, or $0$'s to $1$'s, results in a new string $x$ that most likely lies in a different sub-hypercube. The second two events formalize the notion that if queries do not flip many $1$'s to $0$'s, or $0$'s to $1$'s, then observing a violating pair is unlikely. }
  
In a bit more detail, the first two  events are that (we use $A_{i,1}$ and $A_{i,j,0}$ to denote 
  the common $1$-entries of strings queried so far that satisfy the same term $T_i$ and common $0$-entries 
  of strings~so far that falsify the same clause $C_{i,j}$, respectively) after a new query~$x$, $|A_{i,1}|$ or $|A_{i,j,0}|$ drop by more than
  \hspace{-0.05cm}$\sqrt{n}\log n$.
\hspace{-0.03cm}Such events occur~when $x$ satisfies the same $T_i$ but has many $0$-entries in~$A_{i,1}$,~or~$x$ falsifies the same clause $C_{i,j}$ but has many $1$-entries~in $A_{i,j,0}$.
Intuitively such events are unlikely to happen because before $x$ is queried,
  $T_i$ (or $C_{i,j}$) is ``almost''\hspace{0.03cm}\footnote{The distribution is not
  exactly uniform because we also need to consider strings that are known to not satisfy $T_i$
  or not falsify $C_{i,j}$ as revealed in their signatures, though we will see in the proof that
  their influence is very minor.}
   uniform over $A_{i,1}$ (or $A_{i,j,0}$).
Therefore it is unlikely for the $\sqrt{n}\log n$ many $0$-entries~of $x$ in $A_{i,1}$ ($1$-entries of $x$ in $A_{i,j,0}$)
  to entirely avoid~$T_i$ ($C_{i,j}$).  {We follow this intuition to show to that $\ff \sim \dyes$ takes one such bad edge with probability at most $o(1)$, which allows us to prune such edges.} 
 \vspace{0.24cm}


\noindent\textbf{Organization.}
We introduce some notation and review the characterization of distance to monotonicity and
  unateness in Section \ref{sec:pre}. We also prove two basic tree pruning lemmas that will be used
    several times in the paper.
We prove Theorems \ref{monomain}, \ref{unatemain} and \ref{nonadaptive} in Sections
  \ref{sec:mono}, \ref{sec:unate} and \ref{sec:nonadaptive}, respectively.
 %

\section{Preliminaries}\label{sec:pre}

\def\ff{\boldsymbol{f}}

In this section we introduce some notation and tools we will be using.
\subsection{Notation}

We use bold font letters such as $\boldsymbol{T}$ and $\boldsymbol{C}$ for random variables. We write $[n]$ to denote $\{1,\ldots,n\}$.
Given a string $x\in \{0,1\}^n$, we use $|x|$ to denote its Hamming weight, i.e.,
  the number of $1$'s in $x$.
Given a string $x\in \{0,1\}^n$ and $S\subseteq [n]$, we use $x^{(S)}$ to denote the string obtained
  from $x$ by flipping each entry $x_i$ with $i\in S$.
When $S=\{i\}$ is a singleton, we write $x^{(i)}$ instead of $x^{(\{i\})}$~for~convenience.

We use $N$ to denote $2^{\sqrt{n}}$ throughout the paper.
We use $e_i$, for each $i\in [N]$, to denote the string in $\{0,1\}^N$ with its
  $k$th entry being $0$ if $k\ne i$ and $1$ if $k=i$;
we use $e_{i,i'}$, $i<i'\in [N]$, to denote the string in $\{0,1,*\}^N$ with its
  $k$th entry being $0$ if $k<i'$ and $k\ne i$, $1$ if $k=i$ or $i'$, and $*$ if $k>i'$.~We let $\overline{e}_i$ ($\overline{e}_{i,i'}$)
   denote the string obtained from $e_i$ ($e_{i,i'}$) by flipping its
  $0$-entries to $1$ and $1$-entries to $0$.

\subsection{Distance to monotonicity and unateness}

We review some characterizations of distance to monotonicity and unateness. 

\begin{lemma}[Lemma 4 in \cite{FLNRRS}]\label{pfpf}
Let $f \colon \{0, 1\}^n \to \{0, 1\}$ be a Boolean function. Then
\[ \dist\big(f, \textsc{Mono}\big) =  {|M|}\big/{2^n}, \]
where $M$ is the maximal set of disjoint violating pairs of $f$.
\end{lemma}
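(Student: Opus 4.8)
The plan is to identify the (scaled) distance to monotonicity with the minimum vertex cover of an appropriate bipartite graph and then pass between minimum vertex cover and maximum matching via König's theorem. First I would introduce the \emph{violation graph} $G = G_f$: its vertex set is $\{0,1\}^n$ and its edges are exactly the violating pairs $\{x,y\}$ of $f$ (so $x \prec y$ and $f(x) > f(y)$). Since $f$ is Boolean, $f(x) > f(y)$ forces $f(x) = 1$ and $f(y) = 0$, so every edge of $G$ joins a point of $f^{-1}(1)$ to a point of $f^{-1}(0)$; in particular $G$ is bipartite with sides $f^{-1}(1)$ and $f^{-1}(0)$. The set $M$ in the statement is a maximum matching of $G$, and König's theorem gives a vertex cover $C$ of $G$ with $|C| = |M|$, which is the object I will use in both directions.

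For the lower bound $\dist(f, \textsc{Mono}) \ge |M|/2^n$, let $g$ be any monotone function and take any $(x,y) \in M$ with $x \prec y$, so $f(x) = 1$ and $f(y) = 0$. Monotonicity gives $g(x) \le g(y)$, so $g$ cannot agree with $f$ at both $x$ and $y$ (that would force $g(x) = 1 > 0 = g(y)$). As the pairs in $M$ are vertex-disjoint, $g$ disagrees with $f$ on at least $|M|$ distinct points, hence $\dist(f,g) \ge |M|/2^n$; minimizing over monotone $g$ yields the bound.

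For the matching upper bound $\dist(f, \textsc{Mono}) \le |M|/2^n$ I would build an explicit monotone function agreeing with $f$ off the cover $C$. Write $I = \{0,1\}^n \setminus C$, $I_1 = I \cap f^{-1}(1)$ and $I_0 = I \cap f^{-1}(0)$; since $C$ is a vertex cover, $I$ is independent in $G$, so no point of $I_1$ lies below a point of $I_0$. Define $g(x) = 1$ iff there is some $z \in I_1$ with $z \preceq x$, and $g(x) = 0$ otherwise. Then $g$ is monotone because its $1$-set is upward closed; $g$ agrees with $f$ on $I_1$ (take $z = x$); and $g$ agrees with $f$ on $I_0$, since a witness $z \preceq x$ with $z \in I_1$ and $x \in I_0$ would either be impossible ($f$ cannot equal both $0$ and $1$ at $x$) or be an edge of $G$ inside $I$, contradicting independence. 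Thus $g$ differs from $f$ only on $C$, i.e. on at most $|C| = |M|$ points, so $\dist(f, \textsc{Mono}) \le |M|/2^n$. Combining the two inequalities proves the lemma.

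The step I expect to be the real crux is the upper bound, specifically committing to the right ``repaired'' function: naively flipping the value of one endpoint of each pair of $M$, or of every vertex of a vertex cover, can create brand-new violations, so one must instead pass to the monotone (upward) closure of the $1$-points surviving outside the cover and then verify that this closure never swallows a $0$-point of $f$ outside the cover. That verification is exactly where independence of the complement of a vertex cover — and hence the bipartiteness that comes from $f$ being Boolean — is essential. By contrast, the lower bound and the invocation of König's theorem are routine once the bipartite violation graph is in place.
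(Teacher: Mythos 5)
Your proof is correct. The paper does not prove this lemma at all --- it imports it verbatim as Lemma 4 of \cite{FLNRRS} --- and your argument (the bipartite violation graph, K\H{o}nig's theorem to convert the maximum matching $M$ into a vertex cover $C$ of the same size, the lower bound from the disjointness of the pairs in $M$, and the upper bound via the upward closure of the $1$-points surviving outside $C$) is exactly the standard proof given in that reference.
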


\begin{lemma}\label{ufuf}
Given $f \colon \{0, 1\}^n \to \{0, 1\}$, let $(E_i^+,E_i^-:i\in [n])$
  be a tuple of sets such that~(1)~each set $E_i^+$ consists of monotone bi-chromatic edges
  $(x,x^{(i)})$ along direction $i$ with $x_i=0$, $f(x)=0$ and $f(x^{(i)})=1$; (2) each set $E_i^-$
  consists of anti-monotone bi-chromatic edges $(x,x^{(i)})$ along direction~$i$ 
  with $x_i=0$, $f(x)=1$ and $f(x^{(i)})=0$;
  (3) all edges in these $2n$ sets are disjoint. 
   Then
\[ \dist\big(f, \textsc{Unate}\big)\ge \frac{1}{2^n} \sum_{i=1}^n \min \big\{ |E_i^+|, |E_{i}^-| \big\}.\]
\end{lemma}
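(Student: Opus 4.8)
The plan is to reduce distance to unateness to a union of one-dimensional matching arguments, one per coordinate, mirroring the structure of Lemma \ref{pfpf}. Recall that $f$ is unate iff there is an orientation vector $r \in \{0,1\}^n$ such that $g(x) = f(x \oplus r)$ is monotone; equivalently, for each coordinate $i$ we must choose whether $f$ is to be ``increasing'' or ``decreasing'' in direction $i$, and then $f$ agrees with some such function on all but a $\dist(f,\textsc{Unate})$-fraction of points. First I would fix any unate function $h$ that is closest to $f$, with orientation vector $r \in \{0,1\}^n$, so that $\Pr_{\bx}[f(\bx) \neq h(\bx)] = \dist(f,\textsc{Unate})$, and let $D = \{x : f(x) \neq h(x)\}$ be the set of disagreement points, $|D| = 2^n \cdot \dist(f,\textsc{Unate})$.

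The key step is the following observation applied coordinate by coordinate. Fix $i \in [n]$ and consider the two disjoint edge sets $E_i^+$ and $E_i^-$ provided in the hypothesis. Every edge $(x, x^{(i)}) \in E_i^+$ is a monotone bi-chromatic edge of $f$ (value $0$ at the lower endpoint $x$, value $1$ at $x^{(i)}$), while every edge in $E_i^-$ is anti-monotone. Since $h$ is unate with orientation $r$, along direction $i$ the function $h$ is \emph{either} nondecreasing on every $i$-edge (if $r_i = 0$) \emph{or} nonincreasing on every $i$-edge (if $r_i = 1$) — it cannot have a monotone bi-chromatic edge and an anti-monotone bi-chromatic edge in the same direction. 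Suppose $r_i = 0$, so $h$ is nondecreasing in direction $i$; then on every edge $(x,x^{(i)}) \in E_i^-$, where $f(x) = 1$ and $f(x^{(i)}) = 0$, the function $f$ disagrees with $h$ at at least one of the two endpoints (since $h(x) \le h(x^{(i)})$ rules out $h(x)=1, h(x^{(i)})=0$). Hence each edge in $E_i^-$ contributes at least one point to $D$, and because the edges in $E_i^-$ are pairwise disjoint (indeed all $2n$ sets are disjoint by hypothesis (3)), these contributions are to distinct points of $D$; so $|E_i^-| \le |D \cap (\text{endpoints of } E_i^- \text{ edges})|$. Symmetrically, if $r_i = 1$ then $|E_i^+|$ points of $D$ arise from endpoints of $E_i^+$ edges. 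Either way, we get at least $\min\{|E_i^+|, |E_i^-|\}$ distinct points of $D$, all of which lie on $i$-edges.

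Finally I would sum over $i$. The disagreement points extracted for distinct coordinates $i \neq i'$ are distinct: each such point is the endpoint of a bi-chromatic edge in direction $i$ (resp.\ $i'$) drawn from our $2n$-set family, and all those edges — hence all their endpoints, across directions — are pairwise disjoint by hypothesis (3). Therefore $|D| \ge \sum_{i=1}^n \min\{|E_i^+|, |E_i^-|\}$, and dividing by $2^n$ gives $\dist(f,\textsc{Unate}) \ge \frac{1}{2^n}\sum_{i=1}^n \min\{|E_i^+|,|E_i^-|\}$, as claimed. The one point that needs a little care — the main (minor) obstacle — is the disjointness bookkeeping: I must argue that the $\min\{|E_i^+|,|E_i^-|\}$ points I pick up for coordinate $i$ really are vertex-disjoint from those picked up for $i'$, which is exactly what hypothesis (3)'s ``all edges in these $2n$ sets are disjoint'' buys me (disjoint as edge sets and, since these are hypercube edges, no shared vertices either once we note an endpoint lies on at most one edge per direction). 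With that in hand the counting is immediate.
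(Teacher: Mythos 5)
Your proof is correct and takes essentially the same route as the paper's: the paper writes $\dist(f,\textsc{Unate})=\min_r \dist(f_r,\textsc{Mono})$ and applies Lemma~\ref{pfpf} to the disjoint violating pairs of $f_r$ supplied by $E_i^-$ when $r_i=0$ and $E_i^+$ when $r_i=1$, while you perform the identical case split on $r_i$ but inline the easy direction of Lemma~\ref{pfpf} by counting disagreement points with a nearest unate function. One small caveat: your parenthetical justification of cross-direction vertex-disjointness (``an endpoint lies on at most one edge per direction'') does not by itself rule out two edges in \emph{different} directions sharing an endpoint, but hypothesis (3) is to be read as the edges forming a matching (as in Lemma~\ref{pfpf} and as in the lemma's use inside Lemma~\ref{lem:nonunate}), so nothing is lost.
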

\begin{proof}
By definition, the distance of $f$ to unateness is given by 
$$\dist\big(f,\textsc{Unate}\big)=\min_{r\in \{0,1\}^n} \dist\big(f_r,\textsc{Mono}\big),$$
  where $f_r(x)=f(x\oplus r)$.
On the other hand, since all edges in the $2n$ sets $E_i^+$ and $E_i^-$ are disjoint, it follows from 
  Lemma \ref{pfpf} that
$$
\dist\big(f_r,\textsc{Mono}\big)\ge \frac{1}{2^n} \left(\sum_{i:r_i=0} \big|E_i^-\big|+\sum_{i:r_i=1}\big|E_i^+\big|\right)\ge \frac{1}{2^n}\sum_{i=1}^n \min \big\{ |E_i^+|, |E_{i}^-| \big\}.
$$
This finishes the proof of the lemma.
\end{proof}

\subsection{Tree pruning lemmas}
\label{sec:pruning}

\def\OO{\boldsymbol{O}} \def\frakP{\frak{P}}

We consider a rather general setup where a $q$-query deterministic algorithm $A$
  has oracle access to an object $\OO$ drawn from a distribution $\calD$:
Upon each query $w$, the oracle with an object $O$ returns $\eta(w,O)$, an element from a finite set $\frakP$.
Such an algorithm can be equivalently viewed as a tree~of depth $q$, where
  each internal node $u$ is labelled a query $w$ to make
  and has $|\frakP|$ edges $(u,v)$ leaving $u$, each labelled a distinct element from $\frakP$.
(In general the degree of $u$ can be much larger than two; this is the case for
  all our applications later since we will introduce new oracles that upon a query string
  $x\in \{0,1\}^n$ returns more information than just $f(x)$.)
For this section we do not~care~about labels of leaves of $A$.
Given $A$,
  we present two~basic pruning techniques
   that will help our analysis of algorithms in our lower bound proofs later.

Both lemmas share the following setup.
Given $A$ and a set $E$ of edges of $A$
  we use $L_E$ to denote the set of leaves $\ell$ that has at least one edge in $E$
  along the path from the root to $\ell$.
Each lemma below states that if $E$ satisfies certain properties with respect
  to $\calD$ that we are interested in, then
\begin{equation}\label{pru}
\mathop{\Pr}_{\OO\sim \calD} \big[\text{$\OO$ reaches a leaf in $L_E$}\big]=o(1).
\end{equation}
This will later allow us to focus on root-to-leaf paths that do not take any edge in $E$.

For each node $u$ of tree $A$, we use $\Pr [u]$ to
  denote the probability of $\OO\sim \calD$ reaching $u$.
When $u$ is an internal node with $\Pr[u]>0$ we use $q(u)$ to denote the following conditional probability:
$$
q(u)=\mathop{\Pr}_{\OO\sim \calD}\Big[\hspace{0.02cm}\text{$\OO$ follows an edge in $E$ at $u$}\hspace{0.08cm}\Big|\hspace{0.08cm}
  \text{$\OO$ reaches $u$}\hspace{0.04cm}\Big]=
  \frac{\sum_{(u,v)\in E} \Pr[v]}{\Pr[u]}.
$$





We start with the first pruning lemma; it is
  trivially implied by the second pruning lemma,
  but we keep it because of its conceptual simplicity. 

\begin{lemma}\label{simplepruning}
Given $E$, if $q(u)=o(1/q)$ for every internal node $u$ with $\Pr[u]>0$, then
  (\ref{pru}) holds.
\end{lemma}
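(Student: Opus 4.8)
The plan is to bound the probability in (\ref{pru}) by a union bound over the (at most $q$) levels of the tree $A$, conditioning at each step on the node reached so far. Concretely, let me track the random walk of $\OO\sim\calD$ down the tree and stop it the first time it follows an edge in $E$. For $t=0,1,\dots,q-1$, let $\bu_t$ denote the (random) node reached at depth $t$ assuming no edge in $E$ has been taken yet, with $\bu_t$ undefined once an $E$-edge is used. Then the event ``$\OO$ reaches a leaf in $L_E$'' is exactly the event that for some $t$, the walk is at an internal node $\bu_t$ with $\Pr[\bu_t]>0$ and then follows an edge in $E$ at $\bu_t$.

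First I would write, for each fixed depth $t$,
\[
\mathop{\Pr}_{\OO\sim\calD}\big[\text{$\OO$ takes an $E$-edge at depth $t$, having taken none before}\big]
= \sum_{u\text{ at depth }t,\ \Pr[u]>0} \Pr[u]\cdot q(u),
\]
where the sum ranges over internal nodes $u$ at depth $t$ that are not themselves below an $E$-edge (equivalently, all of whose ancestors are $E$-edge-free), since $\Pr[u]$ already accounts for $\OO$ reaching $u$ without having taken an $E$-edge. Using the hypothesis $q(u)=o(1/q)$ uniformly, this is at most $o(1/q)\cdot\sum_u \Pr[u] \le o(1/q)$, because the nodes $u$ at a fixed depth with all $E$-edge-free ancestors are reached by disjoint events, so their probabilities sum to at most $1$.

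Then I would sum over the $q$ depths: the total probability that $\OO$ ever takes an edge in $E$ is at most $q\cdot o(1/q)=o(1)$. Finally, observe that $\OO$ reaches a leaf in $L_E$ if and only if $\OO$ takes at least one edge in $E$ along its root-to-leaf path, so this is exactly the bound (\ref{pru}). That completes the argument.

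There is essentially no serious obstacle here — the lemma is deliberately a warm-up. The one point that needs a little care is the claim that, at a fixed depth $t$, the nodes with positive probability and $E$-edge-free ancestor paths have probabilities summing to at most $1$: this holds because the events ``$\OO$ reaches $u$'' for distinct such $u$ at the same depth are pairwise disjoint (the walk is at a unique node at each depth), so $\sum_u\Pr[u]\le 1$. I would also note explicitly that $q(u)$ is only defined for $\Pr[u]>0$, which is why the sums are restricted accordingly, and that internal nodes $u$ already lying below an $E$-edge contribute nothing new (the leaves below them are already counted at the depth where the $E$-edge was first taken). Everything else is a one-line union bound.
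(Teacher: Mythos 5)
Your proof is correct and is essentially the paper's own argument: both decompose the event by the depth at which the first $E$-edge is taken, bound each depth's contribution by $\sum_u \Pr[u]\cdot q(u) \le o(1/q)$ using the disjointness of the events of reaching distinct nodes at a fixed depth, and finish with a union bound over the $q$ depths. No issues.
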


\begin{proof}
We can partition the set $L_E$ of leaves into $L_E=\bigcup_{i\in[q]} L_i$, where
   $L_i$ contains leaves with~its first edge from $E$
  being the $i$th edge along its root-to-leaf path.
We also write $E_i$ as the set of edges in $E$ at the $i$th level (i.e., they appear as the
  $i$th edge along root-to-leaf paths).
Then for each $i$,
$$
\mathop{\Pr}_{\OO\sim \calD} \big[\text{$\OO$ reaches $L_i$}\hspace{0.03cm}\big]
\le \sum_{(u,v)\in E_i} \Pr [v] =\sum_{u} \sum_{(u,v)\in E_i} \Pr [v]
=\sum_u \Pr [u]\cdot o(1/q).
$$
Note that the sum is over certain nodes $u$ at the same depth $(i-1)$.
Therefore, $\sum_u \Pr[u]\le 1$ and the proof is completed by taking a union bound
  over $L_i$, $i\in [q]$.
\end{proof}






Next, for each leaf $\ell$ with $\Pr[\ell]>0$ and the root-to-$\ell$ path being
$u_1u_2\cdots u_{k+1}=\ell$, we let $q^*(\ell)$ denote $\sum_{i\in [k]} q(u_i)$.
The second pruning lemma states that (\ref{pru}) holds if $q^*(\ell)=o(1)$ for all such $\ell$.

\begin{lemma}\label{complicatedpruning}
If every leaf $\ell$ of $A$ with $\Pr[\ell]>0$ satisfies $q^*(\ell)=o(1)$, then 
  (\ref{pru}) holds.
\end{lemma}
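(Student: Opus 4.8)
The plan is to bound the probability that $\OO \sim \calD$ reaches a leaf in $L_E$ by a weighted sum over all leaves, using the quantities $q^*(\ell)$ as a potential. The key observation is that $q(u)$ is exactly the conditional probability of taking an $E$-edge at $u$ given that we reach $u$, so along a single root-to-$\ell$ path the sum $q^*(\ell) = \sum_{i} q(u_i)$ upper bounds (by a union bound along the path) the conditional probability of taking \emph{some} $E$-edge given that we would otherwise reach $\ell$.

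First I would set up the following accounting. For each leaf $\ell \in L_E$ with $\Pr[\ell] > 0$, let $v(\ell)$ be the endpoint of the \emph{first} $E$-edge on the root-to-$\ell$ path, say this edge is $(u_i, v)$ where $u_1 \cdots u_{k+1} = \ell$; then I would charge $\Pr[v]$ to the leaf $\ell$. Actually, the cleaner route is to charge by edges: for each edge $(u,v) \in E$, the set of leaves in $L_E$ whose \emph{first} $E$-edge is $(u,v)$ forms a subtree rooted at $v$, so the total probability mass of $\OO$ reaching such a leaf is at most $\Pr[v]$. Summing over all $(u,v) \in E$ gives
\[
\mathop{\Pr}_{\OO\sim\calD}\big[\text{$\OO$ reaches a leaf in $L_E$}\big] \le \sum_{(u,v)\in E} \Pr[v] = \sum_{u} \Pr[u]\, q(u),
\]
where the last equality uses the definition of $q(u)$. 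So it suffices to show $\sum_u \Pr[u]\, q(u) = o(1)$, the sum being over internal nodes $u$ with $\Pr[u] > 0$.

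Next I would rewrite $\sum_u \Pr[u]\, q(u)$ by pushing the weight down to the leaves. Since $\Pr[u] = \sum_{\ell : u \text{ on path to } \ell} \Pr[\ell]$ (the leaves below $u$ partition the mass at $u$), we get
\[
\sum_u \Pr[u]\, q(u) = \sum_u q(u) \sum_{\ell \text{ below } u} \Pr[\ell] = \sum_{\ell} \Pr[\ell] \sum_{u \text{ on path to } \ell} q(u) = \sum_{\ell} \Pr[\ell]\, q^*(\ell),
\]
where the sum is over leaves $\ell$ with $\Pr[\ell] > 0$ and I have swapped the order of summation. By hypothesis $q^*(\ell) = o(1)$ uniformly over all such $\ell$, so this is at most $o(1) \cdot \sum_\ell \Pr[\ell] \le o(1)$, which completes the proof.

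The only real subtlety — and the step I'd be most careful about — is the first inequality: making sure that the leaves in $L_E$ are covered by the subtrees hanging off the endpoints $v$ of $E$-edges \emph{without double counting in a way that breaks the bound}. The correct statement is that every leaf in $L_E$ lies below the endpoint of \emph{at least one} $E$-edge, so $\Pr[\OO \text{ reaches } L_E] \le \sum_{(u,v)\in E}\Pr[v]$ by a union bound over edges; no partition is actually needed for the inequality, only for intuition. Everything else is a bookkeeping swap of summation order, and the uniform $o(1)$ bound on $q^*(\ell)$ does the rest. (Lemma~\ref{simplepruning} then follows as the special case where $q(u_i) = o(1/q)$ for each of the at most $q$ nodes on any path, giving $q^*(\ell) = o(1)$.)
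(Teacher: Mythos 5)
Your proof is correct and follows essentially the same route as the paper's: bound the probability of reaching $L_E$ by $\sum_{(u,v)\in E}\Pr[v]=\sum_u \Pr[u]\,q(u)$ via the first-$E$-edge decomposition, rewrite $\Pr[u]$ as the sum of $\Pr[\ell]$ over leaves below $u$, swap the order of summation to obtain $\sum_\ell \Pr[\ell]\,q^*(\ell)$, and invoke the uniform $o(1)$ bound. No gaps.
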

\begin{proof}
The first part of the proof goes exactly the same as in the proof of the first lemma.

Let $A'$ be the set of internal nodes $u$ with $\Pr[u]>0$.
After a union bound over $L_i$, $i\in [q]$, 
$$
\mathop{\Pr}_{\OO\sim \calD}\big[\text{$\OO$ reaches $L_E$\hspace{0.03cm}}\big]\le
  \sum_{u\in A'} \Pr [u]\cdot q(u).
$$
Let $L_u$ be the leaves in the subtree rooted at $u\in A'$.
We can rewrite $\Pr [u]$ as $\sum_{\ell\in L_u}\Pr [\ell]$.
Thus,
$$
\mathop{\Pr}_{\OO\sim \calD}\big[\text{$\OO$ reaches $L_E$\hspace{0.03cm}}\big]\le
  \sum_{u\in A'} \sum_{\ell\in L_u} \Pr [\ell]\cdot q(u)
=\sum_\ell \Pr [\ell]\cdot q^*(\ell),
$$
where the last sum is over leaves $\ell$ with $\Pr[\ell]>0$; the last 
  equation follows by switching the order of the two sums. 
The lemma follows from $q^*(\ell)=o(1)$ and $\sum_\ell \Pr [\ell]=1$.
\end{proof}


\section{Monotonicity Lower Bound}
\label{sec:mono}

\def\TT{\boldsymbol{T}} \def\CC{\boldsymbol{C}} \def\SS{\boldsymbol{S}} \def\frakP{\frak P}
\def\oe{\overline{e}} \def\hh{\boldsymbol{h}} \def\HH{\boldsymbol{H}}
\def\Ey{\mathcal{E}_{\text{yes}}} \def\En{\mathcal{E}_{\text{no}}}
\def\bGamma{\boldsymbol{\Gamma}} \def\bh{\boldsymbol{h}} \def\bk{{\boldsymbol{k}}}
\def\XX{\boldsymbol{X}}

\subsection{Distributions}\label{sec:dist:mono}

For a fixed $n>0$, we describe a pair of distributions $\Dy$ and $\Dn$ supported on Boolean functions $f: \{0, 1\}^n \to \{0, 1\}$. We then show that every $\ff \sim \Dy$ is monotone, and $\ff\sim \Dn$ is
  $\Omega(1)$-far from monotone with probability $\Omega(1)$.
Recall that $\smash{N=2^{\sqrt{n}}}$. 

A function $\ff\sim \Dy$ is drawn using the following procedure:
\begin{flushleft}\begin{enumerate}
\item Sample a pair $(\TT,\CC)\sim\calE$ (which we describe next). The
  pair $(\TT,\CC)$ is then used to define\\ a \emph{multiplexer} map $\bGamma=\bGamma_{\TT,\CC}:\{0,1\}^n\rightarrow
    (N\times N)\cup \{0^*,1^*\}$.\footnote{We use $0^*$ and $1^*$ to denote two special symbols
    (instead of the Kleene closure of $0$ and $1$).}
\item Sample  $\HH=(\bh_{i,j}:i,j\in [N])$ from a distribution $\Ey$,
  where each $\bh_{i,j}:\{0,1\}^n\rightarrow \{0,1\}$\\ is  a random dictatorship Boolean function, i.e.,
  $\bh_{i,j}(x)=x_k$ with $k$ sampled independently for each $\bh_{i,j}$ and uniformly at random from $[n]$.
\item Finally, $\ff=\ff_{\TT,\CC,\HH}:\{0,1\}^n\rightarrow \{0,1\}$ is defined as follows:
$\ff(x)=1$ if $|x|> (n/2)+\sqrt{n}$; $\ff(x)=0$ if $|x|<(n/2)-\sqrt{n}$; if $(n/2)-\sqrt{n}
  \le |x|\le (n/2)+\sqrt{n}$, we have
$$
\ff(x)=\begin{cases}0 & \text{if
  $\bGamma(x)=0^*$}\\
  1 & \text{if $\bGamma(x)=1^*$}\\
  \bh_{\bGamma(x)}(x) & \text{otherwise (i.e., $\bGamma(x)\in N\times N$)}\end{cases}
$$
\end{enumerate}\end{flushleft}
On the other hand a function $\ff=\ff_{\TT,\CC,\HH}\sim \Dn$ is drawn using the same procedure,
   with~the~only difference being
  that $\HH=(\bh_{i,j}:i,j\in [N])$ is drawn from $\En$ (instead of $\Ey$): each $\bh_{i,j}(x)=\overline{x_k}$~is a
  random anti-dictatorship function
  with $k$ drawn independently and uniformly from $[n]$.

\begin{figure}
\label{fig:multiplex}
\centering
\begin{picture}(140,170)
    \put(0,0){\includegraphics[width=0.25\linewidth]{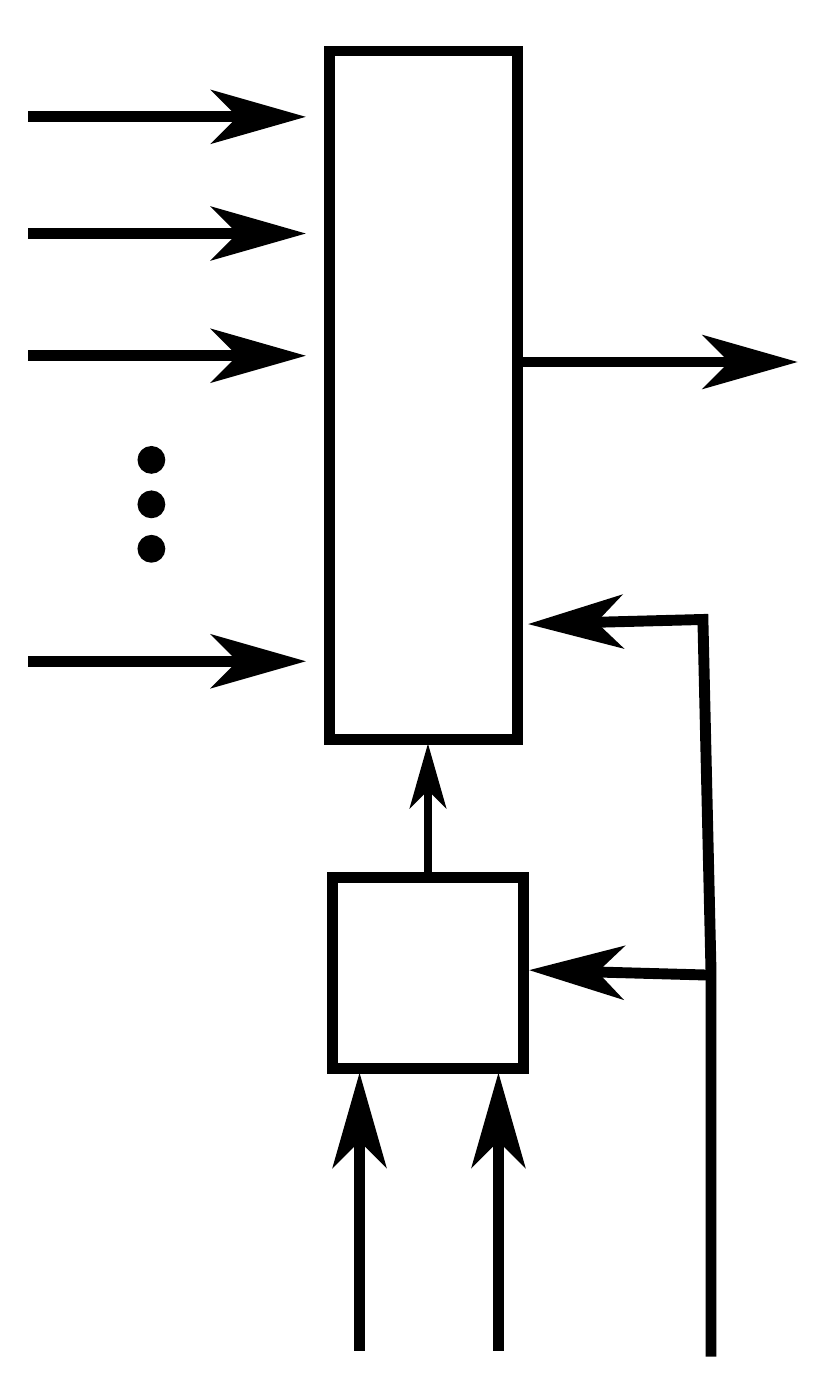}}
    \put(55, 142){$f$}
    \put(58, 55){$\Gamma$}
    \put(-17, 175){$h_{1, 1}$}
    \put(-17, 160){$h_{1, 2}$}
    \put(-17, 142){$h_{1, 3}$}
    \put(-23, 99){$h_{N, N}$}
    \put(46.5, -7){$T$}
    \put(65.5, -7){$C$}
    \put(96.5, -6){$x$}
    \put(115, 140){$f(x)$}
  \end{picture}\vspace{0.32cm}
  \caption{An illustration of the function $f=f_{T, C, H}$ and its dependency on $T$, $C$ and $H$.\vspace{-0.16cm}}\label{fig:multiplexer}
\end{figure}

\begin{remark}
Given the same truncation done in both $\Dy$ and $\Dn$, it suffices
  to show a lower~bound against algorithms that query strings in the \emph{middle layers}  only:
  $(n/2)- \sqrt{n}\le |x|\le (n/2)+ \sqrt{n}$.
\end{remark}

Next we describe the distribution $\calE$ in details.
$\calE$ is uniform over all pairs $(T,C)$ of the following form:
  $T = (T_i:i \in [N])$ with $T_i:[\sqrt{n}]\rightarrow [n]$ and $C=(C_{i,j}:{i,j\in [N]})$
  with $C_{i,j}:[\sqrt{n}]\rightarrow [n]$.
We call $T_i$'s the \emph{terms} and $C_{i,j}$'s the \emph{clauses}.
Equivalently, to draw a pair $(\TT,\CC)\sim \calE$:
\begin{flushleft}\begin{itemize}
\item 
For each $i\in [N]$, we sample a random term $\TT_i$ by
  sampling $\TT_i(k)$ independently and uniformly from $[n]$ for each $k \in [\sqrt{n}]$,
  with $\TT_i(k)$  viewed as the $k$th variable of $\TT_i$.

\item For each $i,j\in [N]$, we sample a random clause $\CC_{i,j}$
  by sampling $\CC_{i,j}(k)$ independently and uniformly from $[n]$ for each $k\in [\sqrt{n}]$,
  with $\CC_{i,j}(k)$ viewed as the $k$th variable of $\CC_{i,j}$.

\end{itemize}\end{flushleft}
Given a pair $(T,C)$,
we interpret $T_i$ as a (DNF) term and abuse the notation to write
\[ T_{i}(x) = \bigwedge_{k \in [\sqrt{n}]} x_{T_{i}(k)} \]
as a Boolean function over $n$ variables.
We say $x$ \emph{satisfies} $T_i$ when $T_i(x) = 1$.
We interpret~each~$C_{i, j}$ as a (CNF) clause and abuse the notation to write
\[ C_{i,j}(x) = \bigvee_{k \in [\sqrt{n}]} x_{C_{i,j}(k)} \]
as a Boolean function over $n$ variables. Similarly we say $x$ \emph{falsifies} $C_{i,j}$ when $C_{i,j}(x) = 0$.

Each pair $(T,C)$ in the support of $\calE$ defines a multiplexer map $\Gamma=\Gamma_{T,C}: \{0,1\}^n\rightarrow (N\times N)\cup \{0^*,1^*\}$.
Informally speaking, $\Gamma$ consists of two levels: the first level uses the terms
  $T_i$ in $T$ to pick the first index $i'\in [N]$; the second level uses the clauses
  $C_{i',j}$ in $C$ to pick the second~index~$j'\in [N]$.
Sometimes $\Gamma$ may choose to directly determine the value of the function by setting $\Gamma(x)\in \{0^*,1^*\}$.

Formally, $(T,C)$ defines $\Gamma$ as follows.
Given an $x\in \{0,1\}^n$ we have $\Gamma(x)=0^*$ if $T_i(x)=0$~for~all $i\in [N]$
  and $\Gamma(x)=1^*$ if $T_i(x)=1$ for at least two different $i$'s in $[N]$.
Otherwise there is a~unique $i'$ with $T_{i'}(x)=1$, and the multiplexer
  enters the second level.
Next, we have $\Gamma(x)=1^*$ if $C_{i',j}(x)=1$ for all $j\in [N]$
  and $\Gamma(x)=0^*$ if $C_{i',j}(x)=0$ for at least two different $j$'s in $[N]$.
Otherwise there~is~a unique $j'\in [N]$ with $C_{i',j'}(x)=0$ and in this case the multiplexer
  outputs $\Gamma(x)=(i',j')$.

This finishes the definition of $\Dy$ and $\Dn$.
Figure~\ref{fig:function} above gives a graphical representation of such functions.
We now prove the properties of $\Dy$ and $\Dn$ promised at the beginning.

\begin{figure}
\centering
\begin{picture}(220,170)
    \put(0,0){\includegraphics[width=0.6\linewidth]{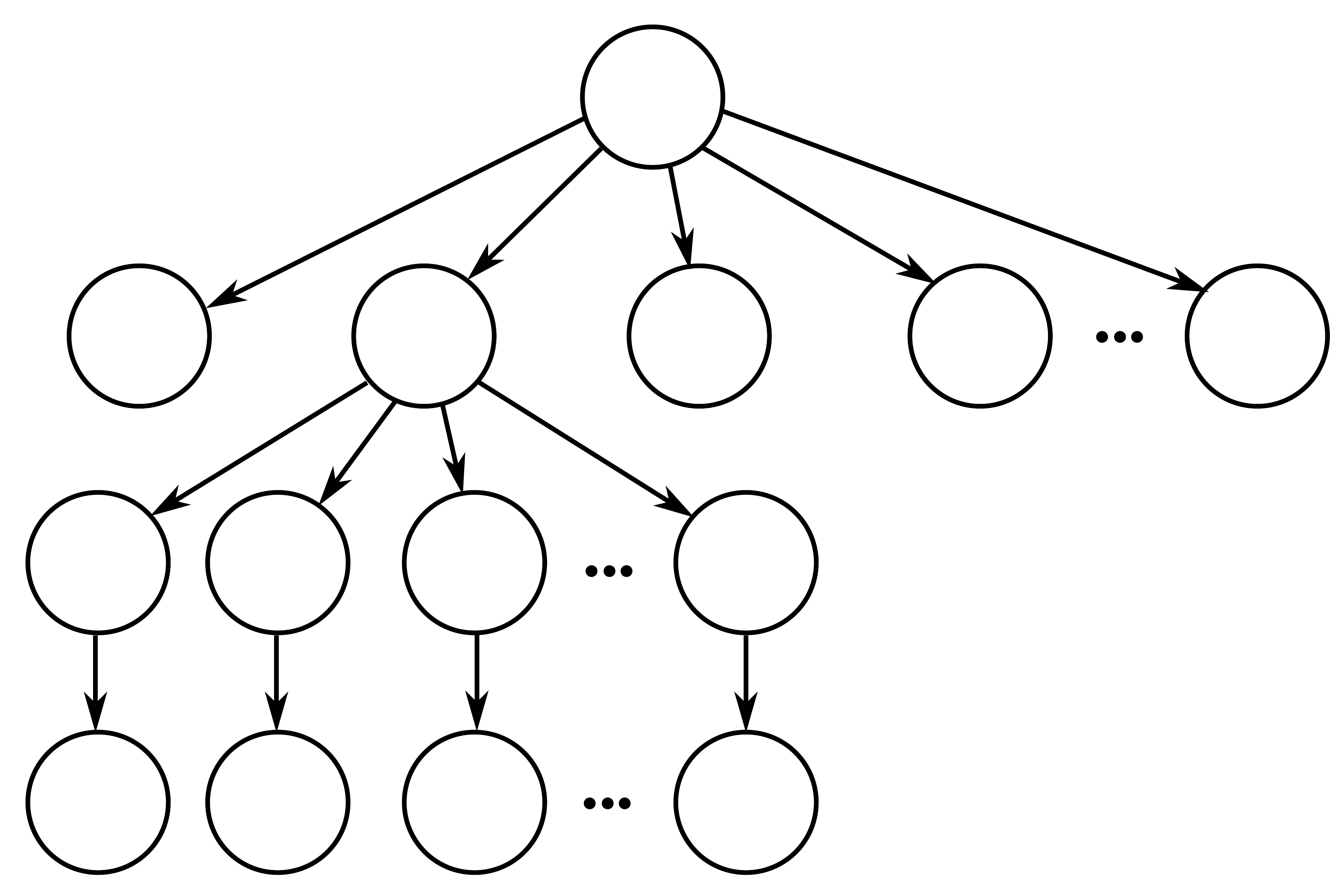}}
    \put(133, 165){$f$}
    \put(24, 115){$T_1$}
    \put(84, 115){$T_2$}
    \put(142, 115){$T_3$}
    \put(201, 115){$T_4$}
    \put(258, 115){$T_{N}$}
    \put(11, 67){$C_{2, 1}$}
    \put(50, 67){$C_{2, 2}$}
    \put(90, 67){$C_{2, 3}$}
    \put(146, 67){$C_{2, N}$}
    \put(11, 17){$h_{2, 1}$}
    \put(50, 17){$h_{2, 2}$}
    \put(90, 17){$h_{2, 3}$}
    \put(146, 17){$h_{2, N}$}
  \end{picture}\vspace{0.18cm}
  \caption{Picture of a function $f$ in the support of $\Dy$ and $\Dn$. We think of evaluating $f(x)$ as following the arrows down the tree. The first level represents multiplexing $x \in \{0, 1\}^n$ with~respect to the terms in $T$. If $x$ satisfies no terms, or multiple terms, then $f$ outputs $0$, or $1$, respectively. If $x$ satisfies $T_i$ for a \emph{unique} term $T_i$ ($T_2$ in the picture), then we follow the arrow to $T_i$ and proceed~to the second level. If $x$ falsifies no clause, or multiple clauses, then $f$ outputs $1$, or $0$, respectively. If $x$ falsifies a unique clause $C_{i, j}$, then we follow the arrow to $C_{i, j}$ and output $h_{i, j}(x)$.\vspace{-0.3cm}}\label{fig:function}
\end{figure}

\begin{lemma}\label{monotone:lem}
Every function $f$ in the support of $\Dy$ is monotone.
\end{lemma}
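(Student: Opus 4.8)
The plan is to verify monotonicity directly from the definition of $f = f_{T,C,H}$, checking that $f(x) \le f(y)$ whenever $x \prec y$. Because of the truncation, it suffices to handle the three weight regimes and, within the middle layers, to trace through the two levels of the multiplexer. First I would dispose of the easy cases: if $|y| > (n/2)+\sqrt n$ then $f(y) = 1 \ge f(x)$, and if $|x| < (n/2)-\sqrt n$ then $f(x) = 0 \le f(y)$; so the only case requiring work is when both $x$ and $y$ lie in the middle layers $(n/2)-\sqrt n \le |x| < |y| \le (n/2)+\sqrt n$ (with also the boundary cases where one of them is exactly at an endpoint, which are subsumed). For such a pair it is enough to prove the key claim: if $x \prec y$ then $\Gamma(x) \preceq \Gamma(y)$ in the appropriate sense — meaning $f$'s value can only go up. Concretely I would show that $\Gamma(x) = 1^*$ implies $\Gamma(y) = 1^*$, that $\Gamma(y) = 0^*$ implies $\Gamma(x) = 0^*$, and that if $\Gamma(x) = \Gamma(y) = (i,j) \in N \times N$ then $h_{i,j}(x) = x_k \le y_k = h_{i,j}(y)$ since $x \prec y$. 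The last point is immediate because each $h_{i,j}$ is a monotone dictator; the content is in controlling how $\Gamma$ behaves under $\prec$.

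The heart of the argument is monotonicity of the multiplexer, which I would break into its two levels. At the first level, the set of terms satisfied by $x$ is contained in the set of terms satisfied by $y$, because each $T_i$ is a monotone conjunction and $x \prec y$. Hence: if $x$ satisfies $\ge 2$ terms so does $y$, giving $\Gamma(x) = \Gamma(y) = 1^*$; if $x$ satisfies $0$ terms then $\Gamma(x) = 0^*$ and we are fine regardless of $y$; the remaining subtlety is when $x$ satisfies exactly one term $T_{i'}$. Then $y$ satisfies $T_{i'}$ and possibly more. If $y$ satisfies $\ge 2$ terms, $\Gamma(y) = 1^* \ge \Gamma(x)$ no matter what $\Gamma(x)$ evaluates to, so we are done. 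If $y$ also satisfies the unique term $T_{i'}$, both enter the second level with the \emph{same} index $i'$, and I must compare how the clauses $C_{i',j}$ treat $x$ versus $y$. Here the direction flips in a convenient way: each $C_{i',j}$ is a monotone disjunction, so the set of clauses \emph{falsified} by $y$ is contained in the set falsified by $x$ (more ones can only satisfy more clauses). Thus: if $y$ falsifies $\ge 2$ clauses, so does $x$, giving $\Gamma(x) = \Gamma(y) = 0^*$; if $y$ falsifies $0$ clauses then $\Gamma(y) = 1^*$ and we are done; if $y$ falsifies exactly one clause $C_{i',j'}$, then $x$ falsifies it too, and $x$ falsifies either exactly that one (so $\Gamma(x) = (i',j') = \Gamma(y)$, reducing to the dictator comparison above) or $\ge 2$ of them (so $\Gamma(x) = 0^* \le \Gamma(y)$).

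I expect the main obstacle — really the only place to be careful — to be the bookkeeping at the two ``exactly one'' branch points and the interaction with the weight-truncation boundary: one must make sure every combination of (number of terms satisfied by $x$, number satisfied by $y$, number of clauses falsified by $x$, number falsified by $y$) is covered and that in each the inequality $f(x) \le f(y)$ holds, including degenerate cases like $x$ on the lower boundary layer or $y$ on the upper boundary layer (where $f$ is pinned to $0$ or $1$ and the claim is trivial). The two monotone containments — satisfied-terms of $x$ $\subseteq$ satisfied-terms of $y$, and falsified-clauses of $y$ $\subseteq$ falsified-clauses of $x$ — are the engine, and once stated the rest is a finite case check. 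Note this argument uses only the structure of $\Gamma$ and the monotone dictators; it does not use any property of $H$ beyond each $h_{i,j}$ being a monotone dictator, which is exactly why the companion statement for $\Dn$ fails (there the $h_{i,j}$ are anti-dictators).
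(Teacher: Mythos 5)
Your proposal is correct and follows essentially the same route as the paper's proof: both rest on the observations that increasing coordinates can only enlarge the set of satisfied terms, only shrink the set of falsified clauses, and that each $h_{i,j}$ is a monotone dictator, so the multiplexer's output (and hence $f$) can only move upward. The only cosmetic difference is that the paper argues edge-by-edge (flipping a single $0$ of $x$ to $1$) and leaves the truncation and some subcases implicit, whereas you handle arbitrary comparable pairs and spell out the full case analysis.
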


\begin{proof}
Consider  $f = f_{T,C,H}$ with
  $(T,C)$ from the support of $\calE$ and $H$ from the support of $\Ey$.
Let $x\in\{0,1\}^n$ be a string with $f(x)=1$ and $x_i=0$ for some $i$.
Let $x'=x^{(i)}$.
We show that $f(x')=1$.

First note that every term in $T$ satisfied by $x$ remains satisfied by $x'$;
  every clause satisfied by~$x$ remains satisfied by $x'$.
As a result if $\Gamma(x)=1^*$ then $\Gamma(x')=1^*$ as well.
Assume that $\Gamma(x)=(i,j)$. Then $h_{i,j}(x)=f(x)=1$.
For this case we have either $\Gamma(x')=1^*$ and $f(x')=1$, or $f(x')=h_{i,j}(x')$
  and $h_{i,j}(x')=h_{i,j}(x)=1$ because $h_{i,j}$ here is a dictatorship function.
\end{proof}

\begin{lemma}\label{nonmonotone:lem}
A function $\ff \sim \Dn$ is $\Omega(1)$-far-from monotone with probability $\Omega(1)$.
\end{lemma}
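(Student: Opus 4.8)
The plan is to use Lemma~\ref{pfpf}: it suffices to exhibit a large collection of disjoint violating pairs of $\gg$, in expectation of size $\Omega(2^n)$, and then conclude via a second-moment / Markov argument that with probability $\Omega(1)$ the random function $\gg\sim\Dn$ has distance $\Omega(1)$ to monotone. The natural violating pairs to use are single edges: for a string $x$ in the middle layers with $\bGamma(x)=(i,j)$, let $\ell$ be the anti-dictator variable of $\bh_{i,j}$ (so $\bh_{i,j}(z)=\overline{z_\ell}$), and consider the edge $(x^{(\ell)},x)$ where we set coordinate $\ell$. Whenever $x_\ell=1$, $x^{(\ell)}\prec x$, and if \emph{both} endpoints are still mapped by $\bGamma$ to $(i,j)$ then $\gg(x^{(\ell)})=\overline{0}=1>0=\overline{1}=\gg(x)$, a violating edge. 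So first I would fix a ``good'' event on $(\TT,\CC)$ and build, for each pair $(i,j)$, a set of such edges whose endpoints all lie in $\HH_{i,j}:=\bGamma^{-1}(i,j)$, then argue these can be made disjoint across all $(i,j)$ and sum to $\Omega(2^n)$.

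The key quantitative steps, in order: (1) Show that with probability $\Omega(1)$ over $(\TT,\CC)\sim\calE$, the sets $\HH_{i,j}$ are ``mostly disjoint'' and together cover an $\Omega(1)$ fraction of the middle layers. This is the simple calculation alluded to in the introduction: a fixed $x$ in the middle layers satisfies a given term $\TT_i$ with probability $\approx 2^{-\sqrt n}=1/N$, so it satisfies a unique term with constant probability; conditioned on satisfying the unique $\TT_{i'}$, it falsifies a given clause $\CC_{i',j}$ with probability $\approx 2^{-\sqrt n}=1/N$, so it falsifies a unique clause with constant probability. Hence $\Pr_{(\TT,\CC)}[\,\bGamma(x)\in N\times N\,]=\Omega(1)$ for each middle-layer $x$, and by linearity an $\Omega(1)$ fraction of middle-layer strings get a genuine address; middle layers themselves are an $\Omega(1)$ fraction of $\{0,1\}^n$ by standard binomial estimates. (2) For each such ``addressed'' $x$, condition on the event that $\bh_{\bGamma(x)}$ has anti-dictator $\ell$ with $x_\ell=1$ \emph{and} that flipping $\ell$ keeps us inside $\HH_{i,j}$, i.e. does not newly satisfy some other term, does not unsatisfy $\TT_{i'}$, does not newly falsify some other clause, and does not satisfy $\CC_{i',j'}$. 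Flipping a single $1$ to a $0$ can only help terms stay unsatisfied (good) but could unsatisfy $\TT_{i'}$ if $\ell\in\TT_{i'}$ — an event of probability $\le \sqrt n/n = 1/\sqrt n$ over the choice of $\TT_{i'}$, hence negligible — and similarly $\ell$ lies in the single clause $\CC_{i',j'}$ with probability $O(1/\sqrt n)$; also $x_\ell=1$ with probability $1/2$ over the choice of $\ell$ (roughly, up to the middle-layer conditioning). So each addressed $x$ is, with probability $\Omega(1)$ over $\HH$, the top endpoint of a violating edge of $\gg$. Taking expectations, the expected number of such violating edges is $\Omega(2^n)$. (3) Pass from ``many violating edges'' to ``many \emph{disjoint} violating edges'': here I would either invoke a matching/covering argument (each vertex is in at most $n$ edges, so a maximal matching has size $\ge (\text{edge count})/n$, which loses a factor $n$ — too lossy), or, better, pick the edges more carefully so disjointness is automatic. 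A clean way: within each $\HH_{i,j}$ the candidate edges all lie in direction $\ell=\ell_{i,j}$ (a single fixed direction), and edges in a common direction $\ell$ partitioned by the $2^{n-1}$ hyperplane pairs are automatically disjoint; and across different $(i,j)$ the endpoint sets $\HH_{i,j}$ are disjoint on the ``good'' event, so the edges are disjoint too. Then $\Omega(2^n)$ disjoint violating edges give $\dist(\gg,\textsc{Mono})=\Omega(1)$ by Lemma~\ref{pfpf}. (4) Finally, turn the expectation bound into a probability bound: since $\dist(\gg,\textsc{Mono})\le 1$ always, and its expectation is $\Omega(1)$, Markov's inequality (applied to $1-\dist$) gives $\Pr[\dist(\gg,\textsc{Mono})\ge c]=\Omega(1)$ for a suitable constant $c>0$.

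The main obstacle I expect is step~(3), the disjointness bookkeeping, together with the ``overlap'' control in step~(1): one must handle the strings $x$ that lie in multiple $\HH_{i,j}$'s or that are too close to the boundary of the middle layers (where flipping $\ell$ could push $|x|$ outside $[(n/2)-\sqrt n,(n/2)+\sqrt n]$, changing $\gg$ to a truncated value and killing the violation). The honest fix is to restrict attention to $x$ with $(n/2)-\sqrt n+1\le|x|\le (n/2)+\sqrt n$ (so the flipped endpoint is still in range) and to a ``core'' of each $\HH_{i,j}$ consisting of strings lying in \emph{no other} $H_{i',j'}$; a union bound shows these restrictions only cost constant factors, since the expected overlap $\sum_{(i,j)\ne(i',j')}|\HH_{i,j}\cap H_{i',j'}|$ is $o(2^n)$ by the same $1/N$-type estimates (the number of pairs is $N^4$ but each intersection has expected size $2^n/N^{\Theta(1)}$ with a large enough power — this is the place where the two-level structure and the choice $N=2^{\sqrt n}$, $\sqrt n$-size terms and clauses, must be checked to make the powers work out). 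Once the core sets are isolated and the boundary layer is excised, the counting in steps~(1)–(2) goes through and the disjointness in~(3) is immediate.
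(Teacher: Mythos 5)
Your proposal matches the paper's proof in all essentials: both exhibit $\Omega(2^n)$ disjoint single-bit violating edges at the anti-dictator coordinates, restrict to those edges whose flip preserves the multiplexer address (your conditions are the mirror image of the paper's events $E_1,E_2$, since you flip a $1$ to $0$ where the paper flips the $0$-entry up to $1$), lower-bound the expected count by an $\Omega(1)$ per-string probability, and conclude via Lemma~\ref{pfpf} and reverse Markov. Your disjointness argument (edges within each $\Gamma^{-1}(i,j)$ share a direction, and the address classes are disjoint) is a valid substitute for the paper's reconstruction argument, and your explicit excision of the top truncation layer is, if anything, slightly more careful than the paper's treatment.
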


\begin{proof}
Fix a pair $(T,C)$ from the support of $\calE$ and an $H$ from the support of $\En$.
Let $f = f_{T,C,H}$. 

Consider the set $X \subset \{0, 1\}^n$ consisting of strings $x$ in the middle layers (i.e., $|x|\in
  (n/2)\pm \sqrt{n}$) with $f(x)=1$, $\Gamma(x)=(i,j)$ for some~$i,j$ $\in [N]$ (instead of $0^*$ or $1^*$), and $h_{i,j}$
  being an anti-dictator function on the $k$th variable for some $k\in [n]$ (so $x_k=0$).~For each $x\in X$, we write $\eta(x)$ to denote the anti-dictator variable $k$ in $h_{i,j}$ and use
  $x^*$ to denote $x^{(\eta(x))}$. (Ideally, we~would like to conclude that
  $(x,x^*)$ is a violating edge of $f$ as $\smash{h_{i,j}(x^*)=0}$. However, flipping one bit potentially may also
  change the value of the multiplexer map $\Gamma$.
So we need to further refine the set $X$.)

Next we
define the following two events with respect to a string $x \in X$ (with $\Gamma(x)=(i,j)$):
\begin{itemize}
\item $E_1(x)$: This event occurs when $\eta(x) \neq C_{i,j}(\ell)$ for any $\ell \in [\sqrt{n}]$
  (and thus, $C_{i,j}(x^*)=0$);\vspace{-0.12cm} 
\item $E_2(x)$: This event occurs when $T_{i'}(x^*) = 0$ for all $i'\ne i \in [N]$.
\end{itemize}
We use $X'$ to denote the set of strings $x\in X$ such that both $E_1(x)$ and $E_2(x)$ hold.
The following claim shows that $(x,x^*)$ for every $x\in X'$ is a violating edge of $f$.

\begin{claim}
For each $x \in X'$, $(x,x^*)$ is a violating edge of $f$.
\end{claim}

\begin{proof}
It suffices to show that $f(x^*)=0$. As $x$ satisfies a unique term $T_i$
  ($T_i$ cannot have $\eta(x)$ as a  variable because $x_{\eta(x)}=0$),
  it follows from $E_2(x)$ that $x^*$ uniquely satisfies the same $T_i$.
It follows from $E_1(x)$ that $x^*$ uniquely falsifies the same clause $C_{i,j}$.
As a result, $f(x^*)=h_{i,j}(x^*)=0$.
\end{proof}

Furthermore, the violating edges $(x,x^*)$ induced by strings $x\in X'$ are indeed disjoint.
(This~is because, given $x^*$, one can uniquely reconstruct $x$ by locating
  $h_{i,j}$ using $\Gamma(x^*)$ and flipping the $k$th bit of $x^*$ if $h_{i,j}$ is
  an anti-dictator function over the $k$th variable.)
Therefore, it suffices to show~that $\XX'$ (as a random set)
  has size $\Omega(2^n)$ with probability $\Omega(1)$, over choices $(\TT,\CC)\sim \calE$ and $\HH\sim\En$.
The lemma then follows from the characterization of \cite{FLNRRS} as stated in Lemma \ref{pfpf}.

Finally we work on the size of $\XX'$. Fix a string $x\in \{0,1\}^n$ in the middle layers.
The next~claim shows that, when $(\TT,\CC)\sim \calE$ and $\HH\sim \En$,
  $\XX'$  contain $x$ with $\Omega(1)$ probability.

\begin{claim}\label{cl:hehe}
For each $x \in \{0, 1\}^n$ with $(n/2) - \sqrt{n} \leq |x|\leq (n/2) + \sqrt{n}$, we have
\[ \mathop{\Pr}_{(\TT,\CC)\sim\calE,\hspace{0.05cm}\HH\sim\En}\big[\hspace{0.03cm}x \in \XX' 
\hspace{0.02cm}\big] = \Omega(1). \]
\end{claim}
\begin{proof}
Fix an $x \in \{0, 1\}^n$ in the middle layers.
We calculate the probability of $x\in \XX'$.

We partition the event of $x\in \XX'$ into $\Theta(nN^2)$ subevents indexed by $i,j\in [N]$ and $k\in [n]$ with
  $x_k=0$. Each subevent corresponds to
  1) Condition on $\TT$: both $x$ and $\smash{x^{(k)}}$ satisfy uniquely the $i$th term; 2)
  Condition on $\CC$: both $x$ and $x^{(k)}$ falsify uniquely the $j$th term;  3)
  Condition on $\HH$: $h_{i,j}$~is the anti-dictatorship function over the $k$th variable.
The probability of 3) is clearly {$1/n$}. 

The probability of 1) is at least
$$
\left(1-\left(\frac{n/2+\sqrt{n}+1}{n}\right)^{\sqrt{n}}\right)^{N-1}\times \left(\frac{n/2-\sqrt{n}}{n}\right)^{\sqrt{n}}=\Omega\left(\frac{1}{N}\right).
$$
The probability of 2) is at least
$$
\left(1-\left(\frac{n/2+\sqrt{n}}{n}\right)^{\sqrt{n}}\right)^{N-1}\times \left(\frac{n/2-\sqrt{n}+1}{n}\right)^{\sqrt{n}}=\Omega\left(\frac{1}{N}\right).
$$
As a result, the probability of $x\in \XX'$ is
  $\Omega(nN^2)\times \Omega(1/N)\times \Omega(1/N)\times \Omega(1/n)=\Omega(1)$.
\end{proof}

From Claim \ref{cl:hehe} and the fact that there are $\Omega(2^n)$ strings in the middle layer,
  the expected size of $\XX'$ is $\Omega(2^n)$.
Via Markov, $|\XX'|=\Omega(2^{n})$ with probability $\Omega(1)$. This finishes the proof.
\end{proof}

Given Lemma \ref{monotone:lem} and \ref{nonmonotone:lem}, Theorem~\ref{monomain}  follows
  directly from the following lemma which we show in the rest of the section. For the rest of the proof we fix the number of queries $q = {n^{1/3}}/{\log^2 n}$.

\begin{lemma}
\label{maintechnical}
Let $B$ be any $q$-query, deterministic algorithm with oracle access to $f$. Then
$$
\mathop{\Pr}_{\ff\sim \Dy} \big[\hspace{0.01cm}\text{$B$ accepts $\ff$}\hspace{0.03cm}\big]\le
\mathop{\Pr}_{\ff\sim \Dn}\big[\hspace{0.01cm}\text{$B$ accepts $\ff$}\hspace{0.03cm}\big]+o(1).
$$
\end{lemma}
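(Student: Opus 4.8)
The plan is to follow the signature-tree framework outlined in Section~\ref{sketch2}. First I would strengthen the oracle: upon a query $x$ in the middle layers, instead of returning just $f(x)$, the oracle returns the \emph{signature} of $x$, which records $\Gamma(x)$ (i.e.\ whether $x$ satisfies no term, multiple terms, or a unique term $T_{i^*}$; and in the unique case whether $x$ falsifies no clause, multiple clauses, or a unique clause $C_{i^*,j^*}$), together with enough auxiliary information about which terms/clauses are known to be violated. Since a stronger oracle only helps the algorithm, it suffices to prove Lemma~\ref{maintechnical} for $q$-query deterministic algorithms $B$ with access to this signature oracle. Such a $B$ is a \emph{signature tree} of depth $q$ where each internal node is labelled with a query string and its outgoing edges are labelled with the possible signatures. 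Then
\begin{equation*}
\Pr_{\ff\sim\Dy}\big[\text{$B$ accepts $\ff$}\big]=\sum_{\ell\text{ leaf, accepting}}\Pr_{\ff\sim\Dy}\big[\text{$\ff$ reaches }\ell\big],
\end{equation*}
and similarly for $\Dn$, so it is enough to prove the leafwise inequality~(\ref{useuse2}) for a suitable set of \emph{good} leaves, and separately to show $\ff\sim\Dy$ reaches a \emph{bad} leaf with probability $o(1)$.

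\textbf{Bad edges and pruning.} Next I would define the four types of bad edges described in the overview. Maintain, along any root-to-node path, the sets $A_{i,1}$ (common $1$-entries of queried strings so far that uniquely satisfy $T_i$) and $A_{i,j,0}$ (common $0$-entries of queried strings so far that uniquely falsify $C_{i,j}$). An edge is bad of type~1 or~2 if, upon the new query $x$, some $|A_{i,1}|$ or some $|A_{i,j,0}|$ drops by more than $\sqrt{n}\log n$; it is bad of type~3 or~4 if the query reveals a violating pair (or the makings of one) without having flipped many $1$'s to $0$'s, respectively $0$'s to $1$'s, relative to a previously queried string in the same sub-hypercube. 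A leaf is bad if its root-to-leaf path contains a bad edge. The key estimate is that, conditioned on reaching an internal node $u$, the posterior distribution of the hidden term $\bT_i$ (respectively clause $\bC_{i,j}$) is ``almost uniform'' over $A_{i,1}$ (respectively over the complement relevant to $A_{i,j,0}$) — the only conditioning comes from strings known \emph{not} to satisfy $T_i$ / not to falsify $C_{i,j}$, and one checks these impose only $O(q)=\tilde O(n^{1/3})$ forbidden patterns, a negligible distortion. Hence for a type-1 bad edge at $u$, the new query $x$ has at least $\sqrt{n}\log n$ many $0$-entries inside $A_{i,1}$, and the chance that none of the $\sqrt{n}$ (almost uniform) variables of $\bT_i$ lands on one of them is at most roughly $(1-\log n/\sqrt n)^{\sqrt n}\cdot\mathrm{poly}=n^{-\Omega(\log n)}$; summed this gives $q(u)=o(1/q)$ (or a bound strong enough for Lemma~\ref{complicatedpruning}). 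An analogous computation handles types~2,~3,~4. Applying the pruning Lemmas~\ref{simplepruning}/\ref{complicatedpruning} with $\calD=\Dy$ then yields $\Pr_{\ff\sim\Dy}[\ff\text{ reaches a bad leaf}]=o(1)$.

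\textbf{Good leaves: the coupling.} For a good leaf $\ell$, I would prove (\ref{useuse2}) by exhibiting a near-bijection between the sets of $(T,C,H)$ triples consistent with $\ell$ under $\Dy$ and under $\Dn$. The terms $T$ and clauses $C$ enter identically in both distributions; the only difference is that $H$ is a tuple of dictators ($\bh_{i,j}(x)=x_{\bk_{i,j}}$) under $\Dy$ and anti-dictators ($\bh_{i,j}(x)=\overline{x_{\bk_{i,j}}}$) under $\Dn$. Along a good path the signature information is independent of the $\bh_{i,j}$'s except that, for each query $x$ with $\Gamma(x)=(i,j)$, the returned bit constrains $\bk_{i,j}$: under $\Dy$ we need $x_{\bk_{i,j}}$ to equal the reported value, under $\Dn$ its complement. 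Because $\ell$ is good, within each sub-hypercube $H_{i,j}$ all queried strings agree on all but at most $O(\sqrt n\log n)$ coordinates (namely on $A_{i,1}\cap A_{i,j,0}$-type common entries), so the set of indices $k$ compatible with the $\Dy$-constraints and the set compatible with the $\Dn$-constraints differ in size by at most an $O(\sqrt n\log n/n)=o(1)$ fraction — indeed one maps compatible-$k$ for $\Dy$ to compatible-$k$ for $\Dn$ by flipping the relevant bits, incurring loss only from the few ``flipped'' coordinates. Multiplying these per-sub-hypercube ratios over the $O(q)$ active sub-hypercubes gives a total ratio $1+o(1)$, which is exactly (\ref{useuse2}). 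Combining with the bad-leaf bound yields (\ref{useuse3}), i.e.\ Lemma~\ref{maintechnical}.

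\textbf{Main obstacle.} The hard part will be making the ``almost uniform posterior'' claim precise and quantitative: one must argue that conditioning on the entire transcript up to node $u$ — which includes not only which strings uniquely satisfy $T_i$ but also which strings were reported to \emph{not} satisfy $T_i$ (and likewise for clauses, across all $O(N^2)$ pairs, many of which are correlated through shared queries) — distorts the law of each $\bT_i$, $\bC_{i,j}$ by only a negligible multiplicative factor, uniformly over all nodes reachable with positive probability. This bookkeeping, and verifying that the $q=n^{1/3}/\log^2 n$ choice makes every error term $o(1)$ simultaneously (the $\sqrt n\log n$ slack per query, over $q$ queries, against the $n^{-\Omega(\log n)}$ per-edge bad probabilities), is where essentially all the technical work lies; the coupling for good leaves is comparatively routine once the good-path structure is pinned down.
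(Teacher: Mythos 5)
Your plan coincides with the paper's proof at the level of strategy: the same signature oracle, the same four types of bad edges, the same pruning-then-ratio decomposition, and the same conclusion that the only difference between the two distributions at a good leaf is whether the special variable $k_{i,j}$ must lie in $A_{i,j,\rho_{i,j}}$ or in $A_{i,j,1-\rho_{i,j}}$. Two of your steps, however, would not go through as stated. First, the pruning bound is not obtained in the paper by arguing that the posterior of $\bT_i$ is ``almost uniform over $A_{i,1}$'' — you correctly flag that making this precise against the full transcript (including all the $R_i$-type negative information) is the hard part, and the paper sidesteps it entirely: it fixes $T_{-i}$, compares the set $U$ of terms consistent with reaching $u$ against the subset $V$ also consistent with the bad event, and bounds $|V|/|U|$ by a bipartite-graph degree count (the negative constraints from $R_i$ are absorbed by reserving at most $q$ coordinates of the term). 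If you insist on the uniform-posterior route you will have to prove a quantitative version of exactly the claim you list as your obstacle; the double-counting avoids it.

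Second, your good-leaf estimate assumes the two candidate sets for $k_{i,j}$ ``differ in size by an $O(\sqrt n\log n/n)$ fraction,'' but the pruning only controls the shrinkage of $A_{i,1}$ and $A_{i,j,0}$ — not of $A_{i,j,1}$ or $A_{i,0}$ — so the bound $\big||A_{i,j,1}|-|A_{i,j,0}|\big|=O(\sqrt n\log n)$ per sub-hypercube is not available. The paper has to recover a lower bound on $|A_{i,j,1}|$ indirectly (Claim~\ref{cl:bb}, using pairwise agreement of strings in $P_{i,j}$ on the large set $A_{i,j,0}$, together with $A_{i,1}\subseteq A_{i,j,1}$), which yields $\big||A_{i,j,1}|-|A_{i,j,0}|\big|\le O\big(\sqrt n\log n\cdot\min\{|P_{i,j}|^2,|P_i|\}\big)$; the quadratic term and the resulting bound $\sum_{i,j}\min\{|P_{i,j}|^2,|P_i|\}=O(q^{3/2})$ are precisely what force $q=n^{1/3}/\log^2 n$. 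Your per-query accounting would instead suggest the good-leaf step survives up to $q\approx\sqrt n$, so this is where your sketch is quantitatively wrong rather than merely incomplete. Relatedly, for the fourth bad-edge type the conditional probability at a single node need not be $o(1/q)$, which is why the paper needs the path-summed pruning Lemma~\ref{complicatedpruning} together with a telescoping bound on $\sum_s|\Delta^s_{i,j}|$; ``an analogous computation'' to type~1 does not suffice there.
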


Since $f$ is truncated in both distributions, we may assume WLOG that 
  $B$ queries strings in the middle layers only (i.e., strings $x$ with $|x|$ between $(n/2)-\sqrt{n}$ and
    $(n/2)+\sqrt{n}$).

\subsection{Signatures and the new oracle}

Let $(T,C)$ be a pair from the support of $\calE$ and $H$ be a tuple from the support of $\Ey$ or $\En$.
Towards Lemma \ref{maintechnical}, we are interested in deterministic algorithms
  that have
  oracle access to $f=f_{T,C,H}$ and attempt to
  distinguish $\Dy$ from $\Dn$ (i.e., accept if $H$ is from $\Ey$ and reject if it is from $\En$).

For convenience of our lower bound proof, we assume below that the oracle
  returns more than just $f(x)$ for each query $x\in \{0,1\}^n$; instead of simply returning $f(x)$, the oracle
  returns a $4$-tuple $(\sigma,\tau,a,b)$ called the \emph{full signature}
  of $x\in \{0,1\}^n$ with respect to $(T,C,H)$ (see Definition \ref{def:signature} below).
It will become clear later that $f(x)$ can always be derived correctly from
  the full signature of $x$ and thus,
  query lower bounds against the new oracle carry over to the standard oracle.
Once the new oracle is introduced, we may actually ignore the function
  $f$ and view any algorithm as one that has oracle access to the hidden triple $(T,C,H)$
  and attempts to tell whether $H$ is from $\Ey$ or $\En$.

We first give the syntactic definition of \emph{full signatures}.

\begin{definition}
We use $\frakP$ to denote the set of all $4$-tuples $(\sigma,\tau,a,b)$ with
  $\sigma \in \{0,1,*\}^N$ and~$\tau \in \{0,1,*\}^N\cup \{\perp\}$ and $a,b\in \{0,1,\perp\}$
  satisfying the following properties:
\begin{flushleft}\begin{enumerate}
\item $\sigma$ is either 1) the all-$0$ string $0^N$;
  2) $e_i$ for some $i\in [N]$;
  or 3) $e_{i,i'}$ for some $i<i'\in [N]$.

\item $\tau=\sperp$ if $\sigma$ is of case 1) or 3).
Otherwise \emph{(}when $\sigma=e_i$ for some $i$\emph{)},
  $\tau\in \{0,1,*\}^N$ is either 1) the all-$1$ string $1^N$;
  2) $\oe_{j}$ for some $j\in [N]$;
  or 3) $\oe_{j,j'}$ for some $j<j'\in [N]$.

\item $a=b=\sperp$ unless:
1) If $\sigma=e_i $ and $\tau=\oe_j $ for some  $i,j\in [N]$, then $a\in \{0,1\}$ and $b=\sperp$;
or 2) If $\sigma=e_i $ and $\tau=\oe_{j,j'} $ for some $i\in [N]$ and $j<j'\in [N]$,
  then $a,b\in \{0,1\}$.
\end{enumerate}\end{flushleft}
\end{definition}

We next define semantically the
  full signature of $x\in \{0,1\}^n$ with respect to $(T,C,H)$.

\begin{definition}[Full signature]\label{def:signature}
We say $(\sigma,\tau,a,b)$ is the \emph{full signature} of a string $x\in \{0,1\}^n$~with respect to $(T,C,H)$ if it satisfies the following properties:
\begin{flushleft}\begin{enumerate}
\item First, 
  $\sigma\in \{0,1,*\}^N$ is determined by $T$ 
  according to one of the following three cases: 1) $\sigma$
  is the all-$0$ string $0^N$ if
  $T_i(x)=0$ for all $i\in [N]$;
2) If there is a unique $i\in [N]$ with $T_{i}(x)=1$, then $\sigma=e_i $;
or 3) If there are more than one index $i\in [N]$ with $T_{i}(x)=1$, then $\sigma=e_{i,i'} $ with
  $i<i'\in [N]$ being the smallest two such indices. 
We call $\sigma$ the \emph{term signature} of $x$.

\item Second, 
  $\tau=\sperp$
  if $\sigma$ is of case 1) or 3) above.
Otherwise, assuming that $\sigma=e_i$,
  $\tau\in \{0,1,*\}^N$ is determined by $(C_{i,j} : j\in [N])$, 
  according to one of the following cases:
1) $\tau$ is the all-$1$ string $1^N$ if $C_{i,j}(x)=1$ for all $j\in [N]$;
2) If there is a unique $j\in [N]$ with $C_{i,j}(x)=0$, then $\tau=\oe_j$;
or 3) If there are more than one index $j\in [N]$ with $C_{i,j}(x)=0$, then $\tau=\oe_{j,j'}$ with
  $j<j'\in [N]$ being the smallest two such indices.
We call $\tau$ the \emph{clause signature} of $x$.

\item  Finally, $a=b=\sperp$ unless:
1) If $\sigma=e_i$ and $\tau=\oe_j$ for some $i,j\in [N]$, then $a=h_{i,j}(x)$ and $b=\sperp$;
or 2) If $\sigma=e_i$ and $\tau=\oe_{j,j'}$ for some $i,j<j'\in [N]$,
  then $a=h_{i,j}(x)$ and $b=h_{i,j'}(x)$.
\end{enumerate}\end{flushleft}
\end{definition}

It follows from the definitions that the full signature of $x$
  with respect to $(T,C,H)$ is in $\frakP$.
We also define the \emph{full signature} of a set of strings $Q$ with respect to $(T,C,H)$.

\begin{definition}
The \emph{full signature} \emph{(}map\emph{)} of a set $Q\subseteq \{0,1\}^n$ with respect to a triple $(T,C,H)$ is a map $\phi \colon Q \to \frakP$~such that $\phi(x)$ is the full signature of $x$ with respect to $(T,C,H)$ for each $x\in Q$.
\end{definition}

For simplicity, we will write $\phi(x) = (\sigma_x, \tau_x, a_x, b_x)$ to specify the term and clause signatures of $x$ as well as the values of $a$ and $b$ in the full signature $\phi(x)$ of $x$.
Intuitively we may view $\phi$ as two levels of tables with entries in $\{0, 1, *\}$. The (unique) top-level table ``stacks'' the term signatures $\sigma_x$, where each row corresponds to a string $x \in Q$ and each column corresponds to~a term $T_i$ in $T$. In the second level
  a table appears for a term $T_i$ if the term signature of some string $x\in Q$ is $e_i$.~In this case the second-level table at $T_i$ ``stacks'' the clause signatures $\tau_x$ for each $x \in Q$ with $\sigma_x = e_{i} $ 
where each row corresponds to such an $x$ and
  each column corresponds to a clause $C_{i,j}$ in $C$. (The number of columns is still $N$ since
  we only care about clauses $C_{i,j}$, $j\in [N]$, in the table at $T_i$.)

The lemma below shows that the new oracle is at least as powerful as the standard oracle.

\begin{lemma}\label{simple}
Let $(T,C)$ be from the support of $\calE$ and
  $H$ from the support of $\Ey$ or $\En$.
Given~any string $x\in \{0,1\}^n$, $f_{T,C,H}(x)$ is determined by
  its full signature with respect to $(T,C,H)$.
\end{lemma}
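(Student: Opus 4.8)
The plan is to unwind the definition of the multiplexer map $\Gamma_{T,C}$ and observe that every case used to define $\Gamma$ — and hence the case used to define $f_{T,C,H}$ inside the middle layers — is already recorded syntactically in the full signature $(\sigma,\tau,a,b)$. So the proof is essentially a case analysis matched to the three clauses of Definition~\ref{def:signature}.

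First I would dispose of the easy cases. If $|x|>(n/2)+\sqrt n$ then $f(x)=1$, and if $|x|<(n/2)-\sqrt n$ then $f(x)=0$; in either case $f(x)$ is determined by $x$ alone, hence a fortiori by the full signature. (Here I am implicitly using the convention, already in force for the lower bound, that the algorithm only queries middle-layer strings, but even without that the Hamming weight $|x|$ is part of "knowing $x$", and the statement is about what is determined — the signature together with the promise on $|x|$ suffices; I would phrase it so that for middle-layer $x$ the signature alone suffices, which is the case that matters.) For middle-layer $x$, recall $f(x)=0$ if $\Gamma(x)=0^*$, $f(x)=1$ if $\Gamma(x)=1^*$, and $f(x)=h_{\Gamma(x)}(x)$ if $\Gamma(x)\in N\times N$. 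So it suffices to show that from $(\sigma,\tau,a,b)$ one can read off which of these three cases holds, and in the last case read off the value $h_{i,j}(x)$.

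Next I would match the three parts of the full signature to the definition of $\Gamma$. By part~1 of Definition~\ref{def:signature}, $\sigma=0^N$ exactly when $T_i(x)=0$ for all $i$, which is precisely the case $\Gamma(x)=0^*$; $\sigma=e_{i,i'}$ exactly when at least two terms are satisfied, which is $\Gamma(x)=1^*$; and $\sigma=e_i$ exactly when $x$ uniquely satisfies $T_i$, so the multiplexer enters the second level with first index $i$. In the last situation part~2 gives: $\tau=1^N$ iff $C_{i,j}(x)=1$ for all $j$, i.e.\ $\Gamma(x)=1^*$; $\tau=\oe_{j,j'}$ iff at least two clauses are falsified, i.e.\ $\Gamma(x)=0^*$; and $\tau=\oe_j$ iff $x$ uniquely falsifies $C_{i,j}$, i.e.\ $\Gamma(x)=(i,j)$. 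Finally, part~3 of Definition~\ref{def:signature} guarantees that exactly in the case $\sigma=e_i,\tau=\oe_j$ we have $a=h_{i,j}(x)$, which is exactly $f(x)$. Tabulating these equivalences yields $f(x)$ as an explicit function of $(\sigma,\tau,a,b)$: $f(x)=0$ if $\sigma=0^N$ or ($\sigma=e_i$ and $\tau=\oe_{j,j'}$); $f(x)=1$ if $\sigma=e_{i,i'}$ or ($\sigma=e_i$ and $\tau=1^N$); and $f(x)=a$ if $\sigma=e_i$ and $\tau=\oe_j$.

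There is no real obstacle here; the only thing to be a little careful about is the bookkeeping between the "syntactic" constraints in the preceding Definition (which is where one checks the case list is exhaustive and the three cases of $\sigma$, resp.\ of $\tau$, are mutually exclusive) and the "semantic" definition — so the cleanest writeup is a short table or a three-line displayed case split, plus one sentence observing that these cases are exhaustive and disjoint because the syntactic definition of $\frakP$ forces $\sigma$ into exactly one of $\{0^N,\,e_i,\,e_{i,i'}\}$ and, when $\sigma=e_i$, forces $\tau$ into exactly one of $\{1^N,\,\oe_j,\,\oe_{j,j'}\}$. The mild subtlety worth a remark is that $b$ plays no role in recovering $f(x)$ (it only carries the value $h_{i,j'}(x)$ of a second relevant anti-/dictator when two clauses are falsified, which is thrown away since $f(x)=0$ there); I would note this so the reader isn't puzzled that one coordinate of the signature is unused.
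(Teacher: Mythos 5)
Your proposal is correct and follows essentially the same route as the paper's proof: a short case analysis on the truncation followed by the five-way split on $(\sigma,\tau)$, yielding exactly the same table of values ($f(x)=0$ for $\sigma=0^N$ or $\tau=\oe_{j,j'}$; $f(x)=1$ for $\sigma=e_{i,i'}$ or $\tau=1^N$; $f(x)=a$ for $\sigma=e_i$, $\tau=\oe_j$). The extra remarks on exhaustiveness/disjointness of the cases and on the unused coordinate $b$ are fine but not needed.
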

\begin{proof}
First if $x$ does not lie in the middle layers, then $f(x)$ is determined by $|x|$.
Below we assume that $x$ lies in the middle layers.
Let $(\sigma,\tau,a,b)$ be the full signature of $x$. There are five cases:
\begin{enumerate}
\item (No term satisfied) If $\sigma = 0^N$, then $f(x) = 0$.\vspace{-0.12cm}
\item (Multiple terms satisfied) If $\sigma = e_{i, i'}$ for some $i, i' \in [N]$, then $f(x) = 1$.\vspace{-0.12cm}
\item (Unique term satisfied, no clause falsified) If $\sigma = e_{i}$ but $\tau = 1^N$, then $f(x) = 1$.\vspace{-0.12cm}
\item (Unique term satisfied, multiple clauses falsified) If $\sigma = e_{i}$ but $\tau = \oe_{j, j'}$, then $f(x) = 0$.\vspace{-0.12cm}
\item (Unique term satisfied, unique clause satisfied) If $\sigma = e_{i}$ and $\tau = \oe_{j}$, then $f(x) = a$.
\end{enumerate}
This finishes the proof of the lemma.
\end{proof}

Given Lemma \ref{simple},
  it suffices to consider deterministic algorithms with the new oracle access
  to a hidden triple $(T,C,H)$, and Lemma \ref{maintechnical} follows directly from the following lemma:

\begin{lemma}\label{maintechnical2}
Let $B$ be any $q$-query algorithm with the new oracle access to $(T,C,H)$. Then
$$
\mathop{\Pr}_{(\TT,\CC)\sim \calE,\HH\sim\Ey} \Big[\text{$B$ accepts $(\TT,\CC,\HH)$}\hspace{0.03cm}\Big]
 \le
\mathop{\Pr}_{(\TT,\CC)\sim\calE,\HH\sim \En}\Big[\text{$B$ accepts $(\TT,\CC,\HH)$}\hspace{0.03cm}\Big]+o(1).
$$
\end{lemma}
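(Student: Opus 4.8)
\textbf{Proof proposal for Lemma~\ref{maintechnical2}.}
The plan is to work entirely in the signature-tree model promised in Section~\ref{sketch2}. Fix a $q$-query deterministic algorithm $B$ with access to the new oracle, and view it as a tree of depth $q$ whose internal nodes are labelled by query strings in the middle layers and whose edges are labelled by elements of $\frakP$ (the full signature returned). Since $f_{T,C,H}(x)$ is determined by the full signature of $x$ (Lemma~\ref{simple}), $B$ accepts or rejects according to the leaf reached, and acceptance depends only on the triple $(T,C,H)$; so it suffices, by the usual leaf-by-leaf argument (equation~(\ref{useuse3})--(\ref{useuse2})), to compare for each leaf $\ell$ the probability that $(\TT,\CC,\HH)\sim\calE\times\Ey$ reaches $\ell$ with the probability that $(\TT,\CC,\HH)\sim\calE\times\En$ reaches $\ell$. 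Note the terms $\TT$ and clauses $\CC$ are drawn identically in both distributions; only the dictator/anti-dictator tuple $\HH$ differs. I would first observe that, conditioned on the term and clause signatures $(\sigma_x,\tau_x)$ revealed along the path being fixed, the residual randomness splits cleanly: the distribution of the ``non-bit'' part of $\HH$ (which variable is the dictator of each $h_{i,j}$ that was actually touched) is a product of uniforms over $[n]$, identical in $\Dy$ and $\Dn$, while the bits $a_x,b_x$ returned are $h_{i,j}(x)=x_k$ in the yes-case and $\overline{x_k}$ in the no-case.

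The core of the argument is the \emph{bad-edge} pruning sketched in Section~\ref{sketch2}. I would define, for each internal node $u$ with root-to-$u$ path so far, the sets $A_{i,1}(u)$ (common $1$-entries of all previously queried strings whose term signature picked the unique term $T_i$) and $A_{i,j,0}(u)$ (common $0$-entries of all previously queried strings that falsified the unique clause $C_{i,j}$), and declare an edge $(u,v)$ \emph{bad} if the new query $x$ at $u$ causes one of four events: (E1) $|A_{i,1}|$ drops by more than $\sqrt n\log n$; (E2) $|A_{i,j,0}|$ drops by more than $\sqrt n\log n$; (E3) $a_x$ or $b_x$ reveals a violating pair relative to a previously seen string in the same $H_{i,j}$ while few $1$'s were flipped; (E4) the analogous event for $0$'s. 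Let $E$ be the set of bad edges and $L_E$ the corresponding leaves. Using the fact that, conditioned on everything revealed so far, each term $T_i$ is (almost) uniform over $\sqrt n$-tuples from $A_{i,1}(u)$ and each clause $C_{i,j}$ is (almost) uniform over $A_{i,j,0}(u)$ — the ``almost'' accounting for the few strings whose signatures already certified non-membership — I would bound the conditional probability $q(u)$ of taking a bad edge at $u$. For (E1)/(E2): if the new $x$ has $t\ge\sqrt n\log n$ zero-entries inside $A_{i,1}$, the chance that the random $\sqrt n$-term $T_i$ avoids all of them is at most $(1-t/|A_{i,1}|)^{\sqrt n}\le e^{-t\sqrt n/|A_{i,1}|}\le e^{-\Omega(\log n)}$, which is $\mathrm{poly}(1/n)$ per relevant $(i)$; summing over the $O(q)$ candidate terms/clauses still gives $q(u)=\mathrm{poly}(1/n)=o(1/q)$ since $q=n^{1/3}/\log^2 n$. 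Then Lemma~\ref{simplepruning} (or Lemma~\ref{complicatedpruning}) gives $\Pr_{\calE\times\Ey}[\,\OO\text{ reaches }L_E\,]=o(1)$; by symmetry of $(\TT,\CC)$ the same bound holds under $\En$.

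It remains to prove~(\ref{useuse2}) for every \emph{good} leaf $\ell$ (one whose path avoids $E$). Here the payoff of pruning is that, along a good path, for every pair $i$ the set $A_{i,1}$ stayed within $q\cdot\sqrt n\log n=o(n)$ of $n$, and likewise $A_{i,j,0}$; so whenever the algorithm queried two strings mapped to the same $H_{i,j}$, they differ in only $o(n)$ coordinates, hence with overwhelming probability agree on the (uniform, independent) dictator coordinate $k$ of $h_{i,j}$ — meaning the returned bits $a_x,b_x$ are, conditioned on the term/clause signatures, \emph{consistent with $h_{i,j}$ being constant on the queried part of $H_{i,j}$}. I would make this precise by a coupling / likelihood-ratio computation: condition on the full history of term and clause signatures (identical law in both distributions) and on the event $G$ that the good path is followed; then the remaining data the leaf depends on are the bit vectors $(a_x,b_x)$. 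In the yes-world $a_x=x_k$, in the no-world $a_x=\overline{x_k}$, for a uniform $k$ in the (shrinking but still huge) set of ``live'' coordinates of the relevant $h_{i,j}$; because all strings routed to a given $h_{i,j}$ agree off an $o(n)$ set, the induced distribution on the vector of returned bits is, up to total variation $o(1/(\text{number of leaves}))$ — actually, up to a $(1+o(1))$ multiplicative factor after summing — the same, \emph{except} that a yes-leaf consistent with ``dictator $= k$, $x_k=1 \Rightarrow a=1$'' corresponds to a no-leaf consistent with ``anti-dictator $=k$, $x_k=1\Rightarrow a=0$'', and this correspondence is a bijection preserving probability up to $(1+o(1))$. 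Carefully bookkeeping this gives $\Pr_{\calE\times\Ey}[\ff\text{ reaches }\ell]\le(1+o(1))\Pr_{\calE\times\En}[\ff\text{ reaches }\ell]$ for good $\ell$, which together with the pruning bound yields the lemma.

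The step I expect to be the main obstacle is the last one: turning the intuitive ``the bits look the same because $h_{i,j}$ is almost surely constant on what was queried'' into a clean inequality that survives summation over the enormous signature tree. The subtleties are (a) handling the strings whose signatures already \emph{exclude} certain terms/clauses (the footnote's ``almost uniform'' caveat), which distorts the conditional law of $T_i$, $C_{i,j}$ and hence of the live-coordinate set of $h_{i,j}$, and must be shown to affect probabilities only by a $(1+o(1))$ factor; (b) the fact that the dictator coordinate $k$ of $h_{i,j}$ is itself random and correlated with which strings get routed there, so the coupling between yes- and no-leaves has to be set up conditionally on $(\TT,\CC)$ and on all revealed signatures before the bits are exposed; and (c) ensuring the bad-edge definition is rich enough that on good paths \emph{no} attack — in particular the $\tilde O(n^{1/4})$ ``flip many $0$'s'' attack that kills the one-level construction — can produce a bit pattern that distinguishes the two worlds, which is exactly what events (E2) and (E4) are engineered to forbid in the two-level setting. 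Once (a)--(c) are handled, assembling~(\ref{useuse3}) from the good-leaf bound plus Lemma~\ref{simplepruning} is routine.
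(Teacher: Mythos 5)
Your outline reproduces the paper's architecture faithfully: the same four bad-edge events, pruning via Lemma~\ref{simplepruning}/\ref{complicatedpruning}, and a leaf-by-leaf likelihood-ratio comparison for the surviving good leaves. But the two places you flag as "obstacles" are precisely where the mathematical content of the lemma lives, and as written the proposal does not close either of them.

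First, the pruning bound for (E1)/(E2). Your estimate $(1-t/|A_{i,1}|)^{\sqrt n}$ treats $\TT_i$, conditioned on the history, as uniform over $\sqrt n$-tuples from $A_{i,1}$. It is not: the history also certifies $T_i(z)=0$ for every $z\in R_i$, so the conditional law is uniform over the constrained set $U=\{T: T(k)\in A_{i,1}\ \forall k,\ \text{and}\ T\ \text{kills every}\ z\in R_i\}$, and you must bound $|V|/|U|$ where $V$ adds the constraint $T(k)\notin\Delta_i$. The paper does this with a bipartite double-counting argument between $V$ and the slice $U'\subseteq U$ of terms having exactly $\log n$ variables in $\Delta_i$, using that at most $q$ coordinates of $T$ are needed to kill $R_i$ so that $\binom{\sqrt n - q}{\log n}$ replacements are always available. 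Your version waves at this ("almost uniform") without an argument; it is fixable but not by the computation you wrote.

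Second, and more seriously, the good-leaf ratio. The exact ratio is $\prod_{i,j}|A_{i,j,1-\rho_{i,j}}|/|A_{i,j,\rho_{i,j}}|$, and the set $A_{i,j,1}$ is \emph{not} protected by any bad-edge event — only $A_{i,1}$ and $A_{i,j,0}$ are. A priori $|A_{i,j,0}|-|A_{i,j,1}|$ could be of order $q\sqrt n\log n$ for each pair, and multiplying over $\Theta(q)$ pairs gives a deviation of order $q^2\log n/\sqrt n=\Theta(n^{1/6}/\log^3 n)$, which diverges; so "up to a $(1+o(1))$ factor after summing" is exactly the claim that fails under the naive bound. The paper's fix is Claim~\ref{cl:bb}: since all strings of $P_{i,j}$ pairwise share at least $|A_{i,j,0}|\ge n/2-O(|P_{i,j}|\sqrt n\log n)$ common zeros and each lies in the middle layers, one gets $|A_{i,j,1}|\ge n/2-O(|P_{i,j}|^2\sqrt n\log n)$, which together with $A_{i,1}\subseteq A_{i,j,1}$ yields the per-pair deviation $O(\sqrt n\log n\cdot\min\{|P_{i,j}|^2,|P_i|\})$. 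The final step is the combinatorial bound $\sum_{i,j}\min\{|P_{i,j}|^2,|P_i|\}\le\sum_i 2|P_i|^{3/2}\le O(q^{3/2})$ (worst case $|J_i|\approx 2\sqrt{|P_i|}$ parts of size $\sqrt{|P_i|}$), so the total multiplicative loss is $O(q^{3/2}\log n/\sqrt n)=o(1)$ exactly when $q=n^{1/3}/\log^2 n$. This $\min$/$q^{3/2}$ mechanism is the reason the exponent is $1/3$ and is absent from your argument. A smaller point: your (E3)/(E4) cannot both be pruned with a uniform $o(1/q)$ per-node bound; the paper needs the path-sum version (Lemma~\ref{complicatedpruning}) and a telescoping argument for the $1$-consistent case, again because $A_{i,j,1}$ has no direct bad-edge protection.
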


Such a deterministic algorithm $B$ can be equivalently viewed as a decision tree of depth $q$
  (and we will abuse the notation  to also denote this tree by $B$).
Each leaf of the tree $B$ is labeled either ``accept'' or ``reject.''
Each internal node $u$ of $B$ is labeled with a query string $x \in \{0, 1\}^n$,
  and each of its outgoing edges $(u,v)$ is labeled a tuple from $\frakP$.
We refer to such a tree as a \emph{signature tree}.

As the algorithm executes, it traverses a root-to-leaf path down the tree making queries to the oracle corresponding to queries in the nodes on the path. For instance at node $u$, after the algorithm queries $x$ and the oracle returns the full signature of $x$ with respect to the unknown $(T,C,H)$, the algorithm follows the outgoing edge $(u,v)$ with that label. Once  a leaf $\ell$ is reached, $B$ accepts if $\ell$ is labelled ``accept''
  and rejects otherwise.

Note that the number of children of each internal node is $| \frakP|$, which is
  huge. Algorithms with the new oracle may adapt its queries to the full signatures returned by the oracle, while under the standard oracle, the queries may only adapt to the value of the function at previous queries. Thus, while algorithms making $q$ queries in the standard oracle model can be described by a tree of size $2^{q}$, $q$-query algorithms with this new oracle are given by signature trees of size $(2^{\Theta(\sqrt{n})})^q$.

We associate each node $u$ in the tree $B$ with a map $\phi_u:Q_u\rightarrow \frakP$
  where $Q_u$ is the set of queries made along the path from the root to $u$ so far,
  and $\phi_u(x)$ is the label of the edge that the root-to-$u$ path takes after querying $x$.
We will be interested in analyzing the following two quantities:
$$
\mathop{\Pr}_{(\TT,\CC)\sim \calE,\HH\sim \Ey} \Big[\text{$({\TT,\CC,\HH})$
  reaches $u$}\hspace{0.03cm}\Big]\quad\text{and}\quad
\mathop{\Pr}_{(\TT,\CC)\sim \calE,\HH\sim \En} \Big[\text{$({\TT,\CC,\HH})$
  reaches $u$}\hspace{0.03cm}\Big].
$$
In particular, Lemma \ref{maintechnical2} would follow trivially if for every leaf $\ell$ of $B$:
\begin{equation}\label{pepe}
\mathop{\Pr}_{(\TT,\CC)\sim \calE,\HH\sim \Ey} \Big[\text{$({\TT,\CC,\HH})$
  reaches $\ell$}\hspace{0.03cm}\Big]\le (1+o(1))\cdot
  \mathop{\Pr}_{(\TT,\CC)\sim \calE,\HH\sim \En} \Big[\text{$({\TT,\CC,\HH})$
  reaches $\ell$}\hspace{0.03cm}\Big].
\end{equation}
However, (\ref{pepe}) above does not hold in general. Our plan for the rest of the proof is to prune
  an~$o(1)$-fraction
  of leaves (measured in terms of their total probability under the yes-case)
  and show (\ref{pepe})~for the rest.
To better understand these probabilities, we need to first introduce some useful notation.

\subsection{Notation for full signature maps}
Given a map $\phi:Q\rightarrow \frakP$ for some $Q\subseteq \{0,1\}^n$,
  we write $\phi(x)=(\sigma_x,\tau_x,a_x,b_x)$ for each $x\in Q$
  and use $\sigma_{x,i},\tau_{x,j}$ to denote the $i$th entry and $j$th entry of $\sigma_x$ and $\tau_x$, respectively.
Note that $\tau_{x,j}$ is not defined if $\tau_x=\sperp$.
(Below we will only be interested in $\tau_{x,j}$ if $\sigma_x=e_i$ for some $i\in [N]$.)

We introduce the following notation for $\phi$. 
We say $\phi$ \emph{induces a tuple} $( I; J; P; R; A;\rho)$, where
\begin{flushleft}\begin{itemize}
\item The set $ I \subseteq [N]$
is given by
$ I = \{ i \in [N] : \exists\hspace{0.04cm} x \in Q\ \text{with}\ \sigma_{x,i}=1  \}.$
(So in terms of the first-level table, $I$ consists of columns that contain at least one $1$-entry.)
\item $J=(J_i\subseteq [N]: i\in I)$ is a tuple of sets indexed by $i \in I$. For each $i \in I$, we have
\[ J_i = \big\{ j \in [N]:\exists\hspace{0.04cm} x \in Q\ \text{with}\
\sigma_x = e_i\ \text{and}\ \tau_{x, j} = 0 \big\}. \]
(In terms of the second-level table at $T_i$, $J_i$ consists of columns
  that contain at least one $0$-entry.)
By the definition of $\frakP$, each $x$ with $\sigma_x=e_i$ can contribute at most two $j$'s to $J_i$.
Also $x$ does not contribute any $j$ to $J_i$ if $\sigma_x=e_{i,i'}$ or $e_{i',i}$,
  in which case $\tau_x=\sperp$, or if $\sigma_x=e_i$ but $\tau_x=1^N$.
So in general $J_i$ can be empty for some $i\in I$.
\item $P=(P_i,P_{i,j}:i\in I,j\in J_i)$ is a tuple of two types of subsets of $Q$. For $i\in I$ and
  $j\in J_i$,
\[ P_i = \big\{ x \in Q:\sigma_{x, i} = 1 \big\}\quad\text{and}\quad
P_{i, j} = \big\{ x \in Q: \sigma_x=e_i\ \text{and}\ \tau_{x, j} = 0\big\}. \]
(In terms of the first-level table, $P_i$ consists of rows that are $1$ on the $i$th column;
in terms of the second-level table at $T_i$, $P_{i,j}$ consists of rows that are $0$
  on the $j$th column.) Note that both $P_i$ and $P_{i,j}$ are not empty by the definition of $I$ and $J_i$.
\item $R=(R_i,R_{i,j}:i\in I,j\in J_i)$ is a tuple of two types of subsets of $Q$. For $i\in I$ and
  $j\in J_i$,
\[ R_i= \big\{x\in Q: \sigma_{x,i}=0\big\}\quad\text{and}\quad
R_{i,j} = \big\{ x\in Q: \sigma_x=e_i\ \text{and}\ \tau_{x,j}=1\big\}.\]
(In terms of the first-level table, $R_i$ consists of rows that are $0$ on the $i$th column;
in terms of the second-level table at $T_i$, $R_{i,j}$ consists of rows that are $1$
  on the $j$th column.)
\item $A=( A_{i,0},A_{i,1},A_{i,j,0},A_{i,j,1}:i\in I,j\in J_i)$ is a tuple
  of subsets of $[n]$. For $i \in I$ and $j \in J_i$,
\begin{align*}
A_{i, 1} = \big\{ k \in [n]: \forall\hspace{0.05cm} x \in P_i,\hspace{0.05cm} x_k = 1 \big\} \quad&\text{and}\quad
A_{i, 0} = \big\{ k \in [n]:  \forall\hspace{0.05cm} x \in P_i,\hspace{0.05cm} x_k = 0 \big\}\\[0.6ex]
A_{i,j, 1} = \big\{ k \in [n]: \forall\hspace{0.05cm} x \in P_{i,j},\hspace{0.05cm} x_k = 1\big\} \quad&\text{and}\quad
A_{i, j, 0} = \big\{ k \in [n]: \forall\hspace{0.05cm} x \in P_{i, j},\hspace{0.05cm} x_k = 0 \big\}.
\end{align*}
Note that all the sets are well-defined since $P_i$ and $P_{i,j}$ are not empty.
\item $\rho=(\rho_{i,j}:i\in I,j\in J_i)$ is a tuple of functions $\rho_{i,j}:P_{i,j}\rightarrow
  \{0,1\}$. For each $x\in P_{i,j}$, we have $\rho_{i,j}(x)=a_x$ if $\tau_x=\oe_j$ or $\tau_x=\oe_{j,j'}$ for some $j'>j$; $\rho_{i,j}(x)=b_x$ if $\tau_x=\oe_{j',j}$ for some $j'<j$.
\end{itemize}\end{flushleft}
Intuitively $I$ is the set of indices of terms with some string $x\in Q$ satisfying the term $T_i$
  as reported in $\sigma_x$, and $P_{i}$ is the set of such strings while $R_i$ is the set of strings which do not satisfy $T_i$.
For each $i\in I$, $J_i$ is the set of indices of clauses with some string $x\in P_i$ satisfying $T_i$ \emph{uniquely}
  and~falsifying the clause $C_{i,j}$. $P_{i,j}$ is the set of such strings, and $R_{i, j}$ is the set of strings which satisfy $T_i$ uniquely but also satisfy $C_{i, j}$.
We collect the following facts which are immediate from the definition.
\begin{fact}
\label{fact:1}
Let $(I;J;P;R;A;\rho)$ be the tuple induced by a map
  $\phi \colon Q \to \Sigma$. Then we have
\begin{itemize}
\item $|I|\le \mathop{\sum}_{i\in I} |P_i| \leq 2|Q|$.\vspace{-0.08cm}
\item For each $i\in I$, $|J_i|\le \mathop{\sum}_{j\in J_i} |P_{i,j}| \leq 2|P_i|$.\vspace{-0.08cm}
\item For each $i\in I$ and $j\in J_i$, $|R_i|$ and $|R_{i,j}|$ are at most $|Q|$ \emph{(}as
  they are subsets of $Q$\emph{)}.\vspace{-0.08cm}
\item For each $i\in I$ and $j\in J_i$, $P_{i,j}\subseteq P_i$,
  $A_{i, 0}\subseteq A_{i,j,0}$, and $A_{i, 1}\subseteq A_{i,j,1}$.
\end{itemize}
\end{fact}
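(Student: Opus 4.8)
The plan is to verify each of the four items by direct counting, using only the syntactic structure of full signatures (the definition of $\frakP$) together with the definitions of the induced sets; no probabilistic argument is needed, which is why the fact is described as immediate.

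For the first item, I would note that every $P_i$ with $i\in I$ is nonempty by the definition of $I$, so bounding each term below by $1$ already gives $|I|\le \sum_{i\in I}|P_i|$. For the upper bound I would switch the order of summation, $\sum_{i\in I}|P_i|=\sum_{x\in Q}\big|\{i\in I:\sigma_{x,i}=1\}\big|$, and then use that by the definition of $\frakP$ the term signature $\sigma_x$ is one of $0^N$, $e_i$, or $e_{i,i'}$, so it has at most two $1$-entries; hence the inner count is at most $2$ and $\sum_{i\in I}|P_i|\le 2|Q|$.

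The second item is the same argument one level down. Each $P_{i,j}$ with $j\in J_i$ is nonempty by the definition of $J_i$, giving $|J_i|\le \sum_{j\in J_i}|P_{i,j}|$. For the upper bound I would observe that only strings $x$ with $\sigma_x=e_i$ can lie in any $P_{i,j}$, so $\sum_{j\in J_i}|P_{i,j}|=\sum_{x:\,\sigma_x=e_i}\big|\{j\in J_i:\tau_{x,j}=0\}\big|$; for such $x$ the clause signature $\tau_x$ is one of $1^N$, $\oe_j$, $\oe_{j,j'}$ and hence has at most two $0$-entries, and since $\{x:\sigma_x=e_i\}\subseteq P_i$ the sum is at most $2|P_i|$.

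The third item is immediate since $R_i$ and $R_{i,j}$ are defined as subsets of $Q$. For the fourth item, I would argue $P_{i,j}\subseteq P_i$ from the fact that $\sigma_x=e_i$ forces $\sigma_{x,i}=1$, and then $A_{i,0}\subseteq A_{i,j,0}$ and $A_{i,1}\subseteq A_{i,j,1}$ follow because each of $A_{i,j,0}$ and $A_{i,j,1}$ is defined by a condition universally quantified over the smaller set $P_{i,j}$, which is implied by the corresponding condition over the larger set $P_i$. I do not expect any genuine obstacle here; the only thing to keep straight is exactly which strings contribute to which sets---for instance that a string $x$ with $\sigma_x=e_{i,i'}$ contributes to $P_i$ but to none of the $P_{i,j}$---which is precisely what the definition of $\frakP$ is set up to record.
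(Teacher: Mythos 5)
Your proof is correct, and it matches the paper's treatment: the paper states this fact without proof as ``immediate from the definition,'' and your double-counting argument (each $\sigma_x$ has at most two $1$-entries, each $\tau_x$ at most two $0$-entries, and the containments follow from the universal quantification over the smaller set $P_{i,j}\subseteq P_i$) is exactly the verification the authors intend.
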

Note that $|I|$ and $\sum_{i\in I} |J_i|$ can be strictly larger than $|Q|$, as some $x$ may satisfy more than one (but at most two) term  with $\sigma_{x} = e_{i, i'}$ and
  some $x$ may falsify more than one clause with $\tau_x=\oe_{j,j'}$.

The sets in $A$ are important for the following reasons that we summarize below.
\begin{fact}
\label{fact:a-terms-clauses}
Let $\phi:Q\rightarrow \frakP$ be the full signature map of $Q$ with respect to $(T,C,H)$. Then
\begin{itemize}
\item For each $i \in I$, $T_i(k) \in A_{i, 1}$ for all $k \in [\sqrt{n}]$ and
  $T_i(x)=0$ for each $x\in R_i$.
\vspace{-0.08cm}
\item For each $i\in I$ and $j \in J_i$, $C_{i,j}(k) \in A_{i, j, 0}$ for all $k \in [\sqrt{n}]$
  and $C_{i,j}(x)=1$ for each $x\in R_{i,j}$.
\end{itemize}
\end{fact}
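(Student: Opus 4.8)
The plan is to prove both bullets purely by unwinding the definitions: the full signature map of $Q$ with respect to $(T,C,H)$ (Definition~\ref{def:signature}) together with the definitions of $P_i$, $P_{i,j}$, $R_i$, $R_{i,j}$, $A_{i,1}$ and $A_{i,j,0}$. The one fact doing all the work is the dictionary between entries of a term/clause signature and truth values: in $\sigma_x$, a $1$ in coordinate $i$ certifies $T_i(x)=1$ and a $0$ in coordinate $i$ certifies $T_i(x)=0$; and once $\sigma_x=e_i$, in $\tau_x$ a $0$ in coordinate $j$ certifies $C_{i,j}(x)=0$ while a $1$ certifies $C_{i,j}(x)=1$. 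The $*$-symbols appearing in the encodings $e_{i,i'}$ and $\oe_{j,j'}$ sit only at coordinates strictly above the larger recorded index, so they never interfere with reading off a $0$ or a $1$.

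For the first bullet, fix $i\in I$. For any $x\in P_i$ we have $\sigma_{x,i}=1$ by definition of $P_i$, so going through the three possible shapes of the term signature $\sigma_x$ (namely $0^N$, some $e_{i'}$, or some two-index encoding $e_{i',i''}$) one sees that a $1$ in coordinate $i$ is only possible when $T_i(x)=1$. Since $T_i(x)=\bigwedge_{k\in[\sqrt{n}]}x_{T_i(k)}$, this forces $x_{T_i(k)}=1$ for every $k\in[\sqrt{n}]$; as $x$ ranges over $P_i$ this says exactly that $T_i(k)\in A_{i,1}$ for all $k\in[\sqrt{n}]$. For the remaining part, any $x\in R_i$ has $\sigma_{x,i}=0$, and the same case analysis shows a $0$ in coordinate $i$ is only possible when $T_i(x)=0$.

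The second bullet is the same argument one level down. Fix $i\in I$ and $j\in J_i$. Any $x\in P_{i,j}$ has $\sigma_x=e_i$, so its clause signature $\tau_x$ is defined and is computed from $(C_{i,j'}:j'\in[N])$, and $\tau_{x,j}=0$ by definition of $P_{i,j}$. Running through the three shapes of $\tau_x$ (namely $1^N$, some $\oe_j$, or some two-index encoding $\oe_{j',j''}$) one sees that a $0$ in coordinate $j$ is only possible when $C_{i,j}(x)=0$; since $C_{i,j}(x)=\bigvee_{k\in[\sqrt{n}]}x_{C_{i,j}(k)}$ this forces $x_{C_{i,j}(k)}=0$ for every $k\in[\sqrt{n}]$, i.e.\ $C_{i,j}(k)\in A_{i,j,0}$. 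Finally, any $x\in R_{i,j}$ has $\sigma_x=e_i$ and $\tau_{x,j}=1$, and a $1$ in coordinate $j$ of $\tau_x$ is only possible when $C_{i,j}(x)=1$.

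I do not expect any real obstacle, since the statement is an immediate syntactic consequence of how signatures are defined; the only point deserving an explicit sentence in the final write-up is the harmlessness of the $*$-entries noted above, so that ``coordinate $i$ of $\sigma_x$ equals $1$ (resp.\ $0$)'' genuinely determines $T_i(x)$, and likewise that ``coordinate $j$ of $\tau_x$ equals $0$ (resp.\ $1$)'' determines $C_{i,j}(x)$ once $\sigma_x=e_i$.
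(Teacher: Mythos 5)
Your proof is correct and is exactly the argument the paper intends: the paper states this as a fact ``immediate from the definition'' and gives no separate proof, and your unwinding of Definition~\ref{def:signature} together with the definitions of $P_i,R_i,P_{i,j},R_{i,j},A_{i,1},A_{i,j,0}$ is the only route. Your attention to the $*$-entries and to the fact that in the encodings $e_{i,i'}$ and $\oe_{j,j'}$ the recorded pair consists of the \emph{smallest two} indices (so that a $0$, resp.\ $1$, at a coordinate below the larger index genuinely certifies the truth value) covers the one point that could otherwise be glossed over.
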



Before moving back to the proof, we introduce the following consistency condition on $P$.

\begin{definition}
\label{def:consistent}
Let $(I;J;P;R;A;\rho)$ be the tuple induced by a map $\phi:Q\rightarrow \frakP$.
We say that~$P_{i,j}$ for some $i\in I$ and $j\in J_i$ is~\mbox{\emph{$1$-consistent}} if
  $\rho_{i,j}(x)=1$ for all $x\in P_{i,j}$,
  and \emph{$0$-consistent} if 
  $\rho_{i,j}(x)=0$ for all $x\in P_{i,j}$; otherwise we say $P_{i,j}$ is \emph{inconsistent}.
\end{definition}

Let $\phi$ be the full signature map of $Q$ with respect to $(T,C,H)$. 
If $P_{i,j}$ is $1$-consistent, the index $k$ of the
  variable $x_k$ in the dictatorship or anti-dictatorship function $h_{i,j}$
  must lie in $A_{i,j,0}$ (when~$h_{i,j}$ is an anti-dictator) or $A_{i,j,1}$
  (when $h_{i,j}$ is a dictator); the situation is similar if $P_{i,j}$ is $0$-consistent~but
  would be more complicated if $P_{i,j}$ is inconsistent.
Below we prune an edge whenever some $P_{i,j}$~in~$P$ becomes inconsistent.
This way we make sure that $P_{i,j}$'s in every leaf left are consistent.

\subsection{Tree pruning}

Consider an edge $(u, v)$ in  $B$. 
Let $\phi_{u} \colon Q \to \frakP$ and $\phi_v \colon Q \cup \{ x\} \to \frakP$
  be the maps associated with $u$ and $v$, with  $x$ being the query made at $u$ and 
  $\phi_v(x)$ being the label of $(u,v)$. Let
$(I; J;P;R;A;\rho)$ \text{and} $(I';J';P';R';A';\rho')$
be the two tuples induced by $\phi_u$ and $\phi_v$, respectively.

We list some easy facts about how $(I;J;P;R;A;\rho)$ is updated to obtain $(I';J';P';R';A';\rho')$.
\begin{fact}\begin{flushleft}
Let $\phi_v(x)=(\sigma_x, \tau_x, a_x, b_x)$ for the string $x$ queried at $u$. Then we have
\begin{itemize}
\item The new string $x$ is placed in $P_i'$ if $\sigma_{x,i}=1$, and is placed
  in $P_{i,j}'$ if $\sigma_x=e_i$ and $\tau_{x,j}=0$. \vspace{-0.09cm}
\item Each new set in $P'$ \emph{(}i.e., $P_i'$ with $i\notin I$ or $P_{i,j}'$ with either $i\notin I$
  or $i\in I$ but $j\notin J_i$\emph{)}, if any, is $\{x\}$ and the corresponding
  $A_{i,1}'$ or $A_{i,j,1}'$ is $\{k:x_k=1\}$ and $A_{i,0}'$ or $A_{i,j,0}'$ is $\{k:x_k=0\}$.\vspace{-0.08cm}
\item Each old set in $P'$ \emph{(}i.e., $P_i'$ with $i\in I$ or $P_{i,j}'$ with $i\in I$ and $j\in J_i$\emph{)}
  either stays the same or has $x$ being added to the set. For the latter case,
  $\{k:x_k=0\}$ is removed from $A_{i,1}$ or $A_{i,j,1}$  and
  $\{k:x_k=1\}$ is removed from $A_{i,0}$ or $A_{i,j,0}$ to obtain the new sets in $A'$.
\end{itemize}\end{flushleft}
\end{fact}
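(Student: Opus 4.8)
The plan is to prove all three bullets by directly unwinding the definition of the tuple $(I;J;P;R;A;\rho)$ induced by a full signature map, using the single structural observation that $\phi_v$ is the one-point extension of $\phi_u$: it agrees with $\phi_u$ on $Q$ and assigns to the newly queried string $x$ the label $\phi_v(x)=(\sigma_x,\tau_x,a_x,b_x)$ of the edge $(u,v)$. Because every component of the induced tuple is a purely set-theoretic function of the signature map, the membership of any old string $y\in Q$ in any of the sets $P_i,P_{i,j},R_i,R_{i,j}$ is the same in the primed and unprimed tuples (its signature has not changed), so all differences are caused by the single new pair $(x,\phi_v(x))$, and tracking these differences is all that is required.

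For the first bullet I would recall that $P_i'=\{y\in Q\cup\{x\}:\sigma_{y,i}=1\}$ and $P_{i,j}'=\{y\in Q\cup\{x\}:\sigma_y=e_i,\ \tau_{y,j}=0\}$. By the observation above the only candidate new element of either set is $x$, and $x$ belongs to $P_i'$ exactly when $\sigma_{x,i}=1$ and to $P_{i,j}'$ exactly when $\sigma_x=e_i$ and $\tau_{x,j}=0$.

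For the second bullet I would pin down when a set of $P'$ is genuinely new. A set $P_i'$ is new precisely when $i\in I'\setminus I$; since $I$ (resp.\ $I'$) consists of the columns of the first-level table that carry a $1$ among the rows indexed by $Q$ (resp.\ $Q\cup\{x\}$), this forces $\sigma_{x,i}=1$ while no row of $Q$ has a $1$ in column $i$, so $P_i'=\{x\}$. A set $P_{i,j}'$ is new precisely when it is present in the primed tuple and either $i\notin I$ or $j\notin J_i$; in the first case no $y\in Q$ has $\sigma_{y,i}=1$ at all, and in the second case no $y\in Q$ has $\sigma_y=e_i$ together with $\tau_{y,j}=0$, so in either case $P_{i,j}'=\{x\}$. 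Substituting $P_i'=\{x\}$ (resp.\ $P_{i,j}'=\{x\}$) into the defining conditions $A_{i,1}'=\{k:\forall y\in P_i',\,y_k=1\}$, $A_{i,0}'=\{k:\forall y\in P_i',\,y_k=0\}$ (resp.\ the analogous conditions for $A_{i,j,1}',A_{i,j,0}'$) yields exactly $A_{i,1}'=\{k:x_k=1\}$ and $A_{i,0}'=\{k:x_k=0\}$, which is the second bullet.

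For the third bullet: for $i\in I$ the first-bullet argument gives $P_i'=P_i\cup\{x\}$ when $\sigma_{x,i}=1$ and $P_i'=P_i$ otherwise; in the former case $A_{i,1}'=\{k:\forall y\in P_i',\,y_k=1\}=A_{i,1}\cap\{k:x_k=1\}$, so that $\{k:x_k=0\}$ is removed from $A_{i,1}$, and symmetrically $\{k:x_k=1\}$ is removed from $A_{i,0}$. The identical computation with $P_{i,j}$ in place of $P_i$, and with the condition $\sigma_x=e_i$ and $\tau_{x,j}=0$ in place of $\sigma_{x,i}=1$, gives the update rule for $A_{i,j,1}'$ and $A_{i,j,0}'$. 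None of this presents a genuine obstacle; the only point demanding care is the bookkeeping that distinguishes a freshly created $(i,j)$-set from an updated one, which is exactly the case split carried out in the second and third bullets above.
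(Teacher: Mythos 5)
Your proof is correct and matches the paper's intent exactly: the paper states this as a Fact with no written proof, treating it as immediate from the definitions, and your direct unwinding — observing that $\phi_v$ is the one-point extension of $\phi_u$, so all changes to the induced tuple are caused by the single new pair $(x,\phi_v(x))$ — is precisely the verification the authors had in mind. The case analysis distinguishing freshly created sets from updated ones, and the identity $A_{i,1}\cap\{k:x_k=1\}=A_{i,1}\setminus\{k:x_k=0\}$, are handled correctly.
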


Now we are ready to define a set of so-called \emph{bad} edges
  of $B$, which will be used to prune $B$. In the rest of the proof we use $\alpha$ to denote a large
  enough positive constant.

\begin{definition}\label{typeAdef}
 An edge $(u,v)$ is called a \emph{bad} edge if at least one of the following events~\mbox{occur}~at $(u,v)$ and none of these events occur along the path from the root to $u$~\emph{(}letting $\phi_u$ and $\phi_v$ be the maps associated with $u$ and $v$, $x$ be the new query string at $u$, 
 $(I;J;P;R;A;\rho)$ and $(I';J';P';R';A';\rho')$~be the tuples that $\phi_u$ and $\phi_v$ induce, respectively\emph{)}:
\begin{flushleft} \begin{itemize}
        \item For some $i\in I$, $\big|A_{i, 1} \setminus A_{i, 1}'\big| \geq \alpha\sqrt{n} \log n$.\vspace{-0.08cm}
        \item For some $i\in I$ and $j\in J_i$, $\big|A_{i, j, 0} \setminus A_{i, j, 0}'\big| \geq \alpha\sqrt{n} \log n$.\vspace{-0.08cm}
        \item For some $i\in I$ and $j\in J_i$, $P_{i,j}$ is $0$-consistent but $P_{i,j}'$ is inconsistent
        \emph{(}meaning that\\ $x$ is added to $P_{i,j}$ with $\rho_{i,j}(y)=0$ for all $y\in P_{i,j}$ but
         $\rho'_{i,j}(x)=1$, instead of $0$\emph{)}.\vspace{-0.08cm}
    \item For some $i\in I$ and $j\in J_i$, $P_{i,j}$ is $1$-consistent but $P_{i,j}'$ is inconsistent \emph{(}meaning that\\
    $x$ is added to $P_{i,j}$ with $\rho_{i,j}(y)=1$ for all $y\in P_{i,j}$
    but $\rho_{i,j}'(x)=0$, instead of $1$\emph{)}.
     \end{itemize} \end{flushleft}
Moreover, a leaf $\ell$ is \emph{bad} 
  if one of the edges along the root-to-$\ell$ path is bad; $\ell$ is \emph{good} otherwise.
\end{definition}

The following pruning lemma states that the probability of $(\TT,\CC,\HH)$ reaching
  a bad leaf of $B$ is $o(1)$, when $(\TT,\CC)\sim \calE$ and $\HH\sim \Ey$.
We~delay the proof to Section \ref{proofcut}.


\begin{lemma}[Pruning Lemma]\label{badleaf}
$\mathop{\Pr}_{(\TT,\CC)\sim \calE,\HH\sim \Ey} \big[\text{$({\TT,\CC,\HH})$
  reaches a bad leaf of $B$\hspace{0.04cm}}\big]=o(1)$.
\end{lemma}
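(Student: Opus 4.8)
\textbf{Proof plan for the Pruning Lemma (Lemma~\ref{badleaf}).}

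The plan is to apply the second tree-pruning lemma (Lemma~\ref{complicatedpruning}) with $E$ being the set of bad edges, $\OO=(\TT,\CC,\HH)$, and $\calD$ being the yes-distribution $\calE\times\Ey$. By that lemma it suffices to show that along every root-to-leaf path the sum $q^*(\ell)=\sum_i q(u_i)$ of conditional ``bad-branching'' probabilities is $o(1)$. Since the tree has depth $q=n^{1/3}/\log^2 n$, it is enough to prove that at every internal node $u$ with $\Pr[u]>0$, the conditional probability $q(u)$ that the query $x$ made at $u$ produces a bad edge is $o(1/q)=o(\log^2 n/n^{1/3})$; in fact I will aim for a bound like $O(1/(\sqrt n\cdot\mathrm{poly}\log n))$ per event, and then union bound over the (at most $O(q)=O(n^{1/3})$ relevant indices $i\in I$ and $j\in J_i$ that could trigger an event, using Fact~\ref{fact:1}) to get $q(u)=o(1/q)$. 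The key point that makes this tractable is that, conditioned on reaching $u$ (i.e.\ conditioned on the full signature map $\phi_u$), the randomness in $(\TT,\CC,\HH)$ that is still ``live'' is almost fresh: given $\phi_u$, the term $\TT_i$ for $i\in I$ is a uniformly random $\sqrt n$-tuple of variables constrained only to lie in $A_{i,1}$ for all $\sqrt n$ coordinates and to avoid the finitely many ``kill'' conditions imposed by strings in $R_i$ (and similarly $\CC_{i,j}$ is uniform subject to lying in $A_{i,j,0}$ and avoiding $R_{i,j}$), and the dictator/anti-dictator index of $\hh_{i,j}$ is uniform subject to lying in the appropriate $A$-set.

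The heart of the argument is then four estimates, one per bad event:

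\emph{Events 1 and 2 (the $A$-set shrinks by $\ge\alpha\sqrt n\log n$).} Fix $i\in I$. The set $A_{i,1}$ shrinks precisely when the new query $x$ has $\sigma_{x,i}=1$ (so $x$ satisfies $T_i$ uniquely or among the first two) and has many $0$-entries inside $A_{i,1}$; the number that leave is $|\{k\in A_{i,1}:x_k=0\}|$. Suppose this quantity is $m\ge\alpha\sqrt n\log n$. For the edge to be taken we need $\sigma_{x,i}=1$, i.e.\ $T_i(x)=1$, which in particular forces all $\sqrt n$ variables $\TT_i(1),\dots,\TT_i(\sqrt n)$ to avoid those $m$ coordinates. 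Conditioned on $\phi_u$, each $\TT_i(k)$ is (almost) uniform over $A_{i,1}$, a set of size at least $n/2-\sqrt n-O(q\sqrt n)=\Omega(n)$ (it started as all of $[n]$ and lost at most $\alpha\sqrt n\log n$ at each of the $\le q$ earlier steps before this node, so $|A_{i,1}|\ge n-O(q\sqrt n\log n)=\Omega(n)$), so the probability each $\TT_i(k)$ avoids the $m$-subset is at most $1-m/|A_{i,1}|+o(\cdot)\le 1-\Omega(m/n)$; raising to the $\sqrt n$-th power gives probability at most $\exp(-\Omega(m\sqrt n/n))=\exp(-\Omega(\alpha\log n))=n^{-\Omega(\alpha)}$, which is $o(1/q^2)$ for $\alpha$ a large enough constant. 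The small correction from conditioning on the $R_i$ constraints (strings known \emph{not} to satisfy $T_i$) only helps: it conditions on a further event of probability $\Omega(1)$, changing the bound by at most a constant factor, as already anticipated in the footnote of Section~\ref{sketch2}. The estimate for Event~2 with $A_{i,j,0}$ and the clause $\CC_{i,j}$ is entirely symmetric (a $0$-entry of $\CC_{i,j}$ inside $A_{i,j,0}$ forces that coordinate out; $C_{i,j}$ falsified means all $\sqrt n$ clause variables avoid the relevant set).

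\emph{Events 3 and 4 (a consistent $P_{i,j}$ becomes inconsistent).} Fix $i\in I,\ j\in J_i$ and suppose $P_{i,j}$ is $1$-consistent at $u$, so $\rho_{i,j}(y)=1$ for all $y\in P_{i,j}$; this pins down $h_{i,j}$ to a short list of candidate variables: if $h_{i,j}$ is a dictator $x_k$ then $k\in A_{i,j,1}$, and if it is an anti-dictator $\overline{x_k}$ then $k\in A_{i,j,0}$. (Strictly, $1$-consistency with one data point forces $k$ into $A_{i,j,1}$ or $A_{i,j,0}$ depending on type; I will verify the bookkeeping.) The edge becomes bad only if the new $x$ is added to $P_{i,j}$ with $\rho'_{i,j}(x)=0$, i.e.\ $h_{i,j}(x)=0$, which further restricts $k$ to the set $\{k\in A_{i,j,1}:x_k=0\}$ (dictator case) or $\{k\in A_{i,j,0}:x_k=1\}$ (anti-dictator case). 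But because Event~2 has \emph{not} occurred anywhere on the path to $u$ (it would have made the edge bad earlier, contradicting the definition of a bad edge), we have $|A_{i,j,1}\setminus A'_{i,j,1}|<\alpha\sqrt n\log n$ at the step adding $x$, hence $x$ has fewer than $\alpha\sqrt n\log n$ many $0$-entries in $A_{i,j,1}$; meanwhile $|A_{i,j,1}|=\Omega(n)$ (same accumulation argument). Conditioned on $\phi_u$, the hidden index $k$ of $\hh_{i,j}$ is (almost) uniform over the candidate set of size $\Omega(n)$, so the probability it lands in the bad subset of size $<\alpha\sqrt n\log n$ is $O(\sqrt n\log n/n)=O(\log n/\sqrt n)=o(1/q)$. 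The $0$-consistent case (Event~3) is symmetric. Summing over the $\le 2q$ relevant pairs $(i,j)$ (and the $\le 2q$ relevant $i$) gives $q(u)=o(1/q)$, and then $q^*(\ell)\le q\cdot o(1/q)=o(1)$, so Lemma~\ref{complicatedpruning} yields the claim.

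\textbf{Main obstacle.} The only genuinely delicate point is making precise the phrase ``conditioned on $\phi_u$, the term $\TT_i$ (resp.\ clause $\CC_{i,j}$, resp.\ hidden index of $\hh_{i,j}$) is almost uniform.'' One must argue that conditioning on the full signature map $\phi_u$ factors as: $(\TT_i:i\in I)$, $(\CC_{i,j})$, and $\HH$ are conditionally independent with $\TT_i$ uniform over $\sqrt n$-tuples from $A_{i,1}$ subject to the $R_i$-avoidance constraints (``for each $x\in R_i$, $T_i(x)=0$''), and similarly for the clauses, and that these $R$-constraints only cost a constant factor because $|R_i|,|R_{i,j}|\le q=o(\sqrt n)$ and each single constraint like $T_i(x)=0$ is satisfied with probability $1-(|x|/n)^{\sqrt n}\ge1-(1/2+o(1))^{\sqrt n}=1-o(1)$ in the middle layers, so their conjunction has probability $1-o(1)$. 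The terms/clauses not in $I$ or $J_i$, and the $h_{i,j}$ for $(i,j)$ not yet ``touched,'' never affect the reached node so they drop out of the conditioning entirely; this is the formal content of the sentence ``it suffices to focus on strings mapped to a pair $(i^*,j^*)$.'' Getting this conditional-distribution lemma stated cleanly (probably as a separate proposition before the pruning proof) and then plugging the four tail bounds into Lemma~\ref{complicatedpruning} is the bulk of the work; the tail bounds themselves are the routine $\exp(-\Omega(\alpha\log n))$ and $O(\log n/\sqrt n)$ computations sketched above.
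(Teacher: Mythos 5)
Your overall architecture matches the paper's: split the bad edges by event type, bound conditional branching probabilities using the fact that, given the signature map at $u$, the live randomness ($\TT_i$, $\CC_{i,j}$, the hidden variable of $\hh_{i,j}$) is uniform over the appropriate $A$-sets subject to the $R$-constraints, and invoke the pruning lemmas of Section~\ref{sec:pruning}. Your treatment of Events 1--2 (the paper makes the ``almost uniform'' step rigorous with a bipartite double-counting argument between term sets $U'$ and $V$, but your sketch is the same estimate) and of Event 3 are essentially the paper's Lemmas \ref{type1}--\ref{type3}.

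There is, however, a genuine gap in Event 4. You assert that since Event 2 has not occurred on the path, $x$ has fewer than $\alpha\sqrt{n}\log n$ zero-entries in $A_{i,j,1}$, and conclude a per-node bound of $O(\sqrt{n}\log n/n)=o(1/q)$. But Event 2 controls the shrinkage of $A_{i,j,0}$, not $A_{i,j,1}$, and no bad event controls $A_{i,j,1}$ at all; Event 1 only controls $A_{i,1}\subseteq A_{i,j,1}$, and the gap $A_{i,j,1}\setminus A_{i,1}$ can have size $\Theta(|P_i|\sqrt{n}\log n)$ with $x$ equal to $0$ on all of it. So the relevant set $\{k\in A_{i,j,1}:x_k=0\}$ can have size up to $\tilde{\Theta}(q\sqrt{n})$, the per-node conditional probability can be as large as $\tilde{\Theta}(q/\sqrt{n})=\tilde{\Theta}(n^{-1/6})$, and this is \emph{not} $o(1/q)=o(n^{-1/3}\log^2 n)$. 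A uniform per-node bound therefore fails for Event 4, which is exactly why the paper's Lemma \ref{typeB} is the one place Lemma \ref{complicatedpruning} is used in an essential (amortized) way: for fixed $(i,j)$ the sets $\Delta_{i,j}^{s}=\{k\in A^{s}_{i,j,1}:x^{s}_k=0\}$ over different steps $s$ are pairwise disjoint, so their total size along the path is at most $(n/2)+\sqrt{n}-|A^{*}_{i,j,1}|=O\big(\sqrt{n}\log n\cdot\min\{|P^{*}_{i,j}|^2,|P^{*}_i|\}\big)$ by Claim \ref{cl:bb} and (\ref{papa2}); summing over $(i,j)$ via $\sum_{i,j}\min\{|P_{i,j}|^2,|P_i|\}=O(q^{3/2})$ gives $q^{*}(\ell)=O(\log n\cdot q^{3/2}/\sqrt{n})=o(1)$ precisely at $q=n^{1/3}/\log^2 n$. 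Since this amortization is where the exponent $1/3$ actually comes from, the missing step is not a technicality but the heart of the lemma.
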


The pruning lemma allow us to focus on the good leaves $\ell$ of $B$ only. In particular we know~that along
  the root-to-$\ell$ path the sets $A_{i, 1}$ and $A_{i, j, 0}$ each
  cannot shrink by more than $\alpha\sqrt{n} \log n$ with a single query
  (otherwise the path contains a bad edge and $\ell$ is a bad leaf which we ignore).
Moreover every set $P_{i,j}$ in $P$ at the end must remain consistent (either $0$-consistent or $1$-consistent).

We use these properties to prove the following lemma in Section \ref{goodleaves2} for good leaves of $B$.



\begin{lemma}[Good Leaves are Nice]\label{goodleaves}
For each good leaf $\ell$ of $B$, we have
\[ \mathop{\Pr}_{(\TT, \CC) \sim \calE, \HH \sim \Ey}\Big[ (\TT, \CC, \HH) \text{ reaches } \ell
\hspace{0.03cm}\Big] \leq (1 + o(1))\cdot \mathop{\Pr}_{(\TT, \CC) \sim \calE, \HH \sim \En} \Big[ (\TT, \CC, \HH) \text{ reaches } \ell\hspace{0.03cm}\Big]. \]
\end{lemma}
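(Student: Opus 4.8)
The plan is to factor the probability of reaching a good leaf $\ell$ into two independent contributions — the probability over $(\TT,\CC)\sim\calE$ that the terms and clauses are consistent with the full signature map $\phi_\ell\colon Q\to\frakP$ at $\ell$, and the probability over $\HH$ that the dictator/anti-dictator variables are consistent with $\phi_\ell$ — and to observe that the first contribution is \emph{identical} in $\Dy$ and $\Dn$ (since $\calE$ is the same in both). Thus it suffices to compare, conditioned on any fixed $(T,C)$ consistent with $\phi_\ell$, the two quantities $\Pr_{\HH\sim\Ey}[\HH\text{ consistent with }\phi_\ell]$ and $\Pr_{\HH\sim\En}[\HH\text{ consistent with }\phi_\ell]$, and show the former is at most $(1+o(1))$ times the latter. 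Here ``$\HH$ consistent with $\phi_\ell$'' means: for every $i\in I$ and $j\in J_i$, and every $x\in P_{i,j}$, the value $h_{i,j}(x)$ matches the value $\rho_{i,j}(x)$ recorded in $\phi_\ell$, and moreover the ``$b$'' entries match as well (i.e.\ for $x$ with $\tau_x=\oe_{j,j'}$, both $h_{i,j}(x)$ and $h_{i,j'}(x)$ are as recorded). Since the $\bh_{i,j}$ are drawn independently over all $(i,j)$, this probability factors as a product over $i\in I,\ j\in J_i$ of per-block probabilities $p_{i,j}^{\text{yes}}$ and $p_{i,j}^{\text{no}}$, and it suffices to bound $p_{i,j}^{\text{yes}}/p_{i,j}^{\text{no}}$ for each block and then multiply the bounds, using $\sum_{i\in I}|J_i|\le 2\sum_{i\in I}|P_i|\le 4|Q|\le 4q$ from Fact \ref{fact:1}.

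The core per-block estimate is where the goodness of $\ell$ is used. Fix $i\in I$, $j\in J_i$; since $\ell$ is good, $P_{i,j}$ is consistent, say $1$-consistent (the $0$-consistent case is symmetric). For $\HH\sim\Ey$, $\bh_{i,j}(x)=x_{\bk}$ with $\bk$ uniform on $[n]$, so $P_{i,j}$ being $1$-consistent (together with the $b$-constraints) forces $\bk$ into a certain set; the cleanest description is that $\bk$ must lie in $A_{i,j,1}$ — the common $1$-coordinates of all $x\in P_{i,j}$ — possibly further intersected with constraints coming from strings $x\in R_{i,j}$ (which have $C_{i,j}(x)=1$ and hence whose recorded clause signature must remain consistent) and from the $b$-entries; call the resulting feasible set $K_{i,j}^{\text{yes}}$, so $p_{i,j}^{\text{yes}}=|K_{i,j}^{\text{yes}}|/n$. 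Symmetrically, for $\HH\sim\En$, $\bh_{i,j}(x)=\overline{x_{\bk}}$ forces $\bk$ into a set $K_{i,j}^{\text{no}}$ built from $A_{i,j,0}$ (common $0$-coordinates) intersected with the analogous constraints, and $p_{i,j}^{\text{no}}=|K_{i,j}^{\text{no}}|/n$. The key point is that goodness controls how much $A_{i,j,0}$ can have shrunk along the path: every query removed at most $\alpha\sqrt n\log n$ coordinates from it (otherwise the edge would be a bad edge), and $A_{i,j,0}$ started (when column $j$ of the second-level table at $T_i$ first acquired a $0$-entry) as $\{k:x_k=0\}$ for that first string, of size roughly $n/2$; so after at most $q$ queries $|A_{i,j,0}|\ge n/2 - q\cdot\alpha\sqrt n\log n = n/2 - o(n)$, and similarly $|A_{i,j,1}|\ge n/2-o(n)$. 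One must also subtract the $O(\sqrt n)$ coordinates excluded by $R_{i,j}$-strings and the $b$-constraints — each such string excludes at most $\sqrt n$ coordinates (the variables of $C_{i,j}$, which all lie in $A_{i,j,0}$ by Fact \ref{fact:a-terms-clauses}), and there are at most $q$ of them, costing another $o(n)$. Hence $|K_{i,j}^{\text{yes}}|,\ |K_{i,j}^{\text{no}}|=n/2-o(n)$, giving $p_{i,j}^{\text{yes}}/p_{i,j}^{\text{no}} \le (n/2-o(n))/(n/2-o(n)) \le 1+o(1/q)$ with the $o(1/q)$ uniform over blocks (here we use $q=n^{1/3}/\log^2 n$, so $q\cdot\alpha\sqrt n\log n = o(n/q)$ — indeed $q^2\sqrt n\log n = n^{2/3+1/2}/\log^3 n = o(n)$, even $o(n/\log n)$). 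Taking the product over the at most $4q$ blocks, $\prod (1+o(1/q)) \le 1+o(1)$, which is the claimed inequality.

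The main obstacle I anticipate is getting the per-block ratio bound \emph{exactly right} — in particular, pinning down precisely which coordinates are feasible for $\bk$ in each of the yes- and no-cases. The subtlety is that consistency of $\HH$ with $\phi_\ell$ is not literally ``$\bk\in A_{i,j,1}$'': the recorded values of the $b$-entries (second coordinate of the clause signature, when a string falsifies two clauses) and the requirement that $R_{i,j}$-strings still satisfy $C_{i,j}$ impose additional linear-in-$\sqrt n$ restrictions, and one must check these restrictions are the \emph{same} set-theoretic type of constraint in both cases so that the numerator and denominator shrink by comparable amounts. It is conceivable that a naive accounting makes $K^{\text{yes}}$ and $K^{\text{no}}$ differ by more than a $(1+o(1))$ factor if one is not careful about a block where $P_{i,j}$ is a singleton — but even there $|A_{i,j,1}|+|A_{i,j,0}|=n$ exactly, and the $R_{i,j}$ and $b$-corrections are negligible, so the ratio is still $1+o(1)$. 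The secondary obstacle is bookkeeping: making sure the factorization $\Pr[\text{reach }\ell] = \Pr_{(\TT,\CC)}[\cdots]\cdot\Pr_{\HH}[\cdots\mid\cdots]$ is justified, i.e.\ that ``$(\TT,\CC,\HH)$ reaches $\ell$'' depends on $(\TT,\CC)$ and $\HH$ only through the event that $(\TT,\CC)$ induces a signature map extending/compatible with $\phi_\ell$ on $Q$ and $\HH$ agrees with $\phi_\ell$ on the relevant $h_{i,j}$-values — this is where Lemma \ref{simple} and the definition of the signature tree do the work, and it should be routine but needs to be stated cleanly.
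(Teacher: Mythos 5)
Your overall structure matches the paper's: factor out the (identical) contribution of $(T,C)\sim\calE$, reduce to comparing $\Pr_{\HH\sim\Ey}$ and $\Pr_{\HH\sim\En}$ of matching the recorded $\rho$-values, and express the ratio as a product over blocks $(i,j)$ of $|A_{i,j,1-\rho_{i,j}}|/|A_{i,j,\rho_{i,j}}|$. (One simplification you could make: the $R_{i,j}$-strings and the $b$-entries impose \emph{no} extra constraints on $h_{i,j}$ — the former constrain only the fixed clause $C_{i,j}$, and the latter constrain the \emph{other} block $h_{i,j'}$ — so the feasible sets are exactly $A_{i,j,\rho_{i,j}}$ and $A_{i,j,1-\rho_{i,j}}$ with no corrections.)

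However, there is a genuine quantitative gap at the crux of the argument, and it stems from an arithmetic error. You claim a uniform per-block ratio of $1+o(1/q)$ on the grounds that $q\cdot\alpha\sqrt n\log n=o(n/q)$, ``verified'' by $q^2\sqrt n\log n=n^{2/3+1/2}/\log^3 n=o(n)$. But $n^{2/3+1/2}=n^{7/6}$, so $q^2\sqrt n\log n=n^{7/6}/\log^3 n=\omega(n)$, not $o(n)$. Concretely, a block with $|P_{i,j}|=\Theta(q)$ can have $|A_{i,j,0}|$ shrunk by $\Theta(q\sqrt n\log n)=\Theta(n^{5/6}/\log n)$ while $|A_{i,j,1}|\le n/2+\sqrt n$, giving a per-block ratio of $1+\Theta(n^{-1/6}/\log n)=1+\omega(1/q)$; and a naive product over up to $4q$ blocks, each bounded by the worst case, yields a total multiplicative error of $\exp(-\Theta(n^{1/6}/\log^3 n))$, nowhere near $1-o(1)$. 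This is exactly the regime the paper has to work to avoid, and it is where the exponent $1/3$ comes from. The fix is an \emph{amortized} accounting: (i) $|A_{i,j,0}|\ge n/2-O(|P_{i,j}|\sqrt n\log n)$ since that set only shrinks when a string joins $P_{i,j}$; (ii) a pairwise-intersection argument (the paper's Claim~\ref{cl:bb}) showing $|A_{i,j,1}|\ge n/2-O(|P_{i,j}|^2\sqrt n\log n)$, combined with $A_{i,1}\subseteq A_{i,j,1}$ to get the refined bound $\bigl|\,|A_{i,j,1}|-|A_{i,j,0}|\,\bigr|\le O(\sqrt n\log n\cdot\min\{|P_{i,j}|^2,|P_i|\})$; and (iii) the optimization $\sum_{j\in J_i}\min\{|P_{i,j}|^2,|P_i|\}\le 2|P_i|^{3/2}$ under $\sum_j|P_{i,j}|\le 2|P_i|$, so the total error is $O(q^{3/2}\log n/\sqrt n)=O(1/\log^2 n)=o(1)$. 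Without steps (ii) and (iii) your argument does not close.
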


We can now combine Lemma \ref{badleaf} and Lemma \ref{goodleaves} to prove Lemma \ref{maintechnical2}.

\begin{proof}[Proof of Lemma \ref{maintechnical2}]
Let $L$ be the leaves labeled ``accept,'' and $L^* \subset L$ be the good leaves labeled ``accept.''   Below we ignore
  $(\TT,\CC)\sim \calE$ in the subscript since it appears in every probability.
\begin{align*}
\mathop{\Pr}_{ \HH \sim \Ey}\Big[ \text{$B$ accepts\ }(\TT, \CC, \HH)
\Big]&=\sum_{\ell\in L}\hspace{0.08cm}\mathop{\Pr}_{ \HH \sim \Ey}\Big[ (\TT, \CC, \HH) \text{ reaches }\ell\hspace{0.03cm}\Big]\\[0.5ex]
&\le\sum_{\ell\in L^*}\mathop{\Pr}_{ \HH \sim \Ey}\Big[ (\TT, \CC, \HH)
\text{ reaches }\ell\hspace{0.03cm}\Big]
+o(1)\\[-0.3ex]
&\le (1+o(1))\cdot \sum_{\ell\in L^*}\mathop{\Pr}_{ \HH \sim \En}\Big[ (\TT, \CC, \HH)
\text{ reaches }\ell\hspace{0.03cm}\Big]
+o(1)\\[0.6ex]
&\le \mathop{\Pr}_{ \HH \sim \En}\Big[ \text{$B$ accepts $(\TT, \CC, \HH)$}\Big]
+o(1),
\end{align*}
where the second line used Lemma \ref{badleaf} and the third line used Lemma \ref{goodleaves}.
\end{proof}

\subsection{Proof of Lemma \ref{goodleaves} for good leaves}\label{goodleaves2}

We prove Lemma \ref{goodleaves} in this section.
Let $\ell$ be a good leaf associated with $\phi_\ell$ and $(I;J;P;R;A;\rho)$ be the tuple that $\phi_\ell$ induces.
Note that along the root-to-$\ell$ path, when a set $A_{i,0},A_{i,1},A_{i,j,0},A_{i,j,1}$ is created for the
  first time in $A$, its size is between $(n/2)\pm \sqrt{n}$ (since all queries made by $B$ lie in the middle
  layers).
As a result, it follows from Definition \ref{typeAdef} that for $i\in I$ and $j\in J_i$:
\begin{itemize}
\item[i)] $|A_{i, 1}|  \geq (n/2) - O(|P_i|\cdot \sqrt{n} \log n)$ and $|A_{i, j, 0}| \geq (n/2) -
  O(|P_{i, j}|\cdot \sqrt{n} \log n)$;\vspace{-0.05cm}

\item[ii)] $|A_{i, 0}|,\hspace{0.04cm}|A_{i,1}|,\hspace{0.04cm}|A_{i,j,0}|,\hspace{0.04cm}|A_{i, j, 1}| \leq (n/2) + \sqrt{n}$;\vspace{-0.05cm}
\item[iii)] $P_{i,j}$ is consistent (either $1$-consistent or $0$-consistent).
\end{itemize}

We start with the following claim:
\begin{claim}
\label{cl:bb}
For each $i \in I$ and $j \in J_i$, $|A_{i, j, 1}| \geq (n/2) - O\big(|P_{i, j}|^2\cdot \sqrt{n} \log n\big)$.
\end{claim}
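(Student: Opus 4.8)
The plan is to reduce the claim to a statement about pairwise Hamming distances inside $P_{i,j}$, and then to prove that statement using the good-leaf bound on $|A_{i,j,0}|$. Write $m = |P_{i,j}|$, and recall $A_{i,j,1} = \{k\in[n]: x_k=1 \text{ for all } x\in P_{i,j}\}$, so that its complement is $\bigcup_{x\in P_{i,j}}\{k:x_k=0\}$. Fixing any $x^1\in P_{i,j}$ and using the inclusion $\{k:x_k=0\}\subseteq \{k:x^1_k=0\}\cup\{k:x_k\ne x^1_k\}$, I would bound
\[
\Big|[n]\setminus A_{i,j,1}\Big| \;\le\; \big(n-|x^1|\big) + \sum_{x\in P_{i,j}} |x\oplus x^1| \;\le\; \Big(\tfrac n2+\sqrt n\Big) + \sum_{x\in P_{i,j}}|x\oplus x^1|,
\]
where the second inequality uses that $x^1$ lies in the middle layers. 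Hence it suffices to show $|x\oplus y| = O(m\sqrt n\log n)$ for all $x,y\in P_{i,j}$, since then the sum is $O(m^2\sqrt n\log n)$ and $|A_{i,j,1}| = n - |[n]\setminus A_{i,j,1}| \ge (n/2) - O(m^2\sqrt n\log n)$, as required.

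For the pairwise-distance bound I would use the good-leaf bound $|A_{i,j,0}|\ge (n/2) - O(m\sqrt n\log n)$ from property i) above, together with the facts that every $x\in P_{i,j}$ vanishes on $A_{i,j,0}$ and has $|x|\ge (n/2)-\sqrt n$. Set $B = [n]\setminus A_{i,j,0}$, so $|B|\le (n/2)+O(m\sqrt n\log n)$, and for each $x\in P_{i,j}$ all of its $1$-entries lie in $B$; write $S_x = \{k:x_k=1\}\subseteq B$ with $|S_x|=|x|\ge (n/2)-\sqrt n$. Since $x$ and $y$ both vanish outside $B$, $x\oplus y$ is supported in $B$, so $|x\oplus y| = |S_x\triangle S_y| \le 2|B| - |S_x| - |S_y| \le 2\big((n/2)+O(m\sqrt n\log n)\big) - 2\big((n/2)-\sqrt n\big) = O(m\sqrt n\log n)$, completing the argument.

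The only slightly non-obvious point — and the reason the bound carries a quadratic $|P_{i,j}|^2$ rather than a linear factor — is that there is no pruning condition directly controlling how much $A_{i,j,1}$ can shrink with a single query: Definition \ref{typeAdef} bounds only the shrinkage of $A_{i,1}$ (the common $1$-set of the larger set $P_i$), and $A_{i,1}\subseteq A_{i,j,1}$ can be much smaller, so a naive ``lose $\alpha\sqrt n\log n$ per query'' accounting would only give $(n/2) - O(|P_i|\sqrt n\log n)$, which is too weak here. Routing through the common $0$-set $A_{i,j,0}$ and pairwise Hamming distances is what fixes this; aside from that observation, every step is a short counting computation, so I do not anticipate a real obstacle.
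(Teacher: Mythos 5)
Your argument is correct and is essentially the paper's own proof: both deduce from the good-leaf bound $|A_{i,j,0}|\ge (n/2)-O(|P_{i,j}|\sqrt n\log n)$ together with the middle-layer weight constraints that any two strings of $P_{i,j}$ disagree on only $O(|P_{i,j}|\sqrt n\log n)$ coordinates, and then sum over the at most $|P_{i,j}|$ strings to control $A_{i,j,1}$ (the paper counts the one-sided sets $\{k:x_k=1,\,y_k=0\}$ where you count full symmetric differences, a cosmetic difference). Your closing remark about why the dependence is quadratic rather than linear is also accurate, since Definition~\ref{typeAdef} constrains the shrinkage of $A_{i,1}$ and $A_{i,j,0}$ but not of $A_{i,j,1}$.
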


\begin{proof}
For any two strings $x, y \in P_{i, j}$, we have
\[ \big|\{ k \in [n] : x_k = y_k = 0\}\big| \geq |A_{i, j, 0}| \geq ({n}/{2}) - O\big(|P_{i, j}|
  \cdot \sqrt{n} \log n\big).\]
As a result, it follows from $|\{k:y_k=0\}|\le (n/2)+\sqrt{n}$ 
  and $P_{i,j}$ being nonempty that  $$\big|\{k\in [n]:x_k=1,y_k=0\}\big|\le O\big(|P_{i, j}|
  \cdot \sqrt{n} \log n\big).$$ 
Finally we have 
\begin{equation}\label{papa1}
|A_{i,j,1}|\ge \big|\{k:x_k=1\}\big|-\sum_{y\in P_{i,j}\setminus \{x\}}
  \big|\{k:x_k=1,y_k=0\}\big|\ge (n/2)-O\big(|P_{i,j}|^2\cdot \sqrt{n}\log n\big).
\end{equation}
This finishes the proof of the lemma.
\end{proof}

Additionally, notice that $A_{i, 1} \subseteq A_{i, j, 1}$; thus from i) we have 
\begin{equation}\label{papa2}
|A_{i, j, 1}|\ge |A_{i, 1}|
  \ge (n/2)-O\big(|P_i|\cdot \sqrt{n}\log n\big).
\end{equation}
The following claim is an immediate consequence of this fact and Claim~\ref{cl:bb}.

\begin{claim}
\label{cl:diff}
For each $i \in I$ and $j \in J_i$, we have
\[ \big|\hspace{0.03cm}|A_{i, j, 1}| - |A_{i, j, 0}|\hspace{0.03cm} \big| \leq O\left(\sqrt{n} \log n \cdot \min\big\{ |P_{i, j}|^2, |P_i|\big\}\right) \]
\end{claim}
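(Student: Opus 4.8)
The plan is to observe that both $|A_{i,j,0}|$ and $|A_{i,j,1}|$ are trapped in a single narrow band around $n/2$, so their difference is bounded by the width of that band. By property ii) listed at the start of the section, both $|A_{i,j,0}|$ and $|A_{i,j,1}|$ are at most $(n/2)+\sqrt{n}$. For the matching lower bounds: property i) gives $|A_{i,j,0}| \ge (n/2) - O(|P_{i,j}|\cdot\sqrt{n}\log n)$, and since $P_{i,j}\subseteq P_i$ and $P_{i,j}\ne\emptyset$ (Fact \ref{fact:1}) we have $|P_{i,j}|\le |P_i|$ and $|P_{i,j}|\le |P_{i,j}|^2$, hence
\[ |A_{i,j,0}|\ \ge\ (n/2)-O\big(\sqrt{n}\log n\cdot\min\{|P_{i,j}|^2,|P_i|\}\big). \]
For $|A_{i,j,1}|$, Claim \ref{cl:bb} gives $|A_{i,j,1}|\ge (n/2)-O(|P_{i,j}|^2\cdot\sqrt{n}\log n)$, while (\ref{papa2}) gives $|A_{i,j,1}|\ge (n/2)-O(|P_i|\cdot\sqrt{n}\log n)$; taking whichever bound is stronger yields
\[ |A_{i,j,1}|\ \ge\ (n/2)-O\big(\sqrt{n}\log n\cdot\min\{|P_{i,j}|^2,|P_i|\}\big). \]

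Setting $m = \min\{|P_{i,j}|^2,|P_i|\}\ge 1$, both $|A_{i,j,0}|$ and $|A_{i,j,1}|$ therefore lie in the interval $[(n/2)-O(\sqrt{n}\log n\cdot m),\,(n/2)+\sqrt{n}]$, whose length is $O(\sqrt{n}\log n\cdot m)$ (the additive $\sqrt{n}$ being absorbed since $m\ge 1$). Consequently $\big|\,|A_{i,j,1}|-|A_{i,j,0}|\,\big|\le O(\sqrt{n}\log n\cdot\min\{|P_{i,j}|^2,|P_i|\})$, which is exactly the statement of the claim.

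There is essentially no real obstacle here — the claim is flagged in the text as an immediate consequence of Claim \ref{cl:bb} and (\ref{papa2}), and the argument above is just the bookkeeping. The only mild subtlety is recognizing that the direction $|A_{i,j,1}|-|A_{i,j,0}|$ needs only property i) on $|A_{i,j,0}|$ (which carries a $|P_{i,j}|$, not $|P_{i,j}|^2$, dependence) together with property ii) on $|A_{i,j,1}|$, and one then rewrites $|P_{i,j}|\le\min\{|P_{i,j}|^2,|P_i|\}$ to match the asserted form; the reverse direction $|A_{i,j,0}|-|A_{i,j,1}|$ is where the two lower bounds on $|A_{i,j,1}|$ from Claim \ref{cl:bb} and (\ref{papa2}) are actually used.
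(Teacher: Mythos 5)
Your proof is correct and follows essentially the same route as the paper: both arguments combine the upper bound ii) with the lower bounds from i), Claim~\ref{cl:bb}, and (\ref{papa2}), the only cosmetic difference being that you phrase it as ``both quantities lie in a common narrow interval'' while the paper bounds the two signed differences separately. Your observation that the $|A_{i,j,1}|-|A_{i,j,0}|$ direction needs only the $|P_{i,j}|$-dependent bound on $|A_{i,j,0}|$, upgraded via $|P_{i,j}|\le\min\{|P_{i,j}|^2,|P_i|\}$ (valid since $P_{i,j}\neq\emptyset$), matches the paper's closing remark exactly.
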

\begin{proof}
We have from i) and ii) that
$$
|A_{i,j,1}|-|A_{i,j,0}|\le (n/2)+\sqrt{n}-\big((n/2)-O\big(|P_{i,j}|\cdot \sqrt{n}\log n\big)\big)
=O\big(|P_{i,j}|\cdot \sqrt{n}\log n\big).
$$
On the other hand, from ii), (\ref{papa1}) and (\ref{papa2}), we have
$$|A_{i,j,0}|-|A_{i,j,1}|\le O\big(\sqrt{n}\log n\cdot \min\big\{|P_{i,j}|^2,|P_i|\big\}\big).
$$
Note that $|P_{i,j}|\le |P_i|$. The lemma then follows.
\end{proof}

We are now ready to prove Lemma \ref{goodleaves}.

\begin{proof}[Proof of Lemma~\ref{goodleaves}]
Let $\ell$ be a good leaf and 
  let $\phi:Q\rightarrow \frakP$ be the map associated with $\ell$.

Let $|\calE|$ denote the support size of $\calE$.
We may rewrite the two probabilities as follows: 
\begin{align*}
\mathop{\Pr}_{(\TT, \CC) \sim \calE, \HH \sim \Ey}\Big[ (\TT, \CC, \HH) \text{ reaches } \ell
\hspace{0.03cm}\Big] &= \dfrac{1}{|\calE|}\hspace{0.05cm} \sum_{(T, C) } \hspace{0.05cm}\mathop{\Pr}_{\HH \sim \Ey}\Big[(T, C, \HH) \text{ reaches } \ell\hspace{0.03cm}\Big] \\
\mathop{\Pr}_{(\TT, \CC) \sim \calE, \HH \sim \En} \Big[ (\TT, \CC, \HH) \text{ reaches } \ell
\hspace{0.03cm}\Big] &= \dfrac{1}{|\calE|}\hspace{0.05cm} \sum_{(T, C) } \hspace{0.05cm}\mathop{\Pr}_{\HH \sim \En}\Big[(T, C, \HH) \text{ reaches } \ell\hspace{0.03cm}\Big],
\end{align*}
where the sum is over the support of $\calE$.
Hence, it suffices to show that for each $(T, C)$ such that
\begin{equation}\label{ueue}
\mathop{\Pr}_{\HH \sim \Ey}\Big[(T, C, \HH) \text{ reaches } \ell\hspace{0.03cm}\Big]>0,
\end{equation}
we have the following inequality:
\begin{equation}\label{prpr} \dfrac{\mathop{\Pr}_{\HH \sim \En}\left[\hspace{0.01cm}(T, C, \HH) \text{ reaches } \ell\hspace{0.05cm} \right]}{\mathop{\Pr}_{\HH \sim \Ey}\left[\hspace{0.01cm}(T, C, \HH) \text{ reaches } \ell \hspace{0.05cm}\right]} \ge 1 - o(1). \end{equation}

Fix a pair $(T,C)$ such that (\ref{ueue}) holds.
Recall that $(T,C,H)$ reaches $\ell$ if and only if the signature of 
  each $x\in Q$ with respect to $(T,C,H)$ matches exactly $\phi(x)=(\sigma_x,\tau_x,a_x,b_x)$.
Given (\ref{ueue}), the~term and clause signatures of $x$ are already known to match $\sigma_x$ and $\tau_x$ (otherwise the LHS of (\ref{ueue}) is $0$).
The rest, i.e., $a_x$ and $b_x$ for each $x\in Q$, depends on $H=(h_{i,j})$ only. 

Since $\ell$ is consistent, there is a $\rho_{i,j}\in \{0,1\}$ for each $P_{i,j}$ such that 
  every $x \in P_{i, j}$ should satisfy $h_{i, j}(x) = \rho_{i, j}$. 
These are indeed the only conditions for~$H$ to match $a_x$ and $b_x$ for each $x\in Q$, and 
as a result, below we give the conditions on $H=(h_{i,j})$ for the triple $(T,C,H)$ to reach $\ell$:\vspace{0.06cm}
\begin{flushleft}\begin{itemize}
\item For $\Ey$, $(T, C, H)$ reaches $\ell$, where $H=(h_{i,j})$ and $h_{i,j}(x)=x_{k_{i,j}}$, if and only if $k_{i,j}\in A_{i, j, \rho_{i, j}}$ for each $i\in I$ and $j\in J_i$ (so that each $x \in P_{i, j}$ has $h_{i,j}(x) = \rho_{i, j}$).
\item For $\En$, $(T, C,H)$ reaches $\ell$, where $H=(h_{i,j})$ and $h_{i,j}(x)=\overline{x_{k_{i,j}}}$,
  if and only if $k_{i,j}\in A_{i, j, 1-\rho_{i, j}}$ for each $i\in I$ and $j\in J_i$ (so that
  each $x\in P_{i,j}$ has $h_{i,j}(x)=\rho_{i,j}$).\vspace{0.06cm}
\end{itemize}\end{flushleft}

With this characterization, we can rewrite the LHS of (\ref{prpr}) as follows:
\[ \dfrac{\mathop{\Pr}_{\HH \sim \En}\left[\hspace{0.01cm}(T, C, \HH) \text{ reaches } \ell \hspace{0.05cm}\right]}{\mathop{\Pr}_{\HH \sim \Ey}\left[\hspace{0.01cm}(T, C, \HH) \text{ reaches } \ell \hspace{0.05cm}\right]} = \prod_{i \in I, j \in J_i} \left( \dfrac{|A_{i, j, 1-\rho_{i, j}}|}{|A_{i, j,  \rho_{i, j}}|} \right) = \prod_{i \in I, j \in J_i} \left( 1 + \dfrac{|A_{i, j, 1-\rho_{i, j}}| - |A_{i, j, \rho_{i, j}}|}{|A_{i, j, \rho_{i, j}}|} \right). \]
Thus, applying Claim~\ref{cl:diff} and noting that $|A_{i, j, \rho_{i,j}}|\le n$
  (whether $\rho_{i,j}=0$ or $1$),
\begin{align*}
\dfrac{\mathop{\Pr}_{\HH \sim \En}\left[\hspace{0.01cm}(T, C, \HH) \text{ reaches } \ell \hspace{0.05cm}\right]}{\mathop{\Pr}_{\HH \sim \Ey}\left[\hspace{0.01cm}(T, C, \HH) \text{ reaches } \ell \hspace{0.05cm}\right]} &\ge \prod_{i \in I, j \in J_i} \left( 1-O\left( \frac{\log n \cdot \min\{|P_{i, j}|^2, |P_i|\}}{\sqrt{n}}\right) \right)  \\[0.5ex]
											&\ge 1- O\left(\frac{\log n}{\sqrt{n}}\right) \mathop{\sum}_{i \in I, j \in J_i} \min\big\{ |P_{i, j}|^2, |P_i|\big\} .
\end{align*}
As $\sum_{j} |P_{i,j}|\le 2|P_i|$,
  $\sum_{j\in J_i} \min\big\{|P_{i,j}|^2,|P_i|\big\}$ is maximized if $|J_i|=2\sqrt{|P_{i}|}$ 
  and $|P_{i,j}|=\sqrt{|P_i|}$. So
$$
\sum_{i\in I,j\in J_i} \min\big\{|P_{i,j}|^2,|P_i|\big\}
\le \sum_{i\in I} 2|P_i|^{3/2}\le O(q^{3/2}),
$$  
since $\sum_i |P_i|\le 2q$. This finishes the proof of the lemma since $q$ is chosen to be
  $n^{1/3}/\log^2 n$.
\end{proof}

\subsection{Proof of the pruning lemma}\label{proofcut}

Let $E$ be the set of bad edges as defined in Definition \ref{typeAdef}
  (recall that if $(u,v)$ is a bad edge, then the root-to-$u$ path cannot have any bad edge).
We split the proof of Lemma \ref{badleaf} into four lemmas, one lemma for each
  type of bad edges.
To this end, we define four sets $E_1,E_2,E_3$ and $E_4$ (we follow the same notation
  of Definition \ref{typeAdef}): An edge $(u,v)\in E$ belongs to \begin{enumerate}
\item $E_1$ if $|A_{i,1}\setminus A_{i,1}' |\ge 
  \alpha\sqrt{n}\log n$ for some $i\in I$;
\item $E_2$ if $ |A_{i,j,0}\setminus A_{i,j,0}' |\ge
  \alpha\sqrt{n}\log n$ for some $i\in I$ and $j\in J_i$;
\item $\smash{E_3}$ if it is not in $E_2$ and for some $i\in I$ and $j\in J_i$,
  $P_{i,j}$ is $0$-consistent but $\smash{P_{i,j}'}$ is\\ inconsistent (when $(u,v)\in E_3$ and the  above occurs, we say $(u,v)$ is $E_3$-bad at $(i,j)$);
\item $E_4$ if it is not in $E_1$ or $E_2$ and  for some $i\in I$ and $j\in J_i$,
  $P_{i,j}$ is $1$-consistent but $P_{i,j}'$ is\\ inconsistent (when $(u,v)\in E_4$ and the
  above occurs, we 
  say $(u,v)$ is $E_4$-bad at $(i,j)$).
\end{enumerate}  
It is clear that $E=E_1\cup E_2\cup E_3\cup E_4$.
\hspace{-0.04cm}(These four sets are not necessarily pairwise disjoint though we 
  did exclude edges of $E_2$ from $E_3$ and edges of $E_1$ and $E_2$ from $E_4$ explicitly.)
Each lemma below states that the probability of 
  $(\TT,\CC)\sim \calE$ and $\HH\sim \Ey$ taking an edge in $E_i$ is $o(1)$. 
Lemma \ref{badleaf} then follows directly from a union bound over the four sets.
  

\begin{lemma}\label{type1}
The probability of $(\TT,\CC)\sim \calE$ and $\HH\sim \Ey$ taking an edge in $E_1$ is $o(1)$.
\end{lemma}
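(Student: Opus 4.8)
The plan is to apply the first pruning lemma (Lemma~\ref{simplepruning}) to the edge set $E_1$. For this I need to bound, for every internal node $u$ with $\Pr[u]>0$, the conditional probability $q(u)$ that $(\TT,\CC,\HH)$ follows an $E_1$-edge at $u$. Fix such a node $u$, let $x$ be the query made at $u$, and let $(I;J;P;R;A;\rho)$ be the tuple induced by $\phi_u$. An $E_1$-bad event at $u$ requires that for some $i\in I$, the set $A_{i,1}\setminus A_{i,1}'$ has size at least $\alpha\sqrt n\log n$; since $A_{i,1}$ shrinks only when $x$ is added to $P_i$ (i.e.\ when $\sigma_{x,i}=1$, so $x$ satisfies $T_i$), this means $x$ satisfies $T_i$ but has at least $\alpha\sqrt n\log n$ many $0$-entries inside $A_{i,1}$. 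I will union-bound over $i\in I$, and note $|I|\le 2q$ by Fact~\ref{fact:1}, so it suffices to show that for each fixed $i\in I$ this bad event has conditional probability $o(1/q^2)$ (using $q=n^{1/3}/\log^2 n$, which leaves plenty of room).

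The key step is to understand the conditional distribution of the random term $\TT_i$ given that $(\TT,\CC,\HH)$ reaches $u$. Conditioning on reaching $u$ constrains $\TT_i$ in two ways coming from Fact~\ref{fact:a-terms-clauses}: every variable $\TT_i(k)$ must lie in $A_{i,1}$ (so that all strings in $P_i$ satisfy $T_i$), and $\TT_i$ must be falsified by every string in $R_i$ (so those strings do not satisfy $T_i$). The conditioning from the second level ($\CC$ and $\HH$) is independent of $\TT_i$ given this, so it plays no role. Thus, conditioned on reaching $u$, $\TT_i$ is distributed as $\sqrt n$ i.i.d.\ uniform draws from $A_{i,1}$, further conditioned on avoiding the ``bad'' sets of $R_i$; the crucial observation is that each string $y\in R_i$ rules out only the event that all $\sqrt n$ variables of $\TT_i$ land in $\{k: y_k=1\}$, an event of probability at most $((n/2+\sqrt n)/|A_{i,1}|)^{\sqrt n}$ which, using $|A_{i,1}|\ge (n/2)-O(q\sqrt n\log n)=n/2-o(n)$ from item~i) in Section~\ref{goodleaves2}, is exponentially small in $\sqrt n$. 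Since $|R_i|\le q$, a union bound shows this conditioning changes probabilities by only a $1+o(1)$ factor, so effectively $\TT_i$ is $\sqrt n$ uniform i.i.d.\ draws from $A_{i,1}$.

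Given this, the $E_1$-bad event at $i$ says that none of the $\sqrt n$ variables of $\TT_i$ lands in the set $S=\{k\in A_{i,1}: x_k=0\}$ of size at least $\alpha\sqrt n\log n$ (otherwise $x$ would not satisfy $T_i$ — but wait, the event is precisely that $x$ \emph{does} satisfy $T_i$ while having these $0$-entries, so I instead bound the probability that $\TT_i$ avoids $S$ entirely). The probability that a single uniform draw avoids $S$ is at most $1-|S|/|A_{i,1}|\le 1-\alpha\sqrt n\log n/n = 1-\alpha\log n/\sqrt n$, so the probability all $\sqrt n$ draws avoid $S$ is at most $(1-\alpha\log n/\sqrt n)^{\sqrt n}\le n^{-\alpha}$. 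Taking $\alpha$ a large enough constant makes this $o(1/q^3)$, say; multiplying by the $1+o(1)$ conditioning factor and union-bounding over $i\in I$ (at most $2q$ values) gives $q(u)=o(1/q^2)=o(1/q)$. Lemma~\ref{simplepruning} then yields the claim. The main obstacle is the careful bookkeeping of the conditional distribution of $\TT_i$ given reaching $u$ — in particular verifying that the $R_i$-conditioning and the second-level conditioning are harmless — but this is exactly the ``almost uniform'' intuition flagged in the overview, and the footnote there already anticipates that the $R_i$ influence is minor.
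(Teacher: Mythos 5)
Your overall plan (apply Lemma~\ref{simplepruning}, union-bound over $i\in I$, reduce to the event that the random term $\TT_i$ avoids the set $\Delta_i=\{k\in A_{i,1}:x_k=0\}$ of size $\ge\alpha\sqrt n\log n$) matches the paper's. But there is a genuine gap in the step where you dismiss the conditioning coming from $R_i$: the claim that each constraint ``$\TT_i(z)=0$ for $z\in R_i$'' removes only an exponentially small fraction of the terms contained in $A_{i,1}$, so that the conditional law of $\TT_i$ is uniform over $A_{i,1}^{\sqrt n}$ up to a $1+o(1)$ factor, is false. The probability that a uniform term with all variables in $A_{i,1}$ satisfies $z$ is $\bigl(|A_{i,1}\cap\{k:z_k=1\}|/|A_{i,1}|\bigr)^{\sqrt n}$, and your bound $\bigl((n/2+\sqrt n)/|A_{i,1}|\bigr)^{\sqrt n}$ is not exponentially small --- since $|A_{i,1}|=n/2-o(n)$, that base exceeds $1$. (You appear to have computed the probability for a term uniform over $[n]$, where the base would be about $1/2$.) More to the point, the middle-layer constraint on $z$ gives no lower bound on $|\{k\in A_{i,1}:z_k=0\}|$: a query $z$ equal to $1$ on all of $A_{i,1}$ except a single coordinate $k_0$ still has Hamming weight about $n/2$, and if the oracle reports $\sigma_{z,i}=0$, then conditioned on reaching $u$ the term $\TT_i$ must contain $k_0$ --- an event of probability only $\Theta(1/\sqrt n)$ under the uniform measure on $A_{i,1}^{\sqrt n}$. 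With up to $q$ such strings in $R_i$, the denominator $|U|$ can be smaller than $|A_{i,1}|^{\sqrt n}$ by a factor like $n^{-\Omega(q)}$, which swamps your $n^{-\alpha}$ bound on the numerator once you discard the $R_i$ constraint there.

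This is exactly the difficulty the paper's argument is built to handle: rather than comparing $|V|$ to $|A_{i,1}|^{\sqrt n}$, it compares $V$ to a subset $U'\subseteq U$ of terms having exactly $\log n$ variables in $\Delta_i$, via a bipartite graph in which a term $T\in V$ is connected to the terms $T'\in U'$ obtained by moving $\log n$ of its variables (chosen outside a hitting set $H$ of size $\le q$ for $R_i$) into $\Delta_i$. Because the coordinates witnessing $T(z)=0$ for each $z\in R_i$ are preserved, every such $T'$ still lies in $U$, and double counting gives $|U'|/|V|\ge\binom{\sqrt n-q}{\log n}\bigl(|\Delta_i|/|A_{i,1}\setminus\Delta_i|\bigr)^{\log n}=\omega(q^2)$. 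To repair your proof you would need to replace the ``$R_i$ is harmless'' claim with an argument of this relative-counting type; the rest of your outline then goes through.
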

\begin{proof}
Let $u$ be an internal node.
We prove that, when $(\TT,\CC)\sim \calE$
  and $\HH\sim \Ey$, either $(\TT,\CC,\HH)$ reaches node $u$ with probability $0$ or
\begin{equation}\label{goal1}
\mathop{\Pr}_{(\TT,\CC)\sim\calE,\HH\sim\Ey}\Big[\hspace{0.02cm}\text{$(\TT,\CC,\HH)$ takes an $E_1$-edge at $u$}\hspace{0.08cm}\Big|\hspace{0.08cm}
  \text{$(\TT,\CC,\HH)$ reaches $u$}\hspace{0.04cm}\Big]=o\hspace{0.02cm}(1/q).
\end{equation} 
Lemma~\ref{type1} follows  from Lemma \ref{simplepruning}. 
Below we assume that the probability of $(\TT,\CC,\HH)$ reaching node $u$ is positive. 
Let $\phi:Q\rightarrow \frakP$ be the map associated with $u$, and 
  let $x\in \{0,1\}^n$ be the string queried~at~$u$. 
\hspace{-0.04cm}Whenever we discuss a child node $v$ of $u$ below, we use
  $\phi'\hspace{-0.02cm}:\hspace{-0.02cm}Q\cup\{x\}\rightarrow\frakP$ to denote~the map associated with $v$ and
  $(I;J;P;R;A;\rho)$ and~$(I';J';P';R';A';\rho')$ to denote the tuples $\phi$ and $\phi'$ induce.
(Note that $v$ is not a specific node but can be any child of $u$.)

Fix an $i\in I$. We upperbound by $\smash{o(1/q^2)}$ the conditional probability of $(\TT,\CC, \HH)$  
  following~an edge $(u,v)$ with $\smash{|A_{i,1}\setminus A_{i,1}'|\ge \alpha\sqrt{n}\log n}$.
(\ref{goal1})  follows directly  from a union bound over $i\in I$.

With $i$ fixed, observe that any edge $(u,v)$ has either $A_{i,1}'=A_{i,1}$ or
  $A_{i,1}'=A_{i,1}\setminus \Delta_i$ with
$$
\Delta_i = \big\{\ell\in A_{i,1}:x_\ell=0\big\}\subseteq A_{i,1}.
$$
The latter occurs if and only if $P_i'=P_i\cup \{x\}$.
Therefore, we assume WLOG that $|\Delta_i|\ge \alpha\sqrt{n}\log n$ (otherwise the 
  conditional probability is $0$ for $i$), and now it suffices to upperbound by $o(1/q^2)$ 
  the conditional probability of $(\TT,\CC,\HH)$ taking an edge $(u,v)$
  with $P_i'=P_i\cup \{x\}$.

To analyze this conditional probability for $i\in I$,
  we fix a triple $(T_{-i},C,H)$, where~we use $T_{-i}$ to denote a sequence of $\smash{N}-1$
  terms with only the $i$th term missing, 
  such that 
  \[ \mathop{\Pr}_{\TT_i}\big[\hspace{0.01cm}\text{$((T_{-i},\TT_i),C,H)$ reaches $u$}\hspace{0.03cm}\big]>0,\]
where $\TT_i$ is a term drawn uniformly at random.
It suffices to prove for any such $(T_{-i},C,H)$:
\begin{align}
\label{prprpr} \mathop{\Pr}_{\TT_i}&\big[\hspace{0.01cm}\text{$((T_{-i},\TT_i),C,H)$ reaches $u$ and $P_i'=P_i\cup \{x\}$\hspace{0.03cm}}\big]
\\[-0.7ex]&\hspace{4cm}\le o\hspace{0.02cm}(1/q^2)\cdot \mathop{\Pr}_{\TT_i}\big[\hspace{0.01cm}\text{$((T_{-i},\TT_i),C,H)$ reaches $u$}\hspace{0.03cm}\big].\nonumber
\end{align}
Recalling Fact~\ref{fact:a-terms-clauses}, the latter event, $((T_{-i}, \TT_i), C, H)$ reaching $u$, imposes two conditions on $\TT_{i }$:
\begin{enumerate} 
\item For each $y \in P_i$, $\TT_{i }(y) = 1$, and\vspace{-0.18cm}
\item For each $z \in R_{i}$, $\TT_{i }(z)=0$.
\end{enumerate}
Let $U$ denote the set of all such terms $T:\sqrt{n}\rightarrow [n]$.
Then equivalently $T\in U$ if and only if 
\begin{equation*}\text{$U$: 
$T(k)\in A_{i,1}$ for all $k\in [\sqrt{n}]$ and  each $z\in R_i$ has $z_{T(k)}=0$ for some $k\in [\sqrt{n}]$.}\end{equation*}
Regarding the former event in (\ref{prprpr}), i.e. $((T_{-i}, \TT_i), C, H)$ reaching $u$ and $P_i' = P_i \cup \{ x\}$, a necessary condition over $\TT_i$ is 
  the same as above but in addition we require $\TT_i(x)=1$.
(Note that this is not a sufficient condition since for that we also need 
  $\TT_i$ to be one of the first two terms that $x$ satisfies, which depends on $T_{-i}$.)
Let $V$ denote the set of all such terms. Then $T\in V$ if  
$$
\text{$V$: 
$T(k)\in A_{i,1}\setminus \Delta_i$ for all $k\in [\sqrt{n}]$ and  each $z\in R_i$ has $z_{T(k)}=0$ for some $k\in [\sqrt{n}]$.}
$$

In the rest of the proof we prove that $|V|\hspace{0.03cm}/\hspace{0.03cm}|U|=o(1/q^2)$, from which (\ref{prprpr}) follows.
Let $\ell=\log n$. First we write $U'$~to denote the following subset of $U$: $T'\in U$ is in $U'$ if
$$
\big|\{ k \in [\sqrt{n}] : T'(k) \in \Delta_i \}\big| = \ell,
$$ 
and it suffices to show $|V|/|U'| = o(1/q^2)$. 
Next we define the following bipartite graph $G$ between $U'$ and $V$ (inspired by
  similar arguments of \cite{BB15}): 
  $T'\in U'$ and $T\in V$ have~an edge if and only~if 
  $T'(k)=T(k)$ for all $k\in [\sqrt{n}]$ with $T'(k)\notin \Delta_i$.
Each $T'\in U'$ has degree at most $|A_{i,1}\setminus \Delta_i|^\ell$,
  as one can only move each $T'(k)\in \Delta_i$ to $A_{i,1}\setminus \Delta_i$.

To lowerbound the degree of a $T\in V$, note that one only needs at most
  $q$ many variables of $T$ to kill all strings in $R_i$.
Let $H\subset [\sqrt{n}]$ be any set of size at most $q$ such that 
  for each string $z\in R_i$, there exists a $k \in H$ with $z_{T(k)} = 0$.\hspace{0.04cm}\footnote{For example, since $|R_i| \leq q$, 
  one can set $H$ to contain the smallest $k \in [\sqrt{n}]$ such that $z_{T(k)} = 0$, for each $z \in R_i$.}
Then one can choose any $\ell$ distinct indices $k_1, \dots, k_{\ell}$ from~$\overline{H}$, as well as any $\ell$
  (not necessarily distinct) variables $t_1, \dots, t_{\ell}$ from $\Delta_i$, and let $T'$ be 
a term where 
\[ T'(k) = \left\{ \begin{array}{ll} 
						t_i & k = k_i\ \text{for some $i \in [\ell]$} \\[0.5ex]
						T(k) & \text{otherwise.}
						\end{array} \right. 
\]
The resulting $T'$ is in $U'$ and $(T,T')$ is an edge in $G$.
As a result, the degree of $T\in V$ is at least
$$
{\sqrt{n}-q\choose \ell}\cdot |\Delta_i|^{\ell}.
$$
By counting the number of edges in $G$ in two different ways and
  using $|A_{i,1}|\le (n/2)+\sqrt{n}$, $$
\frac{|U'|}{|V|}\ge {\sqrt{n}-q\choose \ell}\cdot \left(\frac{|\Delta_i|}{|A_{i,1}\setminus \Delta_i|}
\right)^{\ell}
\ge \left( \frac{\sqrt{n}}{2\ell} \cdot \frac{\alpha\sqrt{n}\hspace{0.04cm}\ell}{(n/2)+\sqrt{n}}\right)^{\ell}
>\omega(q^2),
$$
by choosing a large enough constant $\alpha>0$. This finishes the proof of the lemma.
\end{proof}

\begin{lemma}\label{type2}
The probability of $(\TT,\CC)\sim\calE$ and $\HH\sim \Ey$ taking an edge in $E_2$ is $o(1)$.
\end{lemma}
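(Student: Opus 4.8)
The plan is to mirror the proof of Lemma~\ref{type1} almost verbatim, with the role of the terms $T_i$ played by the clauses $C_{i,j}$, the sets $A_{i,1},R_i,P_i$ replaced by $A_{i,j,0},R_{i,j},P_{i,j}$, and the bit value $1$ (a string satisfying a term by having all its $\sqrt{n}$ variables equal to $1$) replaced by $0$ (a string falsifying a clause by having all its variables equal to $0$). As in Lemma~\ref{type1}, by Lemma~\ref{simplepruning} it suffices to prove that for every internal node $u$ of $B$ with positive probability under $(\TT,\CC)\sim\calE,\ \HH\sim\Ey$,
\[
\mathop{\Pr}_{(\TT,\CC)\sim\calE,\ \HH\sim\Ey}\Big[\,(\TT,\CC,\HH)\text{ takes an }E_2\text{-edge at }u\ \Big|\ (\TT,\CC,\HH)\text{ reaches }u\,\Big]=o(1/q).
\]
Let $x$ be the string queried at $u$ and let $(I;J;P;R;A;\rho)$ be the tuple induced by $\phi_u$. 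Since $\sum_{i\in I}|J_i|\le 2\sum_{i\in I}|P_i|\le 4q=O(q)$ by Fact~\ref{fact:1}, it is enough to fix one pair $(i,j)$ with $i\in I$, $j\in J_i$, bound by $o(1/q^2)$ the conditional probability that the edge $(u,v)$ taken satisfies $|A_{i,j,0}\setminus A'_{i,j,0}|\ge\alpha\sqrt{n}\log n$, and then union bound over the $O(q)$ such pairs.

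Fix such an $(i,j)$. Along any edge $(u,v)$ one has either $A'_{i,j,0}=A_{i,j,0}$, or $A'_{i,j,0}=A_{i,j,0}\setminus\Delta_{i,j}$ with $\Delta_{i,j}=\{\ell\in A_{i,j,0}:x_\ell=1\}$, the latter occurring exactly when $x$ is added to $P_{i,j}$ (i.e.\ $\sigma_x=e_i$ and $\tau_{x,j}=0$); so I may assume $|\Delta_{i,j}|\ge\alpha\sqrt{n}\log n$ and it remains to bound the conditional probability that the taken edge has $P'_{i,j}=P_{i,j}\cup\{x\}$. To this end I would fix a triple $(T,C_{-(i,j)},H)$, where $C_{-(i,j)}$ denotes all clauses except $C_{i,j}$, with $\Pr_{\CC_{i,j}}[\,(T,(C_{-(i,j)},\CC_{i,j}),H)\text{ reaches }u\,]>0$ for $\CC_{i,j}$ a uniformly random clause, and prove the bound for each such triple. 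Exactly as in Lemma~\ref{type1}, using Fact~\ref{fact:a-terms-clauses}, reaching $u$ with this triple is, viewed as a condition on $\CC_{i,j}$, equivalent to $\CC_{i,j}\in U$, where
\[
U:\quad C(k)\in A_{i,j,0}\ \text{for all }k\in[\sqrt{n}],\quad\text{and every }z\in R_{i,j}\text{ has }z_{C(k)}=1\text{ for some }k\in[\sqrt{n}]
\]
(the first condition says $C_{i,j}(y)=0$ for every $y\in P_{i,j}$, the second says $C_{i,j}(z)=1$ for every $z\in R_{i,j}$); and reaching $u$ with $P'_{i,j}=P_{i,j}\cup\{x\}$ additionally forces $C_{i,j}(x)=0$, hence is contained in $\{\CC_{i,j}\in V\}$, where $V$ is defined as $U$ but with $A_{i,j,0}$ replaced by $A_{i,j,0}\setminus\Delta_{i,j}$. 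Therefore the conditional probability in question is at most $|V|/|U|$, and it suffices to show $|V|/|U|=o(1/q^2)$.

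For this I would run the bipartite-graph argument of Lemma~\ref{type1} verbatim. Set $\ell=\log n$ and let $U'\subseteq U$ consist of those $C'\in U$ with exactly $\ell$ of their $\sqrt{n}$ variables lying in $\Delta_{i,j}$; since $|V|/|U|\le|V|/|U'|$ it suffices to prove $|U'|/|V|=\omega(q^2)$. Form the bipartite graph $G$ on $U'\cup V$ with $C'\sim C$ iff $C'(k)=C(k)$ for every $k$ with $C'(k)\notin\Delta_{i,j}$. Each $C'\in U'$ has degree at most $|A_{i,j,0}\setminus\Delta_{i,j}|^{\ell}$, since one can only move each of its $\ell$ variables in $\Delta_{i,j}$ to $A_{i,j,0}\setminus\Delta_{i,j}$. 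For a lower bound on the degree of $C\in V$: since $|R_{i,j}|\le|Q_u|\le q$, fix a set $\mathcal{H}\subseteq[\sqrt{n}]$ of size at most $q$ such that for each $z\in R_{i,j}$ there is $k\in\mathcal{H}$ with $z_{C(k)}=1$; then for any $\ell$ distinct indices of $[\sqrt{n}]\setminus\mathcal{H}$ and any assignment of them to (not necessarily distinct) variables in $\Delta_{i,j}$, replacing those coordinates of $C$ accordingly yields a clause of $U'$ adjacent to $C$, so $\deg_G(C)\ge\binom{\sqrt{n}-q}{\ell}|\Delta_{i,j}|^{\ell}$. Double counting the edges of $G$ and using $|A_{i,j,0}|\le(n/2)+\sqrt{n}$ and $|\Delta_{i,j}|\ge\alpha\sqrt{n}\,\ell$,
\[
\frac{|U'|}{|V|}\ \ge\ \binom{\sqrt{n}-q}{\ell}\left(\frac{|\Delta_{i,j}|}{|A_{i,j,0}\setminus\Delta_{i,j}|}\right)^{\ell}\ \ge\ \left(\frac{\sqrt{n}}{2\ell}\cdot\frac{\alpha\sqrt{n}\,\ell}{(n/2)+\sqrt{n}}\right)^{\ell}\ =\ \omega(q^2)
\]
for $\alpha$ a large enough constant (recall $q=n^{1/3}/\log^2 n$, and the right-hand side is $n^{\Theta(\log\alpha)}$). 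A union bound over the $O(q)$ pairs $(i,j)$, followed by Lemma~\ref{simplepruning}, then completes the proof.

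The only step that is not entirely mechanical is the equivalence ``$(\TT,\CC,\HH)$ reaches $u$'' $\Longleftrightarrow$ ``$\CC_{i,j}\in U$'' under the conditioning on $(T,C_{-(i,j)},H)$: one has to check that the partial information carried by the clause signatures $\tau_y$ — in particular the entries $\tau_{y,j}=*$ coming from strings that falsify at least two clauses of $T_i$, and the values $a_y,b_y$ which are already fixed once $\HH$ is fixed — neither imposes a constraint on $\CC_{i,j}$ beyond membership in $U$ nor drops any such constraint. This is exactly the clause-side mirror of the term-side verification implicit in the proof of Lemma~\ref{type1} (with the roles of $0$ and $1$ exchanged), so it needs no new idea; it is simply the place where care is required.
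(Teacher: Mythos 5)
Your proposal is correct and takes essentially the same approach as the paper: the paper's own proof of this lemma simply fixes a pair $(i,j)$, reduces to bounding the conditional probability that $P_{i,j}'=P_{i,j}\cup\{x\}$ when $|\Delta_{i,j}|\ge\alpha\sqrt{n}\log n$, and then declares the rest ``symmetric to that of Lemma~\ref{type1}'' — which is precisely the clause-side mirror (with $A_{i,j,0}$, $R_{i,j}$, and falsification in place of $A_{i,1}$, $R_i$, and satisfaction) that you carry out in full, including the bipartite double-counting between $U'$ and $V$. The only cosmetic difference is bookkeeping: the paper aims for $o(1/q^3)$ per pair while you aim for $o(1/q^2)$ per pair together with the $O(q)$ bound on the number of pairs from Fact~\ref{fact:1}; both suffice for Lemma~\ref{simplepruning}.
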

\begin{proof}
The proof of this lemma is similar to that of Lemma \ref{type1}.
Let $u$ be any internal node of the tree. We prove that, when $(\TT,\CC)\sim \calE,\hspace{0.03cm}\HH\sim \Ey$,
  either $(\TT,\CC,\HH)$ reaches $u$ with probability $0$ or 
\begin{equation}\label{goal2}
\mathop{\Pr}_{(\TT,\CC)\sim\calE,\HH\sim\Ey}\Big[\hspace{0.02cm}\text{$(\TT,\CC,\HH)$ takes an $E_2$-edge at $u$}\hspace{0.08cm}\Big|\hspace{0.08cm}
  \text{$(\TT,\CC,\HH)$ reaches $u$}\hspace{0.04cm}\Big]=o\hspace{0.02cm}(1/q).
\end{equation}
Assume below WLOG that the probability of $(\TT,\CC,\HH)$ reaching $u$ is positive.

Fix $i\in I$ and $j\in J_i$. \hspace{-0.03cm}We upperbound  the conditional probability of $(\TT,\CC,\HH)$ taking an edge $(u,v)$
  with $|A_{i,j,0}\setminus A_{i,j,0}'|\ge \alpha\sqrt{n}\log n$ by $o(1/q^3)$. (\ref{goal2})
 follows by a union bound.
Similarly let 
\begin{equation}\label{juju}
\Delta_{i,j}=\big\{\ell\in A_{i,j,0}:x_\ell=1\big\}\subseteq A_{i,j,0},
\end{equation}
and assume WLOG that $|\Delta_{i,j}|\ge \alpha\sqrt{n}\log n$ (as otherwise the conditional
  probability is~$0$ for $i,j$).
Then it suffices to upperbound  the conditional probability of $(\TT,\CC,\HH)$
  going along an~edge $(u,v)$ with $P_{i,j}'=P_{i,j}\cup \{x\}$ by $o(1/q^3)$.
The rest of the proof is symmetric to that of Lemma \ref{type1}.
\end{proof}


\begin{lemma}\label{type3}
The probability of $(\TT,\CC)\sim\calE$ and $\HH\sim \Ey$ taking an edge in $E_3$ is $o(1)$.
\end{lemma}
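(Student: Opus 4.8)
The plan is to follow the same template as the proofs of Lemmas~\ref{type1} and~\ref{type2}, but to exploit that an $E_3$-bad event involves only the randomness of the dictator functions $\bh_{i,j}$ (since $\HH\sim\Ey$), so no counting argument over terms or clauses is needed. Concretely, I would apply Lemma~\ref{simplepruning} with $E=E_3$ and show that for every internal node $u$ of $B$ with $\Pr[u]>0$, the conditional probability $q(u)$ that $(\TT,\CC,\HH)$ follows an edge of $E_3$ at $u$, given that it reaches $u$, is $o(1/q)$. Since $E_3\subseteq E$, an edge leaving $u$ can lie in $E_3$ only if the root-to-$u$ path contains no bad edge; if that path does contain a bad edge then $q(u)=0$. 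So it is enough to treat nodes $u$ whose root-to-$u$ path has no bad edge, and then Lemma~\ref{simplepruning} finishes the proof.

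For such a $u$ I would first record two structural facts. First, because no $E_2$-type bad event occurs along the root-to-$u$ path, each set $A_{i,j,0}$ shrinks by less than $\alpha\sqrt{n}\log n$ each time a string is added to $P_{i,j}$; since it has size at least $(n/2)-\sqrt{n}$ when first created (all queries lie in the middle layers) and at most $q$ strings are ever added, $|A_{i,j,0}|\ge (n/2)-\sqrt{n}-q\alpha\sqrt{n}\log n=\Omega(n)$ for every $i\in I$ and $j\in J_i$, using $q=n^{1/3}/\log^2 n$. Second, letting $x$ be the string queried at $u$, an edge $(u,v)$ can be $E_3$-bad at $(i,j)$ only if $\sigma_x=e_i$ and $\tau_{x,j}=0$ (so that $x$ is added to $P_{i,j}$), and moreover $j\in J_i$ with $P_{i,j}$ already $0$-consistent at $u$ (if $j\notin J_i$ the newly created $P_{i,j}'=\{x\}$ is trivially consistent). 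Since $\sigma_x=e_i$ for at most one $i$ and $\tau_x$ has at most two coordinates equal to $0$, at most two pairs $(i,j)$ are relevant, so a union bound reduces the task to bounding, for each of them, the conditional probability that $(u,v)$ is $E_3$-bad at $(i,j)$.

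The heart of the argument is the claim that, conditioned on reaching $u$ and on the pair $(\TT,\CC)=(T,C)$, the dictator index $k_{i,j}$ of $\bh_{i,j}$ is uniformly distributed over $A_{i,j,0}$. This is because the $\bh_{i',j'}$ are mutually independent and the term and clause signatures do not depend on $\HH$, so the only constraint that reaching $u$ places on $\bh_{i,j}$ is $\bh_{i,j}(y)=\rho_{i,j}(y)$ for all $y\in P_{i,j}$; with $P_{i,j}$ $0$-consistent this says $y_{k_{i,j}}=0$ for all $y\in P_{i,j}$, i.e.\ $k_{i,j}\in A_{i,j,0}$, and conditioning a uniform-in-$[n]$ index on this event gives the uniform distribution over $A_{i,j,0}$. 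An $E_3$-bad event at $(i,j)$ additionally demands $\bh_{i,j}(x)=x_{k_{i,j}}=1$, i.e.\ $k_{i,j}\in\Delta_{i,j}:=\{\ell\in A_{i,j,0}:x_\ell=1\}$ (the same set as in~(\ref{juju})), and that the edge not be in $E_2$, which forces $|\Delta_{i,j}|=|A_{i,j,0}\setminus A_{i,j,0}'|<\alpha\sqrt{n}\log n$. Since $\Delta_{i,j}$ and $A_{i,j,0}$ depend only on $u$, the conditional probability of this event is at most $|\Delta_{i,j}|/|A_{i,j,0}|<\alpha\sqrt{n}\log n/\Omega(n)=O(\log n/\sqrt{n})$. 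Summing over the at most two relevant pairs gives $q(u)=O(\log n/\sqrt{n})=o(1/q)$, since $n^{1/3}/(\sqrt{n}\log n)\to 0$, and Lemma~\ref{simplepruning} completes the proof.

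As for the main obstacle: I do not expect a genuinely hard step here. Unlike Lemma~\ref{type1}, this bound avoids the bipartite-graph counting over terms, because the offending randomness is just the dictator index $k_{i,j}$, which is uniform over the large set $A_{i,j,0}$. The only points requiring care are (i) justifying the conditional uniformity of $k_{i,j}$ from the independence of the $\bh_{i,j}$'s and the observation that reaching $u$ constrains $\bh_{i,j}$ only through $P_{i,j}$, and (ii) the bookkeeping that a single query touches at most two sets $P_{i,j}$, so the $o(1/q)$ bound survives the union bound and Lemma~\ref{simplepruning} applies node-by-node.
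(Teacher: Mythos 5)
Your proposal is correct and follows essentially the same route as the paper: condition on $(T,C)$, note that an $E_3$-bad edge at $u$ can occur at only the at most two pairs $(i,j)$ with $\sigma_x=e_i$ and $\tau_{x,j}=0$, observe that reaching $u$ constrains the dictator index of $\bh_{i,j}$ only to lie in $A_{i,j,0}$ (so it is uniform there, which the paper phrases by fixing $H_{-(i,j)}$ and comparing the two probabilities directly), and bound the bad event by $|\Delta_{i,j}|/|A_{i,j,0}|=O(\log n/\sqrt{n})=o(1/q)$ using the exclusion of $E_2$-edges for both the numerator and the $\Omega(n)$ denominator. The node-by-node application of Lemma~\ref{simplepruning} and the final arithmetic match the paper's proof.
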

\begin{proof}
We fix any pair $(T,C)$ from the support of $\calE$ and prove that
\begin{equation}\label{goal31}
\mathop{\Pr}_{\HH\sim \Ey} \big[\hspace{0.01cm}\text{$(T,C,\HH)$ takes an $E_3$-edge}
\hspace{0.04cm}\big]=o\hspace{0.02cm}(1).
\end{equation}
The lemma follows by averaging (\ref{goal31}) over all pairs $(T,C)$ in the support of $\calE$.
To prove (\ref{goal31}) we~fix any internal node $u$ such that
  the probability of $\smash{(T,C,\HH)}$ reaching $u$ is positive, and prove that
\begin{equation}\label{goal32}
\mathop{\Pr}_{\HH\sim \Ey}\Big[\hspace{0.02cm}\text{$(T,C,\HH)$ takes an $E_3$-edge leaving $u$}\hspace{0.08cm}\Big|\hspace{0.08cm}
  \text{$(T,C,\HH)$ reaches $u$}\hspace{0.04cm}\Big]=o\hspace{0.02cm}(1/q).
\end{equation}
(\ref{goal31}) follows by Lemma \ref{simplepruning}.
Below we assume  the probability of $(T,C,\HH)$ reaching $u$ is positive.

We assume WLOG  that there is no edge in $E$ along the root-to-$u$ path; 
   otherwise, \hspace{-0.03cm}(\ref{goal32}) is $0$.~We  follow the same notation used in the proof of Lemma \ref{type1}, i.e., $\phi_u:Q\rightarrow \frakP$
  as the~map associated with $u$, $x$ as the query made at $u$, 
  and $(I;J;P;R;A;\rho)$ 
    as the tuple induced by $\phi_u$.
We also write~$F$ to denote the set of pairs $(i,j)$ such that $i\in I$ and $j\in J$.


Observe that since $(T,C)$ is fixed, the term and clause signatures of every string are fixed, and in particular 
  the term and clause signatures (denoted $\sigma_x$ and $\tau_x$) of $x$ are fixed.
We assume WLOG that $\sigma_x=e_{k}$ for some $k\in [N]$
  (otherwise $x$ will never be added to any $P_{i,j}$ when $(T,C,\HH)$ leaves~$u$ 
  and (\ref{goal32}) is $0$ by the definition of $E_3$).
In this case we write $D$ to denote the set of $\{(k,j):\tau_{x,j}=0\}$ 
  with $|D |\le 2$.
As a result, whenever $(T,C,\HH)$ takes an $E_3$-edge leaving from $u$,
  this edge must be $E_3$-bad at one of the pairs $(k,j)\in D$.
Thus, the LHS of (\ref{goal32}) is the same as
\begin{equation}\label{ofofp2}
\sum_{(k,j)\in D} \mathop{\Pr}_{\HH\sim \Ey}\Big[\hspace{0.02cm}\text{$(T,C,\HH)$ takes
  a $(u,v)$ that is $E_3$-bad at $(k,j)$}\hspace{0.08cm}\Big|\hspace{0.08cm}
  \text{$(T,C,\HH)$ reaches $u$}\hspace{0.04cm}\Big].
\end{equation}  
  
To bound the conditional probability for $(k,j)$ above by $o(1/q)$, we  
   assume WLOG that $(k,j)\in$ $F$ (otherwise $x$ would create a new $P_{k,j}$ whenever
   $(T,C,\HH)$ takes an edge
  $(u,v)$ leaving $u$,~and~the latter cannot be $E_3$-bad at $(k,j)$).  
Next we define ($A_{k,j,0}$ below is well defined since $(k,j)\in F$)
$$
\Delta_{k,j}=\big\{\ell\in A_{k,j,0}: x_\ell=1\big\}.
$$
We may assume WLOG that $|\Delta_{k,j}|<\alpha\sqrt{n}\log n$; otherwise $(T,C,\HH)$
  can never take an edge $(u,v)$ in $E_3$ because $E_2$-edges are
  explicitly excluded from $E_3$.
Finally, we assume WLOG~$\rho_{k,j}(y)=0$ for all $y\in P_{k,j}$;
  otherwise the edge $(u,v)$ that $(T,C,\HH)$ takes can never be  $E_3$-bad at $(k,j)$.  

With all these assumptions on $(k,j)$ in place, we prove the following inequality:
\begin{align}\label{goal33}
 \mathop{\Pr}_{\HH\sim\Ey}&\Big[\hspace{0.02cm}\text{$(T,C,\HH)$ takes a $(u,v)$ that is $E_3$-bad at $(k,j)$}\hspace{0.04cm}\Big]\\[-0.8ex] &\hspace{4cm}\le \frac{|\Delta_{k,j}|}{|A_{k,j,0}|}\cdot \mathop{\Pr}_{\HH\sim\Ey}\Big[\hspace{0.02cm}
  \text{$(T,C,\HH)$ reaches $u$}\hspace{0.04cm}\Big]. \nonumber
\end{align} 
Given $|\Delta_{k,j}|= O(\sqrt{n} \log n)$ and $|A_{i,j,0}|\ge (n/2)-O(q\sqrt{n}\log n)=\Omega(n)$
  (since there is no bad edge particularly no $E_2$-edge, from the root to $u$),
  (\ref{goal32}) follows by summing over $D$, with $|D|\le 2$.

We work on (\ref{goal33}) in the rest of the proof.
Fix any tuple $H_{-(k,j)}$ (with its $(k,j)$th entry missing) such that the probability of $(T,C,(H_{-(k,j)},\hh))$ reaching~$u$~is positive, where $\hh$ is a random dictator function with its dictator variable
    drawn from $[n]$ uniformly.
Then (\ref{goal33}) follows from  
\begin{align}\label{goal34}
 \mathop{\Pr}_{\hh }&\Big[\hspace{0.02cm}\text{$(T,C,(H_{-(k,j)},\hh ))$ takes $(u,v)$ that is $E_3$-bad at $(k,j)$}\hspace{0.03cm}\Big] \\[-0.5ex] &\hspace{5cm} \le \frac{|\Delta_{k,j}|}{|A_{k,j,0}|}\cdot 
 \mathop{\Pr}_{\hh }
  \Big[\hspace{0.01cm}\text{$(T,C,(H_{-(k,j)},\hh))$ reaches $u$}\hspace{0.03cm}\Big] .\nonumber
\end{align} 
The event on the RHS, i.e., that $\smash{(T,C,(H_{-(k,j)},\hh))}$ reaches $u$, imposes the following condition
  on~$r$ the dictator variable of $\hh$: $r\in A_{k, j, 0}$,
  since $\rho_{k,j}(y)=0$ for all $y\in P_{k,j}$.
Hence the probability on the RHS of (\ref{goal34}) is $|A_{i,j,0}|\hspace{0.03cm}/\hspace{0.03cm}n$.
On the other hand, the event on the LHS of (\ref{goal34}), that $\smash{(T,C,(H_{-(i,j)},\hh ))}$ follows
  a $(u,v)$ that is $E_3$-bad at $(k,j)$, imposes the following necessary condition for $r$:
  $r\in \Delta_{k,j}$.\hspace{0.06cm}\footnote{Note that this is \emph{not} a sufficient condition, because the other pair $(k,j')\in D$ may have $|\Delta_{k,j'}|\ge \alpha\sqrt{n}\log n$.}
  As a result, the probability on the LHS of (\ref{goal34}) is at most  
$|\Delta_{k,j}| /{n}$. 
(\ref{goal34}) then follows.
\def\hh{\boldsymbol{h}}
\end{proof}


\begin{lemma}\label{typeB}
The probability of $(\TT,\CC)\sim \calE$ and $\HH\sim\Ey$ taking an edge in $E_4$ is $o(1)$.
\end{lemma}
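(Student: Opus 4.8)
The case of $E_4$ is the one genuinely harder than $E_1$, $E_2$ and $E_3$: Definition~\ref{typeAdef} contains no bad event that bounds how much the set $A_{i,j,1}$ can shrink in a single query (the shrinkage of $A_{i,1}$ is controlled by $E_1$ and that of $A_{i,j,0}$ by $E_2$, but nothing controls $A_{i,j,1}$), so the conditional probability of following an $E_4$-edge at a fixed node can be as large as $\Theta(1)$ and Lemma~\ref{simplepruning} is useless here. Instead the plan is to invoke the second pruning lemma, Lemma~\ref{complicatedpruning} (this is the one place where it is needed): it suffices to show $q^*(\ell)=\sum_u q(u)=o(1)$ for every leaf $\ell$ of $B$ with positive probability under $(\TT,\CC)\sim\calE$, $\HH\sim\Ey$, the sum being over the root-to-$\ell$ path and $q(u)$ the conditional probability of following an $E_4$-edge at $u$. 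Fix such an $\ell$, let $(I;J;P;R;A;\rho)$ be the tuple induced by $\phi_\ell$, and write $q^*(\ell)\le\sum_{i\in I}\sum_{j\in J_i}q^*_{i,j}$, where $q^*_{i,j}$ collects, over $u$, the conditional probability of following an edge that is $E_4$-bad at the pair $(i,j)$; every $E_4$-edge is bad at some pair, so this only overcounts.

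The core step is to bound $q^*_{i,j}$ by $O\!\left(\log n\cdot\min\{|P_{i,j}|^2,|P_i|\}/\sqrt n\right)$. Fix $(i,j)$ and condition on $(T,C)$ and on all of $H$ except $h_{i,j}$; then $h_{i,j}$ is a uniformly random dictatorship $x\mapsto x_{\rr}$ with $\rr$ uniform in $[n]$. As argued in the proofs of Lemmas~\ref{type1}--\ref{type3}, ``reaching $u$'' factors as a constraint on $(T,C,H_{-(i,j)})$ together with the constraint $\rr\in A_{i,j,1}^{(u)}$ (when $P_{i,j}^{(u)}$ is nonempty and $1$-consistent), so $\rr$ is uniform over $A_{i,j,1}^{(u)}$; and an edge out of $u$ is $E_4$-bad at $(i,j)$ only if the query $x$ at $u$ enters $P_{i,j}$ with $h_{i,j}(x)=0$, i.e.\ only if $\rr\in A_{i,j,1}^{(u)}\cap\{k:x_k=0\}$. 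Hence this conditional probability is at most $|A_{i,j,1}^{(u)}\cap\{x=0\}|/|A_{i,j,1}^{(u)}|$; since $A_{i,j,1}$ shrinks monotonically along the path, all denominators are $\ge|A_{i,j,1}^{(\ell)}|$, and along the stretch where $P_{i,j}$ is $1$-consistent the numerators telescope to at most $|A_{i,j,1}^{(\mathrm{start})}|-|A_{i,j,1}^{(\ell)}|$. Now $|A_{i,j,1}^{(\mathrm{start})}|\le (n/2)+\sqrt n$ (it is the $1$-set of a middle-layer string), while on the bad-edge-free path to $\ell$ the no-$E_1$ property gives $|A_{i,j,1}^{(\ell)}|\ge|A_{i,1}^{(\ell)}|\ge (n/2)-O(|P_i|\sqrt n\log n)$ (Fact~\ref{fact:1} plus property i) of Section~\ref{goodleaves2}) and the no-$E_2$ property together with the argument of Claim~\ref{cl:bb} gives $|A_{i,j,1}^{(\ell)}|\ge (n/2)-O(|P_{i,j}|^2\sqrt n\log n)$. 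Thus $|A_{i,j,1}^{(\mathrm{start})}|-|A_{i,j,1}^{(\ell)}|=O\!\left(\sqrt n\log n\cdot\min\{|P_{i,j}|^2,|P_i|\}\right)$ and $|A_{i,j,1}^{(\ell)}|=\Omega(n)$, which yields the stated bound on $q^*_{i,j}$.

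Finally, summing over all pairs and using exactly the combinatorial estimate from the end of the proof of Lemma~\ref{goodleaves} --- $\sum_{i\in I}\sum_{j\in J_i}\min\{|P_{i,j}|^2,|P_i|\}\le\sum_{i\in I}2|P_i|^{3/2}=O(q^{3/2})$, since $\sum_i|P_i|\le 2q$ --- gives $q^*(\ell)=O\!\left(q^{3/2}\log n/\sqrt n\right)=O(1/\log^2 n)=o(1)$ for $q=n^{1/3}/\log^2 n$, and Lemma~\ref{complicatedpruning} then finishes the proof.

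The main obstacle is precisely the absence of a per-step bound on the shrinkage of $A_{i,j,1}$: one must replace the node-by-node estimate of the previous three cases by an aggregate telescoping bound over the whole path, which only closes because $A_{i,j,1}$ is monotonically shrinking and stays $\Omega(n)$ on a bad-edge-free path, and because the refined quantity $\min\{|P_{i,j}|^2,|P_i|\}$ (not the crude $|P_i|$, which would only give $O(q^2\log n/\sqrt n)\to\infty$) is summable at the threshold $q=n^{1/3}/\log^2 n$. A secondary delicate point I expect to need care is the contribution, to $q^*(\ell)$, of nodes $u$ on $\ell$'s path at which the path itself does \emph{not} enter $P_{i,j}$: there the factor $\Pr[\sigma_x=e_i,\tau_{x,j}=0\mid \text{reach }u]$ forcing $x$ into $P_{i,j}$ must be shown to be small, which should follow from the randomness of the term $T_i$ and the clause $C_{i,j}$ via the bipartite-graph arguments of Lemmas~\ref{type1} and~\ref{type2}, combined with the telescoping above.
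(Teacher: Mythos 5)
Your proof follows essentially the same route as the paper's: both invoke Lemma~\ref{complicatedpruning}, telescope the disjoint sets $\{k\in A_{i,j,1}^{s}:x_k^{s}=0\}$ against $|A^*_{i,j,1}|\le (n/2)+\sqrt{n}$, lower-bound $|A_{i,j,1}|=\Omega(n)$ via the absence of $E_1$/$E_2$ edges together with Claim~\ref{cl:bb}, and close with the estimate $\sum_{i,j}\min\{|P_{i,j}|^2,|P_i|\}=O(q^{3/2})$. Your closing worry about nodes at which the path does not enter $P_{i,j}$ is moot, since $(T,C)$ is fixed throughout and such nodes therefore contribute zero; the only detail the paper adds that you elide is truncating the sum at the first $E_1$/$E_2$ edge along the root-to-leaf path.
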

\begin{proof}
We fix a pair $(T,C)$ from the support of $\calE$ and prove that
\begin{equation}\label{goal41}
\mathop{\Pr}_{\HH\sim \Ey} \big[\hspace{0.01cm}\text{$(T,C,\HH)$ takes an $E_4$-edge}
\hspace{0.06cm}\big]=o\hspace{0.02cm}(1).
\end{equation}
The lemma follows by averaging (\ref{goal41}) over all $(T,C)$ in the support of $\calE$.
To prove (\ref{goal41}), fix a leaf $\ell$ such that the probability of 
  $(T,C,\HH)$ reaching $\ell$ is positive. 
Let $u_1 \cdots u_{t'}u_{t'+1}=\ell$ be the~root-to-$\ell$ path 
  and let $q(u_s)$ denote the following conditional probability:
\begin{equation*}  
 \mathop{\Pr}_{\HH\sim\Ey}\Big[\hspace{0.03cm} (T, C, \HH) \text{ takes an $E_4$-edge leaving $u_s$}\hspace{0.08cm} \Big|\hspace{0.08cm} (T, C, \HH) \text{ reaches } u_s\hspace{0.03cm}\Big]. 
 \vspace{-0.1cm} \end{equation*}
It then suffices to show for every such leaf $\ell$,
\begin{equation}\label{goalgoal}
\sum_{s\in [t']} q(u_s)=o(1),
\end{equation}
  since (\ref{goal41}) would then follow by Lemma \ref{complicatedpruning}.
To prove (\ref{goalgoal}), we use $t$ to denote the smallest integer such that 
  $(u_{t+1},u_{t+2})$ is an edge in $E_1$ or $E_2$ 
  with $t=t'$ by default if there is no such edge along~the path.
By the choice of $t$,   there is no edge in $E_1$ or $E_2$ along 
  $u_1\cdots u_{t+1}$.
For (\ref{goalgoal}) it suffices to  show 
\begin{equation}\label{goal44}
\sum_{s\in [t]} q(u_s)=o(1).
\end{equation}
To see this we consider two cases. If there is no $E_1,E_2$ edge along the root-to-$\ell$ path,
  then~the~two sums in (\ref{goalgoal}) and (\ref{goal44}) are the same.
If $(u_{t+1},u_{t+2})$ is an edge in $E_1$ or $E_2$, then 
  $q(u_s)=0$ if $s\ge t+2$ (since $(u,v)\notin E$ if there is already an edge in $E$ along the path to $u$).
We claim that $q(u_{t+1})$ must be $0$ as well. This is because, given
  that $(T,C)$ is fixed and that $(T,C,\HH)$ takes $(u_{t+1},u_{t+2})$ 
  with~a positive probability, whenever $(T,C,\HH)$ follows an edge $(u_{t+1},v)$ from $u_{t+1}$,
  $v$ has the same term and clause signatures $(\sigma_x, \tau_x)$ as $u_{t+2}$ and thus, also has the same $P$ and $A$ (as part of the tuple its map induces). 
As a~result $(u_{t+1},v)$ is also in $E_1$ or $E_2$ and cannot be an edge in $E_4$
  (recall that we explicitly excluded $E_1$~and~$E_2$ from $E_4$).
Below we focus on $u_s$ with $s\in [t]$ and upperbound $q(u_s)$.

For each $s\in [t]$ we
  write $\smash{x^{s}}$ to denote 
  the string queried at $u_s$ and 
  let $\smash{(I^{s};J^{s};P^{s};Q^{s};R^{s}; \rho^{s})}$ be the~tuple induced by the map associated with $u_s$.
  \hspace{-0.04cm}We also write $F_s$ to denote the set of pairs~$(i,j)$ with $i\in I^{s},\smash{j\in J^{s}_i}$.
Following the same arguments used to derive (\ref{ofofp2}) in the proof of Lemma \ref{type3},
  let $D_s\subseteq F_{s}$ denote the set of at most two pairs $(i,j)$ such that $x^{s}$ is added 
  to $\smash{P^{s}_{i,j}}$ when~$(T,C,\HH)$ reaches $u_s$. 
Note that if $x^{s}$ just creates a new $P_{i,j}$ (so $(i,j)\notin F_s$), we do not include   it in  
  $D_s$.  
As a result,
whenever $(T,C,\HH)$  takes an $E_4$-edge $(u,v)$, the latter must be $E_4$-bad at 
  one of $(i,j)\in D_s$.

Next for each pair $(i,j)\in D_s$,  
we can follow the analysis of (\ref{goal33}) to show that
\begin{align*} 
 \mathop{\Pr}_{\HH\sim\Ey}&\Big[\hspace{0.02cm}\text{$(T,C,\HH)$ takes a $(u,v)$ that is $E_4$-bad at $(i,j)$}\hspace{0.03cm}\Big] \le \dfrac{\big|\Delta^{s}_{i,j}\big|}{\big|A_{i,j,1}^{s}\big|}\cdot 
 \mathop{\Pr}_{\HH\sim\Ey }
  \Big[\hspace{0.01cm}\text{$(T,C,\HH)$ reaches $u$}\hspace{0.03cm}\Big], 
\end{align*}
where the set $\Delta^{s}_{i,j}$ is defined as
$$\Delta^{s}_{i,j}=\Big\{k\in A_{i,j,1}^{s}: x_{k}^{s}
  =0\Big\}.$$
As there is no $E_1$ or $E_2$ edge along the path to $u_s$,
  we have by (\ref{papa2}) that $A_{i,j,1}^{s}$ has size $\Omega(n)$. Thus,
\begin{equation}\label{ifif}
q(u_s)\le O\hspace{0.02cm}(1/n)\cdot \sum_{(i,j)\in D_s} {\left|\Delta_{i,j}^{s}\right|}\quad\text{\ and\ }\quad
\sum_{s\in [t]} 
q(u_s)\le O\hspace{0.02cm}(1/n)\cdot \sum_{s\in [t]}\sum_{(i,j)\in D_s} {\left|\Delta_{i,j}^{s}\right|}. 
\end{equation}

Let $(I^*;J^*;P^*;R^*;A^*;\rho^*)$ be the tuple induced by the map associated with $u_{t+1}$ 
  and let $F^*$ be the set of $(i,j)$ with $i\in I^*$ and $j\in J_i^*$.
We upperbound the second sum in (\ref{ifif}) above by focusing on any fixed pair $(i,j)\in F^*$ and observing that
$$
\sum_{s:(i,j)\in D_s} \left|\Delta_{i,j}^{s}\right|+\big|A^*_{i,j,1}\big|\le (n/2)+\sqrt{n}.
$$
This is because $\smash{\Delta_{i,j}^{s}}$ and $A_{i,j,1}^*$ are pairwise disjoint and their union 
  is indeed exactly the number of $1$-entries of the query string along the path that first creates $P_{i,j}$.
The latter is at most $(n/2)+\sqrt{n}$ because we assumed that strings queried in the tree lie  
  in the middle layers. On the other hand,
$$
\big|A_{i,j,1}^*\big|\ge (n/2)-O\big(\sqrt{n}\log n\cdot \min\big\{|P_{i,j}^*|^2,|P_i^*|\big\}\big).
$$
This follows directly from (\ref{papa1}) and (\ref{papa2}) and our choice of $t$ at the beginning of the
  proof so that 
  there is no $E_1$ or $E_2$ edge from $u_1$ to $u_{t+1}$.
We finish the proof by plugging the two inequalities into (\ref{ifif}) and follow the same arguments used at the end of 
  the proof of the lemma for good leaves.
\end{proof}

\section{Unateness Lower Bound}\label{sec:unate}

\def\dyes{\mathcal{D}_{\text{yes}}}
\def\Dyes{\dyes}
\def\Dno{\dno}
\def\boldr{\rr}
\def\calEn{\En}
\def\calEy{\Ey}

We start with some notation for strings.
Given $A\subseteq [n]$ and $x\in \{0,1\}^n$, we use $x_A$ to denote the
  string in $\{0,1\}^A$ that agrees with $x$ over $A$.
Given $y\in \{0,1\}^A$ and $\smash{z\in \{0,1\}^{\overline{A}}}$,
  we use $x=y\circ z$~(as their concatenation) to denote the string $x\in \{0,1\}^n$ that agrees with
  $y$ over $A$ and $z$ over $\overline{A}$.
Given $x\in \{0,1\}^n$ and $y\in \{0,1\}^A$ with $A\subseteq [n]$, we use $x\oplus y$ to denote 
  the $n$-bit string $x'$ with $x_i'=x_i$ for all $i\notin A$ and $x_i'=x_i\oplus y_i$ for all $i\in A$,
  i.e., $x'$ is obtained from $x$ by an XOR with $y$ over $A$.

\subsection{Distributions}\label{sec:unatedist}

For a fixed $n > 0$ we describe a pair of distributions, $\Dyes$ and $\Dno$, supported on Boolean functions $f \colon \{0, 1\}^n \to \{0, 1\}$ that will be used to obtain a two-sided and adaptive lower bound for unateness testing. After defining the distributions, we show in this subsection that any $\boldf \sim \Dyes$ is unate, and $\boldf \sim \Dno$ is $\Omega(1)$-far from being unate with probability $\Omega(1)$. 
Let $N$ be the following parameter:
$$N = \left(1 + \frac{1}{\sqrt{n}}\right)^{n/4}\approx e^{\sqrt{n}/4}.$$ 

A function $\boldf \sim \Dyes$ is drawn using the following procedure:
\begin{flushleft}\begin{enumerate}
\item Sample a subset $\bM \subset [n]$ uniformly at random from all subsets of size $n/{2}$. 
\item Sample $\bT \sim \calE(\bM)$ (which we describe next). $\bT$ is a sequence of terms $(\bT_i : i \in [N])$. $\bT$ is then 
  used to define a multiplexer map $\bGamma = \bGamma_{\bT} \colon \{0, 1\}^n \to [N] \cup \{ 0^*, 1^* \}$. 
\item Sample $\bH \sim \Ey(\bM)$ where $\bH = ( \bh_i : i \in [N])$. For each $i \in [N]$, $\bh_i \colon \{0, 1\}^n \to \{0, 1\}$ is a dictatorship function $\bh_i(x) = x_k$ with $k$ sampled independently and uniformly  from $\overline{\bM}$.
We will refer to $\bh_i$ as the dictatorship function 
  and $x_k$ (or simply its index $k$) as the \emph{special} variable associated with the $i$th term $\bT_i$.
  
\item Sample two strings $\br\in \{0,1\}^\bM$ and $\bs\in \{0,1\}^\bM$ 
  uniformly at random. Finally, the function
  $\boldf = \boldf_{\bM,\bT, \bH, \boldr,\bs} \colon \{0, 1\}^n \to \{0, 1\}$ is defined as follows:
\begin{align*}
\boldf_{\bM,\bT,\bH,\boldr,\bs}(x) =f_{\bM, \bT,\bH}\big(x\oplus (\boldr\circ\bs)\big), 
\end{align*}
where $\boldf_{\bM, \bT,\bH}$ is defined as follows (with the truncation done first):
\begin{align*}
\boldf_{\bM, \bT, \bH}(x) &= \left\{ \begin{array}{ll} 0 & \text{if $|x_\bM| < (n/4) - \sqrt{n}$}\\[0.7ex]
1& \text{if $|x_{\bM}| > (n/4) + \sqrt{n}$}\\[0.7ex]
0 & \text{if } \bGamma(x) = 0^* \\[0.7ex]
						1 & \text{if }\bGamma(x) = 1^* \\[0.7ex]
						h_{\bGamma(x)}(x) & \text{otherwise (i.e., when $\bGamma(x) \in [N]$)} \end{array} \right. 
\end{align*}
This finishes the definition of our yes-distribution $\Dyes$.
\end{enumerate}\end{flushleft}

A function $\boldf = \boldf_{\bM, \bT,\bH, \boldr,\bs} \sim \Dno$ is drawn using a similar procedure, with the only difference being that $\bH=(\bh_i : i \in [N])$ is sampled from $\En(\bM)$ instead of $\Ey(\bM)$:
  each $\bh_i$ is a dictatorship function 
  $\bh_i(x)=x_k$ with probability $1/2$ and an
  anti-dictatorship $\bh_i(x)=\overline{x_k}$ with probability $1/2$, 
where $k$ is chosen independently and uniformly at random from $\overline{\bM}$. 
We will also refer to $\bh_i$ as the dictatorship or anti-dictatorship function and 
  $x_k$ as the special variable associated with $\bT_i$.

\begin{remark} 
Note that the truncation in $\boldf_{\bM, \bT,\bH,\boldr,\bs}$ is done after sampling $\boldr$. 
As a result, we may \emph{not} assume all queries are made in the middle layers, like we did in Section~\ref{sec:mono}. 
\end{remark}

Fixing an $M\subset [n]$ of size $n/2$,
  we now describe $\bT\sim \calE(M)$ to finish the description of the~two distributions.
Each term $\bT_i$ in $\bT$, $i\in [N]$, is drawn independently and is a 
  random \emph{subset} of $M$ with each $j\in M$ included with probability $1/\sqrt{n}$ independently.
We also abuse the notation and 
  interpret each term $\bT_i$ as a Boolean function that is the conjunction of its variables:
\[ \bT_i(x) = \bigwedge_{j\in \bT_i}  x_{j}. \]
Note that, for some technical reason that will become clear later in the proof of 
  Lemma \ref{lem:prunebad},~the definition of terms here is slightly different from
  that used in the monotonicity lower bound, though both are the conjunction of 
  roughly $\sqrt{n}/2$ ($\sqrt{n}$ in monotonicity) variables.
Given $\bT$, the multiplexer map $\bGamma_{\bT} \colon \{0, 1\}^n \to [N] \cup \{0^*, 1^* \}$  {indicates the index of the term $\bT_i$ that is satisfied by $x$, if there is a unique one; it returns $0^*$ if no term is satisfied, or $1^*$ if more than one term are satisfied: }
\[ \bGamma_{\bT}(x) = \left\{ \begin{array}{ll} 0^* & \forall\hspace{0.06cm} i \in [N], 
\hspace{0.08cm}\bT_i(x) = 0 \\[0.8ex]
								      1^* & \exists\hspace{0.06cm} i \neq j \in [N],\hspace{0.06cm} \bT_i(x) = \bT_j(x) = 1 \\[0.8ex]
								      i & \bT_i(x) = 1\ \text{for a unique $i\in [N]$} \end{array} \right. \]

We give some intuition for the reason why the two distributions are hard to distinguish and
  can be used to obtain a much better lower bound for unateness testing, despite of being
  much simpler than the two-level construction used in the previous section.
Note that $\Dyes$ and $\Dno$ are exactly the same except that (1) in $\Dyes$, $\bh_i$'s are random 
  dictatorship or anti-dictatorship functions (if one takes $\bs$ into consideration)
  but are \emph{consistent} in the sense that all $\bh_i$'s with the same special variable $x_k$
  are either all dictatorship or anti-dictatorship functions; (2)
in contrast, whether $\bh_i$~is a dictatorship or anti-dictatorship is independent for each $i\in [N]$
  in $\Dno$.
Informally, the only~way for an algorithm to be sure that $f$ is from $\Dno$ (instead of $\Dyes$)
  is to find two terms  with the same special variable $x_k$ but one 
  with a dictatorship and the other with an anti-dictatorship function over $x_k$.
As a result, one can interpret our $\tilde{\Omega}(n^{2/3})$ lower bound (at a high level) as the product of two
  quantities: the number of queries one needs to \emph{breach} a term $\bT_i$ 
  (see Section \ref{sec:unate-oracle} for details) and find 
  its special variable, and the number of terms one needs to breach in order to find two 
  with the same special variable.
This is different from monotonicity testing since we are done once a term is breached there,
  and enables us to obtain a much better lower bound for unateness testing.
  
Next we prove that $\boldf\sim \Dyes$ is unate and $\boldf\sim \Dno$ is far from unate with high probability.  

\begin{lemma}
\label{lem:unate}
Every $f$ in the support of $\Dyes$ is unate.
\end{lemma}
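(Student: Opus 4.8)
The plan is to reduce the claim to a monotonicity statement about the ``core'' function $f_{M,T,H}$ and then verify that statement directly. Recall that a function $f$ in the support of $\Dyes$ has the form $f(x)=f_{M,T,H}\big(x\oplus(r\circ s)\big)$ for some $M\subset[n]$ of size $n/2$, terms $T_i\subseteq M$, dictatorships $h_i(x)=x_{k_i}$ with $k_i\in\overline{M}$, and strings $r\in\{0,1\}^M$, $s\in\{0,1\}^{\overline{M}}$. Since unateness is preserved under XORing the input with a fixed string (if $g$ is monotone then $x\mapsto g(x\oplus\rho)$ is unate with witness $\rho$, because $g(x\oplus\rho\oplus\rho)=g(x)$), it suffices to show that $f_{M,T,H}$ is monotone for every choice of $M$, $T$, $H$ in the support. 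In fact $f_{M,T,H}$ will turn out to be monotone increasing in \emph{every} coordinate, which is slightly stronger than needed.

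To prove $f_{M,T,H}$ is monotone I would fix $x\preceq y$, assume $f_{M,T,H}(x)=1$, and show $f_{M,T,H}(y)=1$ by walking through the cascading definition of $f_{M,T,H}$. The three structural facts to use are: (i) $|x_M|\le|y_M|$, so if $x$ lies in the truncation region where the value is forced to $1$ (namely $|x_M|>(n/4)+\sqrt{n}$) then so does $y$, and $y$ can never drop into the region where the value is forced to $0$; (ii) each term $T_i$ is a monotone conjunction of variables in $M$, hence any $T_i$ satisfied by $x$ is satisfied by $y$, so $\Gamma_T(x)=1^*$ implies $\Gamma_T(y)=1^*$ (unless $y$ has already escaped above the upper truncation threshold), and if $x$ satisfies a unique term $T_i$ then $y$ satisfies $T_i$ and therefore either $\Gamma_T(y)=1^*$ or $\Gamma_T(y)=i$; (iii) each $h_i(z)=z_{k_i}$ is monotone, so if $\Gamma_T(x)=\Gamma_T(y)=i$ and $h_i(x)=1$ then $h_i(y)=y_{k_i}\ge x_{k_i}=1$. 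Combining these, every way in which $f_{M,T,H}(x)=1$ can arise — forced $1$ by the upper truncation, $\Gamma_T(x)=1^*$, or $\Gamma_T(x)=i$ with $x_{k_i}=1$ — propagates to $f_{M,T,H}(y)=1$.

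There is no real obstacle here; the only point needing a little care is the interaction between the truncation and the multiplexer in cases (ii)--(iii): when $x$ lies strictly in the middle layers $(n/4)-\sqrt{n}\le|x_M|\le(n/4)+\sqrt{n}$ and $f_{M,T,H}(x)=1$ comes from the multiplexer, one must separately treat the possibility that $y$ has moved above $(n/4)+\sqrt{n}$ (where $f_{M,T,H}(y)=1$ is immediate) versus $y$ still lying in the middle layers (where one argues via (ii) and (iii), using that a unique term satisfied by $y$ must be $T_i$). The lower truncation threshold is never an issue since $|y_M|\ge|x_M|\ge(n/4)-\sqrt{n}$. Once $f_{M,T,H}$ is shown monotone, $f=f_{M,T,H}(\cdot\oplus(r\circ s))$ is unate with witness $r\circ s$, completing the proof.
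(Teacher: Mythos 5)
Your proof is correct and follows the same route as the paper: the paper's proof likewise observes that it suffices to show $f_{M,T,H}$ is monotone (since XORing the input by $r\circ s$ preserves unateness) and then defers to the argument of Lemma~\ref{monotone:lem}, which is exactly the case analysis you spell out. Your write-up is just a more explicit version of the same argument, with the truncation cases handled carefully.
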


\begin{proof}
Given the definition of $f = f_{M, T, H, r,s}$ using $f_{M,T,H}$,
  it suffices to show  that $f_{M, T, H}$ is monotone. The rest of the proof is similar to that of Lemma~\ref{monotone:lem}. 
\end{proof}

\def\ff{\boldf}
\def\TT{\bT}
\def\HH{\bH}
\def\XX{\bX}

\begin{lemma}
\label{lem:nonunate}
A function $\ff \sim \Dno$ is $\Omega(1)$-far from unate with probability $\Omega(1)$. 
\end{lemma}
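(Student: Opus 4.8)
The plan is to mirror the structure of the proof of Lemma~\ref{nonmonotone:lem}, but accounting for the two new ingredients that distinguish the unateness setting: the random shift $\boldr \circ \bs$, and the fact that ``far from unate'' must be witnessed via the characterization of Lemma~\ref{ufuf} rather than Lemma~\ref{pfpf}. The key point is that in $\Dno$, whether $\bh_i$ is a dictator or an anti-dictator over its special variable $x_k$ is an independent fair coin for each $i \in [N]$. So if we look at the (typically $\Omega(N/n)$ many) terms $\bT_i$ that happen to share a common special variable index $k \in \overline{\bM}$, roughly half of them will be dictators and half anti-dictators on $x_k$. This will let us produce, for a constant fraction of coordinates $k \in \overline{\bM}$, many disjoint monotone bi-chromatic edges in direction $k$ (from the dictator terms) \emph{and} many disjoint anti-monotone bi-chromatic edges in direction $k$ (from the anti-dictator terms), all disjoint across the different $k$'s, so that Lemma~\ref{ufuf} gives $\dist(\ff, \textsc{Unate}) = \Omega(1)$.

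First I would reduce to analyzing $f_{\bM, \bT, \bH}$ directly: since $f = f_{\bM,\bT,\bH}(x \oplus (\boldr\circ\bs))$ is obtained by an XOR on coordinates in $\bM$, and unateness is invariant under such coordinate flips, $\dist(f, \textsc{Unate}) = \dist(f_{\bM,\bT,\bH}, \textsc{Unate})$; thus $\boldr, \bs$ are irrelevant for this lemma and we may drop them. Next, fix $\bM$ and work conditionally. For each $k \in \overline{\bM}$ and each sign $b \in \{\text{dict}, \text{anti}\}$, consider the set of indices $i \in [N]$ such that $\bh_i$ has special variable $k$ and sign $b$; call its size $m_{k,b}$. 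Following the single-level analog of Claim~\ref{cl:hehe}, for a string $x$ in the appropriate middle layer ($|x_{\bM}| \in (n/4) \pm \sqrt{n}$ after restricting attention to the coordinates of $\bM$, with $x_k = 0$), the probability that $x$ uniquely satisfies $\bT_i$, that flipping $x_k$ keeps it uniquely satisfying $\bT_i$ (this needs $k \notin \bT_i$, automatic since $x_k = 0$ and $\bT_i \subseteq \bM$), and that no other term becomes satisfied, is $\Omega(1/N)$ for each $i$ — exactly as computed in Claim~\ref{cl:hehe} with the new term distribution. So the expected number of strings $x$ giving, together with $x^{(k)}$, a bi-chromatic edge of the correct orientation in direction $k$ attributable to a sign-$b$ term is $\Omega(2^n m_{k,b}/N)$; these edges are disjoint (reconstruct $x$ from $x^{(k)}$ via $\bGamma$ and $k$, as in the monotonicity proof), and edges in different directions $k$ are automatically disjoint.

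Then I would put the pieces together probabilistically. Over $\bH \sim \En(\bM)$, for each fixed $k \in \overline{\bM}$ the total number of terms with special variable $k$ is $\Bin(N, 1/(n/2)) = \Bin(N, 2/n)$, which concentrates around $2N/n = \omega(1)$; conditioned on that count, the split between dictators and anti-dictators is a fair binomial, so with probability $\Omega(1)$ we have $m_{k,\text{dict}}, m_{k,\text{anti}} = \Omega(N/n)$ simultaneously. Call such a $k$ \emph{good}; a constant fraction of the $n/2$ coordinates in $\overline{\bM}$ are good in expectation, so by Markov at least $\Omega(n)$ of them are good with probability $\Omega(1)$. For each good $k$, Lemma~\ref{ufuf} needs $\min\{|E_k^+|, |E_k^-|\}$, and both $|E_k^+|$ and $|E_k^-|$ have expectation $\Omega(2^n \cdot (N/n)/N) = \Omega(2^n/n)$; a second Markov/concentration argument (or a two-sided bound, being slightly careful to also upper-bound these counts so Markov applies to the shortfall) gives $\min\{|E_k^+|, |E_k^-|\} = \Omega(2^n/n)$ for a constant fraction of the good $k$'s with probability $\Omega(1)$. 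Summing over these $\Omega(n)$ coordinates, $\sum_k \min\{|E_k^+|, |E_k^-|\} = \Omega(n \cdot 2^n/n) = \Omega(2^n)$, so Lemma~\ref{ufuf} yields $\dist(\ff, \textsc{Unate}) = \Omega(1)$ with probability $\Omega(1)$, as claimed.

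\textbf{Main obstacle.} The delicate part is the disjointness bookkeeping across all coordinates $k$ simultaneously while still getting $\Omega(2^n)$ edges: within a single $k$ the reconstruction argument is clean, but we must ensure the greedy/expected-count arguments for different $k$ do not double-count strings, and that the layer-truncation on $|x_{\bM}|$ leaves enough strings in each relevant middle layer — this is where the choice $N \approx e^{\sqrt{n}/4}$ and the term-inclusion probability $1/\sqrt{n}$ enter, ensuring the $\Omega(1/N)$-per-term estimate of Claim~\ref{cl:hehe} still holds with terms of expected size $\sqrt{n}/2$. The rest is a routine adaptation of the monotonicity computation.
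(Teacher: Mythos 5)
Your overall architecture matches the paper's: drop $r$ and $s$ by XOR-invariance of the distance to unateness, pair each $x$ with $\Gamma(x)=i$ to $x^{(k)}$, where $k$ is the special variable of $T_i$, to obtain disjoint bi-chromatic edges, sort them by direction and orientation into the sets $X_k^+$ and $X_k^-$, and feed $\sum_{k}\min\{|X_k^+|,|X_k^-|\}$ into Lemma~\ref{ufuf}. The per-string $\Omega(1/N)$ estimate and the disjointness-by-reconstruction argument are also the same as in the paper.

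The gap is in how you obtain a lower tail bound on $\min\{|X_k^+|,|X_k^-|\}$. Knowing that each of $|X_k^+|$ and $|X_k^-|$ has expectation $\mu=\Omega(2^n/n)$ says nothing about the expectation of their minimum: both counts are driven by the same random choices of the $h_i$'s and could a priori be strongly anti-correlated. The ``Markov applied to the shortfall'' you allude to does not close this, because the only a priori upper bound on $|X_k^+|$ is $2^n$, which exceeds its mean by a factor of $n$; reverse Markov then only yields $\Pr[\hspace{0.02cm}|X_k^+|\ge \mu/2\hspace{0.02cm}]=\Omega(1/n)$, which is useless even for a single coordinate, let alone for $\Omega(n)$ coordinates simultaneously or for lower-bounding the expectation of the minimum. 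What is actually needed is a concentration inequality for $|X_k^+|$ viewed as a sum of independent contributions, one per $h_i$, and for Hoeffding to give anything nontrivial one must control the maximum contribution of a single term, i.e., the size of the largest cell $\{x\in X:\Gamma(x)=i\}$. This is precisely why the paper first conditions on the event that every term $T_i$ contains at least $n^{1/3}$ variables (which fails with probability only $e^{-\Omega(\sqrt{n})}$ even after a union bound over all $N$ terms): under this conditioning each term captures at most $2^{n-n^{1/3}}$ strings, the sum of squared ranges is at most $2^{2n-n^{1/3}}$, and Hoeffding gives $\Pr[\hspace{0.02cm}|X_k^+|\le \mu/2\hspace{0.02cm}]\le \exp\!\big(-\Omega(2^{n^{1/3}}/n^2)\big)$, small enough to union bound over all $k\in\overline{M}$ and both orientations. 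Your proposal never introduces this conditioning on term sizes, and without it (or some substitute bound on the largest cell) the concentration step --- and hence the lemma --- does not go through.
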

\begin{proof}
Consider a fixed subset $M \subset [n]$ of size $n/2$. It suffices to prove that, when $\TT\sim \calE(M)$ and $\HH\sim \En(M)$, 
  the function $\ff=\ff_{M, \TT,\HH}$ is $\Omega(1)$-far from unate.
This is due to the fact that flipping variables~of a function as we do using $\br$ and $\bs$ 
  does not change its distance to unateness.


Fix $T$ in the support of $\calE(M)$ and $H$ in the support of $\En(M)$.
We let $X \subset \{0, 1\}^n$ denote~the set of $x\in \{0,1\}^n$ in the middle layers
  (i.e. $|x_M|$ is within $n/4\pm \sqrt{n}$) such that $\Gamma_T(x)=i$~for~some $i\in [N]$ (rather than $0^*$ or $1^*$). For each $x \in X$ with $\Gamma_T(x)=i$, 
 {we also let $\rho(x)=k$ be the~special variable associated with $T_i$ 
 (i.e., $h_i(x) = x_k$ or $h_i(x) = \overline{x_k}$).}
As $\rho(x) \in \overline{M}$ and $\Gamma_T(x)$ depends only on variables in $M$, we have that  
$$ 
\Gamma_T\big(x^{(\rho(x))}\big)=\Gamma_T(x),
$$ 
i.e., after flipping the $\rho(x)$th bit of $x$, the new string
  still satisfies uniquely the same term as $x$.
  
Let $\smash{x^*=x^{(\rho(x))}}$ for each string $x\in X$ (then $(x^*)^*=x$).
The claim below shows that $\smash{(x,x^*)}$ is a bi-chromatic edge along the $\rho(x)$th direction. As a result,
  one can decompose $|X|$ into $|X|/2$ many \emph{disjoint} bi-chromatic edges $(x,x^*)$.

\begin{claim}
For all $x \in X$, $(x, x^*)$ is a bi-chromatic edge of $f_{M,T, H}$. 
\end{claim}
\begin{proof}
Let  $k=\rho(x)\in \overline{M}$. Then 
 $f_{M,T,H}(x)$ and $f_{M,T,H}(x^*)$ are either $x_k$ and $ {x_k^*}$ 
  or $\overline{x_k}$ and $\smash{\overline{x_k^*}}$.
The claim follows directly from $x^*=x^{(k)}$ and thus, $x_k^*=\overline{x_k}$.
\end{proof}

For each $k \in \overline{M}$, we partition strings $x\in X$ with $\rho(x)=k$ and $f(x)=0$ into
\[ X_{k}^{+} = \big\{ x \in X : \rho(x) = k,\hspace{0.04cm} x_k=0,\hspace{0.04cm}f(x) = 0 \big\} 
\quad\text{and}\quad X_{k}^{-} = \big\{ x \in X : \rho(x) = k,\hspace{0.04cm}x_k=1,\hspace{0.04cm} f(x) = 0\big\}. \]
Note that for each $x\in X_k^+$, $(x,x^*)$ is a monotone bi-chromatic edge; for each $x\in X_k^-$,
  $(x,x^*)$~is an anti-monotone bi-chromatic edge.
Since all these $|X|/2$ edges are disjoint,
  by Lemma \ref{ufuf} we have:
\[ \dist\big(f_{M,T,H}, \textsc{Unate} \big) \geq \dfrac{1}{2^{n}} \cdot \sum_{k\in \overline{M}} \min\big\{ |X_k^+|, |X_k^{-}|  \big\}. \]
Therefore, it suffices to show that with probability $\Omega(1)$ over $\TT \sim \calE(M)$ and $\HH \sim \En(M)$,  both~$\XX_k^+$ and $\XX_k^{-}$
  (as random variables derived from $\TT$ and $\HH$) have size $\Omega(2^n/n)$ for every $k \in \overline{M}$. 

To simplify the proof we introduce a new distribution~$\calE'(M)$ that is 
  the same as $\calE(M)$ but conditioned on that 
  every $T_i$ in $T$ contains at least $n^{1/3}$ elements.
Our goal is to show that 
\begin{equation}\label{pfof}
\mathop{\Pr}_{\TT\sim\calE'(M),\hspace{0.05cm}\HH\sim \En(M)}\Big[\hspace{0.03cm}
  \text{$\forall\hspace{0.05cm}k\in \overline{M}$,
  both $\XX_k^+$ and $\XX_k^-$ have size $\Omega(2^n/n)$\hspace{0.03cm}}\Big]=\Omega(1).
\end{equation}
This implies the desired  claim over $\bT\sim \calE(M)$ as the probability of $\bT\sim \calE(M)$ lying
  in the support of $\calE'(M)$ is at~least $1- \exp\left(-\Omega(\sqrt{n})\right)$. To see this is the case, the probability of $\bT_i$
    having less~than $n^{1/3}$ many elements can be bounded from above by
\begin{align*}
 \Prx \left[|\bT_i| \leq n^{1/3} \right] &=\sum_{j\le n^{1/3}}
   \binom{n/2}{j}\cdot \left(1-\frac{1}{\sqrt{n}}\right)^{n/2-j}\cdot \left(\frac{1}{\sqrt{n}}\right)^{j}\\ 
 &\le (n^{1/3}+1)\cdot \binom{n/2}{n^{1/3}} \cdot \left( 1 - \frac{1}{\sqrt{n}} \right)^{n/2 - n^{1/3}} < e^{- 0.49 \sqrt{n}}.
   \end{align*}
Taking a union bound over all $N\approx e^{\sqrt{n}/4}$ terms, we conclude that 
  $\bT \sim \calE(M)$ lies in the support of $\calE'(M)$ 
  with probability at least $1 - \exp(-0.24 \sqrt{n} )$.

In Claim \ref{claim111}, we prove a lower bound for the expectation of $|\bX|$: %

\begin{claim}\label{claim111}
We have \emph{(}below we use $\bH$ as an abbreviation for $\bH\sim \calEn(M)$\emph{)}
\begin{equation}
\label{eq:mu1} 
 \Ex_{\bT\sim\calE(M) ,\hspace{0.05cm} \bH}\Big[ |\bX|\Big] = \Omega\hspace{0.03cm}(2^n)
 \quad\text{and}\quad
 \Ex_{\bT\sim\calE'(M),\hspace{0.05cm} \bH}\Big[ |\bX|\Big] = \Omega\hspace{0.03cm}(2^n).
\end{equation}
\end{claim}
\begin{proof}
By linearity of expectation, we have
\[ \Ex_{\bT \sim \calE(M) ,\hspace{0.05cm} \bH}\Big[|\bX| \Big] = \sum_{\text{middle $x$}} \mathop{\Pr}_{\bT \sim\calE(M),\hspace{0.05cm}\bH }\big[\hspace{0.01cm}x \in \bX \hspace{0.03cm}\big]. \]

Fix a string $x \in \{0, 1\}^n$ in the middle layers
  (i.e., $|x_M|$ lies in $n/4\pm \sqrt{n}$). We decompose the probability on the RHS for $x$ into $N$
    disjoint subevents.
The $i$th subevent corresponds to $\bT_i$ being the unique term which $x$ satisfies. The probability of 
  the $i$th subevent is at least 
\[ \left( 1 - \dfrac{1}{\sqrt{n}} \right)^{\frac{n}{4} + \sqrt{n}} \times \left(1 - \left(1 - \frac{1}{\sqrt{n}} \right)^{\frac{n}{4} - \sqrt{n}} \right)^{N - 1} = \Omega\left(\frac{1}{N}\right). \] 
As a result, the probability of $x \in \XX$ is $N \cdot \Omega(1/N ) = \Omega(1)$.
The first part of (\ref{eq:mu1}) follows from the fact that there are $\Omega(2^n)$ many strings $x$ 
  in the middle layers. 
  
The second part of (\ref{eq:mu1}) 
  follows from the first part and the fact that $|\bX|\le 2^n$ and 
  $\TT\sim \calE(M)$ does not lie in 
  the support of $\calE'(M)$ with probability $o(1)$ as shown above.
\end{proof}


Let $\mu^*=\Omega(2^n)$ be the expectation of $|\bX|$ over $\bT\sim\calE'(M)$ and $\bH\sim\calEn(M)$,
  and let $p$ be the probability of $|\bX|\ge \mu^*/2$. Then we have
$$
\mu^*\le p\cdot 2^n+(1-p)\cdot (\mu^*/2)\le p\cdot 2^n+\mu^*/2
$$
and thus, $p=\Omega(1)$.
As a result, it suffices to consider 
  a $T$ in the support of $\calE'(M)$ that satisfies $|X| \geq \mu^*/2$
  and show that, over $\bH\sim\calEn(M)$, all $|\bX_k^+|$ and $|\bX_k^-|$ 
  are $\Omega(2^n/n)$ with probability $\Omega(1)$.
To this end, we focus on $\bX_k^+$ and then use symmetry and a union bound on
  all the $n$ sets.
  
Given $T$ and its $X$ (with $|X|\ge \mu^*/2$), we note that 
  half of $x\in X$ have $x_k=0$ (since whether $x\in X$ only depends on $x_M$)
  and for each $x\in X$ with $x_k=0$, the probability of $x\in \bX_k^+$ (over $\bH$)
  is $1/(2n)$.
Hence, the expectation of $|\bX_k^+|$ is $|X|/4n\ge \mu^*/8n=\Omega(2^n/n)$.
Let $\mu=|X|/4n$.
To obtain a concentration bound on $|\bX_k^+|$, we apply Hoeffding's inequality 
  over $\bH\sim \calEn(M)$ in the next claim.

\begin{claim}
For each $k \in \overline{M}$, we have 
\[ \mathop{\Pr}_{\bH\sim\calEn(M)}\Big[\hspace{0.02cm}\mu -  |\bX_k^{+}| \geq {\mu}/{2} \hspace{0.02cm}\Big] \leq \exp\left( -\Omega\hspace{0.02cm}\big(2^{n^{1/3}}/n^2\big)\right). \]
\end{claim}

\begin{proof}
Consider the size of $X_k^+$  as a function over
  $\smash{h_1,\ldots,h_N}$ for a particular fixed $T$ in the support of $\calE'(M)$ with $|X| \geq \Omega(2^n)$. We have that $X_k^+$ is a sum of independent random variables taking values between $0$ and $\smash{2^{n - n^{1/3}}}$, and the expectation of~$|\bX_k^+|$~is $\mu$ because the choices in $\bH$ partitions half of $X$ into $2n$ disjoint parts. 
  Therefore, we can now apply Hoeffding's inequality:
{ \[ \Prx_{\bH \sim \calEn(M)} \left[\mu - |\bX_k^+| \geq \frac{\mu}{2} \right] \leq \exp\left( -\dfrac{\Omega(2^{2n} / n^2)}{2^{2n - n^{1/3}}} \right) \]}
As each term has length at least $n^{1/3}$, each $T_i$ can add at most $b_i < (1/2) \cdot 2^{n - n^{1/3}}$ to $|\bX_k^+|$, then $$\sum_{i \in [N]} b_i^2 \leq 2^{n-n^{1/3}} \sum_{i\in [N]} b_i \leq 2^{2n-n^{1/3}}.$$
This finishes the proof of the claim. 
\end{proof}

The same argument works for $|\bX_k^-|$.
(\ref{pfof}) then follows from a union bound on $k \in \overline{M}$ and both sets $\bX_k^+$ and $\bX_k^-$. This finishes the proof of Lemma~\ref{lem:nonunate}.
\end{proof}

Given Lemmas~\ref{lem:unate} and \ref{lem:nonunate}, our lower bound for testing unateness
  (Theorem \ref{unatemain}) follows directly from the lemma below. We fix $q = {n^{2/3}}/{\log^3 n}$
  as the number of queries in the rest of the proof. The remainder of this section will prove the following lemma.
\begin{lemma}
\label{lem:lb}
Let $B$ be any $q$-query deterministic algorithm with oracle access to $f$. Then 
\[ \mathop{\Pr}_{\ff \sim \dno}\big[B \text{ rejects } \ff\hspace{0.03cm}\big] \leq \mathop{\Pr}_{\ff \sim \dyes}
\big[B \text{ rejects } \ff\hspace{0.03cm}\big] + o(1). \] 
\end{lemma}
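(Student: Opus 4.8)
The plan is to mirror the structure developed for the monotonicity lower bound (Lemmas~\ref{maintechnical}--\ref{typeB}), adapting the signature/pruning machinery to the single-level addressing of the unateness construction while tracking the new ingredient: the \emph{special variable} attached to each term. First I would introduce a stronger oracle (the analogue of Section~\ref{sec:mono}'s full signature) that, on a query $x$, returns the term signature of $x$ (which term, if any, $x$ uniquely satisfies, or whether it satisfies none / at least two, with the two smallest indices in the latter case) together with the value $h_i(x)$ whenever $x$ uniquely satisfies $\bT_i$. As in Lemma~\ref{simple}, $f_{M,T,H}(x)$ (hence $f_{M,T,H,r,s}(x)$, since $r,s$ are known to the reduction once we argue they can be absorbed) is recoverable from this signature, so a lower bound against the stronger oracle suffices. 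Note one wrinkle flagged in the Remark after the distributions: because the truncation is applied \emph{after} XORing with $r\circ s$, I cannot assume queries lie in the middle layers; instead I would fold $r$ into the analysis of what the oracle reveals about $M$-coordinates, or equivalently work with the shifted function and carry the layer constraint through $|x_M|$ rather than $|x|$.

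Next I would set up the signature-tree decomposition exactly as before: view $B$ as a depth-$q$ tree whose edges are labelled by possible signatures, associate to each node $u$ the partial signature map $\phi_u$, and reduce Lemma~\ref{lem:lb} to showing that for every \emph{good} leaf $\ell$, $\Pr_{\HH\sim\calEy}[\text{reach }\ell]\le(1+o(1))\Pr_{\HH\sim\calEn}[\text{reach }\ell]$ after pruning an $o(1)$-mass set of \emph{bad} leaves under the no-distribution (note the direction of the inequality is flipped relative to monotonicity, since here $\Dno$ is the ``structured'' side whose probabilities dominate). For each term $i$ appearing in some signature I would track $A_{i,1}$, the common $1$-coordinates of queried strings that uniquely satisfy $\bT_i$; a bad edge is one where some $|A_{i,1}|$ drops by more than $\alpha\sqrt n\log n$ in a single step (formalizing ``flipping many $1$'s to $0$'s likely leaves the sub-hypercube''), or where the set $P_i$ of strings satisfying $\bT_i$ becomes inconsistent in the sense that some pair of strings in $P_i$ forces $h_i$ to be both a dictator and an anti-dictator on the same coordinate. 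The pruning lemmas from Section~\ref{sec:pruning} apply verbatim once I show the per-node conditional probability of taking such an edge is $o(1/q)$ (or that the $q^*(\ell)$-sum is $o(1)$), and the bipartite-graph counting argument of Lemma~\ref{type1} — re-done with terms of size $\ge n^{1/3}$ drawn from $M$ via independent $1/\sqrt n$ inclusion rather than $\sqrt n$ uniform samples — gives the needed bound on $|V|/|U|$.

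Finally, on a good leaf, conditioning on $(M,T)$ (which appears identically in both probabilities), the only remaining randomness is $\HH$, and reaching $\ell$ imposes, for each active term $i$, that the special variable $k_i$ lies in a specified set ($A_{i,1}$ or its complement-within-middle, depending on whether the observed $h_i$-values are consistent with a dictator or anti-dictator). Under $\Ey$ the functions $h_i$ sharing a special variable must all be of the same type, whereas under $\En$ each is an independent coin flip; I would show the ratio $\Pr_{\En}[\text{reach }\ell]/\Pr_{\Ey}[\text{reach }\ell]$ is $\ge 1-o(1)$ by writing it as a product over terms of ratios of the form $|A_{i,1}|/n$ (times a $\tfrac12$ for the type coin under $\En$ that is matched by the consistency constraint under $\Ey$), using that no bad edge along the path keeps $|A_{i,1}|=(n/2)-O(|P_i|\sqrt n\log n)=\Omega(n)$, and summing the error terms via $\sum_i |P_i|\le 2q$ and the size of each $|P_i|$. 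The main obstacle I expect is the two-level bookkeeping being \emph{absent} here but replaced by a subtler issue: because consistency now couples terms that happen to share a special variable, the event ``$B$ is sure $f\sim\Dno$'' is exactly the event that two breached terms collide in their special variable, and I must argue this is unlikely with $q=n^{2/3}/\log^3 n$ queries — roughly, breaching one term costs $\tilde\Omega(\sqrt n)$ queries and a collision among $\tilde O(q/\sqrt n)=\tilde O(n^{1/6})$ breached terms among $\Omega(n^{1/2})$ possible special variables is unlikely; making this quantitative, and making it robust to adaptivity (the algorithm may partially breach many terms), is where the real work lies, and it is also why the term sizes were tweaked to $\Theta(\sqrt n/2)$ with the $n^{1/3}$ lower-bound conditioning.
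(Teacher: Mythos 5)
Your overall architecture --- a stronger signature oracle, a signature tree, pruning of bad edges, a per-good-leaf ratio computation, and the identification of special-variable collisions as the one event that genuinely separates $\Dyes$ from $\Dno$ --- is the same as the paper's. But there is a missing idea that the paper devotes an entire subsection to and that your plan cannot do without: the \emph{balance} property of the tree with respect to the random pair $(\bM,\rr)$. Your $E_1$-type pruning ("a single query shrinks $A_i$ by a lot with probability $o(1/q^2)$") rests on the bipartite counting argument of Lemma~\ref{type1}, which only gives a small ratio $|V|/|U|$ if many of the coordinates where the new query $x$ disagrees with $P_i$ are simultaneously (i) in $M$, (ii) common $1$'s of $P_i$ after XOR with $r$, and (iii) set to $0$ in $x$ --- these are the only coordinates that threaten $\bT_i(x)=1$, since $\bT_i\subseteq M$ and is monotone. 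A deterministic tree offers no such guarantee against a fixed $(M,r)$: the disagreeing coordinates could all lie in $\overline{M}$, or all be $0\to1$ flips, and then $V=U$ and the pruning fails. The paper fixes this by first proving (Lemma~\ref{urur2}) that a \emph{random} $(\bM,\rr)$ makes the tree balanced with probability $1-o(1)$, via a Chernoff bound union-bounded over the $2^{O(q)}$ nodes and query subsets, and only then runs the pruning for balanced trees. Note also that this union bound forces the shrinkage threshold to be at least roughly $q\log n=n^{2/3}\log n$; your proposed threshold $\alpha\sqrt n\log n$ would give a failure probability of only $\exp(-\Omega(\sqrt n\log n))$ per node, which does not survive the union bound over $2^{\Theta(n^{2/3}/\log^3 n)}$ events.

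Two further problems. First, the per-leaf inequality you state is in the wrong direction: to conclude $\Pr_{\Dno}[\text{reject}]\le\Pr_{\Dyes}[\text{reject}]+o(1)$ you must prune under the no-distribution (as you do) \emph{and} show $\Pr_{\En}[\text{reach }\ell]\le(1+o(1))\Pr_{\Ey}[\text{reach }\ell]$ for good reject leaves; you assert the reverse, $\Pr_{\En}/\Pr_{\Ey}\ge 1-o(1)$, both in your setup and in your final step, and that inequality cannot be combined with your pruning to yield the lemma. Moreover the correct direction is not a bookkeeping triviality: under $\Ey$ the reaching events for different safe terms are coupled through the shared shift $\bs$, and the paper needs the $\bB_i,\bY_i,\bW_i$ truncation together with Claim~\ref{cl:real-nums} to decouple the product; your "matched by the consistency constraint" remark does not engage with this correlation. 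Second, your heuristic for the collision bound has the wrong quantities: terms are breached either by driving $|A_i\cap\overline M|$ below $n/10$ (which costs $\Omega(n^{1/3}/\log n)$ queries each, given the $n^{2/3}\log n$ shrinkage cap) or by becoming inconsistent (probability $O(\log n/n^{1/3})$ per query), so the right bound on $|I_B|$ is $n^{1/3}/\log n$, not $\tilde O(n^{1/6})$, and the special variables range over $\overline M$ of size $n/2$, not $\Omega(\sqrt n)$. You correctly flag this as where the real work lies, but the parameters you propose would not lead you to the bad-edge definitions that make the argument close.
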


\subsection{Balanced decision trees}\label{ueiu}

Let $B$ be a $q$-query deterministic algorithm, i.e., a binary decision tree of depth at most $q$ in which
  each internal node is labeled a query string $x\in \{0,1\}^n$ and each leaf is labelled
  ``accept'' or ``reject.''
Each internal~node $u$ has one $0$-child and one $1$-child.
For each internal node $u$, we use $Q_u$ to denote the set of strings queried so far
  (not including the query $x$ to be made at $u$).

Next we give the definition of a $q$-query tree $B$ being
  \emph{balanced} with respect to a subset $M \subset [n]$ of size $n/2$ and a string $r \in \{0,1\}^M$   
  (as the $\bM$ and $\br$ in the procedure that generates $\Dyes$ and $\Dno$).
After the definition we show that, when 
  both $\bM$ and $\br$ are drawn uniformly at random (as in the procedure),
  $B$ is balanced with respect~to $\bM$ and $\br$ with probability at least $1-o(1)$.

\begin{definition}[Balance]\label{def:balance}
We say $B$ is \emph{balanced} with respect to a subset $M \subset [n]$ of size~$n/{2}$ and $r \in \{0,1\}^{M}$ if for every internal node $u$~of $B$
  \emph{(}letting $x$ be the query at $u$\emph{)} and every $Q\subseteq Q_u$, with
\begin{equation}\label{iiififif}
A=\big\{k\in [n]: \forall\hspace{0.05cm} y, y' \in Q,\hspace{0.05cm} y_k = y_k' \big\}\quad\text{and}\quad 
A' = \big\{k \in [n] : \forall\hspace{0.05cm} y, y' \in Q \cup \{x \},\hspace{0.05cm} y_k = y_k' \big\},
\end{equation}
the set $\Delta= A \setminus A'$ having size at least $n^{2/3} \log n$ implies that   
\begin{equation}\label{iiifififififif}\Delta_{1}=\big\{k\in \Delta \cap M: x_k \oplus r_k = 0\ \text{and}\ \forall\hspace{0.05cm} y\in Q,\hspace{0.05cm} y_k \oplus r_k= 1\big\}
\end{equation}
has size at least $n^{2/3}\log n/8$.
\end{definition}
\begin{lemma}\label{urur2}
Let $B$ be a $q$-query decision tree. Then $B$ is balanced with respect to a subset $\bM \subset [n]$ of size $n/{2}$ and 
  an $\rr\in  \{0,1\}^{\bM}$, both drawn uniformly at random, with probability at least $1-o(1)$
\end{lemma}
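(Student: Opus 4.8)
The plan is to fix an internal node $u$ with query $x$ and a subset $Q \subseteq Q_u$, and show that the ``bad'' event---that $\Delta = A \setminus A'$ has size at least $n^{2/3}\log n$ but $\Delta_1$ has size less than $n^{2/3}\log n / 8$---happens with probability at most $\exp(-\omega(\log n))$ over the random choice of $\bM$ and $\rr$. Since $B$ has depth $q = n^{2/3}/\log^3 n$ and each $Q_u$ has at most $q$ strings, there are at most $2^q \cdot 2^q = 2^{O(q)}$ pairs $(u, Q)$ to consider; as long as the per-pair failure probability beats $2^{-\omega(q)}$, a union bound over all such pairs finishes the lemma. So the real work is the single-pair tail bound, and we need the failure probability to be something like $\exp(-n^{2/3}/\textrm{polylog})$ or smaller.

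First I would unpack what $\Delta$ is in terms of the queries. Note that $A$ (resp. $A'$) is exactly the set of coordinates on which all strings of $Q$ (resp. $Q \cup \{x\}$) agree, so $\Delta = A \setminus A'$ is the set of coordinates $k$ on which all of $Q$ agrees but $x$ disagrees with that common value. Crucially, $\Delta$ depends only on the query strings $x$ and $Q$, not on $\bM$ or $\rr$. Now condition on $|\Delta| \geq n^{2/3}\log n$. For each $k \in \Delta$, let $c_k \in \{0,1\}$ be the common value of $Q$ on coordinate $k$ (so $x_k = \overline{c_k}$). The set $\Delta_1$ consists of those $k \in \Delta$ with three properties: (i) $k \in \bM$; (ii) $x_k \oplus \rr_k = 0$, i.e.\ $\rr_k = x_k = \overline{c_k}$; and (iii) for all $y \in Q$, $y_k \oplus \rr_k = 1$, i.e.\ $\rr_k = \overline{c_k}$ (using $y_k = c_k$). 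Conditions (ii) and (iii) are the \emph{same} condition, namely $\rr_k = \overline{c_k}$. So $\Delta_1 = \{k \in \Delta \cap \bM : \rr_k = \overline{c_k}\}$.

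With this simplification the tail bound is routine. Fix $\Delta$ with $|\Delta| \geq n^{2/3}\log n$. The set $\bM$ is a uniformly random $n/2$-subset of $[n]$, and conditioned on $\bM$, the string $\rr \in \{0,1\}^{\bM}$ is uniform. So for each $k \in \Delta$, the probability that $k \in \bM$ is $1/2$ (up to lower-order terms from sampling without replacement), and conditioned on $k \in \bM$, the event $\rr_k = \overline{c_k}$ has probability $1/2$ independently across the coordinates of $\bM$. Hence $\mathbb{E}[|\Delta_1|] \approx |\Delta|/4 \geq n^{2/3}\log n / 4$. To get concentration I would either (a) first reveal $\bM \cap \Delta$, which is a hypergeometric random variable tightly concentrated around $|\Delta|/2$ by a Chernoff/Hoeffding bound for sampling without replacement (so $|\bM \cap \Delta| \geq |\Delta|/3$ except with probability $\exp(-\Omega(|\Delta|)) = \exp(-\Omega(n^{2/3}\log n))$), and then (b) note that $|\Delta_1|$ is a sum of $|\bM \cap \Delta|$ i.i.d.\ Bernoulli$(1/2)$ variables, which is at least $|\bM \cap \Delta|/3 \geq |\Delta|/9 \geq n^{2/3}\log n/8$ except with probability $\exp(-\Omega(|\Delta|))$. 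Both failure probabilities are $\exp(-\Omega(n^{2/3}\log n))$, which dominates the $2^{O(q)} = \exp(O(n^{2/3}/\log^3 n))$ union bound with room to spare, giving $1 - o(1)$.

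The main obstacle, such as it is, is bookkeeping rather than any genuine difficulty: one must be careful that $\bM$ and $\rr$ are coupled (first $\bM$, then $\rr$ depending on $\bM$), so the cleanest route is to condition on $\bM$ first and handle the hypergeometric fluctuation of $|\bM \cap \Delta|$ separately from the binomial fluctuation of $|\Delta_1|$ given $\bM$. One should also double-check the constant $1/8$ works with whatever slack the two-stage bound loses (the expectation is $|\Delta|/4$, and $1/8$ leaves a factor-$2$ cushion, which is ample). Finally, the union bound is over pairs $(u, Q)$ with $Q \subseteq Q_u$; since $|Q_u| \le q$ and there are at most $2^q$ nodes, the total is at most $2^q \cdot 2^q$, comfortably absorbed. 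I do not anticipate needing anything beyond standard concentration inequalities and the observation collapsing conditions (ii) and (iii).
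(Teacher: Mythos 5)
Your proposal is correct and follows essentially the same route as the paper's proof: fix an internal node $u$ and a subset $Q\subseteq Q_u$ with $|\Delta|\ge n^{2/3}\log n$, show via a Chernoff bound that $|\Delta_1|< n^{2/3}\log n/8$ happens with probability $\exp(-\Omega(n^{2/3}\log n))$, and union-bound over the at most $2^{O(q)}$ pairs $(u,Q)$ --- the paper states this in three lines, while you additionally spell out the useful observation that conditions (ii) and (iii) in the definition of $\Delta_1$ collapse to the single condition $\rr_k=\overline{c_k}$, making the expectation $\approx|\Delta|/4$. The one nit is arithmetic: your stated constants ($|\bM\cap\Delta|\ge |\Delta|/3$ and then $|\Delta_1|\ge |\bM\cap\Delta|/3$) give only $|\Delta|/9$, which falls just short of $n^{2/3}\log n/8$ at the boundary $|\Delta|=n^{2/3}\log n$, so the Chernoff deviations must be taken closer to the means (e.g.\ a factor $0.45$ at each stage) --- a trivial fix you already flagged as needing verification.
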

\begin{proof}
Fix an internal node $u$ and a $Q\subseteq Q_u$ such that $|\Delta|\ge n^{2/3}\log n$.
Then the probability {over the draw of $\bM$ and $\br$ of $\mathbf{\Delta}_{1}$} 
being smaller than $n^{2/3}\log n/8$ is at most 
  $\exp(-\Omega(n^{2/3}\log n))$ using~the Chernoff bound.~The lemma follows by a union bound as there are at most $O(2^q)$ choices for $u$
  and $2^q$ choices for $Q$.
\end{proof}

Lemma \ref{lem:lb} follows from the following lemma. 

  
\begin{lemma}\label{truetrue}
Let $B$ be a $q$-query tree that is balanced with respect to $M$ and $r$. Then we have
\begin{equation}\label{ororor}
\mathop{\Pr}_{\substack{\TT,\HH\sim\En(M),\bs} }\big[B \text{ rejects } f_{M,\TT,\HH,r,\bs}\hspace{0.03cm}\big] \leq \mathop{\Pr}_{\TT,\HH\sim\Ey(M),\bs} 
\big[B \text{ rejects } f_{M,\TT,\HH,r,\bs}\hspace{0.03cm}\big] + o(1). 
\end{equation}
where $\TT\sim \calE(M)$ and $\bs\sim \{0,1\}^{\overline{M}}$.
\end{lemma}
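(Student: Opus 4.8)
The plan is to follow the template of the monotonicity argument (the proof of Lemma~\ref{maintechnical2}): pass to a stronger \emph{signature} oracle, regard $B$ as a signature tree, prune a negligible family of \emph{bad} leaves, and compare the two probabilities on the surviving \emph{good} leaves. Throughout $M$ and $r$ are fixed, $B$ is balanced with respect to $(M,r)$ (for a large constant $\alpha$), and the randomness is $\TT\sim\calE(M)$, $\HH$ from $\Ey(M)$ or $\En(M)$, and $\bs\sim\{0,1\}^{\overline{M}}$. On a query $x$ the signature oracle of Section~\ref{sec:unate-oracle} returns the \emph{term signature} of $x$ --- whether $x\oplus(r\circ\bs)$ satisfies no term, a unique term $\TT_i$ (revealing $i$), or at least two terms (revealing the two smallest such indices) --- together with $\bh_i\big(x\oplus(r\circ\bs)\big)$ whenever a unique term is satisfied. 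As in Lemma~\ref{simple}, $\boldf(x)$ is recoverable from this data and from $|x_M|$, so it suffices to prove (\ref{ororor}) against algorithms using this oracle. As before, each node $u$ carries a signature map on the queries $Q_u$ made so far, which induces a tuple $(I;P;R;A;\rho)$: $I$ is the set of indices of uniquely-satisfied terms, $P_i$ (resp.\ $R_i$) the queries that uniquely satisfy (resp.\ are known to not satisfy) $\TT_i$, $A_{i,0},A_{i,1}$ the coordinates on which $P_i$ is constantly $0$ resp.\ $1$, and $\rho_i\colon P_i\to\{0,1\}$ the record of observed values $\bh_i$.

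\textbf{Pruning.} Because the truncation in $\boldf$ is applied \emph{after} the XOR by $r$, we may not restrict queries to a middle layer; the balance hypothesis (Definition~\ref{def:balance}, valid by Lemma~\ref{urur2}) plays that role instead. Call an edge $(u,v)$, with $x$ the query at $u$, \emph{bad} if either (i) $|A_{i,1}\setminus A_{i,1}'|\ge\alpha n^{2/3}\log n$ or $|A_{i,0}\setminus A_{i,0}'|\ge\alpha n^{2/3}\log n$ for some $i\in I$ (the analogue of Types~1--2 in Definition~\ref{typeAdef}), or (ii) some $P_i$ was ``consistent with a single coordinate of $\overline{M}$'', i.e.\ $\{k\in\overline{M}:\forall\, y\in P_i,\ y_k\oplus\rho_i(y)=b\}\ne\emptyset$ for some $b\in\{0,1\}$, but this fails after $x$ is added (the analogue of Types~3--4); a leaf is bad if its root path contains a bad edge. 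For (i), fixing all randomness except the term $\TT_i$: the event that $x\oplus(r\circ\bs)$ still uniquely satisfies $\TT_i$ while collapsing $\ge\alpha n^{2/3}\log n$ coordinates of $A_{i,1}$ forces, by balance, at least $\alpha n^{2/3}\log n/8$ of those coordinates into $M$ with $x$ newly ``agreeing with $r$'' there, and since $\TT_i$ is a random subset of $M$ of expected size $\sqrt{n}/2$ hitting each element with probability $1/\sqrt{n}$, the same bipartite double-counting as in Lemma~\ref{type1} makes this probability $o(1/q^2)$; Lemma~\ref{simplepruning} and a union bound over $i\in I$ give $o(1)$ overall. (This is the technical reason the terms here are subsets of $M$ rather than fixed-length tuples.) For (ii), fixing all randomness except the special variable $k_i$ of $\bh_i$, which is then essentially uniform over $\overline{M}$: the failure of consistency forces $k_i$ into the $o(n)$-sized set of coordinates on which $x$ disagrees with the record of $P_i$, and Lemma~\ref{simplepruning} again gives $o(1)$. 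Altogether this is the content of the pruning lemma (Lemma~\ref{lem:prunebad}): a yes-instance reaches a bad leaf with probability $o(1)$.

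\textbf{Good leaves, and the main obstacle.} Fix a good accept-leaf $\ell$; as in the proof of Lemma~\ref{goodleaves}, condition on the terms $\TT=T$ compatible with $\ell$ (compatibility of the term signatures depends only on $T$). Writing $\bh_i\big(x\oplus(r\circ\bs)\big)=x_{k_i}\oplus\bs_{k_i}$ in the yes-case and $=x_{k_i}\oplus\gamma_i$ with $\gamma_i$ uniform and independent in the no-case, reaching $\ell$ is equivalent to $k_i\in B_i^{(\epsilon_i)}$ for every $i\in I$, where $B_i^{(b)}=\{k\in\overline{M}:\forall\, x\in P_i,\ x_k\oplus\rho_i(x)=b\}$ and $\epsilon_i$ is $\bs_{k_i}$ (yes) or $\gamma_i$ (no); goodness of $\ell$ guarantees $|B_i^{(0)}|+|B_i^{(1)}|=n/2$ when $|P_i|=1$ and $\ge n/2-O(|P_i|\,n^{2/3}\log n)$ in general. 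In the no-case the pairs $(k_i,\epsilon_i)$ are independent and uniform, so $\mathop{\Pr}_{\HH\sim\En(M),\bs}[\text{reach }\ell\mid T]=\prod_{i\in I}(|B_i^{(0)}|+|B_i^{(1)}|)/n$; in the yes-case the $k_i$ remain independent but $\epsilon_i=\bs_{k_i}$, so the signs of any two terms with a common special variable are tied, and a short computation shows that, up to the collision structure of $(k_i)_{i\in I}$, the ratio of the two probabilities is $\mathbb{E}_{\bs}\big[\prod_i|B_i^{(\bs)}|\big]/\prod_i\mathbb{E}_{\bs}\big[|B_i^{(\bs)}|\big]$, which equals $1$ when the $k_i$ are pairwise distinct. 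The crux --- the genuinely new point relative to monotonicity --- is to show this ratio is $1+o(1)$: two queries uniquely satisfying \emph{distinct} terms that share a special variable yield, in the no-case, a certificate the algorithm cannot obtain in the yes-case, so one must argue, via the balance condition (which prevents the algorithm from engineering two such queries with atypical $\overline{M}$-distance) together with the $\Omega(\cdot)$-query cost of localizing a term's special variable established in the pruning step, that the number of terms the algorithm can ``breach'' is $o(\sqrt n)$, whence by the birthday bound over the $|\overline{M}|=n/2$ possible special variables no useful collision occurs except with probability $o(1)$, and on the complementary event the yes- and no-probabilities coincide. Averaging over $\TT\sim\calE(M)$ then gives $\mathop{\Pr}_{\Ey}[\text{reach }\ell]\le(1+o(1))\mathop{\Pr}_{\En}[\text{reach }\ell]$ for every good accept-leaf, and combining with the pruning lemma exactly as Lemmas~\ref{badleaf} and \ref{goodleaves} are combined to prove Lemma~\ref{maintechnical2} yields (\ref{ororor}). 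I expect this control of the $\bs$-induced sign-coupling --- turning ``useful collision $\Rightarrow$ two expensive localizations'' into a quantitative bound that survives the product over all $i\in I$ --- to be the main obstacle.
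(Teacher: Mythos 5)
Your high-level architecture (signature oracle, pruning, comparison on good leaves, with collisions of special variables as the only genuine distinguisher) matches the paper's; note also that the paper's own proof of Lemma~\ref{truetrue} is merely a two-line change of coordinates reducing to $M^*=[n/2]$, $r^*=0^{[n/2]}$ and invoking Lemma~\ref{finalmain} --- the substance you are attacking lives in Lemmas~\ref{finalfinalmain}, \ref{lem:prunebad} and \ref{lem:goodleaves}. There are, however, two genuine gaps. First, your bad event (ii) cannot be pruned. As you formalize it (the set $\{k\in\overline{M}:\forall\, y\in P_i,\ y_k\oplus\rho_i(y)=b\}$ becoming empty for both $b$), the event has probability zero --- the true special variable of $\bh_i$ always survives, in both distributions --- so pruning it buys nothing; and if you instead mean the paper's notion of $P_i$ becoming inconsistent (no single value $\rho_i$ shared by all of $P_i$), the per-query probability of that is $\Theta(|A_i\setminus A_i'|/n)=\Theta(\log n/n^{1/3})$ once the Type-(i) edges are removed, which over $q=n^{2/3}/\log^3 n$ queries gives $\Theta(n^{1/3}/\log^2 n)$ expected occurrences: the event happens with probability $1-o(1)$, Lemma~\ref{simplepruning} is inapplicable, and any such pruning claim is simply false. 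This is exactly why the paper does \emph{not} prune inconsistencies: it declares such terms \emph{breached}, augments the oracle to reveal their special variables $\delta(i)$, and prunes only when $|I_B|>n^{1/3}/\log n$ (a binomial tail bound) or when two breached terms collide (Definition~\ref{def:bad-edges}). Without this device your leaves do not pin down the special variables of the inconsistent terms, and the ``collision'' you appeal to later is not even determined by your oracle's transcript.

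Second, the good-leaf comparison is not completed. You correctly isolate the ratio $\Ex_{\bs}[\prod_i|B_i^{(\bs)}|]\hspace{0.03cm}/\prod_i\Ex_{\bs}[|B_i^{(\bs)}|]$, but the claim that it ``equals $1$ when the $k_i$ are pairwise distinct'' conflates two different phenomena: collisions among the (few, breached) terms whose special variables are known, and the correlation of the sets $B_i^{(\bs)}$ across the many \emph{safe} terms, all of which are functions of the same shift $\bs$. The latter correlation is present even when all realized special variables are distinct, and establishing $\Ex_{\bs}\big[\prod_{i\in I_S}\big(2|\bB_i|/|A_i\cap\overline{M}|\big)\big]\ge 1-o(1)$ is the heart of Lemma~\ref{lem:goodleaves}; the paper needs the truncated variables $\bW_i$, a Chernoff bound, and Claim~\ref{cl:real-nums} to push the expectation through the product. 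You flag this as ``the main obstacle'' and leave it open, so the proposal does not yet constitute a proof.
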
 
\begin{proof}[Proof of Lemma \ref{lem:lb} assuming Lemma \ref{truetrue}]
To simplify the notation, in the  sequence of equations below we ignore in the subscripts 
  names of distributions from which certain random variables are drawn when it is clear from the context.
Using Lemma \ref{urur2} and Lemma \ref{truetrue}, we have
\begin{align*} 
&\hspace{-1.3cm}\mathop{\Pr}_{\bM,\bT,\HH\sim \En(M),\rr,\bs}\big[B \text{ rejects } f_{\bM, \TT,\HH,\rr,\bs}\hspace{0.03cm}\big]\\[1ex]
&\le \frac{1}{2^{n/2}\cdot  \binom{n}{n/2}}\cdot \sum_{M,r}\hspace{0.1cm}
\mathop{\Pr}_{\TT,\HH\sim \En(M),\bs}\big[B \text{ rejects } f_{M,\TT,\HH,r,\bs}\hspace{0.03cm}\big]\\&
\le \frac{1}{2^{n/2}\cdot \binom{n}{n/2}}\cdot \sum_{M, r:\hspace{0.05cm} \text{balanced $B$}} \hspace{0.1cm}\mathop{\Pr}_{\TT,\HH\sim \En(M),\bs}\big[B \text{ rejects } f_{M,\TT,\HH,r,\bs}\hspace{0.03cm}\big]+o(1)\\
&\le \frac{1}{2^{n/2}\cdot \binom{n}{n/2}}\cdot \sum_{M, r:\hspace{0.05cm} \text{balanced $B$}} \hspace{0.1cm}\mathop{\Pr}_{\TT,\HH\sim \Ey(M),\bs}\big[B \text{ rejects } f_{M,\TT,\HH,r,\bs}\hspace{0.03cm}\big]+o(1)\\[0.6ex]
&\le\mathop{\Pr}_{\bM,\TT,\HH\sim \Ey(M),\rr,\bs}\big[B \text{ rejects } f_{\bM,\TT,\HH,\rr,\bs}\hspace{0.03cm}\big]+o(1).
\end{align*}
This finishes the proof of Lemma \ref{lem:lb}.
\end{proof}

To prove Lemma \ref{truetrue}, we may consider an adversary that has $M$ of size $n/{2}$ and $r \in \{0,1\}^M$
  in hand and can come up with any $q$-query decision tree $B$ as long as $B$ is balanced with respect
  to $M$ and $r$. Our goal is to show that any such tree $B$ satisfies (\ref{ororor}).
This inspires us to introduce the definition of \emph{balanced} decision trees.

\begin{definition}[Balanced Decision Trees]\label{b-trees}
A $q$-query tree $B$ is said to be \emph{balanced} if it is balanced with respect to 
  $M^*=[n/2]$ and $r^*=0^{[n/2]}\in \{0,1\}^M$.
Equivalently, for every internal node $u$ of~$B$ and every $Q\subseteq Q_u$ \emph{(}letting $A$ and $A'$ denote the sets as defined in \emph{(\ref{iiififif})}\emph{)}, if
$ 
\Delta= A \setminus A'$  
has size~at~least $n^{2/3}\log n$, then the set $ 
\Delta_1$ 
  as defined in \emph{(\ref{iiifififififif})} using $M^*$ and $r^*$
  has size at least $n^{2/3}\log n/8$.
\end{definition}

With Definition \ref{b-trees} in hand, we use the following lemma to prove Lemma \ref{truetrue}.
  
\begin{lemma}\label{finalmain}
Let $B$ be a balanced $q$-query decision tree. Then we have
\begin{equation}\label{ororor2}
\mathop{\Pr}_{\TT,\HH\sim \En(M^*),\bs}\big[B \text{ rejects } f_{M^*,\TT,\HH,r^*,\bs}\hspace{0.03cm}\big] \leq \mathop{\Pr}_{\TT,\HH\sim \Ey(M^*),\bs}
\big[B \text{ rejects } f_{M^*,\TT,\HH,r^*,\bs}\hspace{0.03cm}\big] + o(1), 
\end{equation}
where $\TT\sim \calE(M^*)$ and $\bs\sim \{0,1\}^{\overline{M^*}}$.
\end{lemma}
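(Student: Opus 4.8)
The plan is to follow the template of the monotonicity lower bound (the proof of Lemma~\ref{maintechnical2}): replace the standard oracle by a stronger one, view $B$ as a signature tree, prune a small family of ``bad'' leaves, and show that on the surviving leaves the yes- and no-distributions are nearly indistinguishable. First I would introduce a signature oracle suited to the unateness construction (this is what Section~\ref{sec:unate-oracle} sets up): on a query $x$ it returns the \emph{term signature} of $x$ (the index of the unique term $\bT_i$ satisfied by $x$, or a symbol recording that no term, or several terms, are satisfied) together with the value $h_{\bGamma(x)}(x)$ when $x$ satisfies a unique term; moreover, once the queries satisfying $\bT_i$ have narrowed its special variable down to a small candidate set --- we call this \emph{breaching} $\bT_i$ --- the oracle also reveals that special variable and its orientation. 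As in Lemma~\ref{simple}, $f(x)$ is recoverable from this richer signature (the truncation handles $x$ outside the middle band), so it suffices to prove the analogue of Lemma~\ref{maintechnical2} against algorithms with the new oracle. Viewing $B$ as a signature tree, Lemma~\ref{finalmain} reduces to: (a) for every good leaf $\ell$, $\Pr_{\En(M^*)}[\ell]\le(1+o(1))\Pr_{\Ey(M^*)}[\ell]$; and (b) $\Pr_{\En(M^*)}[\text{bad leaves}]=o(1)$. Note that, unlike the monotonicity case, here it is the far distribution $\Dno$ that gives the bottom functions more freedom, so pruning is done under $\En(M^*)$ and the ratio at good leaves compares $\En(M^*)$ against $\Ey(M^*)$.

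Second, the bad edges come in two flavors. The first is a \emph{term-shrinkage} event: adding a query $x$ causes the set $A_{i,1}$ of coordinates of $M^*$ forced to $1$ by all previously-queried strings satisfying $\bT_i$ to shrink by at least $n^{2/3}\log n$. Since $\bT_i$ is a random subset of $M^*$ with $\bT_i\subseteq A_{i,1}$, such a drop is very unlikely, and this is exactly where the balanced-tree hypothesis (Definition~\ref{b-trees}) enters: it guarantees that the dropped coordinates contain $\Omega(n^{2/3}\log n)$ coordinates of $M^*$ on which $x$ is $0$ while every earlier query is $1$, and these feed a bipartite-graph counting argument in the style of Lemma~\ref{type1} (adapted to subset-terms). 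The second flavor is a \emph{collision/consistency} event, where a breach reveals a special variable already known to be the special variable of an earlier breached term, or more generally a query disagrees with the consistency of some $P_i$ in a way that forces its special variable into a set already containing another term's special variable. I would prune the first flavor via a Lemma~\ref{type1}-style argument over $\bT\sim\calE'(M^*)$, and the second via a birthday-type bound, using that a breach is expensive (each breach consumes many queries, because shrinking a candidate special-variable set past the pruning threshold is slow), so few terms are breached, while the special variables are independent and uniform over $\overline{M^*}$, a set of size $n/2=\Omega(n)$.

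Third, for the good leaves the key structural observation is a clean comparison of $\Ey(M^*)$ and $\En(M^*)$. Under $\Dyes$ the effective bottom function attached to $\bT_i$ is $x\mapsto x_{k_i}\oplus\bs_{k_i}$ with $k_i$ uniform in $\overline{M^*}$ and $\bs$ shared across all terms, while under $\Dno$ it is $x\mapsto x_{k_i}\oplus b_i$ with $b_i$ an independent fair coin. Writing reaching $\ell$ as the conjunction over queried terms $i$ of the event $(k_i,b_i)\in S_i(\ell)$ for an explicit set $S_i(\ell)$ determined by the signature data, one sees that \emph{conditioned on the $k_i$ of the queried terms being pairwise distinct} the vector $(\bs_{k_i})_i$ is i.i.d.\ uniform and the constraint $k_i\in C_i(\ell)$ is the same event in both models, so the two conditional laws over bottom functions coincide. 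Hence $\Pr_{\Ey(M^*)}[\ell]\ge\Pr_{\En(M^*)}[\ell\ \wedge\ \text{the }k_i\text{ distinct}]=\Pr_{\En(M^*)}[\ell]-\Pr_{\En(M^*)}[\ell\ \wedge\ \text{some collision}]$; summing over leaves labeled ``reject'' and invoking the collision pruning yields $\Pr_{\En(M^*)}[\text{reject}]\le\Pr_{\Ey(M^*)}[\text{reject}]+o(1)$, which is~(\ref{ororor2}).

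The main obstacle is the quantitative collision bound feeding steps two and three: one must show that with $q=n^{2/3}/\log^3 n$ queries only $o(\sqrt{n})$ terms can be pinned down enough for a collision among their special variables to matter, which forces a lower bound of order $\tilde\Omega(n^{1/6})$ on the query cost of breaching a single term, and it is the product of these two quantities --- cost per breach versus the $\Theta(\sqrt n)$ breaches needed for a birthday collision in $\overline{M^*}$ --- that produces the exponent $2/3$. The shrinkage pruning of the $\overline{M^*}$-side candidate sets, again driven by the balanced-tree hypothesis, is what makes breaching this expensive, and carefully matching the constants and log powers to $q=n^{2/3}/\log^3 n$ (so that both the term-shrinkage pruning of Lemma~\ref{type1}-type and the final product over good leaves come out $o(1)$) is the most delicate part of the argument.
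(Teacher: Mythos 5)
Your overall architecture matches the paper's: a stronger signature oracle that additionally reveals the special variable of a term once it is \emph{breached}, a reduction to balanced signature trees, pruning of shrinkage events, of the event that too many terms get breached, and of collisions among revealed special variables, and finally a yes-vs-no comparison at the good leaves. The pruning steps you describe are essentially the paper's $E_1$, $E_2$, $E_3$ arguments (including the ``each breach is expensive, so few terms are breached'' bound and the birthday bound over $\overline{M^*}$).

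However, your third step has a genuine gap. You reduce the good-leaf comparison to the event that the special variables $k_i$ of \emph{all queried terms} $i\in I$ are pairwise distinct, and then charge the failure of this event to the collision pruning. But $|I|$ can be as large as $2q=\Theta(n^{2/3}/\log^3 n)$, while the special variables are drawn from $\overline{M^*}$ of size $n/2$; since $|I|^2\gg n$, a collision among them occurs with probability $1-o(1)$, so $\Pr_{\En}[\ell\wedge\text{some collision}]$ summed over reject leaves is not $o(1)$ and your inequality $\Pr_{\Ey}[\mathrm{rej}]\ge\Pr_{\En}[\mathrm{rej}]-\Pr_{\En}[\mathrm{rej}\wedge\text{coll}]$ becomes vacuous. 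The collision pruning only controls collisions among the \emph{breached} terms, of which there are at most $n^{1/3}/\log n$ --- that is why the birthday bound works there --- and it cannot be stretched to cover all of $I$. Moreover, collisions among safe terms genuinely change the law of the observed answers (two safe terms sharing a special variable give perfectly correlated responses under $\Ey$ but independent ones under $\En$), so one cannot simply discard that event; one must show its \emph{aggregate} effect on $\Pr[\ell]$ is small. The paper does this by writing the yes-probability at a good leaf as $\left(\tfrac{1}{n}\right)^{|I_B|}\mathbf{E}_{\bs}\bigl[\prod_{i\in I_S} 2|\bB_i|/|A_i\cap\overline{M^*}|\bigr]$, where the factors are correlated only through the shared string $\bs$, and then proving this expectation is $1-o(1)$ via a truncation-plus-concentration argument (each factor is $1\pm O(\log^2 n/\sqrt n)$ with overwhelming probability because $|A_i\cap\overline{M^*}|=\Omega(n)$ for safe terms, and there are only $O(n^{2/3})$ factors). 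This product-concentration step is the missing ingredient; without it, the exact cancellation you invoke holds only conditionally on an event that almost never happens.
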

\begin{proof}[Proof of Lemma \ref{truetrue} assuming Lemma \ref{finalmain}]
Let $B$ be a $q$-query tree that is balanced with respect to $M$ and $r\in \{0,1\}^M$,
  which are not necessarily the same as $M^*$ and $r^*$.
Then we use $B,M$ and $r$ to define a new $q$-query tree $B'$ that is balanced (i.e., with respect to $M^*$ and 
  $r^*$): $B'$ is obtained by replacing every query $x$ made in $B$ by $x'$,
  where $x'$ is obtained by first doing an XOR of $x$ with $r$ over coordinates in $M$
  and then reordering the coordinates of the new $x$ using a bijection between $M$ and $M^*$. Note that
  $B'$ is balanced and satisfies that the LHS of (\ref{ororor}) for $B'$ is the same as
  the LHS of (\ref{ororor2}).
The same holds the RHS as well. Lemma \ref{truetrue} then follows from Lemma \ref{finalmain}.
\end{proof}

For simplicity in notation, we fix $M$ and $r$ to be $[n/2]$ and $0^{[n/2]}$ in the 
  rest of the section. 
We also write $\calE$ for $\calE(M)$, $\Ey$ for $\Ey(M)$, and $\En$ for $\En(M)$.
Given $T$ in the support of $\calE$, $H$ from the support of $\Ey$ or $\En$, and $s\in \smash{\{0, 1\}^{\overline{M}}}$, we 
write $$f_{T,H,s}\eqdef f_{M,T,H,r,s}$$
for convenience. Then the goal (\ref{ororor2}) of Lemma \ref{finalmain} becomes
$$
\mathop{\Pr}_{\TT,\HH\sim \En,\bs}\big[B \text{ rejects } f_{ \TT,\HH,\bs}\hspace{0.03cm}\big] \leq \mathop{\Pr}_{\TT,\HH\sim \Ey ,\bs}
\big[B \text{ rejects } f_{ \TT,\HH, \bs}\hspace{0.03cm}\big] + o(1), 
$$
where $\bT\sim \calE$ and $\bs\sim \{0,1\}^{\overline{M}}$ in both probabilities.


\begin{remark}\label{haharemark} Since $B$ works on $f_{\TT,\HH, \bs}$ and $r$ is all-$0$,
  the multiplexer $\bGamma_\TT$ is first truncated 
  according to~$|x_M|$, the number of $1$'s in the first $n/2$ coordinates. 
As a consequence, we may  
  assume without loss generality from now on that $B$ only
  queries strings $x$ that have $|x_{M}|$ lying between $n/4 \pm\sqrt{n}$.
We will refer to them as strings in the middle layers in the rest of the section.
\end{remark}

\subsection{Balanced signature trees}\label{sec:unate-oracle}

At a high level we proceed in a similar fashion as in the monotonicity lower bound. 
We first define a new and stronger oracle model that returns more than
  just $f(x)\in \{0,1\}$ for each query~$x\in \{0,1\}^n$.
Upon each query $x\in \{0,1\}^n$,
  the oracle returns the so-called \emph{signature} of $x\in \{0,1\}^n$ with respect
  to $(T,H,s)$ when hidden function is $f_{T,H,s}$ (and it will become clear that $f_{T,H,s}(x)$ is 
  determined by the signature of $x$);
  in addition,  the oracle also reveals the special variable $k$ of a term $T_i$ 
  when the latter is \emph{breached} (see Definition \ref{defbreach}). 
Note that the revelation of special variables is unique in the unateness lower bound.
On the other hand, the definition of signatures in this section 
  is much simpler due to the single-level construction
  of the multiplexer map.
  
After the introduction of the stronger oracle model, 
  ideally we would like to prove that every $q$-query deterministic algorithm
  $C$ with access to the new oracle can only have at most $o(1)$ advantage
  in rejecting the function $f_{\TT,\HH, \bs}$ when $\TT\sim \calE$, $\HH\sim \En$ and $\smash{\bs \sim \{0,1\}^{\overline{M}}}$  
  as compared to
  $\TT$, $\HH\sim \Ey$ and $\bs$.
It turns out that we are only able to prove this when $C$ is 
    represented by a so-called~\emph{balanced signature tree}, a definition
  closely inspired by that
  of balanced decision trees in Definition \ref{b-trees}.
This suffices for us to prove
  Lemma \ref{finalmain} since only balanced decision trees are considered there.

Recall the definition of $e_i$ and $e_{i,i'}$ from Section~\ref{sec:mono}.
We first define signatures syntactically~and then semantically.
The two definitions below are simpler than their counterparts in Section~\ref{sec:mono}
  (as we only have one level of multiplexing in $\Gamma_T$).
By Remark \ref{haharemark}, we can assume without loss of generality that 
  every string queried lies in the middle layers.  

\begin{definition}
We use $\frakP$ to denote the set of all triples $(\sigma, a, b)$, where $\sigma \in \{0, 1, *\}^N$ and $a, b$ $\in \{0, 1, \perp\}$ satisfy the following properties:
\begin{flushleft}\begin{enumerate}
\item $\sigma$ is either 1) the all 0-string $0^N$, 2) $e_i$ for some $i \in [N]$, or 3) $e_{i, i'}$ for some $i<i' \in [N]$.\vspace{-0.08cm} 
\item If $\sigma$ is of case 1), then $a = b = \sperp$. If $\sigma$ is of case 2), then $a \in \{0, 1\}$ and $b = \sperp$. Lastly, if $\sigma$ is of case 3), then we have $a, b \in \{0, 1\}$. 
\end{enumerate}\end{flushleft}
\end{definition}

\begin{definition}\label{def:unate-sig}
We say $(\sigma, a, b)\in \frakP$ is the \emph{signature} of 
  a string $x \in \{0, 1\}^n$ in the middle layers with respect to $(T, H,s)$ if it satisfies the following properties:
\begin{flushleft}\begin{enumerate}
\item $\sigma \in \{0, 1, *\}^N$ is set according to the following three cases: 
1) $\sigma = 0^N$ if 
  $T_i(x) = 0$ for all $i \in [N]$; 2) $\sigma = e_i$ if $T_i(x) = 1$ is the unique term that is
  satisfied by $x$; 3) $\sigma = e_{i, i'}$ if $i<i'$ and $T_i(x) = T_{i'}(x) = 1$ are the first two terms that are satisfied by $x$.\vspace{-0.08cm} 
\item If $\sigma$ is in case 1), then $a = b = \sperp$. If $\sigma$ is in case 2) with $\sigma = e_i$, then $a = h_i(x\oplus s)$\hspace{0.1cm}\footnote{Recall that $x\oplus s$
  is the $n$-bit string obtained from $x$ after an XOR with $s$ over coordinates in $\overline{M}$.} and $b = \sperp$. If $\sigma$ is in case 3) with $\sigma = e_{i, i'}$, then $a = h_i(x\oplus s)$ and $b = h_{i'}(x\oplus s)$. 
\end{enumerate}\end{flushleft}

The signature of a set $Q \subset \{0, 1\}^n$ of strings in the middle layers with respect to $(T,H,s)$ is the map $\phi \colon Q \to \frakP$ such that $\phi(x)$ is the signature of $x$ with respect to $(T, H,s)$. 
\end{definition}

Next we show that $f_{T,H,s}(x)$ is uniquely determined by the signature of $x$.
Thus, the new oracle is at least as powerful as the standard one.
The proof is similar to that of Lemma~\ref{simple}.

\begin{lemma}
\label{lem:simul}
Let $T$ be from the support of $\calE$, $H$ be from the support of $\Ey$ or $\En$ and $s \in \{0, 1\}^{\overline{M}}$. Given an $x \in \{0, 1\}^n$ in the middle layers, $f_{T, H, s}(x)$ is uniquely determined by the signature $(\sigma,a,b)$ of $x$ with respect to $(T,H, s)$.
\end{lemma}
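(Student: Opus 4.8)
The plan is to mirror the proof of Lemma~\ref{simple}, doing a case analysis on the term signature $\sigma$ of $x$ and using the definition of $f_{T,H,s}(x) = f_{M,T,H}\big(x\oplus(r^*\circ s)\big)$ together with the truncation. First I would recall that since we have restricted attention to strings in the middle layers (Remark~\ref{haharemark}), the truncation clauses ``$|x_M|<(n/4)-\sqrt{n}$'' and ``$|x_M|>(n/4)+\sqrt{n}$'' do not apply: for $x$ in the middle layers, $|x_M|$ lies in $n/4\pm\sqrt{n}$, and since $r^*=0^{[n/2]}$ we have $(x\oplus(r^*\circ s))_M = x_M$, so $x\oplus(r^*\circ s)$ is also in the middle layers and we land in one of the last three branches of the definition of $f_{M,T,H}$. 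Hence $f_{T,H,s}(x)$ is determined by $\Gamma_T\big(x\oplus(r^*\circ s)\big)$, and when this value is some $i\in[N]$, by $h_i\big(x\oplus(r^*\circ s)\big) = h_i(x\oplus s)$ (again using $r^*$ all-zero, so the XOR only affects $\overline{M}$ coordinates).

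Next I would observe that $\Gamma_T$ depends only on the coordinates in $M$ (each term $T_i$ is a subset of $M$), and since $(x\oplus(r^*\circ s))_M = x_M$, we have $\Gamma_T\big(x\oplus(r^*\circ s)\big) = \Gamma_T(x)$. So the value of $f_{T,H,s}(x)$ is determined by $\Gamma_T(x)$ (plus the $h_i$ values when relevant), and the point is that $\Gamma_T(x)$ is exactly recoverable from $\sigma$: if $\sigma = 0^N$ then $\Gamma_T(x) = 0^*$ and $f_{T,H,s}(x) = 0$; if $\sigma = e_{i,i'}$ then $x$ satisfies at least two terms so $\Gamma_T(x) = 1^*$ and $f_{T,H,s}(x) = 1$; and if $\sigma = e_i$ then $\Gamma_T(x) = i \in [N]$, so $f_{T,H,s}(x) = h_i(x\oplus s)$, which by Definition~\ref{def:unate-sig} (case 2) is precisely the value $a$ recorded in the signature. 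This gives a clean three-case statement, exactly paralleling the five-case statement of Lemma~\ref{simple} but collapsed because there is only one level of multiplexing.

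I would write the proof as: ``If $\sigma = 0^N$, then $f_{T,H,s}(x) = 0$; if $\sigma = e_{i,i'}$ for some $i<i'\in[N]$, then $f_{T,H,s}(x) = 1$; if $\sigma = e_i$ for some $i\in[N]$, then $f_{T,H,s}(x) = a$,'' with a one-line justification of each case along the lines above. There is essentially no technical obstacle here — this is a bookkeeping lemma — but the one point requiring a little care, and the thing I would make sure to state explicitly, is why we may ignore the truncation: this is exactly where Remark~\ref{haharemark} is used, and it is worth noting that the argument crucially relies on $r^* = 0^{[n/2]}$ and on $M^*=[n/2]$ having been fixed, since otherwise the truncation is applied to $|x_M \oplus r_M|$ rather than $|x_M|$ and an arbitrary queried $x$ need not have its image under the XOR in the middle layers. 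Everything else is immediate from the definitions of $\Gamma_T$, of $f_{M,T,H}$, and of the signature.
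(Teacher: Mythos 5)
Your proposal is correct and takes essentially the same route as the paper: the paper's proof is exactly the three-case analysis on $\sigma$ (with $f(x)=0$, $f(x)=a=h_i(x\oplus s)$, and $f(x)=1$ respectively), and your additional remarks about why the truncation is moot for middle-layer strings and why $\Gamma_T(x\oplus(r^*\circ s))=\Gamma_T(x)$ are just making explicit what the paper leaves implicit via Remark~\ref{haharemark} and the fixing of $M^*,r^*$.
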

\begin{proof}
Let $f=f_{T,H,s}$. We consider the following three cases:
\begin{enumerate}
\item (No term is satisifed) If $\sigma=0^N$, then $f(x) = 0$.\vspace{-0.1cm}
\item (Unique term satisfied) If 
If $\sigma=e_i$ for some $i\in [N]$, then $f(x)=h_i(x\oplus s)=a$.\vspace{-0.1cm}

\item (Multiple terms satisfied) 
If $\sigma=e_{i,i'}$ for some $i<i'\in [N]$, then $f(x)=1$.
\end{enumerate}
This finishes the proof of the lemma.
\end{proof}

We have defined the signature of $x$ with respect to $(T,H,s)$,
  which is the first thing that the new oracle returns upon a query $x$.
Let $Q\subset \{0,1\}^n$ be a set of strings in the middle layers (and consider $Q$ as 
  the set of queries made so far by an algorithm).
Next we define terms \emph{breached} by $Q$ with respect to a triple $(T,H,s)$.
Upon a query $x$, the new oracle checks if there is any term(s) newly breached 
  after $x$ is queried; if so, the oracle also reveals its special variable in $\overline{M}$.

For this purpose, let
  $\phi:Q\rightarrow \frakP$ be the signature of $Q$ with respect to $(T,H,s)$,
  where $\phi(x)=(\sigma_x,a_x,b_x)$.
We say $\phi$ induces a $5$-tuple $(I;P;R;A;\rho)$ if it satisfies the following properties:
\begin{enumerate}
\item The set $I\subseteq [N]$ is given by $$I=\big\{i\in [N]: \exists\hspace{0.03cm}x\in Q\ \text{with}\ 
  \sigma_{x,i}=1\big\}.$$
\item $P=(P_i:i\in I)$ and $R=(R_i:i\in I)$ are two tuples of subsets of $Q$. For each $i\in I$,
$$
P_i=\big\{x\in Q: \sigma_{x,i}=1\big\}\quad\text{and}\quad 
R_i=\big\{x\in Q: \sigma_{x,i}=0\big\}.
$$
\item $A=(A_i,A_{i,0},A_{i,1}:i\in I)$ is a tuple of subsets of $[n]$. For each $i\in I$,
  $A_i=A_{i,0}\cup A_{i,1}$ and 
$$
A_{i,1}=\big\{k\in [n]: \forall\hspace{0.05cm}x\in P_i, x_k=1\big\}\quad\text{and}\quad
A_{i,0}=\big\{k\in [n]: \forall\hspace{0.05cm}x\in P_i, x_k=0\big\}.
$$
\item $\rho=(\rho_i:i\in I)$ is a tuple of functions $\rho_i:P_i\rightarrow \{0,1\}$ with
  $\rho_i(x)=a_x$ if either $\sigma_x=e_i$\newline or  $\sigma_x=e_{i,i'}$ for some $i'>i$,
  and $\rho_i(x)=b_x$ if $\sigma_x=e_{i',i}$ for some $i'<i$, for each $x\in P_i$,\newline i.e.,
  $\rho_i(x)$ gives us the value of $h_i(x\oplus s)$ for each $x\in P_i$.
\end{enumerate}

The following fact is reminiscent of Fact~\ref{fact:a-terms-clauses}.

\begin{fact}
Let $\phi:Q\rightarrow \frakP$ be the signature of $Q$ with respect to $(T,H,s)$. Then 
  for each $i\in I$, we have $T_i\subseteq A_{i,1}\cap M$,
  $T_i(x)=0$ for all $x\in R_i$, and $h_i(x\oplus s)=\rho_i(x)$ for each $x\in P_i$.
\end{fact}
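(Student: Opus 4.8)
The plan is to verify each of the three assertions by directly unwinding the definition of the signature map $\phi$ and of the $5$-tuple $(I;P;R;A;\rho)$ it induces; this is a bookkeeping lemma in the spirit of Fact~\ref{fact:a-terms-clauses}, so no probabilistic argument is involved. First I would establish $T_i \subseteq A_{i,1} \cap M$. The inclusion $T_i \subseteq M$ is immediate, since each term is drawn as a random subset of $M$. For $T_i \subseteq A_{i,1}$, fix $k \in T_i$ and an arbitrary $x \in P_i$; by definition $\sigma_{x,i} = 1$, and inspecting the possible shapes of a term signature ($e_i$, or $e_{i,i'}$ with $i<i'$, or $e_{i',i}$ with $i'<i$) one sees that $\sigma_{x,i}=1$ forces $T_i(x) = 1$. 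Since $T_i(x) = \bigwedge_{j \in T_i} x_j$, this yields $x_j = 1$ for all $j \in T_i$, in particular $x_k = 1$; as $x \in P_i$ was arbitrary, $k \in A_{i,1}$.

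Next, for $T_i(x) = 0$ whenever $x \in R_i$: here $\sigma_{x,i} = 0$, and I would case on the shape of $\sigma_x$. If $\sigma_x = 0^N$ then no term is satisfied by $x$; if $\sigma_x = e_j$ then $\sigma_{x,i}=0$ gives $i \neq j$ while $j$ is the unique satisfied index; and if $\sigma_x = e_{j,j'}$ with $j < j'$ the two smallest satisfied indices, then (using the convention that coordinates above $j'$ are $*$, not $0$) $\sigma_{x,i} = 0$ forces $i < j'$ and $i \neq j$, so $i$ is not a satisfied index. In every case $T_i(x) = 0$.

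Finally, for $h_i(x \oplus s) = \rho_i(x)$ when $x \in P_i$: now $\sigma_{x,i}=1$, so $\sigma_x$ is $e_i$, or $e_{i,i'}$ for some $i'>i$, or $e_{i',i}$ for some $i'<i$. In the first two cases Definition~\ref{def:unate-sig} gives $a_x = h_i(x\oplus s)$, and $\rho_i(x) = a_x$ by the definition of $\rho$; in the third case Definition~\ref{def:unate-sig} gives $b_x = h_i(x\oplus s)$, and $\rho_i(x) = b_x$. Either way $\rho_i(x) = h_i(x\oplus s)$, which completes the argument.

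There is no genuine obstacle here; the statement is a packaging fact collecting consequences of the definitions for later use (in particular for arguing, as with Fact~\ref{fact:a-terms-clauses}, that the observed signatures constrain which terms and special variables remain possible). The only mild care needed is tracking the three-way case split on $\sigma_x$ together with the convention from Section~\ref{sec:mono} that in $e_{i,i'}$ the coordinates with index larger than $i'$ are set to $*$ rather than $0$, so that ``$\sigma_{x,i} = 0$'' genuinely encodes ``$x$ does not satisfy $T_i$.''
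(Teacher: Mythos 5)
Your proof is correct and follows exactly the route the paper intends: the statement is presented as a Fact that is immediate from Definition~\ref{def:unate-sig} and the definition of the induced tuple $(I;P;R;A;\rho)$, and your three-way unwinding (including the careful handling of the $*$ convention in $e_{i,i'}$ and the $a_x$ versus $b_x$ cases for $\rho_i$) is precisely the verification the paper leaves implicit.
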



We introduce the similar concept of consistency as in Definition~\ref{def:consistent}.

\begin{definition}\label{def:incons}ß
Let $(I;P;R;A;\rho)$ be the tuple induced by $\phi:Q\rightarrow \frakP$.
For each $i\in I$, we say $P_i$ is \emph{$1$-consistent} if $\rho_i(x)=1$ for all $x \in P_i$,
  and \emph{$0$-consistent} if  $\rho_i(x) = 0$ for all $x\in P_i$.
We say $P_i$ is \emph{consistent} if it is either $1$-consistent or $0$-consistent;
we~say $P_i$ is \emph{inconsistent} otherwise.
\end{definition}

We are now ready to define terms breached by $Q$ with respect to $(T,H,s)$.

\begin{definition}[Breached Terms]\label{defbreach}
Let $Q\subset \{0,1\}^n$ be a set of strings in the middle layers.
Let~$T$ be from the support of $\calE$, $H$ be from the support of $\calEy$ or $\calEn$,
  and $\smash{s\in \{0,1\}^{\overline{M}}}$.
Let $(I;P;R; A; \rho)$ be the tuple induced by the signature of $Q$ with respect to $(T,H,s)$.
We say the $i$th term is breached by $Q$ with respect to $(T,H,s)$, for some $i\in I$, if 
  at least one of the following two events occurs:
(1) $P_i$ is inconsistent or (2) 
  $|A_i \cap \overline{M}| \leq n/10$.
We say the $i$th term is \emph{safe} if it is not breached.
\end{definition}

We can now finish the formal definition of our new oracle model.
Upon each query $x$, the oracle first returns the signature of $x$ with respect to the 
  hidden triple $(T,H,s)$. It then examines if there is any newly breached term(s)
  (by Definition \ref{defbreach} there can be at most two such terms since
  $x$ can be added to at most two $P_i$'s) and return the special variable $k\in \overline{M}$
  of the newly breached term(s).
As a result, if $Q$ is the set of queries made so far, 
  the information returned by the new oracle can be summarized as a 
  $6$-tuple $(I;P;R;A;\rho;\delta)$, where
\begin{flushleft}\begin{enumerate}
\item $(I;P;R;A;\rho)$ is the tuple induced by the signature of $Q$ with 
  respect to $(T,H,s)$;
\item Let $I_B\subseteq I$ be the set of indices of terms breached by $Q$, and 
  let $I_S=I\setminus I_B$ denote the safe terms.
Then $\delta:I_B\rightarrow \overline{M}$ satisfies that $k=\delta(i)$ is the 
  special variable of the $i$th term in $h_i$.
\end{enumerate}\end{flushleft}
  
We view a $q$-query deterministic algorithm $C$ with access to the new oracle as 
  a \emph{signature~tree}, in which each leaf is labeled  ``accept'' or ``reject'' 
  and each internal node $u$ is labeled a query string $x\in \{0,1\}^n$ in the middle layers.
Each internal node $u$ has $|\frakP|\cdot O(n^2)$ children 
  with each~of its edges $(u,v)$~labeled by (1) a triple $(\sigma,a,b)\in \frakP$ as the signature
  of $x$ with respect to the hidden $(T,H,s)$, and (2)
  the special variable of any newly breached (at most two) term(s). 
Each node $u$ is associated with a 
  set $Q_u$ as the set of queries made so far (not including $x$), its signature $\phi:Q_u\rightarrow\frakP$,
  and a tuple $(I;P;R;A;\rho;\delta)$ as the summary of all information received from the oracle so far.
(Note that one can fully reconstruct the signature $\phi$ from $(I;P;R;A;\rho)$ so it is redundant to keep $\phi$.
We keep it because sometimes it is (notation-wise) easier to work with $\phi$ directly.)

Finally we define \emph{balanced} signature trees. 

\begin{definition}[Balanced Signature Trees]
We say that a signature tree $C$ is \emph{balanced} if for~any internal node $u$ of $C$
  \emph{(}letting $x$ be the query to make and
   $(I;P;R;A;\rho;\delta)$ be the summary so far\emph{)}~and any $i\in I$,
  $\Delta=\{j\in A_i: x_j\ \text{disagrees with}\ y_j\ \text{of $y\in P_i$}\}$ having size at least
  $ n^{2/3}\log n$ implies that $\Delta_{1}=\{k\in \Delta \cap M: x_k=0\ \text{and}\ 
  \forall\hspace{0.05cm} y \in  {P_i},\hspace{0.05cm} y_k = 1 \}$ has size at least $ n^{2/3}\log n/8$.
\end{definition}
Note that the definition above  is weaker compared to 
  Definition \ref{b-trees} of balanced decision trees, in the sense that the condition on
  $\Delta_1$ in the latter
  applies to any subset of queries $Q\subseteq Q_u$ (instead of only $P_i$'s). 
Lemma \ref{finalmain} follows from the lemma below on balanced signature trees.

\begin{lemma}\label{finalfinalmain}
Let $C$ be a $q$-query balanced signature tree. Then we have
\begin{equation}\label{ororor3}
\mathop{\Pr}_{\TT,\HH\sim \En,\bs}\big[\hspace{0.01cm}C \text{ rejects } (\TT,\HH, \bs)\hspace{0.03cm}\big] \leq \mathop{\Pr}_{ \TT,\HH\sim \Ey,\bs}
\big[\hspace{0.01cm}C \text{ rejects } (\TT,\HH, \bs)\hspace{0.03cm}\big] + o(1). 
\end{equation}
\end{lemma}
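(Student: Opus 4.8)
The plan is to mirror, step by step, the analysis of the monotonicity lower bound (Lemmas~\ref{maintechnical2}, \ref{badleaf} and~\ref{goodleaves}), adapted to the single‑level multiplexer and to the new \emph{breach} mechanism. As there, it suffices to show that for every leaf $\ell$ of $C$ either $\Pr_{\TT,\HH\sim\En,\bs}[(\TT,\HH,\bs)\text{ reaches }\ell]\le(1+o(1))\,\Pr_{\TT,\HH\sim\Ey,\bs}[(\TT,\HH,\bs)\text{ reaches }\ell]$, or $(\TT,\HH\sim\En,\bs)$ reaches $\ell$ with probability $o(1)$; summing over the leaves labelled ``reject'' then yields~(\ref{ororor3}), exactly as Lemma~\ref{maintechnical2} followed from its two ingredients. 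Note that here one prunes under $\En$ (the far‑from‑unate case), since it is the ``reject'' probability that must be bounded from above. Accordingly I would fix a large constant $\alpha$, define a set of \emph{bad edges} of $C$, call a leaf bad if its root‑to‑leaf path contains one, and prove a Pruning Lemma ($\Pr_{\En}[\text{reach a bad leaf}]=o(1)$) together with a Good Leaves Lemma (the displayed ratio bound for every good leaf).

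I expect three flavours of bad edge. (i) \emph{Term escape}: along an edge $(u,v)$, for some $i\in I$ the pinned set $A_{i,1}$ loses at least $\alpha n^{2/3}\log n$ coordinates. Here the defining property of a balanced signature tree is exactly what is needed: it forces $\Omega(n^{2/3}\log n)$ of those lost coordinates to lie in $\Delta_1\subseteq A_{i,1}\cap M$, and since $\TT_i$ is (conditionally) a random subset of $A_{i,1}\cap M$ including each coordinate with probability $1/\sqrt n$ — this is precisely why the unateness construction redefines terms as $1/\sqrt n$‑random subsets of $M$ — the probability that $\TT_i$ avoids all of $\Delta_1$, hence that the query $x$ can still be added to $P_i$, is at most $(1-1/\sqrt n)^{\Omega(n^{2/3}\log n)}=\exp(-\Omega(n^{1/6}\log n))=o(1/q^2)$, so Lemma~\ref{simplepruning} applies after a union bound over $i$, just as in the proof of Lemma~\ref{type1}. (ii) \emph{Revealed collision}: an edge that reveals the special variable of a term whose variable equals that of an earlier breached term. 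Using (i), every query added to a $P_i$ disagrees with $P_i$ on $<n^{2/3}\log n$ coordinates, so a breach of $T_i$ by inconsistency requires the uniformly random special variable (supported on a set of size $|A_i\cap\overline M|>n/10$) to be caught by one of those small disagreement sets — probability $O(n^{-1/3}\log n)$ per query — while a breach of $T_i$ by $|A_i\cap\overline M|$ needs $\Omega(n^{1/3}/\log n)$ queries to $T_i$; a charging argument over the $\le q$ queries shows that with probability $1-o(1)$ at most $O(n^{1/3}/\log^2 n)$ terms are ever breached, and since the special variables are i.i.d.\ uniform on $\overline M$ (size $n/2$), a birthday bound gives collision probability $O((n^{1/3}/\log^2 n)^2/n)=o(1)$. (iii) \emph{Correlation imbalance}: a technical event (made precise below) ruling out leaves where the observed data makes the yes‑ and no‑probabilities differ by more than $1+o(1)$; it is rare under $\En$ because it amounts to a quadratic form in the independent dictator/anti‑dictator sign coins deviating far from its mean, which can be tracked node by node and pruned via Lemma~\ref{simplepruning} or~\ref{complicatedpruning}.

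For the Good Leaves Lemma one conditions on $\TT=T$ (so that all term‑signatures and all revealed special variables are determined) and, exactly as in the proof of Lemma~\ref{goodleaves}, reduces to bounding $\Pr_{\HH\sim\En,\bs}[\text{reach }\ell]\big/\Pr_{\HH\sim\Ey,\bs}[\text{reach }\ell]$ for each $T$ giving positive $\Ey$‑probability. The structural point is the role of $\bs$: under $\En$ each $h_i$ independently carries a fair dictator/anti‑dictator sign, so the no‑probability factorizes as $\prod_{i\in I}\frac{|A_i\cap\overline M|}{2|\overline M|}$ (with $|A_i\cap\overline M|$ read as $1$ for breached $i$), whereas under $\Ey$ all $h_i$ are dictatorships and the values of $\bs$ demanded by terms sharing a special variable must agree. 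Writing $W_i(s)\subseteq\overline M$ for the set of special variables of $T_i$ consistent with $\ell$ given $s$, one gets $\Pr_{\En}[\ell\mid T]=\prod_i\frac{\mathbb E_{\bs}|W_i(\bs)|}{|\overline M|}$ and $\Pr_{\Ey}[\ell\mid T]=\mathbb E_{\bs}\bigl[\prod_i\frac{|W_i(\bs)|}{|\overline M|}\bigr]$, so the ratio is exactly $\prod_i\mathbb E_{\bs}|W_i(\bs)|\big/\mathbb E_{\bs}\prod_i|W_i(\bs)|$. Each $|W_i(\bs)|$ is a sum of $|A_i\cap\overline M|=\Omega(n)$ independent fair coins, hence concentrated; the odd joint moments of the centred $|W_i(\bs)|$ vanish, so the ratio is $\exp\bigl(-\sum_{i<j}\mathrm{Cov}(|W_i|,|W_j|)/(\mathbb E|W_i|\,\mathbb E|W_j|)+\text{(higher even orders)}\bigr)$, where $\mathrm{Cov}(|W_i|,|W_j|)$ equals $\tfrac14(-1)^{\rho_i\oplus\rho_j}$ times the query‑determined imbalance $\bigl||A^{=}_{i,j}|-|A^{\ne}_{i,j}|\bigr|$ (the two sets being the coordinates of $\overline M$ on which the queried strings of $T_i$ and $T_j$ are both pinned and agree, resp.\ disagree). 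For a good leaf this covariance sum is $o(1)$, so the ratio is $1+o(1)$.

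The main obstacle is making event (iii) precise and carrying out the bookkeeping. One must choose the imbalance event (roughly: some pair $(i,j)$ has $\bigl||A^{=}_{i,j}|-|A^{\ne}_{i,j}|\bigr|$ too large, or the signed sum $\sum_{i<j}(-1)^{\rho_i\oplus\rho_j}c_{i,j}$ drifts too far), show it is rare under $\En$ by a per‑node concentration argument on the sign coins — which is what handles the fact that the coefficients $c_{i,j}$ are themselves adaptively chosen — and then bound the aggregate covariance and higher even‑order error for good leaves. The latter uses $\sum_i|P_i|\le 2q$ with $q=n^{2/3}/\log^3 n$, together with the balance and breach constraints, to show the error is $o(1)$; this is the analogue of the $\sum_{i,j}\min\{|P_{i,j}|^2,|P_i|\}=O(q^{3/2})$ estimate at the end of the proof of Lemma~\ref{goodleaves}, and it is what pins down the exponent $2/3$. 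Everything else is a routine adaptation of Section~\ref{sec:mono}.
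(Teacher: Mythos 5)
Your architecture matches the paper's: prune bad edges under $\En$, then prove the ratio bound leaf by leaf. Your events (i) and (ii) correspond to the paper's edge sets $E_1$, $E_2$ and $E_3$ (the paper keeps ``too many breached terms'' and ``collision'' separate and handles collisions by a per-node union bound via Claim~\ref{ffffclaim} rather than a global birthday bound, but both routes work), and your reformulation of the good-leaves lemma as lower-bounding $\Ex_{\bs}\big[\prod_{i}|W_i(\bs)|\big]$ against $\prod_i\Ex_{\bs}\big[|W_i(\bs)|\big]$ is exactly the paper's reduction to its inequality~(\ref{eq:final}).

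The gap is in how you close that last step. Your plan is a moment/cumulant expansion controlled by pairwise covariances, supported by an unspecified extra pruning event (iii); but the assertion ``for a good leaf this covariance sum is $o(1)$'' is not a consequence of events (i)--(ii), and without further restrictions it fails. Two safe terms whose queried strings agree on all of $\overline{M}$ (a perfectly legitimate strategy: fix the $\overline{M}$-half of the query and vary the $M$-half to probe different terms) have $|\mathrm{Cov}(|W_i|,|W_j|)|=\Theta(n)$, hence normalized covariance $\Theta(1/n)$, and with $\Theta(q^2)=\Theta(n^{4/3}/\log^6 n)$ pairs the naive bound on your covariance sum is $\Theta(n^{1/3}/\log^6 n)=\omega(1)$; rescuing it requires exhibiting cancellation across pairs (and then controlling all higher even-order joint moments), which is exactly the ``bookkeeping'' you defer and is not routine. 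The paper never confronts this. It instead truncates and linearizes: writing the product as $\prod_{i\in I_S}(1-\bY_i)$, it caps each $\bY_i$ at $\log^2 n/\sqrt{n}$ (exceeded with superpolynomially small probability by a Chernoff bound over $\bs$, since $\bY_i$ is a sum of $\Omega(n)$ independent terms) and proves the \emph{pointwise} inequality $\prod_i(1-\bY_i)\ge (1-o(1))\big(1-\sum_i \bW_i\big)$ (Claim~\ref{cl:final-claim}, via the elementary product inequality of Claim~\ref{cl:real-nums}). Taking expectations then requires only the first moments $\Ex_{\bs}[\bW_i]=O\big(1/(n^{2/3}\log n)\big)$ summed over $|I_S|\le 2q$ safe terms --- no covariances and no event (iii). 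You should replace your step (iii) and the moment expansion with this truncate-and-linearize argument; the rest of your outline then goes through.
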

\begin{proof}[Proof of Lemma \ref{finalmain} assuming Lemma \ref{finalfinalmain}]
Let $B$ be a $q$-query balanced decision tree.~We use $B$ to obtain~a~$q$-query algorithm $C$ with access to the new oracle 
   by simulating
  $B$ as follows: Each time a string~$x$~is queried, $C$ uses the signature of $x$ returned by the oracle to
  extract $f(x)$ (using Lemma \ref{lem:simul}) and then continue the simulation of $B$.
One can verify that the corresponding signature tree of $C$ is balanced and
  the probabilities of $C$ rejecting $(\TT,\HH, \bs)$ in both cases are the same as $B$.
\end{proof}

Before moving on to the proof of Lemma \ref{finalfinalmain},
  let us remark on how the new oracle may help an algorithm distinguish between functions in $\Dy$ and $\Dn$. Suppose that a deterministic algorithm~$C$ is at some internal node $u$ with a tuple $(I; P; R; A; \rho;\delta)$. 
For each breached $i\in I_B$,
  the algorithm knows that $h_i$ is either a dictator or anti-dictator 
 with special variable $x_k$ with $k=\delta(i)$. 
By inspecting the $y_k$ of a $y \in P_i$ and $\rho_i(y)$, the algorithm can also deduce whether 
  $h_i(x\oplus s)$ is $x_k$ or $\overline{x_k}$.
The former suggests that $x_k$ is monotone and the latter suggests that $x_k$ is anti-monotone.

However, unlike monotonicity testing, observing $h_i(x\oplus s)=\overline{x_k}$ 
  has no indication on whether~$f$ is drawn from $\Dyes$ or $\Dno$:
  indeed $h_i(x\oplus s)$ is equally possible to be $x_k$ or $\overline{x_k}$ in both distributions because 
  of the random bit $s_k$.
But if the algorithm observes a so-called \emph{collision}, i.e.  
  $i,i'\in I_B$ such that $h_i(x\oplus s)=x_k$ and 
  $h_i(x\oplus s)=\overline{x_k}$, then one can safely assert that the hidden function belongs 
  to $\Dn$.
This gives us the crucial insight (as sketched earlier in Section \ref{sec:unatedist}) that leads to a higher unateness testing lower bound than  monotonicity testing: for testing monotonicity, deducing that a variable goes in an anti-monotone direction suffices for a violation; for testing unateness, however, 
one needs to find a collision in order to observe a violation.
While the proof of Lemma \ref{finalfinalmain} is quite technical,
  it follows the intuition that with $q$ queries, it is hard for a balanced signature tree
  to find a collision in breached terms $I_B$, and when no collision is found, it
  is hard to tell where the hidden function is drawn from.




\subsection{Tree pruning}

To prove Lemma \ref{finalfinalmain} on a given balanced $q$-query signature tree $C$,
  we start by identifying a set of \emph{bad edges}~of $C$ and using them to prune the tree. 
\begin{definition}\label{def:bad-edges}
An edge $(u, v)$ in $C$ is  a \emph{bad} edge if at least one of the following events~\mbox{occurs} at $(u,v)$ and none of these events occurs along the root-to-$u$ path \emph{(}letting
  $x$ be the string queried at $u$,
  and $\smash{(I_B \cup I_S;P;R;A;\rho;\delta)}$ and $(I_B' \cup I_S';P';R';A';\rho';\delta')$
  be the summaries at $u$ and $v$, respectively\emph{)}:
\begin{enumerate}
\item For some $i\in I_S$, $|A_{i}\setminus A_i'\hspace{0.03cm}| \geq n^{2/3} \log n$;\vspace{-0.08cm}
\item $|I_B'| >  {n^{1/3}}\big/{\log n}$; or\vspace{-0.08cm}
\item There exist two distinct indices $i, j \in I_{B}'$ with $\delta'(i) = \delta'(j)$.
\end{enumerate}
\end{definition}

We say a leaf $\ell$ of $C$ is a \emph{good} leaf if there is no bad edge along the root-to-$\ell$ path; otherwise,
  $\ell$ is \emph{bad}. The following lemma allows us to focus on good leaves. We defer the proof to Section~\ref{sec:prunebadsec}.

\begin{lemma}[Pruning Lemma]
\label{lem:prunebad}
Let $C$ be a balanced $q$-query signature tree. Then
\[ \mathop{\Pr}_{\TT,\HH \sim \En,\bs}\big[\hspace{0.01cm}\text{$(\TT, \HH, \bs)$  reaches a bad leaf}
\hspace{0.1cm}\big] = o(1). \] 
\end{lemma}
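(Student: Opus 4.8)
The plan is to mirror the structure of the monotonicity pruning lemma (Lemma~\ref{badleaf}): partition the bad edges of $C$ into three sets $E_1,E_2,E_3$ corresponding to the three events in Definition~\ref{def:bad-edges}, with the convention that an edge put in $E_m$ has no bad edge strictly before it, so that the set of bad edges is $E_1\cup E_2\cup E_3$. It then suffices to bound, for each $m$, the probability that $(\TT,\HH,\bs)$ with $\TT\sim\calE$, $\HH\sim\En$, $\bs\sim\{0,1\}^{\overline M}$ follows an edge of $E_m$ by $o(1)$, and conclude by a union bound. Two structural facts will be used throughout. First, conditioned on reaching an internal node $u$ with positive probability, if we fix $\TT_{-i},\HH,\bs$ then the conditional law of $\TT_i$ is that of a random subset of $A_{i,1}\cap M$ (each coordinate kept with probability $1/\sqrt n$) conditioned on $\TT_i$ containing a $0$-coordinate of every $y\in Q_u$ with $\sigma_{y,i}=0$ (at most $q$ such $y$). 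Second, for a term $i$ not breached at $u$, the hidden special variable $k_i\in\overline M$ of $h_i$ is, conditioned on reaching $u$, uniform over $A_i\cap\overline M$ and independent of the special variables $\delta(j)$ already revealed for the breached $j\in I_B$. Establishing the second fact is the conceptual heart of the proof: reaching $u$ reveals about $k_i$ only that all strings of $P_i$ agree on coordinate $k_i$ (this is exactly consistency of $P_i$), and the bit $s_{k_i}$ together with the dictator/anti-dictator coin of $h_i$ is absorbed symmetrically, so each candidate $k\in A_i\cap\overline M$ carries equal posterior mass, not involving the $\delta(j)$'s.

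\textbf{Edges in $E_1$.} Fix $u$ reached with positive probability and a safe $i\in I_S$. If the query $x$ at $u$ makes $|A_i\setminus A_i'|\ge n^{2/3}\log n$, then $x$ is added to $P_i$, so $T_i(x)=1$; since $C$ is balanced, the (now fixed) set $\Delta_1\subseteq A_{i,1}\cap M$ of coordinates on which $x$ is $0$ while all of $P_i$ is $1$ has size at least $n^{2/3}\log n/8$, hence $T_i\cap\Delta_1=\emptyset$. Writing $W=A_{i,1}\cap M\supseteq\Delta_1$ and using the first structural fact, one splits $\TT_i$ into its part in $\Delta_1$ and its part in $W\setminus\Delta_1$; these are independent, the second alone governs the monotone increasing event ``$\TT_i$ kills the relevant $\le q$ strings,'' and one obtains $\mathop{\Pr}[\TT_i\cap\Delta_1=\emptyset\mid\text{reach }u]\le(1-1/\sqrt n)^{|\Delta_1|}\le\exp(-n^{1/6}\log n/8)$. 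Summing over the $\le 2q$ relevant $i$ gives $o(1/q)$ conditional probability at $u$, and Lemma~\ref{simplepruning} yields $\mathop{\Pr}[E_1]=o(1)$. The subset definition of terms used here (as opposed to the sequence-of-variables definition of Section~\ref{sec:mono}) is what lets us replace the bipartite-graph count of Lemma~\ref{type1} by this one-line splitting.

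\textbf{Edges in $E_2$.} Following an $E_2$ edge forces no $E_1$ edge to be followed anywhere (an $E_2$ edge is the first bad edge, and an $E_1$ edge is bad), so before it every drop $|A_i\setminus A_i'|$ is $<n^{2/3}\log n$. A term breached by the second clause of Definition~\ref{defbreach} needs $|A_i\cap\overline M|$ to fall from $n/2$ to $\le n/10$ in steps of size $<n^{2/3}\log n$, hence needs $\Omega(n^{1/3}/\log n)$ queries added to its $P_i$; since $\sum_i|P_i|\le 2q$, there are at most $O(n^{1/3}/\log^2 n)$ such terms, deterministically. For the first clause, ``$x$ breaches term $i$ by inconsistency at $u$'' is contained in ``$k_i$ lies in the fixed set $(A_i\setminus A_i')\cap\overline M$'' of size $<n^{2/3}\log n$, so by the second structural fact its conditional probability given reaching $u$ is $\le n^{2/3}\log n/(n/10)$; summing over the $\le q$ queries (each adding $x$ to at most two $P_i$'s) shows the expected number of inconsistency-breached terms along such a path is $O(n^{1/3}/\log^2 n)$. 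Thus $\Ex\big[\,|I_B'|\text{ at the first bad edge, on paths following no }E_1\text{ edge}\,\big]=O(n^{1/3}/\log^2 n)$, and Markov's inequality gives $\mathop{\Pr}[|I_B'|>n^{1/3}/\log n]=O(1/\log n)=o(1)$ (the pruning lemmas do not apply here directly since the per-edge bound $n^{2/3}\log n/(n/10)$ is not $o(1/q)$).

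\textbf{Edges in $E_3$.} Following an $E_3$ edge likewise forces no $E_1$ or $E_2$ edge to be followed before it, so $|I_B|\le n^{1/3}/\log n$ at every node before the first bad edge. When the query $x$ at a node $u$ newly breaches a term $i^{\ast}$ (at most two such per query), the event that $\delta(i^{\ast})$ collides with one of the $\le n^{1/3}/\log n$ already-revealed special variables is contained in $\{k_{i^{\ast}}=k_j\text{ for some breached }j\}$; since $k_{i^{\ast}}$ is, given reaching $u$, uniform over $A_{i^{\ast}}\cap\overline M$ of size $>n/10$ and independent of the $k_j$'s, its conditional probability is $\le(n^{1/3}/\log n+1)\cdot(10/n)=O(n^{-2/3}/\log n)$ (this holds whether $i^{\ast}$ is breached by the inconsistency clause or the $|A_i\cap\overline M|\le n/10$ clause, since in both cases $|A_{i^{\ast}}\cap\overline M|=\Omega(n)$ at $u$; a collision between two simultaneously breached terms is $O(1/n)$). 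Summing $\mathop{\Pr}[\text{a collision is created at }u\mid\text{reach }u]=O(n^{-2/3}/\log n)$ over the internal nodes (so that $\sum_u\mathop{\Pr}[\text{reach }u]=O(q)$) and applying Lemma~\ref{simplepruning} gives $\mathop{\Pr}[E_3]=O(q\cdot n^{-2/3}/\log n)=O(1/\log^4 n)=o(1)$. Combining the three bounds proves the lemma. The step I expect to require the most care is the second structural fact above, together with the bookkeeping that legitimately turns ``no $E_1$/$E_2$ edge'' into the per-node conditional estimates: the entire $\tilde\Omega(n^{2/3})$ lower bound rests on the special variable of a term being ``fresh'' — uniform and independent of the history — at the moment the oracle reveals it upon breaching.
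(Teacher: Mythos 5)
Your proposal is correct and follows essentially the same route as the paper: the same $E_1/E_2/E_3$ decomposition, the same use of the balanced property plus the subset structure of terms for $E_1$, the same uniformity-and-independence claim for the special variable of a safe term (the paper's Claim~\ref{ffffclaim}) driving both the $E_2$ and $E_3$ bounds, and the same deterministic bound on the terms breached via $|A_i\cap\overline M|\le n/10$. The only deviation is that for the inconsistency-breached terms in $E_2$ you use a first-moment/Markov bound (giving $O(1/\log n)$) where the paper uses a binomial Chernoff bound; both suffice for $o(1)$.
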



We prove the following lemma for good leaves in Section \ref{lem:goodleaves}:
\begin{lemma}[Good Leaves are Nice]
\label{lem:goodleaves}
For any good leaf $\ell$ of $C$, we have
\[ \mathop{\Pr}_{ \TT,\HH\sim\En,\bs}\big[\hspace{0.01cm}(\TT, \HH, \bs) \text{ reaches $\ell$}
\hspace{0.04cm}\big] \leq (1 + o(1))\cdot \mathop{\Pr}_{\TT,\HH\sim\Ey,\bs}
\big[\hspace{0.01cm}(\TT, \HH, \bs)\text{ reaches $\ell$}\hspace{0.04cm}\big]. \]
\end{lemma}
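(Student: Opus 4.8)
The plan is to prove the inequality for each fixed choice of the terms and then average, exploiting that $\TT$ is drawn in exactly the same way in $\En$ and $\Ey$. So fix $T$ in the support of $\calE$, condition on $\TT = T$, and note that it suffices to compare $\Pr_{\HH\sim\En,\bs}[(T,\HH,\bs)\text{ reaches }\ell]$ with $\Pr_{\HH\sim\Ey,\bs}[(T,\HH,\bs)\text{ reaches }\ell]$. The first step is a reformulation of the event ``$(T,H,s)$ reaches $\ell$''. Writing $k_i\in\overline M$ for the special variable of the $i$th term and $\tilde h_i(x):=h_i(x\oplus s)$ for its \emph{effective literal} (a literal over the $n$ variables $\overline M$, so one of $n$ possibilities), reaching $\ell$ is equivalent to: (i) $T$ is compatible with the term signatures $\sigma_x$ recorded in $\phi_\ell$ (a condition on $T$ only); (ii) for every active $i\in I$ the literal $\tilde h_i$ agrees with $\rho_i$ on $P_i$; and (iii) $k_i=\delta(i)$ for every breached $i\in I_B$. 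Indeed, assuming (i), Lemma~\ref{lem:simul} recovers $f_{T,H,s}$ on every query from the $\tilde h_i$'s, and the special variables revealed along the path are exactly the $k_i$ of the terms that become breached.

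The key is then a structural dichotomy between the two laws of $(\tilde h_i)_{i\in[N]}$. When $\HH\sim\En$ and $\bs$ is uniform, the $\tilde h_i$'s are \emph{independent and uniform} over the $n$ literals on $\overline M$ (the independent sign bit of $\En$ absorbs $s_{k_i}$); when $\HH\sim\Ey$ they are each uniform but \emph{consistent}, in that two terms with the same special variable receive the same sign $s_{k_i}$. Hence, conditioning on the special variables $(k_i)_{i\in I}$ of the active terms and restricting to the event that these $k_i$ are pairwise distinct, the probability of reaching $\ell$ is the same in the two models: in both it equals $2^{-|I|}$ if the $k_i$ are feasible (i.e. $k_i\in A_i\cap\overline M$ for safe $i$, $k_i=\delta(i)$ for breached $i$) and $0$ otherwise — in $\En$ because the $|I|$ independent sign bits must each take a prescribed value, in $\Ey$ because the $|I|$ relevant coordinates $s_{k_i}$ are then distinct and independent. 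Since $(\TT,(k_i))$ has the same law in $\En$ and $\Ey$, this gives
\[ \Pr_{\HH\sim\En,\bs}\big[\text{reaches }\ell,\ (k_i)_{i\in I}\text{ pairwise distinct}\big] \;=\; \Pr_{\HH\sim\Ey,\bs}\big[\text{reaches }\ell,\ (k_i)_{i\in I}\text{ pairwise distinct}\big] \;\le\; \Pr_{\HH\sim\Ey,\bs}\big[\text{reaches }\ell\big]. \]

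It remains to bound the ``collision'' contribution $\Pr_{\HH\sim\En,\bs}[\text{reaches }\ell,\ \exists\,i\ne i'\in I:\ k_i=k_{i'}]$ by $o(1)\cdot\Pr_{\HH\sim\En,\bs}[\text{reaches }\ell]$; together with the display above and a rearrangement this yields $\Pr_{\En}[\text{reaches }\ell]\le(1+o(1))\Pr_{\Ey}[\text{reaches }\ell]$, and the degenerate case $\Pr_{\Ey}[\text{reaches }\ell]=0$ is handled directly (at a good leaf no collision among active terms is forced, so $\Pr_{\Ey}[\text{reaches }\ell]=0$ only when $T$ is incompatible or some $\delta(i)\notin A_i\cap\overline M$, in which case $\Pr_{\En}[\text{reaches }\ell]=0$ too). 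Since $\ell$ is good, $|I_B|\le n^{1/3}/\log n$, the $\delta(i)$ are pairwise distinct, and $|I|\le 2q$ with $q=n^{2/3}/\log^3 n$; thus any collision among active terms involves a safe term. Collisions between a safe $i$ and a breached $i'$ are easy: conditioned on reaching $\ell$, the safe special variable $k_i$ is uniform over $A_i\cap\overline M$, a set of size $>n/10$ (because $i$ is safe), so the collision probability is $O(1/n)$ per pair, and there are at most $|I_S|\,|I_B| = O(q\cdot n^{1/3}/\log n)=O(n/\log^4 n)$ such pairs, contributing an $O(1/\log^4 n)=o(1)$ fraction. The main obstacle is a collision between two \emph{safe} terms $i\ne i'$, which the algorithm never observes: the crude birthday estimate $O(|I_S|^2/n)=O(q^2/n)$ is too large here, and one must instead argue, using that the signature tree is balanced, that producing the harmful (sign-conflicting) coincidences among safe special variables is expensive in queries. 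Concretely I expect to mirror the counting behind Lemma~\ref{goodleaves} for monotonicity — where $\sum_{i,j}\min\{|P_{i,j}|^2,|P_i|\}=O(q^{3/2})$ — by showing the contribution of safe--safe collisions is $O(q^{3/2}\cdot\mathrm{polylog}\,n/n)=o(1)$ for $q=n^{2/3}/\log^3 n$, with the balanced condition used to keep the sets $A_i\cap M$ large and to force any query that shrinks some $A_i$ substantially to do so through many $0$-for-$1$ flips inside $M$. Making this last bound precise is the technical heart of the lemma; everything else is the elementary bookkeeping above.
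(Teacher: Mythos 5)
Your setup---fixing $T$, passing to the effective literals $h_i(\cdot\oplus s)$, and the exact identity between the two models on the event that the special variables $(k_i)_{i\in I}$ are pairwise distinct---is correct, as is the $O(1/\log^4 n)$ bound for safe--breached collisions. But the step you defer is not a technical refinement waiting to be filled in: the bound $\Pr_{\En}[\text{reach }\ell,\ \text{collision}]\le o(1)\cdot\Pr_{\En}[\text{reach }\ell]$ that your decomposition needs is false for exactly the leaves that carry essentially all of the probability mass. Conditioned on reaching $\ell$ under $\En$, the safe special variables are independent, each uniform over $A_i\cap\overline{M}$ (the argument of Claim~\ref{ffffclaim}); for the generic leaf in which every query satisfies a fresh term, each $P_i$ is a singleton, so $A_i\cap\overline{M}=\overline{M}$ and the $k_i$ are $|I_S|$ i.i.d.\ uniform draws from a set of size $n/2$. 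With $|I_S|=\Theta(q)=\Theta(n^{2/3}/\log^3 n)$ the birthday paradox gives $\Pr[\text{all distinct}]=\exp(-\Theta(q^2/n))=\exp(-\tilde{\Theta}(n^{1/3}))=o(1)$, so collisions are the typical case and $\Pr_{\En}[\text{reach},\text{collision}]=(1-o(1))\cdot\Pr_{\En}[\text{reach}]$. Restricting to ``harmful'' sign-conflicting collisions does not rescue this: for a colliding pair the two required values of $s_k$ disagree for roughly half of the candidate shared coordinates $k$, so with $\omega(1)$ expected collisions a conflicting one is also present with probability $1-o(1)$. No counting in the spirit of Lemma~\ref{goodleaves} can establish an estimate that is false.

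What your decomposition discards is that collisions are harmless \emph{on average}: when the active terms fall into $m<|I|$ classes with equal special variables and every class is sign-consistent, $\Ey$ pays only $2^{-m}$ rather than $2^{-|I|}$, and over the randomness of $\bs$ this gain exactly offsets the sign-conflicting patterns that contribute $0$. The paper keeps this cancellation by writing $\Pr_{\HH\sim\Ey,\bs}[\text{reach }\ell]$ as $\Ex_{\bs}\big[\prod_{i\in I}\bZ_i\big]$, where for safe $i$ the variable $\bZ_i$ is the fraction of $k\in A_i\cap\overline{M}$ with $y_{i,k}\oplus\bs_k=\alpha_i$; each $\Ex_{\bs}[\bZ_i]$ equals the corresponding factor of $\Pr_{\En}[\text{reach }\ell]$, and the whole lemma reduces to the concentration inequality $\Ex_{\bs}[\prod_i\bZ_i]\ge(1-o(1))\prod_i\Ex_{\bs}[\bZ_i]$, proved via the truncated variables $\bW_i$, Claim~\ref{cl:final-claim} and a Chernoff bound. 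The good-leaf properties ($|I_B|\le n^{1/3}/\log n$, $|A_i\cap\overline{M}|\ge n/10$, consistency) enter only to control the error terms there. Salvaging your route would require carrying out the average over collision patterns explicitly, which amounts to re-deriving this same second-moment computation.
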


Assuming Lemma~\ref{lem:prunebad} and Lemma~\ref{lem:goodleaves}, we can prove Lemma~\ref{finalfinalmain}:
\begin{proof}[Proof of Lemma~\ref{finalfinalmain} assuming Lemma~\ref{lem:prunebad} and Lemma~\ref{lem:goodleaves}]
Let $L$ be the set of leaves of $C$ that are labeled ``reject'' and let $L^* \subseteq L$ be the good ones in $L$. Then we have 
\begin{align*}
\mathop{\Pr}_{\TT,\HH\sim\En,\bs}\big[\hspace{0.01cm}C \text { reject } (\TT, \HH, \bs)\hspace{0.03cm}\big] &= \sum_{\ell \in L}\hspace{0.1cm} \mathop{\Pr}_{\TT,\HH\sim\En,\bs}\big[\hspace{0.01cm}(\TT, \HH, \bs) \text{ reaches }\ell\hspace{0.03cm}\big] \\[0.3ex]
													   &\leq \sum_{\ell \in L^*}\hspace{0.01cm} \mathop{\Pr}_{\TT,\HH\sim\En,\bs}\big[\hspace{0.01cm}(\TT, \HH, \bs) \text{ reaches }\ell\hspace{0.03cm}\big] + o(1) \\[0.3ex]
													   &\leq (1 + o(1))\cdot  \sum_{\ell \in L^*} \mathop{\Pr}_{\TT,\HH\sim\Ey,\bs}\big[\hspace{0.01cm}(\TT, \HH, \bs) \text{ reaches }\ell\hspace{0.03cm}\big] + o(1) \\[0.5ex]
													   &\leq (1 + o(1))\cdot \mathop{\Pr}_{\TT,\HH\sim\Ey,\bs}\big[\hspace{0.01cm}C \text { rejects } (\TT, \HH, \bs)\hspace{0.03cm}\big] + o(1)\\[0.6ex]
													   &\le \mathop{\Pr}_{\TT,\HH\sim\Ey,\bs}\big[\hspace{0.01cm}C \text { rejects } (\TT, \HH, \bs)\hspace{0.03cm}\big]+o(1),
\end{align*}
where we used Lemma~\ref{lem:prunebad} in the second line and 
  Lemma~\ref{lem:goodleaves} in the third line. 
\end{proof}


\subsection{Proof of Lemma~\ref{lem:goodleaves} for good leaves}\label{sec:goodleaves}

The proof of Lemma \ref{lem:goodleaves} is similar in spirit to Lemma \ref{goodleaves}
  for monotonicity.  

Fix a good leaf $\ell$ in $C$. 
We let $Q$ be the set of queries made along the root-to-$\ell$ path,
  $\phi:Q\rightarrow \frakP$ be the signature of $Q$ with $\phi(x)=(\sigma_x,a_x,b_x)$ for each $x\in Q$, and
  let $(I_B \cup I_S;P;R;A;\rho;\delta)$ be the summary associated with $\ell$.
Since $\ell$ is a good leaf, there are no bad edges along the root-to-$\ell$ path. 
Combining this with the definition of breached/safe terms, we have the following list of properties:
\begin{flushleft}\begin{enumerate}
\item For each $i \in I_S$, $|A_i \cap \overline{M}| \geq n/10$;\vspace{-0.08cm}
\item Every $i \in I_S$ is either $1$-consistent or $0$-consistent;\vspace{-0.08cm}
\item $|I_B| \leq n^{1/3} \big/{\log n}$; and\vspace{-0.08cm}
\item For any two distinct indices $i, j \in I_B$, we have $\delta(i) \neq \delta(j)$.
\end{enumerate}
\end{flushleft}  
Let $D=\{\delta(i):i\in I_B\} \subset \overline{M}$ be the special variables 
  of breach terms.
We have $|D| = |I_B|$.  
  
Next we fix a tuple $T$ from the support of $\calE$ such that
  the probability of $(T,\HH, \bs)$ reaching $\ell$ is positive,
  when $\HH\sim \En$ and $\smash{\bs \sim \{0, 1\}^{\overline{M}}}$.
It then suffices to show that 
\begin{equation}\label{eqeqprove}
\Prx_{\HH\sim \Ey, \bs}\big[\hspace{0.01cm}\text{$(T,\HH, \bs)$ reaches $\ell$}\hspace{0.04cm}\big] \geq (1-o(1)) \Prx_{\HH\sim \En, \bs}\big[\hspace{0.01cm}\text{$(T,\HH, \bs)$ reaches $\ell$}\hspace{0.04cm}\big].
\end{equation}
The properties below follow directly from the assumption that the probability of $(T,\HH,\bs)$ reaching $\ell$
  is positive when $\HH\sim \En$ and $\bs\sim\smash{\{0,1\}^{\overline{M}}}$:
\begin{enumerate}
\item For every $x\in Q$ and $i\in [N]$ such that $\sigma_{x,i}\in \{0,1\}$, 
  we have $T_i(x)=\sigma_{x,i}$; and\vspace{-0.08cm}
\item For each $i\in I_B$, letting $k=\delta(i)$, there exists a bit $b$ such that
  $\rho_i(x)=x_k\oplus b$ for all $x\in P_i$.
\end{enumerate}
For each $i\in I_B\cup I_R$ we pick a string $y_i$ from $P_i$ arbitrarily  as a representative and let $\alpha_i=\rho_i(y_i)$.

We first derive an explicit expression for the probability over $\En$ in (\ref{eqeqprove}).
To this end, we note that, given properties listed above, $(T,H,s)$ (with $H$ from the 
  support of $\En$) reaches $\ell$ iff
\begin{flushleft}\begin{enumerate}
\item For each $i\in I_S$, let $k$ be the special variable of $h_i$.
Then we have 
  $k\in A_i\cap \overline{M}$, and $h_i$ is a dictatorship function if
  $y_{i,k}\oplus s_k=\alpha_i$ or an anti-dictatorship if $y_{i,k}\oplus s_k\ne\alpha_i$;
\item For each $i\in I_B$, the special variable of $h_i$ is the same as $k=\delta(i)$ and similarly,
   $h_i$ is a dictatorship function if
  $y_{i,k}\oplus s_k=\alpha_i$ or an anti-dictatorship if $y_{i,k}\oplus s_k\ne\alpha_i$.
\end{enumerate}\end{flushleft}
Thus, once $s$ is fixed, there is exactly one choice of $h_i$ for each $i\in I_B$ 
  and $|A_i\cap \overline{M}|$ choices of $h_i$ for each $i\in I_S$.
Since there are $(n/2)\cdot 2$ choices overall for each $h_i$, 
   the probability over $\En$ in (\ref{eqeqprove}) is
\[  \left( \frac{1}{n} \right)^{|I_B|}\cdot \prod_{i \in I_S} \left( \dfrac{|A_i \cap \overline{M}|}{n} \right).\]

Next we work on the more involved
  probability over $\Ey$ in (\ref{eqeqprove}).
Given properties listed above $(T,H,s)$ (with $H$ from the 
  support of $\Ey$ so every $h_i$ is a dictatorship function) reaches $\ell$ iff
\begin{flushleft}\begin{enumerate}
\item For each $i\in I_S$, let $k$ be the special variable of the dictatorship function $h_i$.
Then we have 
  $k\in A_i\cap \overline{M}$ and $s_k$ satisfies that $y_{i,k}\oplus s_k=\alpha_i$;
\item For each $i\in I_B$, the special variable of $h_i$ is the same as $k=\delta(i)$ and 
   $y_{i,k}\oplus s_k=\alpha_i$.
\end{enumerate}\end{flushleft} 
Note that once $s$ is fixed, these are independent conditions over $h_i$'s
  (among the overall $n/2$~choices for each $h_i$).
As a result, we can rewrite the probability for $\Ey$ as
\begin{equation}\label{eq:prob}
\Ex_{\bs \sim \{0, 1\}^{\overline{M}}} \hspace{0.08cm}\left[\hspace{0.08cm} \prod_{i \in I} \bZ_i 
\hspace{0.06cm}\right], 
\end{equation}
where $\bZ_i$'s are (correlated) random variables that depend on $\bs$.
For each $i\in I_B$, 
$\bZ_i = 2/n$ if $$\alpha_i = y_{i,\delta(i)}\oplus \bs_{\delta(i)}$$ 
  and $\bZ_i=0$ otherwise. 
For each $i\in I_S$, we have
$$
\bZ_i=\frac{|\{k\in A_i\cap \overline{M}:y_{i,k}\oplus \bs_k=\alpha_i \}|}{n/2}
$$
For some technical reason, for each $i\in I_S$, let $\bB_i$ be the following random set that depends on $\bs$: 
\[ \bB_{i} = \big\{ k \in (A_i \cap \overline{M}) \setminus D :  y_{i,k} \oplus \bs_k = \alpha_i \big\}.\]
Using $|D|=|I_B|$, we may now simplify (\ref{eq:prob}) by:
\begin{align*}
\Ex_{\bs \sim \{0, 1\}^{\overline{M}}} \hspace{0.08cm}\left[ \hspace{0.08cm}\prod_{i \in I} \bZ_i \hspace{0.06cm}\right] &= \frac{1}{2^{|I_B|}} \cdot \left( \frac{2}{n}\right)^{|I_B|} \Ex_{\bs \sim \{0, 1\}^{\overline{M}\setminus D}} \left[\hspace{0.08cm} \prod_{i \in I_S} \bZ_i\hspace{0.06cm} \right] 
				&\geq  \left( \frac{1}{n}\right)^{|I_B|} \Ex_{\bs \sim \{0, 1\}^{\overline{M}\setminus D}} \left[ \hspace{0.08cm}\prod_{i \in I_S} \left( \dfrac{|\bB_{i}|}{n/2} \right)\hspace{0.05cm} \right].
\end{align*}


Therefore, it remains to show that
\begin{equation}\label{eq:final}\Ex_{\bs \sim \{0, 1\}^{\overline{M}\setminus D} }\left[\hspace{0.08cm}  \prod_{i \in I_S} \left(\dfrac{2|\bB_{i }|}{|A_i \cap \overline{M}| }\right)
\hspace{0.05cm}\right] \geq 1 - o(1). 
\end{equation}
Next we further simplify (\ref{eq:final}) by introducing new, simpler random variables.
We may re-write
\[ |\bB_{i}| = \sum_{k \in (A_i \cap \overline{M}) \setminus D} \bX_{i,k},\quad \text{where\quad\ } \bX_{i,k} = \left\{ \begin{array}{ll} 1 & \text{if}\  y_{i,k} \oplus \bs_k = \alpha_i \\[0.6ex]
								0 & \text{otherwise} \end{array} \right. \]
For each $i \in I_S$ and $k \in (A_{i} \cap \overline{M}) \setminus D$, let $\bY_{i,k}$ and $\bY_i$ be the following random variables:
\[ \bY_{i,k} = \dfrac{1 - 2\bX_{i,k} + 2\tau_i}{|A_i \cap \overline{M}| }\quad \text{and} \quad \bY_i = \sum_{k \in A_i \cap M \setminus D} \bY_{i,k},\quad\text{where\quad\ }\tau_i = \frac{| A_i\cap  {\overline{M}} \cap D|}{2 |(A_i \cap \overline{M}) \setminus D|}.\] 
(Note that $|(A_i \cap \overline{M}) \setminus D|$ is $\Omega(n)$ so $\tau_i$'s are well-defined.)
A simple derivation shows that 
\begin{align}\label{derivation}
 \prod_{i \in I_S} \left( \dfrac{2|\bB_{i }|}{|A_i \cap \overline{M}| }\right) &= \prod_{i \in I_S} \left( 1 - \sum_{k \in (A_{i} \cap M) \setminus D} \bY_{i,k}\right) 
 	 = \prod_{i \in I_S} \big( 1 - \bY_i \big).
\end{align}
Using the fact that each fraction on the LHS is between $0$ and $2$, we have that 
  $\bY_i$ always satisfies $|\bY_i|\le 1$.
The difficulty in lowerbounding (\ref{derivation}) is that $\bY_i$'s are not independent.
But with a fixed~$i$, $\bY_{i,k}$'s are indeed independent 
  with respect to the randomness in $\bs$ and each $\bY_{i,k}$ is either 
$$  %
	\frac{1}{|A_i \cap \overline{M}|} + O\left(\frac{1}{n^{5/3} \log n}\right)\quad\text{or}\quad
	-\frac{1}{|A_i \cap \overline{M}|} + O\left(\frac{1}{n^{5/3} \log n}\right) $$
with equal probabilities,
where we used the fact that $|A_i \cap \overline{M}| = \Omega(n)$ and $|D| \leq n^{1/3}/ \log n$.

For each $i \in I_S$, 
  let $\bW_i$ be the random variable defined as
\[ \bW_i = \left\{ \begin{array}{ll} \bY_i & \text{if}\ |\bY_i| \leq \log^2 n/{\sqrt{n}} \\[0.6ex]
						   2|I_S| & \text{otherwise} \end{array} \right. \]
We prove the following claim that helps us avoid the correlation between $\bY_i$'s.						   
		
\begin{claim}\label{cl:final-claim}
The following inequality always holds:
\begin{equation*} \prod_{i \in I_S} \big(1 - \bY_i\big) \geq \big(1 - o(1)\big) \cdot \left( 1 - \sum_{i \in I_S} \bW_i \right).\end{equation*}
\end{claim}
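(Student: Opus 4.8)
The plan is to prove the inequality pointwise, since it is asserted to hold always; I would split on whether the truncation defining $\bW_i$ is active for some $i\in I_S$. Two facts recorded just above are used throughout: every $\bY_i$ satisfies $|\bY_i|\le 1$, so each factor $1-\bY_i$ is nonnegative and hence $\prod_{i\in I_S}(1-\bY_i)\ge 0$; and $|I_S|\le |I|\le 2q$, since each queried string lies in at most two of the sets $P_i$. Suppose first that some $i^*\in I_S$ has $|\bY_{i^*}|>\log^2 n/\sqrt n$, so that $\bW_{i^*}=2|I_S|$. Every other $\bW_i$ equals either $\bY_i$ (with $|\bY_i|\le \log^2 n/\sqrt n\le 1$) or $2|I_S|\ge 0$, hence $\bW_i\ge -1$ for all $i\in I_S$; therefore $\sum_{i\in I_S}\bW_i\ge 2|I_S|-(|I_S|-1)=|I_S|+1\ge 2$, so the right-hand side $(1-o(1))(1-\sum_i\bW_i)$ is negative for $n$ large while the left-hand side is nonnegative, and the claim holds with room to spare.

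It remains to handle the case $|\bY_i|\le \log^2 n/\sqrt n$ for all $i\in I_S$, where $\bW_i=\bY_i$ and it suffices to show $\prod_{i\in I_S}(1-\bY_i)\ge (1-o(1))(1-\sum_{i\in I_S}\bY_i)$. For $n$ large we have $|\bY_i|\le \tfrac12$, so $|\log(1-\bY_i)+\bY_i|\le \bY_i^2$, giving $\log\prod_{i\in I_S}(1-\bY_i)\ge -\sum_{i\in I_S}\bY_i - S$ with $S:=\sum_{i\in I_S}\bY_i^2$. Exponentiating and using $e^{-\sum_i\bY_i}\ge 1-\sum_i\bY_i$ together with $e^{-S}\ge 1-S>0$, I obtain $\prod_{i\in I_S}(1-\bY_i)\ge e^{-S}(1-\sum_i\bY_i)\ge (1-S)(1-\sum_i\bY_i)$ whenever $1-\sum_i\bY_i\ge 0$; and when $1-\sum_i\bY_i<0$ the claim is immediate, since the left-hand side is positive and the right-hand side negative. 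The one point that makes this work — and the step I would flag as the crux — is that the error term is $S$, not $\sum_i|\bY_i|$: the crude bound on the latter is only $|I_S|\cdot \log^2 n/\sqrt n=O(n^{1/6}/\log n)$, which is \emph{not} $o(1)$, but the per-coordinate truncation together with $|I_S|\le 2q$ forces $S\le 2q\cdot (\log^2 n/\sqrt n)^2=2\log n/n^{1/3}=o(1)$, using $q=n^{2/3}/\log^3 n$. So the claim holds with the explicit choice $o(1)=2\log n/n^{1/3}$.

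Nothing else in the claim is delicate; the only real idea is the case split together with the observation that squaring the truncated $\bY_i$'s beats their count. For context (not part of this claim), the claim will then be combined with the estimate $\sum_{i\in I_S}\Ex[\bW_i]=o(1)$ — which follows from Hoeffding's inequality applied to $\bY_i=\sum_k\bY_{i,k}$, yielding $\Pr[\bW_i=2|I_S|]=n^{-\omega(1)}$ and $\Ex[\bY_i]=O(|D|/n)=O(n^{-2/3}/\log n)$ for safe terms — to conclude $\Ex[\prod_{i\in I_S}(1-\bY_i)]\ge (1-o(1))(1-\sum_i\Ex[\bW_i])\ge 1-o(1)$, which is (\ref{eq:final}).
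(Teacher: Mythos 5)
Your proof is correct, and its overall architecture matches the paper's. The case split is identical: when some $|\bY_{i^*}|$ exceeds $\log^2 n/\sqrt{n}$, both arguments note that the left-hand side is nonnegative (since $|\bY_i|\le 1$ for all $i$) while $\bW_{i^*}=2|I_S|$ forces $1-\sum_i \bW_i$ to be negative even if every other $\bW_i$ equals $-1$. In the remaining case the paper defers to Claim~\ref{cl:real-nums} in the appendix, which is proved by separating the positive and negative $\bY_i$ and applying Bernoulli's inequality to expressions of the form $(1\pm\beta)^{\delta_i}$ with $\beta=\log^2 n/\sqrt{n}$; you instead prove the same deterministic inequality via $|\log(1-y)+y|\le y^2$ for $|y|\le 1/2$, exponentiating, and using $e^{-x}\ge 1-x$ twice (together with the observation that the claim is trivial when $1-\sum_i\bY_i<0$, since then the left side is a product of strictly positive factors). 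Both routes rest on exactly the same quantitative point, which you correctly identify as the crux: the second-order error $\sum_i\bY_i^2\le |I_S|\cdot(\log^2 n/\sqrt{n})^2=O(\log n/n^{1/3})=o(1)$, even though the first-order quantity $\sum_i|\bY_i|$ can be as large as $\Theta(n^{1/6}/\log n)$. So the difference is only in which elementary inequality handles the product; your version is self-contained, while the paper's is packaged as a reusable appendix claim.
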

\begin{proof}
The inequality holds trivially if $|\bY_j| \geq {\log^2 n}/{\sqrt{n}}$ for some $j \in I_S$. This is because $|\bY_i|\le 1$ and thus, the LHS is nonnegative.
On the other hand $\bW_j=2|I_S|$ implies that the RHS is negative even when every other
  $\bW_i$ is $-1$. 
So we may assume that $|\bY_i| \leq  {\log^2 n}/{\sqrt{n}}$ for every $i$. The proof in this case follows   directly from Claim~\ref{cl:real-nums} in the appendix.
\end{proof}
		
Given Claim \ref{cl:final-claim}, it suffices to upperbound the expectation of each $\bW_i$ over $\bs
  \sim\{0,1\}^{\overline{M}\setminus D}$:	
\begin{equation}\label{expw}  
  \Ex_{\bs \sim \{0, 1\}^{\overline{M}\setminus D}}\big[\bW_i\big]
  \le \Ex_{\bs \sim \{0, 1\}^{\overline{M}\setminus D}}\big[\bY_i\big]
  +\big(2|I_S|+1\big)\cdot \Pr_{\bs}\big[\bY_i\ge \log^2 n/\sqrt{n}\big]=O\left(\frac{1}{n^{2/3} \log n}\right) 
\end{equation}
where we used $|I_S| \leq n^{2/3}$ and that the probability of 
  $\bY_i\ge \log^2n/\sqrt{n}$ is superpolynomially small, by a Chernoff bound.
Our goal, (\ref{eq:final}), then follows directly from (\ref{expw}) and Claim \ref{cl:final-claim}. 

\subsection{Proof of the pruning lemma}
\label{sec:prunebadsec}

Let $E$ be the set of bad edges in $C$.
We start by partitioning $E$ into three (disjoint) subsets~$E_1,E_2$ and $E_3$ according the the event that 
  occurs at $(u,v)\in E$.
Let $(u,v)\in E$ and~let~$(I_B \cup I_S; P; R; A;$ $\rho;\delta)$ and $(I_B' \cup I_{S}'; P'; R'; A'; \rho';\delta')$ 
  be the summaries associated with $u$ and $v$, respectively.
Then 
\begin{enumerate}
\item $(u,v)\in E_1$ if for some $i \in I_{S}$, we have $|A_i \setminus A_{i}'| \geq n^{2/3} \log n$;\vspace{-0.06cm} 
\item $(u,v)\in E_2$ if $(u,v)\notin E_1$ and $|I_B'|\ge {n^{1/3}}/{\log n}$; \vspace{-0.05cm} or
\item $(u,v)\in E_3$ if $(u,v)\notin E_1\cup E_2$ and for two distance indices $i, j \in I_B'$, we have $\delta(i) = \delta(j)$. 
\end{enumerate}
Note that $E_1,E_2$ and $E_3$ are disjoint.
Moreover, by the definition of bad edges none of these events occurs
  at any edge along the root-to-$u$ path.

Our plan below is to show that the probability of $(\TT,\HH,\bs)$, as $\TT\sim \calE,\HH\sim\calEn$ and
  $\smash{\bs\sim\{0,1\}^{\overline{M}}}$, passing through an edge in $E_i$ is $o(1)$ for each $i$.
The pruning lemma follows from a union bound.

For edge sets $E_1$ and $E_3$, we show that for any internal node $u$ of $C$,
  the probability of $(\TT, \HH, \bs)$ taking an edge $(u, v)$ that belongs to $E_1$ or $E_3$ is at most $o(1/q)$,  conditioning on $(\TT, \HH, \bs)$ reaching $u$ when
  $\TT\sim \calE,\HH\sim\calEn$ and
  $\smash{\bs\sim\{0,1\}^{\overline{M}}}$. This allows us to apply Lemma~\ref{simplepruning}.
We handle $E_2$ using a different argument by showing that, roughly speaking,
  $I_B$ goes up with very low probability after each round of query and thus, 
  the probability of $|I_B|$ reaching $n^{1/3}/\log n$ is $o(1)$. 


\paragraph{Edge Set $E_1$.\hspace{0.05cm}} 
Fix an internal node $u$ of $C$. We show that the probability of $(\TT,\HH,\bs)$
  leaving~$u$ with an $E_1$-edge, conditioning on it reaching $u$, is $o(1/q )$.
It then follows from Lemma \ref{simplepruning} that~the probability of $(\TT,\HH,\bs)$
  passing through an $E_1$-edge is $o(1)$.
    
Let $x$ be the query made at $u$, and let $(I_B\cup I_S; P;R;A;\rho;\delta)$ be the summary 
  associated with $u$.
Fix an index $i\in I_S$. We upperbound by $o(1/q^2)$ the conditional probability of $(\TT,\HH, \bs)$ 
  taking an $E_1$-edge with $|A_i\setminus A_i'|\ge n^{2/3}\log n$.
The claim follows by a union bound on $i\in I_S$ (as $|I|=O(q)$).


Note that either $A_i'=A_i$ or $A_i'=A_i\setminus \Delta$, where 
$$\Delta=\big\{k\in A_i:x_k\ \text{disagrees\ 
  with $y_k$ of $y\in P_i$}\big\}.$$
Thus, a necessary condition for $|A_i\setminus A_i'|\ge n^{2/3}\log n$ to happen is 
  $|\Delta|\ge n^{2/3} \log n$ and $\bT_i(x)=1$.

Since $C$ is balanced, $|\Delta| \geq n^{2/3} \log n$ implies that 
$$\Delta_1=\big\{k\in A_i\cap M: x_k=0\ \text{and}\ y_k=1, y\in P_i\big\}$$ has size at least $n^{2/3}\log n/8$.
On the other hand, fix any triple $(T_{-i},H, s)$, where $T_{-i}$ is a tuple of $N-1$ terms with  $T_i$ missing, $H$ is from the support of $\En$ and $\smash{s \in \{0, 1\}^{\overline{M}}}$ 
  such that 
\begin{equation}\label{odod}
\mathop{\Pr}_{\TT_i} \big[\hspace{0.01cm}\text{$((T_{-i},\TT_i),H, s)$ reaches $u$}\hspace{0.03cm}\big]>0,
\end{equation}
where $\TT_i$ is  drawn by including each index in $M$ with probability $1/\sqrt{n}$.
It suffices to show that
\begin{equation}\label{ofofp}
 {\mathop{\Pr}_{\TT_i}\big[\hspace{0.01cm}((T_{-i},\TT_i),H,s) \text{ reaches $u$ and $\bT_i(x)=1$}\hspace{0.05cm}\big]}
\le o(1/q^2)\cdot {\mathop{\Pr}_{\TT_i}\big[\hspace{0.01cm}((T_{-i},\TT_i),H,s) \text{ reaches } u\hspace{0.03cm}\big]}. 
\end{equation}
For this purpose, note that given (\ref{odod}), 
  the event on the RHS of (\ref{ofofp}) occurs at $T_i$ if and only if
  $T_i$ is a subset of
$
A_{i,1}^*=A_{i,1}\cap M
$ and $T_i(y)=0$ for every $y\in R_i$;
we use $U$ to denote the set of all~such terms $T_i$ ($U$ cannot be empty by (\ref{odod})).
On the other hand, the event on the LHS of (\ref{ofofp})~occurs~if and only if 
  $T_i$ further avoids picking variables from $\Delta_1$, i.e.  
  $T_i\subseteq A_{i,1}^*\setminus \Delta_1$. 
We use $V$ to~denote~the set of all such $T_i$'s.
To prove (\ref{ofofp}), note that we can take any $T_i$ in $V$,
  add an arbitrary subset of $\Delta_1$, and the result must be a set in $U$.
As a result we have (note that the bound is very loose here)
\[ \dfrac{\Pr [\hspace{0.02cm}\bT_i\in V\hspace{0.02cm}]}{\Pr[\hspace{0.02cm}\bT_i\in  U\hspace{0.02cm}]} \leq \left(1 - \frac{1}{\sqrt{n}} \right)^{|\Delta_1|} = o(1/q^2). \]
This finishes the proof for $E_1$.
Next we work on the edge set $E_3$.

\paragraph{Edge set $E_3$.}
Fix an internal node $u$ of $C$. We show that the probability of $(\TT,\HH,\bs)$
  leaving~$u$ with an $E_3$-edge, conditioning on it reaching $u$, is $o(1/q )$.
By definition, we can assume that 
  there~is no bad edge along the root-to-$u$ path and thus, 
  $|I_B|\le n^{1/3}/\log n$ and $I_B$ has no collision, i.e. there are no distinct $i,j\in I_B$ such that $\delta(i)=\delta(j)$.
For $(\TT,\HH,\bs)$ to leave $u$ with an $E_3$-edge, 
  it must be the case that some (at most two) terms are breached after the query $x$
  and a collision occurs (either between a newly breached term and a term in $I_B$, or
  between the two newly breached terms). 

Fix a pair $(T,s)$, where $T$ is from the support of $\calE$ and 
  $\smash{s\in \{0,1\}^{\overline{M}}}$, such that 
  $(T,\HH,s)$ reaches $u$ with a non-zero probability when $\HH\sim \calEn$.
It suffices to show that
\begin{equation}\label{ofof2}
 {\mathop{\Pr}_{\HH}\big[\hspace{0.01cm}(T,\HH,s) \text{ reaches $u$ and a collision occurs}\hspace{0.05cm}\big]}
\le o(1/q)\cdot {\mathop{\Pr}_{\HH}\big[\hspace{0.01cm}(T,\HH,s) \text{ reaches } u\hspace{0.03cm}\big]}. 
\end{equation}
Note that set of (at most two) $i\in I_S$ such that 
  $x$ is added to $P_i$ after it is queried is determined by $T$ (if $x$ starts a new $P_i$, then
  this $i$ is safe for sure).
If there exists no such $i$, then the probability on the LHS of (\ref{ofof2}) is $0$ since no term
  is newly breached and we are done.
Below we prove (\ref{ofof2}) for 
  the case when $i\in I_S$ is the only index such that $x$ is added to $P_i$.
The case when there are two such $i$'s can be handled similarly.
  
The proof of (\ref{ofof2}) easily follows from the following simple but useful claim:

\begin{claim}\label{ffffclaim}
Let $T$ and $s$ be such that $(T,\HH,s)$ reaches $u$ with non-zero probability
  when $\HH\sim \calEn$.
Then conditioning on reaching $u$, $\bh_i$ has its special variable uniformly 
  distributed in $A_i\cap \overline{M}$.
\end{claim}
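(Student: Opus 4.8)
The plan is to argue that, conditioned on $(T,\HH,s)$ reaching the internal node $u$, the only constraints imposed on the hidden dictatorship/anti-dictatorship function $\bh_i$ are precisely the constraints recorded in the summary tuple, and these constraints do not distinguish among the candidate special variables in $A_i\cap\overline{M}$. First I would recall that, because $T$ and $s$ are already fixed, the event ``$(T,\HH,s)$ reaches $u$'' depends on $\HH=(\bh_1,\dots,\bh_N)$ only through whether, for each string $y\in Q_u$, the signature of $y$ with respect to $(T,H,s)$ matches the label along the root-to-$u$ path; and by the structure of signatures (Definition~\ref{def:unate-sig}), this in turn depends on $\bh_i$ only through the values $\{h_i(y\oplus s): y\in P_i\}$ (the strings $y\in R_i$ do not satisfy $T_i$, so $\bh_i$ is irrelevant to their signatures). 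Moreover the choices of $\bh_1,\dots,\bh_N$ are independent in $\En(M)$, so conditioning on reaching $u$ factorizes over the coordinates, and it suffices to understand the conditional law of $\bh_i$ alone.

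Next I would use the hypothesis that $i\in I_S$ is \emph{safe}. Because $i$ is safe, $P_i$ is consistent: say $\rho_i(y)=\alpha_i$ for all $y\in P_i$ (either $0$-consistent or $1$-consistent). Now I would enumerate the candidate special variables: $\bh_i$ has special variable $k$ drawn uniformly from $\overline{M}$, and $\bh_i$ is independently a dictator $x\mapsto x_k$ or an anti-dictator $x\mapsto\overline{x_k}$, each with probability $1/2$. For a fixed pair (variable $k$, polarity), the event that $h_i(y\oplus s)=\alpha_i$ for all $y\in P_i$ holds iff, for every $y\in P_i$, the bit $(y\oplus s)_k$ equals $\alpha_i$ (dictator case) or $\overline{\alpha_i}$ (anti-dictator case); since $s$ is fixed, $(y\oplus s)_k$ is determined by $y_k$, and this holds for all $y\in P_i$ simultaneously exactly when $k\in A_{i,1}$ or $k\in A_{i,0}$ (all $y\in P_i$ agree on coordinate $k$). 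In all of these cases the common value $y_k$ together with $s_k$ and the polarity determines a single consistent pair, and crucially the event ``consistent with the recorded signatures'' depends on $k$ only through the \emph{indicator} that $k\in A_{i,1}\cup A_{i,0}=A_i$; it does not otherwise depend on which element of $A_i\cap\overline{M}$ the variable $k$ is. Hence, over the $n/2$ choices of $k\in\overline{M}$ and the two polarities, the posterior puts equal weight on each $k\in A_i\cap\overline{M}$ (for each such $k$ exactly one polarity, the one matching $y_k\oplus s_k$ versus $\alpha_i$, survives, while each $k\notin A_i$ has zero surviving mass), so $\bh_i$'s special variable is uniform on $A_i\cap\overline{M}$ as claimed.

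The main thing to be careful about is the independence/factorization step: one must check that the label along the root-to-$u$ path — which includes not only the signatures $(\sigma_y,a_y,b_y)$ of queried strings but also the revealed special variables $\delta(\cdot)$ of already-breached terms — imposes \emph{no} constraint on $\bh_i$ beyond the $P_i$-consistency constraint, given that $i$ is safe. Since $i$ is safe, $i\notin I_B$, so $\delta$ does not reveal $\bh_i$'s special variable; and the revealed special variables of the breached terms $I_B$ concern other coordinates $\bh_j$, $j\ne i$, which are independent of $\bh_i$ under $\En(M)$. I expect this bookkeeping — cleanly separating ``what the path to $u$ reveals about $\bh_i$'' from ``what it reveals about the rest of $\HH$'' — to be the only delicate point; once it is in place, the uniformity is the elementary counting argument above.
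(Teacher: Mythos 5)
Your proposal is correct and follows essentially the same route as the paper's proof: conditioning on the fixed transcript at $u$, the only constraint on $\bh_i$ (for a safe $i$) is that $h_i(y\oplus s)=\rho_i(y)$ for all $y\in P_i$, which by consistency of $P_i$ forces the special variable $k$ to lie in $A_i\cap\overline{M}$ and pins down exactly one polarity for each such $k$, giving $|A_i\cap\overline{M}|$ equally likely surviving choices out of the $n$ a priori equally likely ones. The extra bookkeeping you do on the factorization over independent $\bh_j$'s and on the $\delta$-revelations is a correct elaboration of what the paper leaves implicit.
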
 
\begin{proof}
As $i\in I_S$, $P_i$ is consistent.
For $(T,H,s)$ to reach $u$, the only condition on $h_i$ and its special variable $k$ is that
  (1) if $y_k\oplus s_k=\rho_i(y)$ for some $y\in P_i$, then $h_i$ is a dictatorship function $x_k$;
  (2) if $y_k\oplus s_k\ne \rho_i(y)$ for some $y\in P_i$, then $h_i$ is an anti-dictatorship function
  $\overline{x_k}$.
Given $T$ and $s$, there are $|A_i\cap \overline{M}|$ choices for $h_i$ among
  the $2\cdot (n/2)$ choices and they are all equally likely.
\end{proof}
  
Our goal, (\ref{ofof2}), follows easily from $|A_i\cap \overline{M}|=\Omega(n)$ since $i\in I_S$,
   Claim \ref{ffffclaim}, $|I_B|\le n^{1/3}/\log n$, our choice of $q=n^{2/3}/\log^3 n$, and 
   the fact that, for the event on the LHS to happen, 
   the special variable of $\bh_i$ must fall inside $I_B$.


\paragraph{Edge set $E_2$.} 
Let $(u,v)$ be a bad edge in $E_2$ with $|I_B'|\ge n^{1/3}/\log n$.
We decompose $I_B'$ into $K$ and $L$:
  $i\in I_B'$ is in $K$ if \emph{at the edge $(u',v^*)$ along the root-to-$v$ path
  where $i$ becomes newly~breached}, we have $|A_i^*\cap \overline{M}|\le n/10$, where $A_i^*$
  is the set at $v^*$,
  and $i\in I_B'$ is in $L$ otherwise (i.e. $|A_i^*\cap \overline{M}|>n/10$ but 
  $P_i^*$ at $v^*$ becomes inconsistent after the query at $u'$).
The claim below shows that $K$ is small:

\begin{claim}\label{reallyfinal}
For every $E_2$-bad edge $(u,v)$, we have $|K| \leq O(n^{1/3}/ \log^2 n)$. 
\end{claim}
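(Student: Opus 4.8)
The bound will follow from a direct accounting argument. Every index $i\in K$ forces the set $A_i\cap\overline{M}$ to shrink by $\Omega(n)$ between the moment $P_i$ is created and the moment $i$ is breached; on the other hand, because the bad edge $(u,v)$ lies outside $E_1$, each individual query along the path can shrink each relevant $A_i$ by strictly less than $n^{2/3}\log n$, and each query enters at most two of the sets $P_i$. Summing over the $q$ queries gives a total ``shrink budget'' of $O(q\cdot n^{2/3}\log n)=O(n^{4/3}/\log^2 n)$; dividing by the per-index cost $\Omega(n)$ yields $|K|=O(n^{1/3}/\log^2 n)$.

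\textbf{Step 1: small per-edge shrinks.} Fix an $E_2$-bad edge $(u,v)$. By the definition of a bad edge, no bad event occurs along the root-to-$u$ path, and since $(u,v)\in E_2$ we have $(u,v)\notin E_1$, so no $E_1$ event occurs at $(u,v)$ either. Hence along the entire root-to-$v$ path, for every edge $e$ and every index $i$ that lies in $I_S$ at the parent of $e$, the set $A_i$ shrinks by strictly less than $n^{2/3}\log n$ at $e$; in particular the same holds for $A_i\cap\overline{M}$.

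\textbf{Step 2: each $i\in K$ costs $\Omega(n)$.} Fix $i\in K$. At the first query $x$ with $\sigma_{x,i}=1$ (the creation of $P_i$) we have $P_i=\{x\}$, hence $A_i=[n]$ and $|A_i\cap\overline{M}|=|\overline{M}|=n/2$. At the edge where $i$ becomes newly breached, the associated set $A_i^*$ satisfies $|A_i^*\cap\overline{M}|\le n/10$ by the definition of $K$. The quantity $|A_i\cap\overline{M}|$ changes only at edges that add a new string to $P_i$, and since a coordinate leaving $A_i$ never re-enters, the total decrease $n/2-|A_i^*\cap\overline{M}|\ge 2n/5$ equals the sum of the per-edge $\overline{M}$-restricted decreases taken over all edges that add a string to the already-existing set $P_i$, the last of which is the breaching edge. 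At every such edge $i$ already belongs to $I$ and is still in $I_S$ at the parent node (it is declared breached only at that last edge, where it is still safe at the parent), so Step 1 shows each such decrease is strictly below $n^{2/3}\log n$.

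\textbf{Step 3: summing up.} Each query along the root-to-$v$ path has signature $0^N$, $e_{i'}$, or $e_{i',i''}$, so it is added to at most two of the sets $P_{i'}$, and for each such index the decrease of its $A$-set at that edge is below $n^{2/3}\log n$ whenever the index is safe there (Step 1). Therefore the total, over all pairs (query-edge, index safe at that edge's parent), of the $\overline{M}$-restricted per-edge decrease is at most $2q\cdot n^{2/3}\log n$. Combining with Step 2, $|K|\cdot(2n/5)\le\sum_{i\in K}\bigl(n/2-|A_i^*\cap\overline{M}|\bigr)\le 2q\cdot n^{2/3}\log n=2n^{4/3}/\log^2 n$, so $|K|\le 5n^{1/3}/\log^2 n=O(n^{1/3}/\log^2 n)$. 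I do not expect a genuine obstacle here; the only point that needs care is the bookkeeping at the breaching edge — confirming that $i$ is still counted as a \emph{safe} index there so that $E_1$-freeness applies, since $i$ is declared breached only at the child of that edge — together with the observation that coordinates never return to $A_i$, both of which are immediate from the definitions.
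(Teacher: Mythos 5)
Your proposal is correct and follows essentially the same argument as the paper: both use that the exclusion of $E_1$ caps the per-query shrink of each $A_i$ at $n^{2/3}\log n$ while each $i\in K$ requires a total shrink of $\Omega(n)$ in $A_i\cap\overline{M}$ before breaching, combined with the fact that each query joins at most two sets $P_i$ (the paper phrases this as a lower bound $|P_i^*|=\Omega(n^{1/3}/\log n)$ on the number of queries per breached index, you phrase it as a total shrink budget, but it is the same double count). Your extra care about $i$ still being safe at the parent of the breaching edge, and about coordinates never re-entering $A_i$, is sound and matches the paper's implicit reasoning.
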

\begin{proof}
Fix an $i\in K$ and let $(u',v^*)$ be the edge along the root-to-$v$ path 
  where $i$ becomes breached.
Note that when $A_i$ is first created along the path, $A_i=\overline{M}$ and
  $|A_i \cap \overline{M}| =n/2$ (since at that time $P_i$ consists of a single string). 
As we walk down the root-to-$u^*$ path,
  every time a string is added to $P_i$, the size of $A_i$ can only drop 
  by $n^{2/3}\log n$ (otherwise, this edge is an $E_1$-edge, contradicting with the 
  assumption that $(u,v)\in E_2$ since $E_1$ edges have a higher priority) and thus,
  $|A_i \cap \overline{M}|$ 
  can drop by at most 
  $n^{2/3} \log n$. 
As a result, we have that $|P_i^*|$ at $v^*$ is at least 
$$
1+\frac{n/2-n/10}{n^{2/3}\log n}=\Omega\left(\frac{n^{1/3}}{\log n}\right).
$$
Using the fact that each of the $q$ queries can 
  be added to at most two $P_i$'s, we have
$$
|K|\le \frac{2q}{\Omega(n^{1/3}/\log n)}=O\left(\frac{n^{1/3}}{\log^2 n}\right).
$$ 
This finishes the proof of the claim.
\end{proof}

It follows directly from Claim \ref{reallyfinal} that 
  every bad $(u,v)\in E_2$ has $|L|\ge n^{1/3}/(2\log n)$.
This~inspires us to consider the following random process of walking 
  down  the tree $C$ from its root, with respect to $(\TT,\HH,\bs)$ over $\TT\sim \calE$, $\HH\sim \calEn$, and $\bs\sim\smash{\{0,1\}^{\overline{M}}}$.
As we walk down an edge $(u,v)$ of $C$, letting $(I_B\cup I_S;P;R;A;\rho;\delta)$
  and $(I_B'\cup I_S';P';R';A';\rho';\delta')$ be the summaries associated with $u$ and $v$, if $|A_i\setminus A_i'|\ge n^{2/3}\log n$
  for some $i\in I_S$, then we fail and terminate the random process;
if not we add the newly breached term(s) $i$ with 
  and $|A_i'\cap \overline{M}|> n/10$ (so $P_i'$ becomes inconsistent), if any,
  to $\bL$.
We succeed if $|\bL|\ge n^{1/3}/(2\log n)$, and it suffices for us to show that 
  we succeed with probability $o(1)$ over $\TT,\HH$ and $\bs$.


\def\bm{\mathbf{m}}

For the analysis, let $u$ be an internal node of $C$, and fix 
  any pair $(T,s)$ such that $(T,\bH,s)$ can reach $u$ with a non-zero probability.
As discussed earlier, the set of (at most two) $P_i$, $i\in I_S$, that the query string $x$ joins is 
  determined only by $T$.
If one of them has $|A_i\setminus A_i'|\ge n^{2/3}\log n$ then the process would always fail;
  otherwise, we have that $\bL$ can grow by at most two and this occurs with
  probability (over the randomness of $\bH$ but conditioning on $(T,\bH,s)$ reaching $u$) 
  at most 
$$
p=O\left(\frac{n^{2/3}\log n}{n}\right)=O\left(\frac{\log n}{n^{1/3}}\right) 
$$
because $|A_i\cap \overline{M}|=\Omega(n)$ ($i\in I_S$), 
  the special variable of $\bh_i$ is uniform over $A_i\cap \overline{M}$ by Claim \ref{ffffclaim},
  and for $i$ to be added to $\bL$, the special variable
  of $\bh_i$ must lie in $A_i\setminus A_i'$ (of size at most $n^{2/3}\log n$). 

In summary, after each query the random process either fails, or if it does not fail,
  $\bL$ can grow by at most two with probability at most $p$.
Therefore, the probability that we succeed is at most 
\[ \Prx_{\bm \sim \Bin(q, p)}\left[ 2\bm \geq \frac{n^{1/3}}{2\log n} \right] = o(1), \]
since $q= {n^{2/3}}/{\log^3 n}$ and $p = O( {\log n}/{n^{1/3}})$.

This finishes the proof that $(\TT,\HH,\bs)$ passes through an edge in $E_2$ with probability $o(1)$. 

\section{Non-Adaptive One-Sided Unateness Lower Bound}\label{sec:nonadaptive}
In this section we prove Theorem \ref{nonadaptive}: an ${\Omega}(n/\log^2 n)$ lower bound on the query complexity of testing unateness for \emph{one-sided} and \emph{non-adaptive} algorithms. This lower bound
 matches the upper bound of \cite{CS16} up to a poly-logarithmic factor. Our arguments are an adaptation of Theorem~19 of \cite{FLNRRS} to the setting of unateness, with one additional observation that allows us to obtain a higher lower bound. Previously \cite{BMPR16} proved a lower bound of $\Omega\left(\sqrt{n}\right)$ for one-sided, non-adaptive algorithms. For the rest of the section, we fix $q = {n}/{\log^2 n}$. 
 
 \def\ii{\boldsymbol{i}}
 
For a fixed $n > 0$, we describe a distribution $\Dn$ supported on Boolean functions $f$ over $n+2$ variables. We then show that every $\ff \sim \Dn$ is $\Omega(1)$-far from unate. An $\ff \sim \Dn$ is drawn by first drawing an index $\ii \sim [n]$ uniformly at random, and then letting $\ff = f_{\ii}$, where for each $x \in \{0, 1\}^n$:
\begin{align*}
f_{\ii}(0, 0, x) &= 0, \\
f_{\ii}(0, 1, x) &= \overline{x_{\ii}}, \\
f_{\ii}(1, 0, x) &= x_{\ii}, \\
f_{\ii}(1, 1, x) &= 1. 
\end{align*}
In order to simplify the notation, given $a,b\in \{0,1\}$ and $i \in [n]$, 
we write $f_{i,ab} \colon \{0, 1\}^n \to \{0, 1\}$  
to denote the function $f_{i,ab}(x)=f_i(a,b,x)$ that 
  agrees with $f_i$ when $a$ and $b$ are the first two inputs.

\begin{figure}\vspace{1cm}
\centering
\begin{picture}(200,160)
    \put(0,0){\includegraphics[width=0.4\linewidth]{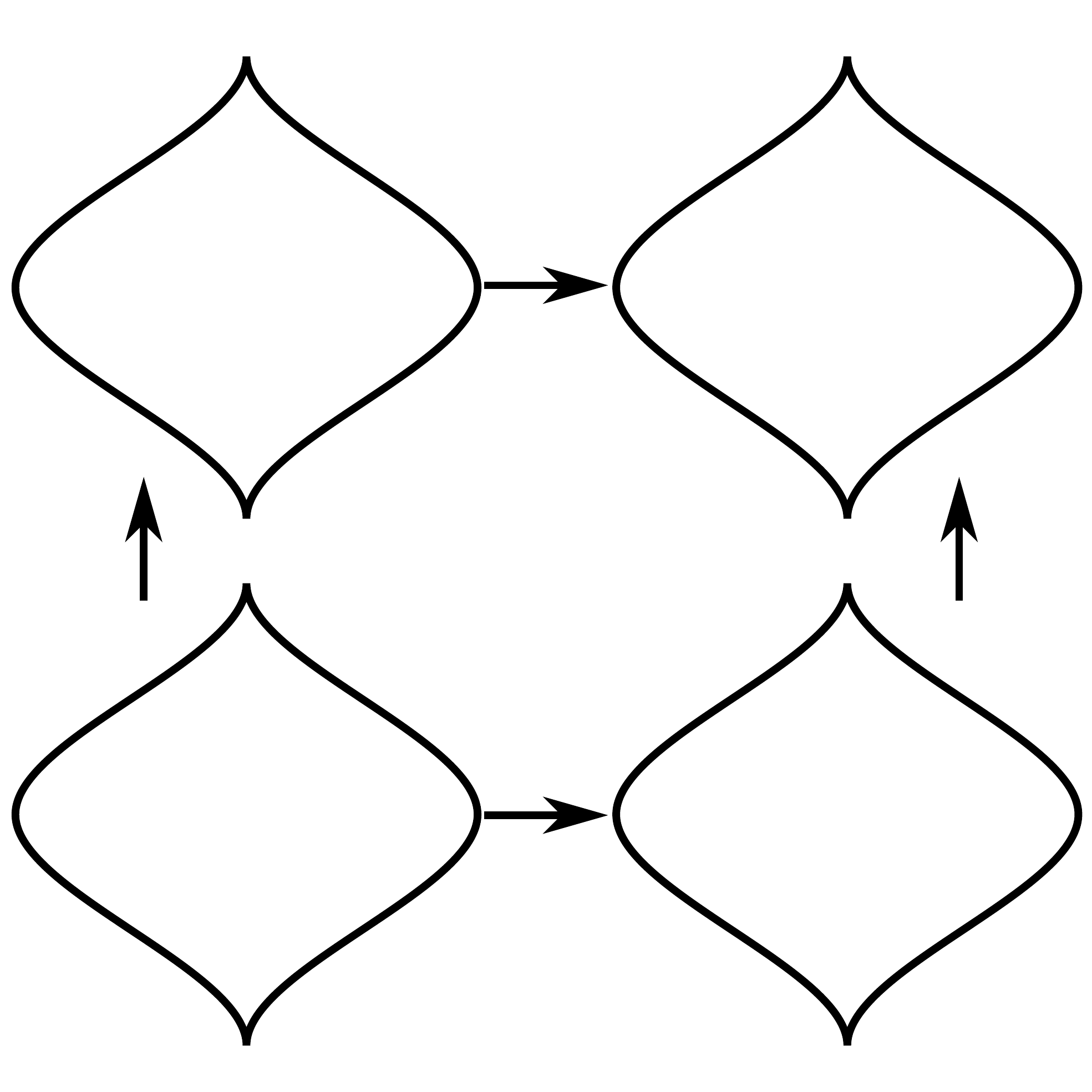}}
    \put(40, -5){$000^{n}$}
    \put(140, 183){$111^{n}$}
    \put(40, 183){$011^{n}$}
    \put(140, -5){$100^{n}$}
    \put(40, 4){$\bullet$}
    \put(40, 176){$\bullet$}
    \put(143, 176){$\bullet$}
    \put(143, 4){$\bullet$}
    \put(40, 45){$0$}
    \put(40, 137){$\overline{x_i}$}
    \put(143, 137){$1$}
    \put(143, 45){$x_i$}
  \end{picture}\vspace{0.18cm}
  \caption{An illustration of $f_i \colon \{0, 1\}^{n+2} \to \{0, 1\}$. The first two coordinates  index the sub-cubes.}\label{fig:unate-function}
\end{figure}

Figure~\ref{fig:unate-function} gives a simple visual representation of $f_i$.
We show that $f_i$ is the $\Omega(1)$-far from unate.

\begin{lemma}
\label{lem:unate-dist-far}
For all $i \in [n]$, $f_i$ is $\Omega(1)$-far from unate.
\end{lemma}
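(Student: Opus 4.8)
\textbf{Proof proposal for Lemma~\ref{lem:unate-dist-far}.}

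The plan is to exhibit, for each $i \in [n]$, a large collection of disjoint bi-chromatic edges in the two directions $1$ and $2$ (the ``sub-cube index'' coordinates) and invoke Lemma~\ref{ufuf}. Recall that $f_i$ lives on $\{0,1\}^{n+2}$, where the first two coordinates select one of four copies of $\{0,1\}^n$, on which $f_i$ equals $0$, $\overline{x_i}$, $x_i$, and $1$ respectively. First I would look at edges along direction $1$ (flipping the first coordinate, with the second coordinate fixed to $1$): for $x \in \{0,1\}^n$ with $x_i = 1$, the pair $\big((0,1,x),(1,1,x)\big)$ has $f_i = \overline{x_i} = 0$ at the lower endpoint and $f_i = 1$ at the upper endpoint, so it is a monotone bi-chromatic edge; while for $x$ with $x_i = 0$, the pair $\big((0,1,x),(1,1,x)\big)$ has $f_i = \overline{x_i} = 1$ below and $f_i = 1$ above, which is not bi-chromatic. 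Instead, for $x$ with $x_i = 0$, I use edges along direction $1$ with the second coordinate fixed to $0$: the pair $\big((0,0,x),(1,0,x)\big)$ has $f_i = 0$ below and $f_i = x_i = 0$ above — again not bi-chromatic. The right move is: along direction $1$ with second coordinate $1$ and $x_i=1$ we get a \emph{monotone} bi-chromatic edge, and along direction $1$ with second coordinate $0$ and $x_i=1$ we get $f_i=0$ below (from the $(0,0,\cdot)$ block) and $f_i=x_i=1$ above — also a \emph{monotone} edge, which does not help create an anti-monotone edge in the same direction.

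So instead I would put the two competing directions as follows. Populate $E_2^+$ and $E_2^-$ using direction $2$ (flipping the second coordinate), which is the natural place where $\overline{x_i}$ versus $x_i$ creates a sign conflict. Concretely: with the first coordinate fixed to $0$, the edge $\big((0,0,x),(0,1,x)\big)$ has $f_i = 0$ below and $f_i = \overline{x_i}$ above, so it is bi-chromatic exactly when $x_i = 0$, and then it is a \emph{monotone} edge in direction $2$ (value goes $0 \to 1$). With the first coordinate fixed to $1$, the edge $\big((1,0,x),(1,1,x)\big)$ has $f_i = x_i$ below and $f_i = 1$ above, bi-chromatic exactly when $x_i = 0$, again \emph{monotone} in direction $2$. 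To get \emph{anti-monotone} direction-$2$ edges I use the $(0,1,\cdot)$ block against the $(0,0,\cdot)$ block differently, or better, direction $2$ edges sitting inside a single choice of first coordinate where the upper block is ``more $0$'': take first coordinate $0$ and consider $\big((0,1,x),(0,0,x)\big)$ — but $(0,0,x)\prec(0,1,x)$, so this is the same unordered edge as before. The clean way: for $x_i = 1$, the direction-$2$ edge at first-coordinate $0$, namely $\big((0,0,x),(0,1,x)\big)$, has $f_i=0$ below and $f_i=\overline{x_i}=0$ above — not bi-chromatic; at first-coordinate $1$, $\big((1,0,x),(1,1,x)\big)$ has $f_i=x_i=1$ below and $f_i=1$ above — not bi-chromatic. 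Hence \emph{all} direction-$2$ bi-chromatic edges (in both blocks) are monotone, and the anti-monotonicity must instead be extracted from direction $1$ together with the $x_i$-dictator structure.

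Putting this together correctly: along direction $1$, fixing the second coordinate to $1$, the edge $\big((0,1,x),(1,1,x)\big)$ is bi-chromatic iff $\overline{x_i}=0$, i.e. $x_i=1$, and is then \emph{monotone}; fixing the second coordinate to $0$, the edge $\big((0,0,x),(1,0,x)\big)$ is bi-chromatic iff $x_i=1$, and is then \emph{monotone} too. So direction $1$ gives only monotone edges as well. The genuine conflict lives in the comparison \emph{between} the $\overline{x_i}$-block and the $x_i$-block, which are the blocks $(0,1,\cdot)$ and $(1,0,\cdot)$ — incomparable under $\prec$, hence not joined by an edge. Therefore the lemma is \emph{not} proved by a single direction; instead I would apply Lemma~\ref{ufuf} with $E_1^{\pm}$ and $E_2^{\pm}$ chosen so that $E_1^+$ consists of the monotone direction-$1$ edges $\big((0,0,x),(1,0,x)\big)$ with $x_i=1$ (value $0\to 1$, so $f$ increasing in coordinate $1$ here) and $E_1^-$ consists of the direction-$1$ edges $\big((0,1,x),(1,1,x)\big)$ with $x$ \emph{chosen so that the edge is anti-monotone} — which requires $\overline{x_i}=1 > 1=f_i$ at the top, impossible. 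Let me restart this bookkeeping in the write-up: the correct pairing is $E_1^+ = \{((0,0,x),(1,0,x)) : x_i = 1\}$ (monotone up) and $E_2^- = \{((0,0,x),(0,1,x)): x_i = 1\}$? No — also monotone. The actual anti-monotone edges of $f_i$ come from flipping coordinate $1$ \emph{downward into the $(0,1,\cdot)$ block}: compare $(0,1,x)$ with $x_i=1$ (value $\overline{x_i}=0$) to $(1,1,x)$ (value $1$) — monotone; and the anti-monotone behaviour genuinely present is that $f_i$ restricted to coordinate $2$ at first-coord $=1$ is constant $x_i$ on the bottom then $1$ on top, while at first-coord $=0$ it is $0$ then $\overline{x_i}$: so as a function of coordinate $1$, holding $(1,x)$ with $x_i=0$: $f_i(0,1,x)=\overline{x_i}=1$, $f_i(1,1,x)=1$; holding $(1,x)$ with $x_i=1$: $f_i(0,1,x)=0$, $f_i(1,1,x)=1$. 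Still monotone.

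\textbf{Corrected plan.} The function $f_i$ genuinely \emph{is} monotone in coordinates $1$ and $2$, and its non-unateness comes from coordinate $i$: in the block $(0,1,\cdot)$ it is $\overline{x_i}$ (decreasing in $x_i$), while in the block $(1,0,\cdot)$ it is $x_i$ (increasing in $x_i$). So I would apply Lemma~\ref{ufuf} in \emph{direction} $i$. Define
\[
E_i^- = \big\{\, \big((0,1,x),(0,1,x^{(i)})\big) : x \in \{0,1\}^n,\ x_i = 0 \,\big\},
\]
each such edge being anti-monotone in direction $i$ since $f_i(0,1,x) = \overline{0} = 1$ and $f_i(0,1,x^{(i)}) = \overline{1} = 0$; and define
\[
E_i^+ = \big\{\, \big((1,0,x),(1,0,x^{(i)})\big) : x \in \{0,1\}^n,\ x_i = 0 \,\big\},
\]
each being monotone in direction $i$ since $f_i(1,0,x) = 0$ and $f_i(1,0,x^{(i)}) = 1$. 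These $2\cdot 2^{n-1}$ edges are pairwise disjoint: edges in $E_i^+$ and $E_i^-$ live in different sub-cubes (different first two coordinates), and within each family distinct choices of $x$ (with $x_i=0$) give disjoint edges. All other $E_j^{\pm}$ for $j \ne i$ are taken empty. Then Lemma~\ref{ufuf} gives
\[
\dist\big(f_i, \textsc{Unate}\big) \ge \frac{1}{2^{n+2}} \min\big\{|E_i^+|,|E_i^-|\big\} = \frac{2^{n-1}}{2^{n+2}} = \frac{1}{8} = \Omega(1).
\]
The only thing to verify carefully is disjointness and that each listed edge is indeed bi-chromatic with the claimed orientation, both of which are immediate from the case definition of $f_i$; there is no real obstacle here, just the need to pick the conflict direction ($i$, not $1$ or $2$) correctly. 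I would write this up as a one-paragraph application of Lemma~\ref{ufuf} with the two edge sets above.
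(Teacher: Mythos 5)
Your final ``Corrected plan'' is correct and is exactly the paper's argument: the paper also invokes Lemma~\ref{ufuf} in direction $i$, using the $\Omega(2^n)$ anti-monotone bi-chromatic edges in the $(0,1,\cdot)$ block and the $\Omega(2^n)$ monotone ones in the $(1,0,\cdot)$ block. The lengthy false starts with directions $1$ and $2$ should simply be deleted; only the last paragraph is the proof.
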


\begin{proof}
This is immediate from Lemma \ref{ufuf}, because there are $\Omega(2^{n})$ monotone 
  bi-chromatic edges in direction $i$, as well as $\Omega(2^{n})$ anti-monotone bi-chromatic edges in direction $i$.
\end{proof}

We consider \emph{non-adaptive, one-sided}, deterministic $q$-query algorithm $B$ with oracle access to a Boolean function. Note that a non-adaptive, deterministic algorithm $B$ is simply a set of $q$ query strings $x_1, \dots, x_q$, as well as a decision procedure which outputs ``accept'' or ``reject'' given  $f(x_k)$ for each $k \in [q]$. Furthermore, since $B$ is one-sided, $B$ outputs ``reject'' only if it observes a \emph{violation} to unateness (which we formally define next).
\begin{definition}
\label{def:vio-unate}
A \emph{violation} to unateness for a function $f \colon \{0, 1\}^n \to \{0, 1\}$ is a function $v \colon \{0, 1\}^n \to \left(\{0, 1\}^n\right)^2$, such that for each $r \in \{0, 1\}^n$\emph{:} $v(r) = (x, y)$ where $x, y \in \{0, 1\}^n$ and
\[ x \oplus r \prec y \oplus r \quad \text{and} \quad f(x) = 1, f(y) = 0. \]
\end{definition}

Intuitively, a violation to unateness consists of a violation to monotonicity, for every possibly orientation $r \in \{0, 1\}^n$. We refer to $f^r \colon \{0, 1\}^{n} \to \{0, 1\}$ as the function $f^r(x) = f(x \oplus r)$, for any $r \in \{0, 1\}^n$. So a violation to unateness for $f$ consists of a violation to monotonicity for each $f^r$. 

Thus, the algorithm $B$ with oracle access to $f \colon \{0, 1\}^n \to \{0, 1\}$ works in the following way:
\begin{enumerate}
\item Query the oracle with queries $Q = \{ x_1, \dots, x_q \} \subset \{0, 1\}^n$.
\item If there exists a violation to unateness of $f$, $v \colon \{0, 1\}^n \to \left(\{0, 1\}^n \right)^2$ where the image of  $v$, $\{ v(r) \colon r \in \{0, 1\}^n \}$, is a subset of $Q\times Q$, then output ``reject"; otherwise, output ``accept".
\end{enumerate}
Note that if $B$ does not find a violation, then there exists some unate function $f' \colon \{0, 1\}^n \to \{0, 1\}$ which is consistent with $Q$ (i.e., $f'(x_k) = f(x_k)$ for all $k \in [q]$). In order to say that $B$ does not find a violation, it suffices to exhibit some $r \in \{0, 1\}^n$ such that $B$ does not find a violation to monotonicity of $f^r$. Given Lemma~\ref{lem:unate-dist-far}, Theorem~\ref{nonadaptive} follows from the following lemma:

\begin{lemma}
\label{thm:one-sided}
For any $q$-query non-adaptive algorithm $B$, there exists some $r \in \{0, 1\}^{n+2}$ such that with probability $1 - o(1)$ over $\ii \sim [n]$, $B$ does not observe any violations to monotonicity of $f_{\ii}^{r}$.
\end{lemma}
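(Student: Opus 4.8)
The plan is to fix a non-adaptive set of queries $Q=\{x_1,\dots,x_q\}\subseteq\{0,1\}^{n+2}$ and to choose a single orientation $r\in\{0,1\}^{n+2}$ that works for almost all $\ii$. Write each query as $(a,b,x)$ with $a,b\in\{0,1\}$ and $x\in\{0,1\}^n$, and split $Q$ into its four blocks $Q_{00},Q_{01},Q_{10},Q_{11}$ according to the first two coordinates. On $Q_{00}$ the function $f_{\ii}$ is constantly $0$ and on $Q_{11}$ it is constantly $1$, so these contribute no information about $\ii$; the only informative queries are the projections $Q_{01}$ and $Q_{10}$, where we see $\overline{x_{\ii}}$ and $x_{\ii}$ respectively. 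A violation to monotonicity of $f_{\ii}^r$ uses a $1$-point and a $0$-point ordered the right way after XOR with $r$, so the dangerous pairs are: (i) a point of $Q_{10}$ with $x_{\ii}=1$ paired with a point of $Q_{01}$ with $x_{\ii}=1$ (both have $f=1$ in the $10$-block and $f=0$ in the $01$-block — wait, we want $f=1$ then $f=0$), (ii) a point of $Q_{01}$ with $x_{\ii}=0$ (value $1$) paired with a point of $Q_{10}$ with $x_{\ii}=0$ (value $0$), (iii) $Q_{11}$-points (value $1$) paired with $Q_{00}$-points (value $0$), and (iv) mixed pairs involving a $01$- or $10$-block point and a $00$- or $11$-block point. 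The key structural observation is that for the all-ones vs all-zeros type of violation (case (iii)) we can kill it by choosing the orientation on the first two coordinates appropriately (e.g. setting $r$ so that $11$-block is "below" $00$-block is impossible, but we only need it to fail for \emph{one} $r$, so we pick $r_1,r_2$ to break the $a,b$-order), and the real content is analyzing the middle layers.

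First I would reduce to the core combinatorial problem: after fixing $r=(r_1,r_2,\rho)$ with $\rho\in\{0,1\}^n$, a violation to monotonicity of $f_{\ii}^r$ must (one checks by case analysis on the four blocks and the ordering induced by $r_1,r_2$) boil down to finding $y\in Q_{01}$ and $z\in Q_{10}$ such that $y\oplus r$ and $z\oplus r$ are comparable in the hypercube \emph{and} the bit $\ii$ distinguishes them in the way that makes the values $1,0$. Since $x_{\ii}$ appears complemented in the $01$-block, the pair $(y,z)$ is a violation precisely when, after XOR by $\rho$ on the last $n$ coordinates, one of $y,z$ dominates the other and the values disagree appropriately; crucially, whether this is a violation depends on $\ii$ only through coordinate $\ii$ of $y,z,\rho$. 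So for a fixed candidate pair $(y,z)$ drawn from $Q_{01}\times Q_{10}$, the set of "bad" indices $\ii$ for which this pair forms a violation is controlled by the coordinates on which $y\oplus\rho$ and $z\oplus\rho$ agree: if $y\oplus\rho$ and $z\oplus\rho$ differ on coordinate $\ii$, the pair is generally not comparable unless all other differing coordinates also point the right way.

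The heart of the argument — and the step I expect to be the main obstacle — is the counting: I want to choose $\rho$ so that for all but an $o(1)$ fraction of $\ii\in[n]$, \emph{no} pair $(y,z)\in Q_{01}\times Q_{10}$ (nor any pair involving the $00,11$ blocks) is a violation of $f_{\ii}^r$. The standard approach, following Theorem~19 of \cite{FLNRRS}, is the probabilistic method: pick $\rho\sim\{0,1\}^n$ uniformly and bound the expected number of $\ii$ for which a violation exists. For a fixed pair $(y,z)$, $y\oplus\rho\prec z\oplus\rho$ (or the reverse) requires that on every coordinate where $y$ and $z$ differ, $\rho$ takes the value forcing the correct order — this happens with probability $2^{-d}$ where $d=|\{k:y_k\ne z_k\}|$ is the Hamming distance, which is small only when $y,z$ are close. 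When $y$ and $z$ are close, however, they differ on few coordinates, so the set of indices $\ii$ that this near-comparable pair can "blame" is small (at most $d+1$ or so, the coordinates where they could disagree plus the distinguished direction); when they are far, the probability $2^{-d}$ that $\rho$ makes them comparable is tiny. Multiplying and summing over the $\le q^2$ pairs, and over the $O(1)$ choices of $(r_1,r_2)$ needed to also handle the cross-block pairs, the expected number of bad $\ii$ is $O(q^2/2^{\Theta(\cdot)})$-ish; the new observation alluded to in the excerpt — that one should count \emph{violations to monotonicity of $f^r$} for a single well-chosen $r$ rather than violations to unateness over all $r$ — is what converts the $\Omega(\sqrt n)$ bound of \cite{BMPR16} into $\Omega(n/\log^2 n)$, because it lets us fix $\rho$ first and then absorb the loss only over the choice of $\ii$, giving $q=n/\log^2 n$. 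I would make this precise by a union bound / expectation calculation showing $\mathbb{E}_\rho[\#\text{bad }\ii]=o(n)$, hence by averaging there is a fixed $\rho$ (and $r_1,r_2$) with $\#\text{bad }\ii=o(n)$, which is exactly the statement of Lemma~\ref{thm:one-sided}. The delicate point to get right in the write-up is the bookkeeping of which $(y,z,\ii)$ triples actually constitute violations once all four blocks and the orientation on the first two coordinates are accounted for; I would isolate that as a preliminary combinatorial claim and then run the clean probabilistic argument on top of it.
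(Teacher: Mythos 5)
Your block decomposition and the idea of fixing a single orientation $r$ by the probabilistic method are in the right spirit, but the counting at the heart of your argument does not close. You propose to bound $\mathbb{E}_{\rho}\big[\#\{\ii : \text{some pair violates}\}\big]$ by summing, over all $\le q^2$ pairs $(y,z)$, the product of (the probability $2^{-d}$ that $y\oplus\rho$ and $z\oplus\rho$ are comparable) and (the number $\le d$ of indices $\ii$ the pair can blame), where $d=d(y,z)$. But $d\cdot 2^{-d}=\Theta(1)$ for $d=1,2$, so if the queries contain many close pairs (e.g.\ $q$ strings that are pairwise at Hamming distance $2$) the sum is $\Theta(q^2)=\Theta(n^2/\log^4 n)$, which is far larger than $n$. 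Your calculation therefore yields a nontrivial conclusion only when $q=o(\sqrt{n})$ --- i.e.\ it reproves the $\Omega(\sqrt{n})$ bound of \cite{BMPR16} but does not reach $q=n/\log^2 n$. The improvement is not obtained merely by fixing $\rho$ before $\ii$ and ``absorbing the loss over $\ii$.''

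The missing ingredient is the spanning-forest argument of \cite{FLNRRS}, which replaces the union bound over $q^2$ pairs by one over at most $q$ edges. Concretely: first choose $r$ (Lemma~\ref{lem:bad-dir}) so that \emph{every} pair that becomes comparable after XOR with $r$ has Hamming distance at most $2\log n$; this exists because a pair at distance $d>2\log n$ is made comparable by a random $r$ with probability $<n^{-2}$, and there are at most $q^2\le n^2$ pairs. With $r$ fixed, form the graph on $Q_{01}$ whose edges are the comparable pairs and take a spanning forest $T$. Any violating pair is comparable, hence connected in $T$, and if its two endpoints carry different values then some \emph{edge} of $T$ must already be bi-chromatic; so it suffices to bound $\Pr_{\ii}\big[\exists\, (y,z)\in T : y\oplus r \text{ and } z\oplus r \text{ differ at } \ii\big]\le q\cdot 2\log n/n=o(1)$, using that $T$ has at most $q$ edges, each joining strings at distance $\le 2\log n$. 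Without this reduction from comparable pairs to forest edges, the expectation bound you describe cannot be pushed past $q=O(\sqrt{n})$.
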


\begin{proof}[Proof of Theorem~\ref{nonadaptive} assuming Lemma~\ref{thm:one-sided}]
Lemma~\ref{thm:one-sided} implies that with probability $1 - o(1)$ over the draw of $\ff \sim \Dn$, $B$ does not observe any violation to unateness, since there is some $r \in \{0, 1\}^{n+2}$ where $B$ does not observe any violation for monotonicity of $\ff^r$. Thus, any $q$-query algorithm $B$ does not output ``reject" on inputs drawn from $\Dn$ with probability at least $\frac{2}{3}$.
\end{proof}

We now proceed to prove Lemma~\ref{thm:one-sided}. For two strings $y,z\in\{0,1\}^{n}$, we denote the Hamming distance between $y$ and $z$ as $d(y,z) = | \{ k \in [n] \colon y_k \neq z_k \}|$.

\begin{lemma}
\label{lem:bad-dir}
For any $q$ strings $x_1, \ldots, x_q\in\{0,1\}^n$, there exists an $r \in \{0, 1\}^{n}$ such that for any $j, k \in [q]$, if
$x_j\oplus r \prec x_k\oplus r$,
then $d(x_j, x_k) \leq 2\log n$.
\end{lemma}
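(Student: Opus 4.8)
\textbf{Proof proposal for Lemma~\ref{lem:bad-dir}.}

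The plan is to choose $r$ randomly and show that a uniformly random $r\in\{0,1\}^n$ works with positive probability; a union bound over the $O(q^2)$ pairs will do it once we bound the failure probability of a single pair. First I would fix a pair $j,k\in[q]$ with $d(x_j,x_k)=d$; the goal is to show that the ``bad'' event --- namely that $x_j\oplus r\prec x_k\oplus r$ while $d>2\log n$ --- happens with probability much less than $1/q^2$ over the draw of $r$. The key observation is that $x_j\oplus r\prec x_k\oplus r$ means that in every coordinate $\ell$ where $x_j$ and $x_k$ differ, we must have $(x_j\oplus r)_\ell=0$ and $(x_k\oplus r)_\ell=1$; i.e. the relative order is forced in all $d$ differing coordinates simultaneously, each of which is an independent fair coin once $r$ is random. (In the coordinates where $x_j$ and $x_k$ agree, $x_j\oplus r$ and $x_k\oplus r$ also agree, so those impose no constraint.) Hence for a fixed differing coordinate $\ell$, the probability that $r_\ell$ takes the one value making $(x_j\oplus r)_\ell=0$ and $(x_k\oplus r)_\ell=1$ is exactly $1/2$, independently across the $d$ differing coordinates.

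So for a fixed pair with $d(x_j,x_k)=d$, $\Pr_r[x_j\oplus r\prec x_k\oplus r]=2^{-d}$. If $d>2\log n$ this is at most $2^{-2\log n}=n^{-2}$ (using $\log$ base $2$; if the paper's $\log$ is natural, replace $2\log n$ by the appropriate constant multiple --- the statement is robust to the constant). Taking a union bound over all ordered pairs $j,k\in[q]$ with $q=n/\log^2 n$, the probability that some bad pair exists is at most $q^2\cdot n^{-2}=1/\log^4 n=o(1)<1$. Therefore there exists an $r\in\{0,1\}^n$ such that for every pair $j,k$, either $x_j\oplus r\not\prec x_k\oplus r$, or $d(x_j,x_k)\le 2\log n$, which is exactly the claim.

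There is essentially no hard step here --- the only thing to be careful about is the exact constant and logarithm base in ``$2\log n$'' versus the choice $q=n/\log^2 n$, so that the union bound closes with room to spare; any constant strictly larger than what the union bound forces will work, and the downstream use (extracting a direction $r$ along which the $q$ query strings form an antichain except for pairs at Hamming distance $O(\log n)$) only needs the $O(\log n)$ scale. I would present the one-line computation $\Pr_r[x_j\oplus r\prec x_k\oplus r]=2^{-d(x_j,x_k)}$ explicitly, then the union bound, and conclude. The probabilistic argument also immediately generalizes to $n+2$ coordinates, which is what is actually needed in Lemma~\ref{thm:one-sided}.
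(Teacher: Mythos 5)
Your proposal is correct and is essentially identical to the paper's proof: the paper also draws $r$ uniformly at random, observes that each of the $d(x_j,x_k)$ differing coordinates independently forces $r$ to one of two values for comparability (giving probability $2^{-d}<n^{-2}$ when $d>2\log n$), and closes with a union bound over the at most $q^2\le n^2$ pairs. No gaps.
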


\begin{proof}
Consider a random $n$-bit  $\rr \sim \{0, 1\}^{n}$. Suppose $x_j$ and $x_k$ have
$d(x_j, x_k) > 2\log n.$
Then:
\begin{align*}
\mathop{\Pr}_{\rr \sim \{0, 1\}^{n}}\big[ x_j\oplus\rr \prec x_k\oplus \rr \big] &< 2^{-2\log n} = n^{-2},
\end{align*}
since if $x_{j}$ and $x_k$ differ at $i$, $\rr_i$ can only take one of two possible values to make them comparable.
Thus we can union bound over all possible pairs of queries with distance at least $2\log n$  to obtain
\[ \mathop{\Pr}_{\rr \sim \{0, 1\}^{n}}\big[ \exists\hspace{0.05cm} j, k \in [q], d(x_j, x_k) > 2\log n \text{ and } x_j\oplus \rr \prec x_k \oplus \rr \big] < n^2/n^2 =1. \]
Therefore, there exists an $r$ such that for all $j, k \in [q]$, $x_j\oplus r \prec x_k \oplus r$ implies $d(x_j, x_k) > 2\log n$.
\end{proof}

\begin{proof}[Proof of Lemma~\ref{thm:one-sided}]
Consider a non-adaptive, deterministic algorithm $B$ making $q$ queries $x_1', \ldots, x_q'\in\{0,1\}^{n+2}$, and let $x_1,\ldots,x_q$ be the last $n$ bits of these strings. We will focus on $x_1, \dots, x_q$ and refer to the sub-functions the strings query. For example $x_k$ will query the sub-function $f_{ab}$ corresponding to $a = x_{k,1}'$ and $b = x_{k, 2}'$. We may partition the set of queries $Q = \{ x_1, \dots, x_q\}$, according to the sub-function queried:
\begin{align*}
Q_{00} &= \{ x_k \in Q \colon x_{k,1}' = x_{k, 2}' = 0\} \\
Q_{01} &= \{ x_k \in Q \colon x_{k,1}' = 0, x_{k, 2}' = 1\} \\
Q_{10} &= \{ x_k \in Q \colon x_{k,1}' = 1, x_{k, 2}' = 0\} \\
Q_{11} &= \{ x_k \in Q \colon x_{k,1}' = x_{k, 2}' = 1\}.
\end{align*}

Let $r \in \{0, 1\}^{n}$ be the string such that all comparable pairs among $x_1\oplus r,\ldots,x_q\oplus r$ have distance at most $2 \log n$, which is guaranteed to exist by Lemma~\ref{lem:bad-dir}. We will show that when $r' = (0, 0, r) \in \{0, 1\}^{n+2}$, with probability $1 - o(1)$ over the draw of $\ii \sim [n]$, $B$ does not observe any violation to monotonicity of $f_{\ii}^{r'}$. 

Consider any $i \in [n]$ and one possible violation to monotonicity, given by the pair $(x_k, x_j)$ where
\[ x_k' \oplus r' \prec x_j' \oplus r' \quad \text{and} \quad f_i^{r'}(x_k') = 1, f_i^{r'}(x_j') = 0 \]
Then $x_k \notin Q_{00}$ and $x_j \notin Q_{11}$ since $f_{i, 00}^r$ and $f_{i, 11}^{r}$ are the constant $0$ and $1$ functions, respectively. Additionally, if $x_j \in Q_{00}$, then $x_k \in Q_{00}$ since $r'_{1} = r'_{2} = 0$, but this contradicts the fact that $f_i^{r'}(x_k') = 1$, so $x_j \notin Q_{00}$. Similarly, $x_k \notin Q_{11}$. 

Additionally, if $x_k \in Q_{01}$ (or $Q_{10}$) and $x_j \in Q_{10}$ (or $Q_{01}$), $x_k'$ and $x_j'$ are incomparable, so $x_{k}' \oplus r'$ and $x_{j}'\oplus r'$ are incomparable. Also, for any $i \in [n]$, either $f_{i,01}^r$ or $f_{i,10}^r$ is monotone, so it suffices to consider pairs $(x_k, x_j)$ where either both $x_k, x_j \in Q_{01}$, or both $x_k, x_j \in Q_{10}$. Consider the case $f_{i,10}^r$ is monotone, since the other case is symmetric. Therefore, it suffices to show that with probability $1-o(1)$ over the choice of $\ii \sim [n]$, $B$ does not observe any violations to monotonicity for $f_{\ii, 01}^r$ from queries in $Q_{01}$. 

Similarly to \cite{FLNRRS}, consider the graph of the queries where $x_j$ and $x_k$ are connected if $x_j \oplus r$ and $x_k\oplus r$ are comparable. Additionally, consider a spanning forest $T$ over this graph. For any $i \in [n]$, if $f_{i,01}^r(x_j) \neq f_{i,01}^r(x_k)$ when $x_j$ and $x_k$ are connected in $T$, then there exists an edge in $T$, $(y, z)$, where $f_{i,01}^r(y) \neq f_{i,01}^r(z)$. Thus, it suffices to upper-bound the probability that some edge $(y, z)$ in $T$ has $f_{i,01}^r(y) \neq f_{i,01}^r(z)$, and this only happens when $y\oplus r$ and $z\oplus r$ differ at index $i$.

 We have:
\[ \mathop{\Pr}_{\ii \sim [n]}\big[\exists\hspace{0.05cm} (y, z) \in T: f_{\ii,01}^{r}(y) \neq f_{\ii,01}^r(z)\big] \leq \dfrac{q \cdot 2\log n}{n} \]
since the two end points of each edge have hamming distance at most $2\log n$ (recall our choice for $r$). We union bound over at most $q$ edges in $T$ to conclude that with probability at least $1 - 2q\log n/n$ over the draw $\ii \sim [n]$, $B$ does not observes a violation to monotonicity for $f_{\ii,01}^r$ in $Q_{01}$. When $\smash{q = {n}/{\log^2 n}}$, this probability is at least $1 - o(1)$.
\end{proof}

\section{Non-Adaptive Monotonicity Lower Bound}\label{sec:non-mono}

In this section, we present the proof that \emph{non-adaptive} monotonicity testing requires $\tilde{\Omega}(\sqrt{n})$ queries. The previous best non-adaptive lower bound for testing monotonicity is from \cite{CDST15}, where they show that for any $c > 0$, testing monotonicity requires $\Omega(n^{1/2 - c})$ many queries. Since this lower bound matches the known upper bound from \cite{KMS15}, our result is tight up to poly-logarithmic factors. The following distribution and proof is very similar to the work in \cite{BB15}. 

We use distributions over Boolean functions very similar to the distributions used in \cite{BB15}. A function $\ff \sim \Dy$ is drawn using the following procedure:
\begin{flushleft}\begin{enumerate}
\item Sample $\TT\sim\calE$ ($\calE$ is the same distribution over terms used in Section~\ref{sec:unate}). Then $\TT$ is used to define the multiplexer map $\bGamma = \bGamma_{\TT} \colon \{0, 1\}^n \to [N] \cup \{ 0^*, 1^* \}$. 
\item Sample  $\HH = (\hh_{i} \colon i \in [N])$ from a distribution $\Ey$, where each $\hh_{i} \colon \{0, 1\}^n \to \{0,1\}$ is a random dictatorship Boolean function, i.e., $\hh_{i}(x) = x_k$ with $k$ sampled independently and uniformly at random from $[n]$.
\item Finally, $\ff \colon \{0, 1\}^n \to \{0, 1\}$ is defined as follows: $\ff(x) = 1$ if $|x| > (n/2) + \sqrt{n}$; $\ff(x) = 0$ if $|x| < (n/2) - \sqrt{n}$; if $(n/2) -\sqrt{n} \leq |x|\leq (n/2) + \sqrt{n}$, we have
\[ \ff(x) = \left\{ \begin{array}{cc} 0 & \bGamma(x) = 0^* \\
						1 & \bGamma(x) = 1^* \\
						\hh_{\bGamma(x)}(x) & \text{otherwise (i.e., $\bGamma(x) \in [N]$)} \end{array} \right. \]
\end{enumerate}\end{flushleft}
A function $\ff \sim\Dn$ is drawn using the same procedure, with the only difference being that $\HH = (\hh_{i} \colon i \in [N])$ is drawn from $\En$ (instead of $\Ey$): each $\hh_i(x) = \overline{x_k}$ is a random anti-dictatorship Boolean function with $k$ drawn independently and uniformly from $[n]$. 

Similarly to Section~\ref{sec:mono}, the truncation allows us to show lower bounds against algorithms that query strings in the middle layers. The following two lemmas are easy extensions of Lemma~\ref{monotone:lem} and Lemma~\ref{nonmonotone:lem} in Section~\ref{sec:mono}.

\begin{lemma}\label{lem:mono}
Every function in the support of $\Dy$ is monotone. 
\end{lemma}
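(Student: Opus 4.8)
The plan is to carry out the single-level specialization of the argument in the proof of Lemma~\ref{monotone:lem}; the proof is in fact simpler here because the multiplexer $\Gamma$ has only one level of addressing and there are no clauses to track. Fix any $f = f_{T,H}$ in the support of $\Dy$, so $T$ lies in the support of $\calE$ and $H$ in the support of $\Ey$; in particular every term $T_i$ is a monotone conjunction of variables and every $h_i$ is a dictatorship function $h_i(x) = x_{k_i}$ for some $k_i \in [n]$. Since monotonicity can be verified edge by edge, it suffices to take an arbitrary $x \in \{0,1\}^n$ with $f(x) = 1$ and an arbitrary coordinate $i$ with $x_i = 0$, set $x' = x^{(i)}$, and show that $f(x') = 1$.

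The first step disposes of the weight truncation. If $f(x) = 1$ holds because $|x| > (n/2) + \sqrt{n}$, then $|x'| = |x| + 1 > (n/2) + \sqrt{n}$, so $f(x') = 1$. Otherwise $x$ lies in the middle layers, and since $|x'| = |x| + 1$ the string $x'$ either also lies in the middle layers or satisfies $|x'| > (n/2) + \sqrt{n}$; in the latter case $f(x') = 1$ immediately. So the only remaining case is that $x$ and $x'$ both lie in the middle layers, where $f(x) = 1$ is governed by $\Gamma$ and $H$.

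The core step is then the same monotonicity-of-the-multiplexer observation as in Lemma~\ref{monotone:lem}: because each $T_i$ is a monotone conjunction and $x \prec x'$, every term satisfied by $x$ is also satisfied by $x'$. Hence $\Gamma(x) \neq 0^*$ (otherwise $f(x) = 0$); if $\Gamma(x) = 1^*$ then at least two terms are satisfied by $x$, hence by $x'$, so $\Gamma(x') = 1^*$ and $f(x') = 1$; and if $\Gamma(x) = i^*$ for a unique satisfied term $T_{i^*}$, then $f(x) = x_{k_{i^*}} = 1$, while for $x'$ either a second term has become satisfied, giving $\Gamma(x') = 1^*$ and $f(x') = 1$, or $T_{i^*}$ is still the unique satisfied term and $f(x') = x'_{k_{i^*}} \geq x_{k_{i^*}} = 1$. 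In every case $f(x') = 1$, which proves the lemma. There is no genuine obstacle here; the only point worth a line of care is the interaction between the truncation and the multiplexer value, which is why I would separate out the first step, and in the write-up I would simply note that this is the single-level analogue of Lemma~\ref{monotone:lem}.
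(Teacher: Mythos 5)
Your proof is correct and follows exactly the route the paper intends: the paper gives no separate argument for this lemma, stating only that it is an easy extension of Lemma~\ref{monotone:lem}, and your write-up is precisely that single-level specialization (truncation case first, then the observation that satisfied monotone terms stay satisfied under upward moves, then the dictatorship case). Nothing to change.
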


\begin{lemma}\label{lem:nonmono}
A function $\ff \sim \Dn$ is $\Omega(1)$-far from monotone with probability $\Omega(1)$. 
\end{lemma}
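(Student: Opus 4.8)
The plan is to prove Lemma~\ref{lem:nonmono} by adapting the argument of Lemma~\ref{nonmonotone:lem} from Section~\ref{sec:mono}, simplified to the single-level construction. Since the multiplexer $\bGamma_\TT$ here has only one level (there are no clauses), the analysis is in fact easier than in the two-level case: we need only control how flipping the anti-dictator bit interacts with the terms $\TT_i$, with no second-level clause event to worry about.

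\textbf{Setup.} Fix $T$ in the support of $\calE$ and $H$ in the support of $\En$, and let $f = f_{T,H}$ (ignoring the constant truncation outside the middle layers). Define $X \subseteq \{0,1\}^n$ to be the set of middle-layer strings $x$ with $\Gamma_T(x) = i$ for some $i \in [N]$ (not $0^*$ or $1^*$), with $h_i(x) = \overline{x_k}$ the anti-dictator on variable $k$, and (recalling that a satisfied term forces its variables to $1$) with $x_k = 0$ since $h_i(x) = f(x) = 1$ forces $x_k = 0$. Wait --- more carefully: for $x \in X$ I want $f(x)=1$ and $x_k = 0$, which holds precisely when $h_i(x) = \overline{x_k} = 1$. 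Write $\eta(x) = k$ and $x^* = x^{(k)}$. As in the monotonicity proof I then refine $X$ by the single event $E(x)$: $T_{i'}(x^*) = 0$ for all $i' \neq i$. (There is no analogue of the clause event $E_1$ here.) Note $T_i(x^*) = 1$ still holds since flipping a $0$ to a $1$ cannot unsatisfy a monotone term, and $\eta(x)$ cannot be a variable of $T_i$ since $x_{\eta(x)} = 0$. Let $X'$ be the strings $x \in X$ for which $E(x)$ holds; then for each $x \in X'$, $x^*$ uniquely satisfies $T_i$, so $f(x^*) = h_i(x^*) = \overline{x^*_k} = 0$, making $(x, x^*)$ a violating edge. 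Disjointness follows exactly as before: from $x^*$ one recovers $i = \Gamma_T(x^*)$, hence $k$, hence $x = (x^*)^{(k)}$.

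\textbf{Counting.} It remains to show $\Ex[\,|\bX'|\,] = \Omega(2^n)$ over $\TT \sim \calE$ and $\HH \sim \En$, after which Markov gives $|\bX'| = \Omega(2^n)$ with probability $\Omega(1)$, and Lemma~\ref{pfpf} finishes the proof. Fix a middle-layer $x$. Partition $\{x \in \bX'\}$ into $\Theta(nN)$ subevents indexed by $i \in [N]$ and $k \in [n]$ with $x_k = 0$: the event that $h_i$ is the anti-dictator on $k$ (probability $1/n$), that $\TT_i$ is the unique term satisfied by both $x$ and $x^{(k)}$, and that $E(x)$ holds. Here one must be slightly careful because the term distribution $\calE$ of Section~\ref{sec:unate} includes each coordinate of $M$ with probability $1/\sqrt n$ rather than picking $\sqrt n$ coordinates uniformly; but the same style of estimate applies. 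The probability that a single random term is satisfied by a given middle-layer string is roughly $(1-1/\sqrt n)^{n/2 \pm \sqrt n} = \Theta(1/N)$ with the chosen $N = (1+1/\sqrt n)^{n/4}$ --- actually one should double-check the exponent: a satisfied term needs all its chosen coordinates among the $1$'s of $x_M$, and $|x_M| \approx n/4$, which matches the calculation in Claim~\ref{claim111}. The probability that no other term is satisfied is $(1 - \Theta(1/N))^{N-1} = \Omega(1)$, and, conditioned on $T_i$ being satisfied by $x$, requiring it also be satisfied by $x^{(k)}$ costs only a further constant factor (it is the same term), while $E(x)$ is implied by ``no other term satisfied by $x$'' once we note $x$ and $x^*$ agree on $M$ (since $k \in \overline M$ in the unateness construction, or more simply since flipping a $0$ to a $1$ only makes more terms satisfied --- hmm, that goes the wrong way, so $E(x)$ is a genuine condition, but it still has probability $\Omega(1)$ by the same $(1-\Theta(1/N))^{N-1}$ bound applied to $x^*$). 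Multiplying, $\Pr[x \in \bX'] = \Theta(nN) \cdot \Theta(1/N) \cdot \Theta(1/N) \cdot \Theta(1/n) = \Omega(1)$... the arithmetic gives $\Omega(N/N) = \Omega(1)$ after the $n$'s cancel, so $\Ex[|\bX'|] = \Omega(2^n)$.

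\textbf{Main obstacle.} The conceptual content is entirely routine given Lemma~\ref{nonmonotone:lem}; the only real care needed is bookkeeping around which term distribution and which value of $N$ are in force (this section reuses $\calE$ from Section~\ref{sec:unate}, so $N = (1+1/\sqrt n)^{n/4}$ and terms have $\approx \sqrt n/2$ variables), and making sure the middle-layer band $(n/2) \pm \sqrt n$ used here is compatible with the term-satisfaction probability being $\Theta(1/N)$. If instead this section intends the Section~\ref{sec:mono} term distribution (with $N = 2^{\sqrt n}$ and exactly $\sqrt n$ variables per term), the estimates are literally those of Claim~\ref{cl:hehe} restricted to the first level. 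Either way, no new idea is required, so I would simply state that the proof is identical in structure to that of Lemma~\ref{nonmonotone:lem} with the second-level clause machinery deleted, and spell out only the one-level counting estimate above.
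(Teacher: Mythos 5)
Your approach is correct and is exactly what the paper intends: the paper gives no separate proof of Lemma~\ref{lem:nonmono}, stating only that it is an easy extension of Lemma~\ref{nonmonotone:lem}, and your adaptation (drop the clause-level event $E_1$, keep the term-level event and the disjointness/reconstruction argument, then bound $\Ex[|\bX'|]$ and apply Markov with Lemma~\ref{pfpf}) is precisely that extension. One concrete slip in the final count: the displayed product $\Theta(nN)\cdot\Theta(1/N)\cdot\Theta(1/N)\cdot\Theta(1/n)$ carries over a second $\Theta(1/N)$ factor from the two-level computation of Claim~\ref{cl:hehe} and actually evaluates to $\Theta(1/N)=o(1)$, not $\Omega(1)$; in the one-level case each subevent indexed by $(i,k)$ has probability $\Omega(1/N)\cdot\Omega(1/n)$ (the unique-term condition for both $x$ and $x^{(k)}$, times the anti-dictator condition), so the correct product is $\Theta(nN)\cdot\Omega(1/N)\cdot\Omega(1/n)=\Omega(1)$ — which is what your surrounding prose already establishes.
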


Below, we fix $q = \sqrt{n} / \log^2 n$. Recall from Section~\ref{sec:nonadaptive} that a non-adaptive, deterministic algorithm $B$ is a set of $q$ query strings $x_1, \dots, x_q$, as well as a decision procedure which outputs ``accept'' or ``reject'' given $f(x_k)$ for each $k \in [q]$. Thus, in order to prove the lower bound, it suffices to prove the following lemma:

\begin{lemma} \label{non-adaptive-mono-bound}
Let $B$ be any non-adaptive deterministic algorithm with oracle access to $f$ making $q = \sqrt{n} / \log^2 n$ queries. Then
\[ \mathop{\Pr}_{\ff\sim\Dy}[B \text{ accepts }\ff] \leq \mathop{\Pr}_{\ff\sim\Dn}[B\text{ accepts }\ff] + o(1) \] 
\end{lemma}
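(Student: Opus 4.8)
The plan is to follow the same template used for the adaptive monotonicity lower bound (Lemma~\ref{maintechnical2}), but exploit the simplifications that come from $B$ being \emph{non-adaptive}. First I would introduce the stronger ``signature oracle'' exactly as in Section~\ref{sec:mono}: upon query $x$ the oracle returns the term signature $\sigma_x$ (which term(s) $x$ satisfies) together with, when $\sigma_x = e_i$, the value $h_i(x)$; since now there is only a single level of multiplexing (the distribution here uses $\Gamma_{\TT}$ from Section~\ref{sec:non-mono}, not the two-level $\Gamma_{\TT,\CC}$ of Section~\ref{sec:mono}), the signature is just a pair rather than a $4$-tuple, and an analogue of Lemma~\ref{simple} shows $f(x)$ is recoverable from it. Because $B$ is non-adaptive, the query set $Q = \{x_1,\dots,x_q\}$ is \emph{fixed} in advance, so the ``signature tree'' degenerates: the only randomness is in which leaf $(\TT,\HH)$ reaches, and a leaf is just a full signature map $\phi\colon Q \to \frakP$. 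It therefore suffices to show $\Pr_{\HH\sim\Ey}[\phi] \le (1+o(1))\Pr_{\HH\sim\En}[\phi] + (\text{prunable mass})$ for each consistent $\phi$, averaged over $(\TT)\sim\calE$.

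Next I would carry out the pruning step. Define, for the fixed $Q$ and the random $\TT$, the induced data $(I; P; R; A; \rho)$ as in Section~\ref{sec:mono} (specialized to one level): $I$ the set of terms hit by some $x\in Q$, $P_i$ the strings hitting $T_i$, $A_{i,1}$ (resp.\ $A_{i,0}$) the coordinates that are $1$ (resp.\ $0$) on all of $P_i$. The ``bad events'' are the single-level analogues of Definition~\ref{typeAdef}: (i) for some $i$, $|A_{i,1}\setminus A_{i,1}'| \ge \alpha\sqrt{n}\log n$ after adding some query (i.e.\ some query string that hits $T_i$ differs from the common $1$-pattern in more than $\alpha\sqrt n\log n$ coordinates), and (ii) $P_i$ becoming inconsistent, i.e.\ two queries hitting $T_i$ report different values of $h_i$. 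Because the queries are fixed, event (i) can be analyzed directly: conditioned on $T_i$ being satisfied by all of $P_i$ (so $T_i \subseteq A_{i,1}$) and killed by all of $R_i$, the probability that $T_i$ additionally avoids a fixed set $\Delta_i$ of $\ge\alpha\sqrt n\log n$ coordinates inside $A_{i,1}$ is $o(1/q^2)$ by the same bipartite-graph counting argument as in Lemma~\ref{type1}; a union bound over $i\in I$ ($|I|\le 2q$) and over the $q$ queries gives total prunable mass $o(1)$. Event (ii) — inconsistency — is pruned exactly as in Lemmas~\ref{type3}/\ref{typeB}: conditioned on $\TT$, $h_i$'s dictator is uniform on the relevant $A$-set and the chance that a new query flips $\rho_i$ is $O(|\Delta_{i}|/|A_{i,\rho_i}|) = O(\sqrt n\log n / n)$, summed over $\le 2q$ opportunities, which is $o(1)$.

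For the surviving (good) maps I would replicate Lemma~\ref{goodleaves}: fix $\TT$ with positive probability of producing $\phi$; then under $\Ey$ the event reduces to $k_i \in A_{i,\rho_i}$ for each $i\in I$, and under $\En$ to $k_i \in A_{i,1-\rho_i}$, so the ratio is $\prod_{i\in I} |A_{i,1-\rho_i}|/|A_{i,\rho_i}|$. Since $|\,|A_{i,1}|-|A_{i,0}|\,| = O(|P_i|\sqrt n\log n)$ on good leaves (the $A$-sets start with size $(n/2)\pm\sqrt n$ and shrink by at most $\alpha\sqrt n\log n$ per query, plus a Claim~\ref{cl:bb}-style bound relating $A_{i,1}$ and $A_{i,0}$) and $|A_{i,\rho_i}| = \Omega(n)$, the ratio is $\ge 1 - O(\log n/\sqrt n)\sum_{i\in I}|P_i| = 1 - O(\log n/\sqrt n)\cdot O(q) = 1 - o(1)$ because $q = \sqrt n/\log^2 n$; note this is \emph{easier} than the two-level case since we never pay the $|P_{i,j}|^2$ term and hence get away with $q = \tilde\Theta(\sqrt n)$ rather than $\tilde\Theta(n^{1/3})$. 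Assembling: total yes-mass on pruned maps is $o(1)$, and on each good map the yes-mass is $(1+o(1))$ times the no-mass, so summing over the maps labeled ``accept'' gives the claimed inequality, exactly as in the proof of Lemma~\ref{maintechnical2}. The main obstacle is the event-(i) pruning: one must be careful that, with $Q$ fixed and $\TT$ random, the conditioning ``$T_i$ satisfied by $P_i$, killed by $R_i$'' is handled correctly (the set $R_i$ needs $\le q$ coordinates of $T_i$ to be ``used up'' killing it, leaving $\ge \sqrt n - q$ free coordinates for the counting argument) — but this is precisely what the bipartite-graph argument of Lemma~\ref{type1} already delivers, so no new idea is needed, only a careful restatement in the non-adaptive setting.
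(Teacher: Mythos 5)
Your proposal is correct and follows essentially the same template as the paper's proof: the same signature oracle, the same decomposition into pruning bad signature maps plus a ratio computation $\prod_{i\in I}|A_{i,1-\rho_i}|/|A_{i,\rho_i}| \ge 1 - O(\log n/\sqrt{n})\sum_{i}|P_i| = 1-o(1)$ on good maps, and the same two bad events (excessive shrinkage of $A_{i,1}$ and inconsistency of some $P_i$, cf.\ Definition~\ref{def:bad-events}). The one place you work harder than necessary is the pruning of the shrinkage event: you import the conditional bipartite-graph counting argument of Lemma~\ref{type1}, whereas the paper exploits non-adaptivity to give a direct, unconditional union bound --- since all $q$ queries are fixed in advance, the probability that a random term is simultaneously satisfied by two fixed strings sharing at most $(n/2)-\alpha\sqrt{n}\log n$ common $1$-coordinates is at most $2^{-\sqrt{n}}n^{-\alpha}$, and a union bound over the $N=2^{\sqrt n}$ terms and $q^2$ pairs already gives $o(1)$ (this is the Lemma~16 argument of \cite{BB15}); your heavier conditional argument is still valid but the conditioning it manages so carefully is simply not needed here.
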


We follow in a similar fashion to Subsection~\ref{sec:unate-oracle} by considering a stronger oracle model that results more than just $f(x) \in \{0, 1\}$. In particular, we will use the oracle model from Subsection~\ref{sec:unate-oracle}, where on query $x \in \{0, 1\}^n$, the oracle reveals the signature of $x$ with respect to $(T, H)$ as described in Definition~\ref{def:unate-sig}. From Lemma~\ref{lem:simul}, this new oracle is at least as powerful as the standard oracle. Recall the definitions of the 5-tuple $(I;P;R;A;\rho)$ from Subsection~\ref{sec:unate-oracle}. To summarize, the algorithm $B$ with oracle access to the signatures with respect to $(T, H)$ works in the following way:
\begin{flushleft}\begin{enumerate}
\item Query the oracle with queries $Q = \{x_1, \dots, x_q\} \subset \{0, 1\}^n$.
\item Receive the full signature map of $Q$ with respect to $(T, H)$, and build the 5-tuple $(I;P;R;A;\rho)$.
\item Output ``accept'' or ``reject".
\end{enumerate}\end{flushleft}

We think of an algorithm $B$ as a list of possible outcome, $L = \{ \ell_1, \ell_2, \dots \}$, where each outcome corresponds to an execution of the algorithm. Thus, each $\ell_i$ is labelled with a full-signature map of $Q$ (and therefore, a 5-tuple) as well as ``accept'' or ``reject''. These possible outcomes are similar in nature to the leaves in Section~\ref{sec:mono} and Section~\ref{sec:unate}. 

We proceed in a similar fashion to Section~\ref{sec:mono} and Section~\ref{sec:unate}, by first identifying some \emph{bad outcomes}, and then proving that for the remaining good outcomes, $B$ cannot distinguish between $\Dy$ and $\Dn$. Note that since our algorithm is non-adaptive, $B$ is not a tree; thus, there are no edges like in Section~\ref{sec:mono} and Section~\ref{sec:unate}. For the remainder of the section, we let $\alpha > 0$ be a large constant. 

\begin{definition}\label{def:bad-events}
For a fixed 5-tuple, $(I;P;R;A;\rho)$, we say the tuple is \emph{bad} if:
\begin{flushleft}\begin{itemize}
\item For some $i \in I$, there exists $x, y \in P_i$ where $|\{ k \in [n] \mid x_k = y_k = 1\}| \leq (n/2) - \alpha \sqrt{n} \log n$.
\item For some $i \in I$, $P_i$ is inconsistent (recall definition of inconsistent from Definition~\ref{def:incons}).
\end{itemize}\end{flushleft}
\end{definition}
We will say an outcome $\ell$ is bad if the 5-tuple at $\ell$, given by $(I;P;R;A;\rho)$ from the full signature map at $\ell$ is bad. Thus, we may divide the outcomes into $L_{B}$, consisting of the bad outcomes, and $L_G$, consisting of the good outcomes. Similarly to Section~\ref{sec:mono} and Section~\ref{sec:unate}, Lemma~\ref{non-adaptive-mono-bound} follows from the following two lemmas.

\begin{lemma}\label{lem:prune-non-adaptive}
Let $B$ be a non-adaptive $q$-query algorithm. Then
\[ \mathop{\Pr}_{\TT\sim\calE, \HH\sim\Ey}[ (\TT, \HH) \text{ results an outcome in } L_B] = o(1). \]
\end{lemma}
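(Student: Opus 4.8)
\textbf{Proof proposal for Lemma~\ref{lem:prune-non-adaptive}.}

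The plan is to bound the probability of each of the two bad events in Definition~\ref{def:bad-events} separately, then take a union bound. Since the algorithm is non-adaptive the query set $Q=\{x_1,\dots,x_q\}$ is fixed in advance and does not depend on $(\TT,\HH)$, which simplifies matters considerably: there is no tree to prune, and the $5$-tuple $(I;P;R;A;\rho)$ at the (unique) outcome reached is a deterministic function of $(\TT,\HH)$. In particular, for each $i$, the set $P_i$ of queries satisfying term $\TT_i$ uniquely is determined by $\TT$ alone, and the consistency of $P_i$ depends on $\HH_i$ given $\TT$.

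First I would handle the first bad event: some $i\in I$ has two strings $x,y\in P_i$ with $|\{k:x_k=y_k=1\}|\le (n/2)-\alpha\sqrt{n}\log n$. Fix a pair $x,y$ of queries in the middle layers with $|\{k:x_k=y_k=1\}|\le(n/2)-\alpha\sqrt n\log n$; call such a pair \emph{distant}. For a fixed term $\TT_i\sim\calE$, the event ``$x$ satisfies $\TT_i$'' forces $\TT_i\subseteq\{k:x_k=1\}$ (recall $\TT_i$ is a random subset of $M=[n/2]$ including each coordinate with probability $1/\sqrt n$; but here the distribution over terms is the one from Section~\ref{sec:unate}, so $\TT_i$ lives in $[n/2]$). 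The key point is that conditioned on $\TT_i$ making both $x$ and $y$ satisfied, every element of $\TT_i$ must lie in $\{k:x_k=1\}\cap\{k:y_k=1\}$, a set of size at most $(n/2)-\alpha\sqrt n\log n$ among the $n/2$ relevant coordinates. Running the same bipartite-graph / degree-counting argument as in the proof of Lemma~\ref{type1} (mapping a term that avoids the ``forbidden'' coordinates to one that need not), or more simply bounding $\Pr_{\TT_i}[x,y\in P_i]\big/\Pr_{\TT_i}[\TT_i\subseteq\{k:x_k=1\}]$ by $(1-\tfrac{1}{\sqrt n})^{\alpha\sqrt n\log n}\le n^{-\Omega(\alpha)}$, shows that the conditional probability of $y$ also being caught by $\TT_i$ is $n^{-\Omega(\alpha)}$. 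Taking a union bound over the at most $N$ terms, the at most $q^2$ pairs of queries, and choosing $\alpha$ large enough beats the $N=e^{\Theta(\sqrt n)}$ factor only if the per-term bound is exponentially small --- so instead I would first condition on $x\in P_i$ (an event of probability roughly $N^{-1}$ summed to $O(1)$ over $i$) and then bound the conditional probability that $y\in P_i$ by $(1-1/\sqrt n)^{\alpha\sqrt n\log n}=n^{-\Omega(\alpha)}$, exactly as in Lemma~\ref{type1}; summing over $i\in I$ (at most $O(q)$ indices, since $\sum_i|P_i|\le 2q$) and over $O(q^2)$ query pairs gives $o(1)$.

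Second, the inconsistency event: some $i\in I$ has $P_i$ inconsistent, meaning there are $x,y\in P_i$ with $\hh_i(x)\ne\hh_i(y)$ while in the yes-case $\hh_i$ is a pure dictator $x_k$. Here I would condition on $\TT$ (hence on all the $P_i$), fix $i$ with $|P_i|\ge 2$, and average over $\hh_i\sim\Ey$: inconsistency of $P_i$ requires the dictator variable $k$ to satisfy $x_k\ne y_k$ for some $x,y\in P_i$. But by the first (already-pruned) event we may assume $|\{k:x_k=y_k=1\}|\ge (n/2)-\alpha\sqrt n\log n$ for all $x,y\in P_i$, which --- combined with $|x|,|y|\le(n/2)+\sqrt n$ --- forces $|\{k:x_k=1,y_k=0\}|\le O(\alpha\sqrt n\log n)$; reasoning as in Claim~\ref{cl:bb}, the set $A_{i,1}\cup A_{i,0}$ of coordinates on which all of $P_i$ agrees has size at least $n-O(|P_i|^2\alpha\sqrt n\log n)$, so $P_i$ is inconsistent only if the uniformly random dictator index lands in the complement, which has size $O(q^2\alpha\sqrt n\log n)=O(n^{1/2}\log n\cdot q^2)$. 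With $q=\sqrt n/\log^2 n$ this is $O(n^{3/2}/\log^3 n)$, which is \emph{not} $o(n)$ --- so this crude bound is too weak and I must be more careful: instead bound, for each fixed pair $x,y\in P_i$, the probability that the dictator index lies in $\{k:x_k\ne y_k\}$, a set of size $O(\alpha\sqrt n\log n)$, giving probability $O(\alpha\log n/\sqrt n)$ per pair, then union-bound over the at most $q^2$ pairs within all the $P_i$'s (which is a sum over at most $q$ terms of $\binom{|P_i|}{2}$, hence at most $q^2$) to get $O(q^2\log n/\sqrt n)=O(n^{-1/2}/\log^3 n)=o(1)$. The main obstacle is this bookkeeping: getting the quantitative bounds in the right order --- first conditioning to reduce the $N\approx e^{\sqrt n}$ multiplicity to the $O(q)$ terms actually hit, then using the middle-layer constraint to control the agreement sets, and finally choosing $\alpha$ large enough that both the term-escape probability and the dictator-landing probability, multiplied by the $O(q^2)$ union-bound factor, are $o(1)$. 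Once both events are shown to occur with probability $o(1)$, a union bound gives $\Pr_{\TT\sim\calE,\HH\sim\Ey}[(\TT,\HH)\text{ results in an outcome in }L_B]=o(1)$, completing the proof.
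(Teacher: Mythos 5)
Your treatment of the first bad event is essentially the paper's argument (the paper does the direct union bound over all $N$ terms and $q^2$ pairs, which works because the per-term probability carries a $2^{-\sqrt{n}}$ factor that cancels $N$; your conditional reorganization is equivalent and also fine).

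The second bad event, however, has a genuine quantitative gap. Your final bound is a union over all pairs $x,y$ inside the $P_i$'s of the probability $O(\alpha\log n/\sqrt n)$ that the dictator index lands in $\{k: x_k\ne y_k\}$, giving $O(q^2\log n/\sqrt n)$. With $q=\sqrt n/\log^2 n$ this is $\Theta(\sqrt n/\log^3 n)=\omega(1)$, not $o(1)$: you wrote $O(q^2\log n/\sqrt n)=O(n^{-1/2}/\log^3 n)$, but $q^2=n/\log^4 n$, so $q^2\log n/\sqrt n=n^{+1/2}/\log^3 n$. The pairwise union bound loses a factor of $q$ that these parameters cannot afford. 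The fix (which is what the paper does) is to bound, for each $i$, the measure of the \emph{whole} set of non-unanimous coordinates $[n]\setminus A_{i,0}\setminus A_{i,1}$ \emph{linearly} in $|P_i|$: fix one reference string $x\in P_i$; every non-unanimous coordinate is a disagreement between $x$ and some $y\in P_i$, and (by the already-pruned first event plus the middle-layer constraint) each such $y$ contributes only $O(\alpha\sqrt n\log n)$ disagreements, so $|[n]\setminus A_{i,0}\setminus A_{i,1}|\le O(|P_i|\sqrt n\log n)$. Inconsistency of $P_i$ forces the uniformly random dictator index into this set, so the total failure probability is $\sum_{i\in I}O(|P_i|\log n/\sqrt n)=O(q\log n/\sqrt n)=O(1/\log n)=o(1)$, using $\sum_i|P_i|\le 2q$. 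Your pairwise difference sets overlap heavily, which is exactly why anchoring at a single reference string recovers the needed factor of $q$.
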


We prove the following lemma for good outcomes. 
\begin{lemma}
For any non-adaptive, $q$-query algorithm $B$, if $\ell \in L_G$ is a good outcome, 
\[ \mathop{\Pr}_{\TT\sim\calE, \HH\sim\Ey}[(\TT, \HH) \text{ results in outcome $\ell$}] \leq (1 + o(1)) \mathop{\Pr}_{\TT\sim\calE, \HH\sim\En}[(\TT, \HH) \text{ results in outcome $\ell$}]. \]
\end{lemma}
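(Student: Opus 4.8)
The plan is to adapt, essentially verbatim, the ``good leaves are nice'' argument from Section~\ref{sec:mono} (the proof of Lemma~\ref{goodleaves}) to the non-adaptive, single-level setting. Fix a good outcome $\ell$ with associated full-signature map $\phi\colon Q\to\frakP$ and induced $5$-tuple $(I;P;R;A;\rho)$. As in the proof of Lemma~\ref{goodleaves}, I would condition on the pair of random terms $\TT$: write each probability as an average over $T$ in the support of $\calE$, and it suffices to prove that for each fixed $T$ with $\Pr_{\HH\sim\Ey}[(T,\HH)\text{ results in }\ell]>0$ we have
\[
\frac{\Pr_{\HH\sim\En}[(T,\HH)\text{ results in }\ell]}{\Pr_{\HH\sim\Ey}[(T,\HH)\text{ results in }\ell]}\ge 1-o(1).
\]
Given such a $T$, the term signatures $\sigma_x$ of all $x\in Q$ are already fixed, so the only remaining randomness is in $\HH=(\hh_i)$. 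Since $\ell$ is good, every $P_i$ is consistent, so there is a bit $\rho_i$ with $h_i(x)=\rho_i$ for all $x\in P_i$; hence $(T,H)$ results in $\ell$ iff the dictator index $k_i$ of each $h_i$ lies in $A_{i,\rho_i}$ (yes-case) respectively $A_{i,1-\rho_i}$ (no-case). As the $h_i$ are independent across $i\in I$, the ratio factors as $\prod_{i\in I}\bigl(|A_{i,1-\rho_i}|/|A_{i,\rho_i}|\bigr)$.

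The key quantitative step is to bound $\bigl||A_{i,1}|-|A_{i,0}|\bigr|$ for each $i\in I$. This is exactly where the two bad events of Definition~\ref{def:bad-events} are used: goodness guarantees that for all $x,y\in P_i$, $|\{k:x_k=y_k=1\}|\ge (n/2)-\alpha\sqrt n\log n$, and since every queried string lies in the middle layers, $|\{k:x_k=1\}|\le (n/2)+\sqrt n$. Combining these (as in Claim~\ref{cl:bb} and Claim~\ref{cl:diff}) gives $|A_{i,0}|,|A_{i,1}|\ge (n/2)-O(|P_i|\cdot\sqrt n\log n)$ and therefore
\[
\bigl||A_{i,1}|-|A_{i,0}|\bigr|\le O\!\left(\sqrt n\log n\cdot |P_i|\right),
\]
while both quantities are $\Omega(n)$. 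Thus each factor is $1-O\bigl((\log n/\sqrt n)\,|P_i|\bigr)$, and the product is at least $1-O(\log n/\sqrt n)\sum_{i\in I}|P_i|$. Since $\sum_{i\in I}|P_i|\le 2q$ (Fact~\ref{fact:1}, each query joins at most two $P_i$'s) and $q=\sqrt n/\log^2 n$, this is $1-O(1/\log n)=1-o(1)$, as desired. (I note this is actually cleaner than the two-level case, where one needed the $\min\{|P_{i,j}|^2,|P_i|\}$ bound and the $q^{3/2}$ budget; here the single level only costs a linear-in-$q$ sum.)

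The only genuinely new point relative to Section~\ref{sec:mono} is bookkeeping: because $B$ is non-adaptive there is no decision tree, so ``outcome $\ell$'' means the full-signature map of the fixed query set $Q=\{x_1,\dots,x_q\}$ equals $\phi$, and ``reaching'' is replaced by ``results in''; but all the probabilistic content is identical. I expect the main (minor) obstacle is simply to state Claim~\ref{cl:bb}/Claim~\ref{cl:diff} analogues in this language and to double-check that the first bad event of Definition~\ref{def:bad-events} plays precisely the role that properties i)--ii) of Section~\ref{goodleaves2} did — namely it is what prevents $|A_{i,1}|$ from being too small, which is exactly the content needed. No other difficulty is anticipated; the proof is a routine specialization.
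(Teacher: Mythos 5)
Your proposal is correct and follows essentially the same route as the paper's proof: fix $T$, reduce to the ratio $\prod_{i\in I}|A_{i,1-\rho_i}|/|A_{i,\rho_i}|$ using consistency of each $P_i$, and bound each factor via $\bigl||A_{i,1}|-|A_{i,0}|\bigr|\le O(|P_i|\sqrt{n}\log n)$, which the first bad event guarantees (the paper simply cites Lemma~19 of \cite{BB15} for this step rather than rederiving it as you do). Your observation that the single-level sum $\sum_i|P_i|\le 2q$ suffices, in contrast to the $q^{3/2}$ budget of the two-level case, matches the paper's calculation exactly.
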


\begin{proof}
Fix a good outcome $\ell \in L_G$, and let $\phi \colon Q \to \frakP$ be the associated full signature map and $(I;P;R;A;\rho)$ be the associated $5$-tuple. Since $(I;P;R;A;\rho)$ is not bad:
\begin{flushleft}\begin{itemize}
\item For all $i \in I$, and $x, y \in P_i$, $|\{k \in [n] \mid x_k = y_k = 1\}| \geq (n/2) - \alpha \sqrt{n} \log n$; hence, by Lemma 19 in \cite{BB15}, 
\[ \Big| |A_{i, 1}| - |A_{i, 0}| \Big| \leq O(|P_i| \sqrt{n} \log n) \]
\item For all $i \in I$, $P_i$ is either $1$-consistent, or $0$-consistent. We use the $\rho_i$ to denote the value $\rho_i(x)$ shared by all $x \in P_i$.
\end{itemize}\end{flushleft}
Consider a fixed $T$ in the support of $\calE$ such that the probability of $(T, \HH)$ resulting in outcome $\ell$ is positive when $\HH\sim\Ey$. Then it suffices to show that 
\[ \dfrac{\Pr_{\HH\sim\En}[(T, \HH) \text{ results in outcome $\ell$}]}{\Pr_{\HH\sim\Ey}[(T, \HH) \text{ results in outcome $\ell$}]} \geq 1 - o(1). \]
We know that $T$ matches the full signature $\phi$ at $\ell$. Now, to match the $a_x$ and $b_x$ for each $x \in Q$ given in $\phi$, $H$ (from either $\Ey$ and $\En$) needs to satisfy the following condition:
\begin{flushleft}\begin{itemize}
\item If $H = (h_i \colon i \in [N])$ is from the support of $\Ey$, then the dictator variable of each $h_i$, $i \in I$, is in $A_{i, \rho_i}$.
\item If $H = (h_i \colon i \in [N])$ is from the support of $\En$, then the dictator variable of each $h_i$, $i \in I$, is in $A_{i, 1-\rho_i}$.
\item If $i \notin I$, there is no condition posed on $h_i$. 
\end{itemize}\end{flushleft}
As a result, we have:
\begin{align*}
\dfrac{\Pr_{\HH\sim\En}[(T, \HH) \text{ results in outcome $\ell$}]}{\Pr_{\HH\sim\Ey}[(T, \HH) \text{ results in outcome $\ell$}]} &= \prod_{i \in I} \left(\dfrac{|A_{i, 1-\rho_i}|}{|A_{i, \rho_i}|} \right) \\
			  &\geq \prod_{i \in I} \left(1 - \dfrac{\big||A_{i, \rho_i}| - |A_{i, 1-\rho_i}|\big|}{|A_{i, \rho_i}|} \right) \\
			  &\geq \prod_{i \in I} \left(1 - O\left( \dfrac{|P_i|\log n}{\sqrt{n}} \right) \right) = 1 - o(1),
\end{align*}
when $q = \sqrt{n} / \log^2 n$.
\end{proof}

We now prove Lemma~\ref{lem:prune-non-adaptive}, which allows us to only consider good outcomes. 

\begin{proof}[Proof of Lemma~\ref{lem:prune-non-adaptive}]
We first handle the first case of bad outcomes: some $i \in I$ has $x, y \in P_i$ where $|\{ k \in [n] \mid x_k = y_k = 1\} \leq (n/2) - \alpha \sqrt{n} \log n$. This case is almost exactly the same as Lemma~16 of \cite{BB15}. Since the probability some $\TT\sim\calE$ is sampled with the above event happening is at most:
\[ 2^{\sqrt{n}} q^2 \left( \dfrac{(n/2) - \alpha \sqrt{n} \log n}{n} \right)^{\sqrt{n}} = q^2 \left(1 - \alpha n^{-1/2} \log n \right)^{\sqrt{n}} \leq q^2 n^{-\alpha} = o(1) \]
since $\alpha > 0$ is a large constant and $q^2 \leq n$. Thus, by Lemma~19 in \cite{BB15}, all $i \in I$ satisfy
\[ \Big|[n] \setminus A_{i, 0} \setminus A_{i, 1}\Big| \leq O(|P_i| \sqrt{n} \log n). \]
For the second case, in order for some $P_i$ to be inconsistent, $\hh_i(x) = x_k$ sampled according to $\Ey$ must have $k \in [n] \setminus A_{i, 0} \setminus A_{i, 1}$. Thus, taking a union bound over all possible $i \in I$, the probability over $\HH\sim\Ey$ of resulting in an outcome where some $i \in I$ is inconsistent is at most
\[ \sum_{i\in I} \left(\dfrac{\big|[n]\setminus A_{i, 0} \setminus A_{i, 1}\big|}{n} \right) \leq \sum_{i \in I} \left( \dfrac{ O(|P_i|\sqrt{n} \log n)}{n}\right) = o(1) \]
since $\sum_{i\in I} |P_i| \leq 2q = 2\sqrt{n} / \log^2 n$.
\end{proof}

\section{Tightness of Distributions for Monotonicity}\label{sec:alg}







In this section, we provide the reader with some intuition of why the analyses of \cite{BB15} and this paper are tight. In particular, we sketch one-sided algorithms to find violating pairs in the far-from-monotone functions from the distributions considered. We maintain this discussion at a high level.

\subsection{An $O(n^{1/4})$-query algorithm for distributions of \cite{BB15}}
\label{sec:bb-alg}

Belovs and Blais define a pair of distributions $\dyes^*$ and $\dno^*$ over 
   functions of $n$ variables. To describe $\dyes^*$ and $\dno^*$, recall Talagrand's random DNF \cite{Talagrand} (letting $\smash{N=2^{\sqrt{n}}}$): A function $f$ drawn from $\Tal$ is 
  the disjunction of $N$ terms $T_i$, $i\in [N]$, where each~$T_i$ is the conjunction of 
  $\sqrt{n}$ variables sampled independently and
  uniformly from $[n]$.

\def\SS{\boldsymbol{S}}

Next we use $\Tal$ to define~$\Talpm$. To draw a function $\gg$ from $\Talpm$, one samples an $\ff$ from $\Tal$ and a random $\sqrt{n}$-subset $\SS$ of $[n]$.\hspace{0.04cm}\footnote{Formally,
  $\SS$ is sampled by including each element of $[n]$ independently with probability $1/\sqrt{n}$.} Then $\smash{\gg(x)=\ff(x^{(\SS)})}$, where $x^{(\SS)}$ is the string obtained from $x$ by 
  flipping each coordinate in $\SS$.
 Equivalently variables in $T_i\cap \SS$ appear
  negated in the conjunction of $T_i$. 
The $\dyes^*$ distribution is then the truncation of $\Tal$, and the $\dno^*$ distribution is~the truncation of $\Talpm$. Every $\ff\sim \dyes^*$ is monotone by definition;
  \cite{BB15} shows that~$\gg\sim\dno^*$~is far from monotone using the
   extremal noise sensitivity property of Talagrand functions \cite{MosselOdonnell:03}.

We now sketch a $O(n^{1/4})$-query one-sided algorithm 
  that rejects $\gg\sim \dno^*$ with high probability.
Note that the description below is not a formal analysis; the goal is to 
  discuss the main idea behind the algorithm. Let $g$ be a function in the support of $\dno^*$ defined by $T_i$ and $S$ with $T_i'=T_i\setminus S$. 
Then the algorithm starts by sampling a random $x \in \{0, 1\}^n$ in the middle layers with $g(x)=1$.
It is likely ($\Omega(1)$ probability by a simple calculation)
  that: 
\begin{enumerate}
\item $x$ satisfies a unique term $T_k'$~among all $T_i'$'s.
\item $T_k\cap S$ contains a unique $\ell\in [n]$ (by 1). 
\item $T_k=T_k'\cup \{\ell\}$ and $x$ has $x_\ell=0$ (since $g(x)=1$).
 \end{enumerate} 
Assume this is the case, and 
  let $A_0$ and~$A_1$ denote the set of $0$-indices and~$1$-indices~of $x$,
  respectively. 
Then $T_k'\subseteq A_1$ and $\ell\in A_0$.

The first stage of the algorithm goes as follows:
\begin{flushleft}\begin{enumerate}
\item[] \textbf{Stage 1.} Repeat the following for $\smash{n^{1/4}}$ times: Pick a random subset $\smash{R\subset A_1}$ of size $\smash{\sqrt{n}}$ and 
  query $\smash{g(x^{(R)})}$.
By 1) and 2) above, $\smash{g(x^{(R)}))=1}$ if and only if $R\cap T_k'=\emptyset$, which
  happens with $\Omega(1)$ probability.
Let $A_1'$ denote $A_1$ after removing those indices of $R$ with $\smash{g(x^{(R)}))=1}$ encountered.
Then we have  $T_k'\subset A_1'$ and most likely, $C=A_1\setminus A_1'$ has size $\Theta(n^{3/4})$.
\end{enumerate}\end{flushleft}

After the first stage, the algorithm has shrunk $A_1$ by $\Theta(n^{3/4})$  
  while still making sure that variables of $T_k'$ lie in $A_1'$.
In the second stage, the algorithm takes advantage of the smaller $A_1$ to search for~$\ell$ in $A_0$,
  with each query essentially covering $\Theta(n^{3/4})$ indices of $A_0$:
\begin{flushleft}\begin{enumerate}
\item[] \textbf{Stage 2.} Randomly partition $A_0$ into $O(n^{1/4})$ many disjoint 
  parts $A_{0,1},A_{0,2},\ldots$, each of size $|C|=\Theta(n^{3/4})$.
For each $A_{0,j}$, query $g(x^{(A_{0,j}\cup C)})$.
For each $A_{0,j}$ with $\ell\notin A_{0,j}$, $g$ must return $1$;
for the $A_{0,h}$ with $\ell\in A_{0,h}$, $g$ returns $0$ with $\Omega(1)$ probability\hspace{0.04cm}\footnote{Informally speaking, this is because the values of $g(x)$ and $g(y)$
  essentially become independent when $x$ and $y$~are far from each other.} and
  when this happens, the algorithm has found a $O(n^{3/4})$-size subset $A_{0,h}$ of $A_0$ containing $\ell$. Let $y = x^{(A_{0, j}\cup C)}$.
\end{enumerate}\end{flushleft}
Note that the algorithm cannot directly query $g(x^{(A_{0,j})})$ since the new string will be outside of
  the middle layers (unless $|A_{0,j}|=O(\sqrt{n})$, in which case one needs $\Omega(\sqrt{n})$
  queries to cover $A_0$). This~is only achieved by flipping $A_{0,j}$ and $C$ at the same time (in different directions) 
  and this is the reason why we need the first stage to shrink $A_1$. In the last stage, the algorithm will find a violation for $y$, by providing $z \prec y$ with $g(z) = 1$. 
  \begin{flushleft}\begin{enumerate}
\item[] \textbf{Stage 3.} Randomly partition $A_{0,h}$ into $O(n^{1/4})$ many disjoint 
  parts $\Delta_{1}, \Delta_2,\ldots$, each of size $O(\sqrt{n})$.
For each $\Delta_i$, query $g(y^{(\Delta_i)})$. When $\ell \in \Delta_i$, $g(y^{(\Delta_i)}) = 1$ with probability $\Omega(1)$, and $y^{(\Delta_i)} \prec y$. 
\end{enumerate}\end{flushleft}

\subsection{An $O(n^{1/3})$-query algorithm for our distributions}

\begin{figure}
\label{fig:algo}
\centering
\begin{picture}(210,200)
    \put(0,0){\includegraphics[width=0.5\linewidth]{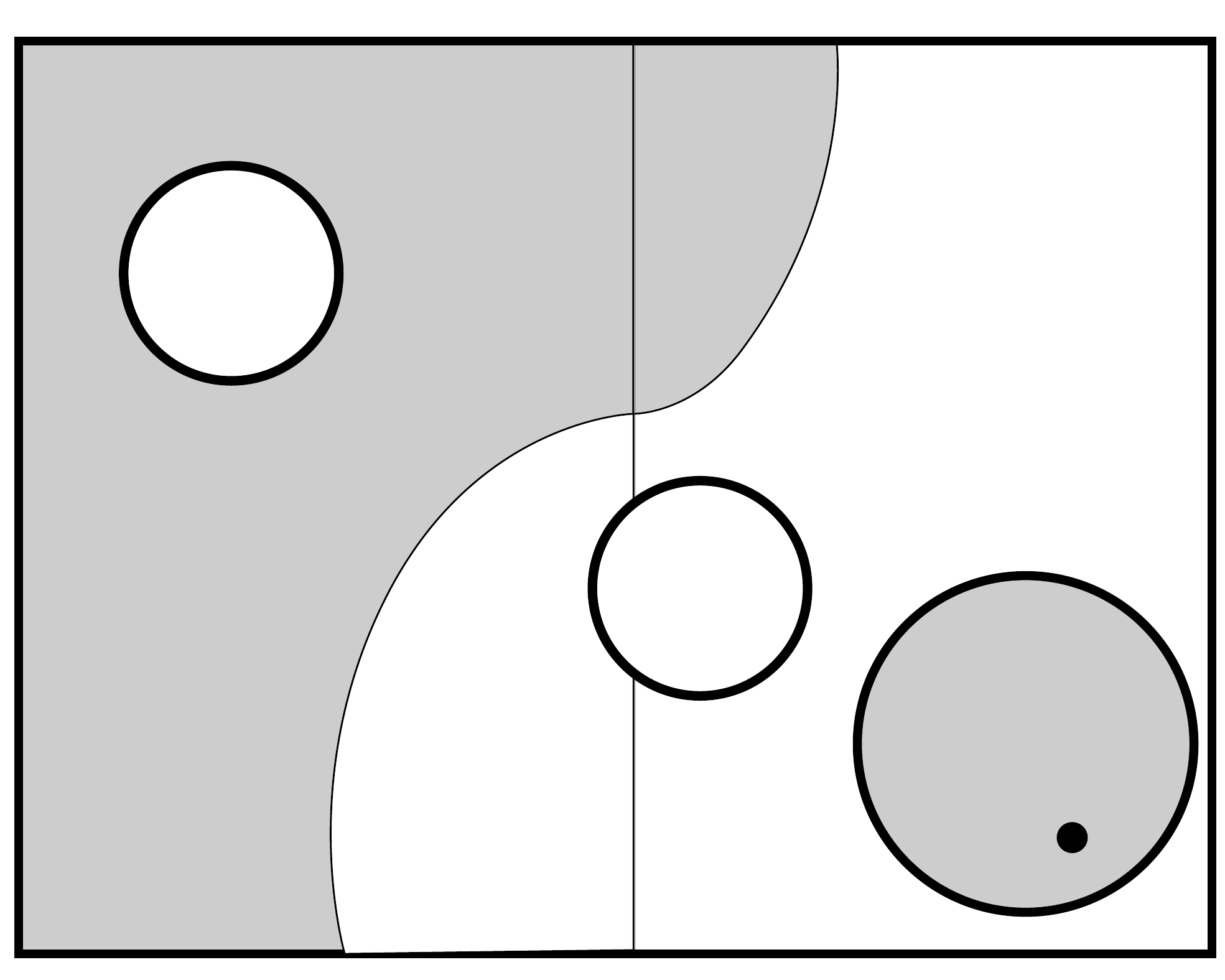}}
    \put(40, 133){$T_i$}
    \put(90, 45){$C_1$}
    \put(130, 140){$C$}
    \put(125, 73){$C_{i, j}$}
    \put(190, 45){$C_0$}
    \put(208, 32){$\ell$}
  \end{picture}\vspace{0.18cm}
  \caption{A visual representation of the algorithm for finding violations in the two-level Talagrand construction. The whole rectangle represents the set $[n]$, which is shaded for coordinates which are set to $1$, and clear for coordinates which are set to $0$. $T_i$ is the unique term satisfied and $C_{i, j}$ is the unique clause falsified. The functions $h_{i, j}$ is an anti-dictator of coordinate $\ell$. The sets illustrated represent the current knowledge at the end of Stage 3 of the algorithm. Note that $|C_1| = \Theta(n^{5/6})$, $|C| = \Theta(n^{2/3})$, $|C_0| = n^{5/6}$, $|T_i| = |C_{i, j}| = \Theta(\sqrt{n})$. }\label{fig:algo}
\end{figure}

The idea sketched above can be applied to our far from monotone distribution $\dno$ from Section~\ref{sec:mono}. It is slightly more complicated, since now the algorithm must attack two levels of Talagrand, which will incur the query cost of $\tilde{O}(n^{1/3})$ rather than $O(n^{1/4})$. Similarly to Subsection~\ref{sec:bb-alg} above, we will give a high level description, and not a formal analysis. The goal is to show the main obstacle one faces in improving the lower bound. 

Assume $g$ is in the support of $\dno$. The algorithm works in stages and follows a similar pattern to the one described in Subsection~\ref{sec:bb-alg} above. We may assume the algorithm has a string $x \in \{0, 1\}^n$ where $x$ satisfies a unique term $T_i$, and falsifies no clauses, so $g(x) = 1$ (this happens with $\Omega(1)$ probability for a random $x$). 
\begin{flushleft}\begin{enumerate}
\item[] \textbf{Stage 1.} Repeat the following for $\smash{n^{1/3}}$ times: Pick a random subset $\smash{R\subset A_1}$ of size $\smash{\sqrt{n}}$\\ and 
  query $\smash{g(x^{(R)})}$.
Let $A_1'$ denote $A_1$ after removing those indices of $R$ with $\smash{g(x^{(R)}))=1}$ encountered.
Then we have  $T_i\subset A_1'$ and most likely, $C_1=A_1\setminus A_1'$ has size $\Theta(n^{5/6})$.
\end{enumerate}\end{flushleft}
The following stages will occur $n^{1/6}$ many times, and each makes $n^{1/6}$ many queries. 
\begin{flushleft}\begin{enumerate}
\item[] \textbf{Stage 2.} Pick a random subset $\smash{C_0 \subset A_0}$ of size $\smash{n^{5/6}}$. Let $y = x^{(C_1 \cup C_0)}$ and query $g(y)$. With probability $\Omega(1)$, $g(y)$ satisfies the unique term $T_i$ (as did $x$), falsifies a unique clause $C_{i, j}$, and $h_{i,j}(y) = 0$. Additionally, with probability $\Omega(n^{-1/6})$, $h_{i, j}(y) = \overline{y_{\ell}}$, where $\ell \in C_0$.
\end{enumerate}\end{flushleft}
Assume that $\ell \in C_0$, which happens with $\Omega(n^{-1/6})$ probability. In the event this happens, we will likely find a violation. 
\begin{flushleft}\begin{enumerate} 
\item[] \textbf{Stage 3.} Repeat the following for $\smash{n^{1/6}}$ times: Pick a random subset $\smash{R\subset A_0 \setminus C_0}$ of size $\smash{\sqrt{n}}$ and 
  query $\smash{g(y^{(R)})}$. Let $A_0'$ denote $A_{0} \setminus C_0$ after removing those indices of $R$ with $g(y^{(R)}) = 0$. Let $C = (A_0 \setminus C_0) \setminus A_0'$, where very likely $|C| = \Theta(n^{2/3})$. Our sets satisfy the following three conditions: 1) $T_i \subset A_1'$, 2) $C_{i, j} \subset A_0' \cup C_1 \setminus C_0$, and 3) $\ell \in C_0$. See Figure~\ref{fig:algo} for a visual representation of these sets.  
\item[] \textbf{Stage 4.} Partition $C_0$ into $O(n^{1/6})$ many disjoint parts $C_{0, 1}, C_{0, 2}, \dots$, each of size $\Theta(n^{2/3})$ and query $g(y^{(C_{0, j} \cup C)})$. For each $C_{0, j}$ with $\ell \notin C_{0, j}$ and no new terms are satisfied, $g$ must return $0$. If for some sets $C_{0, j}$, $g$ returns 1, then either $\ell \in C_{0,j}$ and no new terms are satisfied, or new terms are satisfied; however, we can easily distinguish these cases with a statistical test. 
\end{enumerate}\end{flushleft}
The final stage is very similar to the final stage of Subsection~\ref{sec:bb-alg}. After Stage 4, we assume we have found a set $C_{0, j}$ containing $\ell$. We further partition $C_{0, j}$ (when $g(y^{(C_{0,j}\cup C)}) = 1$) into $O(n^{1/6})$ parts of size $\sqrt{n}$ to find a violation. One can easily generalize the above algorithm sketch to $O(1)$-many levels of Talagrand. This suggests that the simple extension of our construction to $O(1)$ many levels (which still gives a far-from-monotone function) cannot achieve lower bounds better than $n^{1/3}$.


\section{Discussion and Open Problems}

While our two-level Talagrand functions for monotonicity testing looked
  promising at first sight,~a few 
   issues remain, which allow an algorithm to find a violating pair
  with $O(n^{1/3})$ queries 
  (see Section~\ref{sec:alg}). However, for the problem of testing unateness, a different and simpler 
    pair of distributions allows us to overcome the $n^{1/3}$ obstacle for monotonicity and establish an $\tilde{\Omega}(\sqrt{n})$ lower bound for unateness.  The multiplexer maps of Section~\ref{sec:unate} turn out to be more resilient to the kinds of attacks sketched in Section~\ref{sec:alg}, so one can imagine adapting them to the monotonicity testing setting. This leads us to the following conjecture:
\begin{conjecture}
Adaptivity does not help for monotonicity testing.
\end{conjecture}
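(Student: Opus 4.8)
Since Khot, Minzer and Safra \cite{KMS15} give a non-adaptive tester using $\tilde O(\sqrt n)$ queries, the conjecture is equivalent to an $\tilde\Omega(\sqrt n)$ lower bound against two-sided \emph{adaptive} monotonicity testers, which would improve our Theorem~\ref{monomain}. My plan would be to stay inside the ``multiplexer plus dictatorship'' template of this paper but to replace the two-level Talagrand multiplexer --- which Section~\ref{sec:alg} breaks with $\tilde O(n^{1/3})$ queries --- by a multiplexer that separates \emph{address} coordinates from \emph{value} coordinates, as in the unateness construction of Section~\ref{sec:unate}. Concretely: sample $M \subset [n]$ of size $n/2$; draw terms $T_i$ ($i \in [N]$) as random $\Theta(\sqrt n)$-subsets of $M$, so $\Gamma_T$ depends only on $x_M$; attach to each term a random dictator $h_i(x) = x_{k_i}$ with $k_i \in \overline M$ in the yes-case and a random anti-dictator $h_i(x) = \overline{x_{k_i}}$ in the no-case; truncate by $|x_M|$ as in Section~\ref{sec:unate}; and XOR the $\overline M$-coordinates by a uniform $s$, leaving the $M$-coordinates unshifted so monotonicity is preserved. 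That a yes-function is monotone follows as in Lemma~\ref{lem:unate}, and that a no-function is $\Omega(1)$-far from monotone follows by exhibiting $\Omega(2^n)$ disjoint violating edges along $\overline M$-directions and invoking Lemma~\ref{pfpf}. I would then import the signature-tree machinery of Section~\ref{sec:unate}: a stronger oracle that returns each query's signature and, upon a \emph{breach} of a term (Definition~\ref{defbreach}), its special variable; balanced signature trees established through a random-$M$ argument (Lemma~\ref{urur2}); pruning of the bad edges on which a safe term's $A_i$ shrinks past the balanced threshold; and the good-leaf ratio bound of Lemma~\ref{lem:goodleaves}.

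The first thing I would try is to push the balanced threshold down to $\sqrt n \cdot \mathrm{polylog}(n)$: small enough that the random-$M$ union bound over the $2^{O(q)}$ node/subset pairs survives for $q = \sqrt n/\log^2 n$, yet large enough that flipping that many coordinates of $A_{i,1}\cap M$ hits the $\Theta(\sqrt n)$-size term $T_i$ with high probability. If this works, then driving $|A_i \cap \overline M|$ from $n/2$ down to the $O(\log n)$ needed to pin the special variable under the \emph{standard} oracle costs $\tilde\Omega(\sqrt n)$ queries, so no term can be breached within the budget, and the conjecture follows from a pruning lemma and a good-leaf analysis in the style of Lemmas~\ref{lem:prunebad} and~\ref{lem:goodleaves}.

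The main obstacle is that this is too good to be true, for a reason that is structural rather than technical. What makes the unateness construction survive $\tilde\Omega(n^{2/3})$ queries is \emph{not} that one term is expensive to breach --- it is that a tester must produce a \emph{collision}, i.e.\ two breached terms with a common special variable and opposite monotone/anti-monotone orientation (the $E_3$ event of Section~\ref{sec:prunebadsec}), and $\tilde O(n^{2/3})$ queries breach only $\tilde O(n^{1/3})$ terms, far too few for a birthday collision among $\Theta(n)$ candidate special variables. Monotonicity has no analogue of this ``must-be-global'' barrier: in the single-level no-construction above, learning the special variable of \emph{any one} term already yields a violating pair, and, with no CNF layer, nothing prevents an algorithm from flipping many $0$'s to $1$'s. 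Playing out the Section~\ref{sec:alg} attack --- incrementally build a ``safe'' subset of $A_{i,1}$ that avoids $T_i$, then binary-search for the special variable using it as a weight-compensating block so queries stay in the middle layers --- strongly suggests the special variable can be found in $\tilde O(n^{1/4})$ queries, so the single-level monotonicity construction is in fact \emph{weaker} than the two-level one, not stronger. The real question, then, is whether one can design a far-from-monotone distribution in which producing even a single violating pair provably forces the tester to breach $\tilde\Omega(\sqrt n)$ terms or coordinates first --- an artificial ``globalization'' of an inherently local task --- or whether instead every multiplexer-style instance is crackable in $\tilde O(n^{1/3})$ or better, so that the conjecture needs a genuinely new hard family (or is false). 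An alternative route that avoids the construction problem entirely is to prove a black-box simulation --- that any $q$-query adaptive monotonicity tester can be turned into a $\mathrm{poly}(q,\log n)$-query non-adaptive one, perhaps by exploiting the isoperimetric structure of far-from-monotone functions from \cite{KMS15} --- which combined with the tight non-adaptive bound of Section~\ref{sec:non-mono} would immediately give the conjecture; establishing or refuting such a simulation is, I think, the crux.
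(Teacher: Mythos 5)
This statement is a conjecture that the paper explicitly leaves open in its concluding discussion; there is no proof of it in the paper, and your proposal does not supply one either. What you have written is a research plan whose first concrete attempt you yourself (correctly) refute. The single-level ``address/value'' multiplexer you propose --- terms over $M$, dictators/anti-dictators over $\overline{M}$ --- is essentially the construction the paper already uses in Section~\ref{sec:non-mono}, where it is shown hard only for \emph{non-adaptive} algorithms; adaptively it is at least as vulnerable as the constructions of \cite{BB15} and Section~\ref{sec:mono}, because (as you note) for monotonicity a single breached term immediately yields a violating edge, so the collision barrier that powers the $\tilde{\Omega}(n^{2/3})$ unateness bound has no analogue. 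Your diagnosis of this structural obstacle agrees with the paper's own discussion (Sections~\ref{sec:alg} and the conclusion), which is precisely why the paper states adaptivity-does-not-help as a conjecture rather than a theorem.

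The genuine gap is therefore total rather than local: after discarding your first construction, the proposal reduces to restating the open problem --- either exhibit a far-from-monotone family in which finding even one violating pair provably requires $\tilde{\Omega}(\sqrt{n})$ adaptive queries, or prove a generic adaptive-to-non-adaptive simulation with only polylogarithmic overhead --- without establishing either. Neither route is carried out, and no candidate distribution is shown to resist the $\tilde{O}(n^{1/3})$-style attacks of Section~\ref{sec:alg}. As an honest assessment of the state of the problem your write-up is sound and well informed, but it should not be presented as a proof of the conjecture; nothing in it (or in the paper) rules out the possibility that the conjecture is false.
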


With regards to testing unateness, 
  our adaptive $\tilde{\Omega}(\sqrt{n})$ lower bound exploited the existence of more resilient multiplexer   maps. Although 
preliminary work suggests that the pair of \mbox{distributions} employed in our lower bound proof for unateness \emph{can} be distinguished with $O(\sqrt{n})$ queries, 
  it looks promising to us that small modifications to these distributions may yield lower bounds asymptotically higher than $\sqrt{n}$. This leads us to the following conjecture:
\begin{conjecture}
Testing unateness is strictly harder than testing monotonicity.
\end{conjecture}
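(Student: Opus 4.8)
The plan is to establish the separation through the \emph{adaptive} query complexity: monotonicity testing has an adaptive upper bound of $\tilde{O}(\sqrt{n})$ (Khot--Minzer--Safra~\cite{KMS15}), so it suffices to prove an adaptive lower bound of $n^{1/2+\delta}$ for some constant $\delta>0$ for unateness testing, and, in the strongest form, the tight $\tilde{\Omega}(n)$ matching the upper bounds of~\cite{KS15,CS16}. The starting point is the construction of Section~\ref{sec:unate}, whose barrier is that the bound factors as (the number of queries needed to \emph{breach} one term) $\times$ (the number of breached terms needed to witness a \emph{collision}), each quantity currently $\approx n^{1/3}$.

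First I would try to make breaching a single term more expensive by equipping each term with an internal two-level Talagrand structure --- a conjunction of a DNF-term with a negated CNF-clause --- exactly the device that took \emph{monotonicity} from $n^{1/4}$ to $n^{1/3}$ in Section~\ref{reviewsec}. A query would then have to survive two nested constraints before it reveals anything about the term's hidden special variable, so that each $E_1/E_2$-style pruning step still loses only $\approx \sqrt{n}\log n$ coordinates while the amount that must be eliminated before the term is ``breached'' grows, raising the per-term cost to $n^{1/3+a}$. Independently, I would try to make a collision harder to manufacture by demanding a \emph{structured} coincidence among special variables --- a triple of terms sharing a variable, or a variable pattern forbidden in $\mathcal{D}_{\mathrm{yes}}$ --- so that a higher-moment birthday / hypergraph-Tur\'an argument pushes the required number of breached terms from $\approx\sqrt{n}$ to $n^{1/3+b}$; the product is then $n^{2/3+a+b}$, the goal is $a+b=\Omega(1)$, and $\tilde{\Omega}(n)$ would need this amplified all the way.

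Most of the scaffolding should transfer with routine modifications: Yao's principle on a pair $(\mathcal{D}_{\mathrm{yes}},\mathcal{D}_{\mathrm{no}})$ differing only in whether the special-variable assignment to the terms is consistent; passage to a stronger signature oracle (now reporting two levels of multiplexing, and the special variable of any newly breached term) justified by an analogue of Lemma~\ref{lem:simul}; a set of bad edges (term $A$-sets shrinking too fast, too many breached terms, a collision appearing) pruned by Lemmas~\ref{simplepruning} and~\ref{complicatedpruning}; and a ``good leaves are nice'' comparison writing the ratio of reach probabilities under $\mathcal{D}_{\mathrm{yes}}$ and $\mathcal{D}_{\mathrm{no}}$ as a product of factors $|A_{i,\rho_i}|/|A_{i,1-\rho_i}|$ kept close to $1$ by a balance condition. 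The genuinely new work is re-proving the $E_1/E_2$ pruning for a nested ``term'' --- the bipartite-graph degree estimate of Lemma~\ref{type1} must be redone when the object being satisfied is itself a two-level Boolean function --- re-deriving the good-leaf estimate when each factor depends on two coupled $A$-sets, and re-establishing that $\mathcal{D}_{\mathrm{no}}$ stays $\Omega(1)$-far from unate (the analogue of Lemma~\ref{lem:nonunate}).

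The hard part, and the obstruction already diagnosed in Section~\ref{sec:alg}, is the staged ``flip-many-coordinates-at-once'' attack: after breaching a term an algorithm can cheaply shrink the complementary block and recurse, which is precisely why multi-level Talagrand \emph{fails} to beat $n^{1/3}$ for monotonicity. For unateness the attack must still fabricate a collision rather than merely expose an anti-monotone direction, so the barrier is not obviously fatal; but proving that no adaptive schedule of large flips can breach terms and uncover a collision faster than the product of the two costs --- i.e.\ that the costs genuinely multiply rather than amortize against one another --- is the crux, and is where I expect the bulk of the difficulty and essentially all of the risk of the approach to lie. A secondary danger is that the extra rigidity added to force a structured collision could itself leak through a cheap statistical test, so the new distributions must be tuned to rule that out while keeping $\mathcal{D}_{\mathrm{no}}$ far from unate.
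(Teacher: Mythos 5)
This statement is a conjecture; the paper offers no proof of it, so there is nothing to compare your argument against, and what you have written is a research program rather than a proof. You are candid about this yourself: the crux --- that the cost of breaching a term and the number of breached terms needed to force a collision genuinely \emph{multiply} under an adaptive schedule, rather than amortizing against each other via the staged large-flip attacks of Section~\ref{sec:alg} --- is left entirely open, and that is precisely where all the difficulty lives. A sketch whose central lemma is flagged as ``where I expect essentially all of the risk to lie'' cannot be accepted as a proof of the conjecture.

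That said, there is a shortcut in your own first sentence that you did not follow through on. You correctly reduce the conjecture (for two-sided adaptive testers at constant distance) to exhibiting an adaptive unateness lower bound of $n^{1/2+\delta}$ for some $\delta>0$, since monotonicity has a $\tilde{O}(\sqrt{n})$ upper bound \cite{KMS15}. But Theorem~\ref{unatemain} of this very paper already proves an $\Omega(n^{2/3}/\log^3 n)$ adaptive lower bound for unateness, i.e.\ $\delta=1/6$ up to logarithms. So under the literal reading of the conjecture in the adaptive two-sided model, it is a corollary of Theorem~\ref{unatemain} plus \cite{KMS15}, and none of your proposed new constructions are needed. (The Discussion section's reference to an ``adaptive $\tilde{\Omega}(\sqrt{n})$ lower bound'' is inconsistent with the theorem as stated and appears to be stale; presumably the authors intend the conjecture in a stronger or more robust sense, e.g.\ a separation surviving further improvements on both sides.) If you intend the stronger reading, then your program is a reasonable direction but remains open at its key step, and you should also address your own ``secondary danger'': any added rigidity used to force structured collisions must be shown not to be detectable by a cheap statistical test, which is an analogue of the tightness analysis in Section~\ref{sec:alg} that you have not carried out.
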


\section*{Acknowledgments}

We thank Rocco Servedio and Li-Yang Tan for countless discussions and suggestions.
This work~is supported in part by NSF CCF-1149257, CCF-1423100  and the NSF Graduate Research Fellowship under Grant No. DGE-16-44869.

\begin{flushleft}
\bibliographystyle{alpha}
\bibliography{all,allrefs2,allrefs,odonnell-bib}
\end{flushleft}

\appendix

\section{A claim about products}
Recall Bernoulli's inequality: For every real number $a\ge 1$ and real number $x\ge -1$, we have
$$
(1+x)^a\ge 1+ax,
$$
and for every real number $0\le a\le 1$ and real number $x\ge -1$, we have
$$
(1+x)^a\le 1+ax.
$$

We prove the following claim used in Section~\ref{sec:goodleaves}. 

\begin{claim}\label{cl:real-nums}
Let $t \leq n^{2/3}$ and $c_1, \dots, c_t \in \R$ be numbers with $|c_i| \leq {\log^2 n}/{\sqrt{n}}$. We have
\[ \prod_{i\in [t]} \big( 1 - c_i \big) \geq \big(1 - o(1)\big) \cdot \left(1  - \sum_{i\in [t]} c_i\right), \]
where the asymptotic notation is with respect to $n$. 
\end{claim}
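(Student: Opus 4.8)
The plan is to pass to logarithms and compare everything with $e^{-S}$, where $S=\sum_{i\in[t]}c_i$. Since $t\le n^{2/3}$ we have $|c_i|\le \log^2 n/\sqrt n = o(1)$, so for $n$ large enough every factor $1-c_i$ is positive (indeed in $[1/2,3/2]$), hence the product is positive and $\ln(1-c_i)$ is well defined. The first step is the elementary inequality
\[ \ln(1-x)\ \ge\ -x - x^2 \qquad\text{for all } |x|\le 1/2, \]
which follows from the Taylor expansion $\ln(1-x)=-\sum_{k\ge 1}x^k/k$: for $x\in[0,1/2]$ the tail $\sum_{k\ge 2}x^{k}/k$ is at most $x^2\sum_{k\ge 2}(1/2)^{k-2}/2 = x^2$, and for $x\in[-1/2,0)$ one uses $\ln(1+y)\ge y-y^2/2\ge y-y^2$ with $y=-x$. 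Applying this with $x=c_i$ and summing over $i\in[t]$ gives
\[ \ln\prod_{i\in[t]}(1-c_i)\ \ge\ -S - \sum_{i\in[t]}c_i^2. \]

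The second step is to check that the error term is genuinely negligible: $\sum_{i\in[t]}c_i^2\le t\cdot(\log^2 n/\sqrt n)^2\le n^{2/3}\cdot\log^4 n/n=\log^4 n/n^{1/3}=o(1)$. Exponentiating the previous display and using $e^{-o(1)}\ge 1-o(1)$ yields
\[ \prod_{i\in[t]}(1-c_i)\ \ge\ e^{-S-o(1)}\ =\ e^{-o(1)}\,e^{-S}\ \ge\ (1-o(1))\,e^{-S}. \]

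The last step reconciles this with the stated right-hand side $1-S$, which a priori could be any real number (very negative, if $S$ is large). The clean move is the universal bound $e^{-S}\ge 1-S$, valid for every real $S$; multiplying it through by the nonnegative quantity $1-o(1)$ preserves the inequality irrespective of the sign of $1-S$, so $(1-o(1))e^{-S}\ge(1-o(1))(1-S)$, and chaining with the display above proves the claim. I do not expect a genuine obstacle here; the only point that needs a little care is to avoid tacitly assuming $1-S\ge 0$, and routing the argument through $\ln$ and the bound $e^{-S}\ge 1-S$ sidesteps any case analysis. (If one prefers, the Bernoulli-type inequalities recalled just before the claim can replace the logarithmic estimates at the cost of splitting on the sign of $S$, but the logarithmic route is shorter.)
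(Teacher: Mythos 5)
Your proof is correct, and it takes a cleaner route than the paper's. The paper proves the claim via Bernoulli's inequality with fractional exponents: writing $\beta=\log^2 n/\sqrt n$, normalizing $c_i=\pm\beta\delta_i$ with $\delta_i\in[0,1]$, bounding each factor $1-c_i$ below by $(1\mp\beta)^{\delta_i}$, and then splitting into cases according to the sign of $\Delta=\sum_i c_i/\beta$ and again according to whether $|\Delta|\ge 1$, handling the error factor $(1-\beta^2)^{\sum\tau_j}\ge 1-o(1)$ along the way. Your argument replaces all of this with the single estimate $\ln(1-c_i)\ge -c_i-c_i^2$ (which you verify correctly on both sides of $0$), the observation that $\sum_i c_i^2\le t\beta^2=\log^4 n/n^{1/3}=o(1)$, and the universal inequality $e^{-S}\ge 1-S$ multiplied through by the nonnegative factor $1-o(1)$; this eliminates the case analysis entirely, including the delicate point where $1-S$ may be negative. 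Both proofs ultimately rest on the same quantitative fact — the second-order error $t\beta^2$ is $o(1)$ — so neither is more general, but your logarithmic route is shorter and less error-prone. One tiny presentational remark: it is worth stating explicitly that the $o(1)$ in your final bound is the uniform quantity $\sum_i c_i^2\le\log^4 n/n^{1/3}$, independent of the particular $c_i$'s, since the claim is applied in the paper to random quantities and uniformity is what makes that legitimate; your proof already delivers this.
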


\begin{proof}
Let $\beta=\log^2n/\sqrt{n}$.
Assume without loss of generality that $$c_1, \dots, c_k \geq 0\quad\text{and}\quad 
c_{k+1}, \dots, c_t < 0$$ for some $k \leq t$. Let $\delta_i =c_i/\beta$ for $i \leq k$ and $\tau_j = - c_j /\beta$ for $j>k$. Thus, $\delta_i, \tau_j \in [0, 1]$ and $$\sum_{i\in [t]} c_i = \beta \left(\sum_{i\le k} \delta_i - \sum_{j>k} \tau_j \right).$$ 
Let $\Delta = \sum_{i\le k} \delta_i - \sum_{j>k} \tau_i$. By Bernoulli's inequality, we also have
\[ 1 - c_i \geq \left(1 - \beta\right)^{\delta_i} \quad \text{and} \quad 1 - c_j \geq \left(1 + \beta\right)^{\tau_j}. \]
As a result, it remains to show that
\[ \left(1 - \beta \right)^{\sum_{i\le k} \delta_i} \cdot \left( 1 + \beta\right)^{\sum_{j>k} \tau_j} \geq (1 - o(1)) \left(1 - \beta\Delta \right). \]
We consider two cases: $\Delta>0$ or $\Delta\le 0$.
If $\Delta > 0$, we have
\begin{align*}
\left(1 - \beta \right)^{\sum_i \delta_i} \cdot \left( 1 +\beta\right)^{\sum_j \tau_i} &= \left(1 - \beta \right)^{\Delta}\cdot  \left( 1 - \beta^2\right)^{\sum_j \tau_j}
\ge (1-o(1))\cdot (1-\beta)^{\Delta} 
\end{align*}
using $\beta^2=\log^4/n$ and $\sum_j \tau_j \leq n^{2/3}$. 
When $\Delta\ge 1$ it follows by Bernoulli's inequality that~$(1-\beta)^\Delta$ $\ge 1-\beta\Delta$
  and we are done.
When $0<\Delta<1$, we have from $\beta=o(1)$ and $\beta\Delta=o(1)$ that
$$
(1-\beta)^\Delta>1-\beta\ge (1-o(1))\cdot (1-\beta\Delta).
$$

The case when $\Delta \le 0$ is similar:
\begin{align*}
\left(1 - \beta \right)^{\sum_i \delta_i} \cdot \left( 1+\beta\right)^{\sum_{j} \tau_i} &= \left(1 + \beta \right)^{-\Delta}\cdot \left( 1 - \beta^2\right)^{\sum_{i} \delta_i} \geq (1 - o(1))
  \cdot (1+\beta)^{-\Delta}. 
\end{align*}
When $\Delta\le -1$, it follows from Bernoulli's inequality that
  $(1+\beta)^{-\Delta}\ge 1-\beta\Delta$ and we are done.
If $-1<\Delta\le 0$, 
  we have from $- \beta\Delta=o(1)$ that
$ 
(1+\beta)^{-\Delta}>1>(1-o(1))\cdot (1-\beta\Delta).
$ 
\end{proof}

\end{document}